\theoremstyle{definition}
\newtheorem{theorem}{Theorem}
\newtheorem{definition}{Definition}
\newtheorem{assumption}{Assumption}
\newtheorem{lemma}{Lemma}
\newtheorem{example}{Example}
\newtheorem{proposition}{Proposition}
\newtheorem{remark}{Remark}
\newtheorem{corollary}{Corollary}
\newcommand\norm[1]{\left\lVert#1\right\rVert}
\DeclarePairedDelimiter{\floor}{\lfloor}{\rfloor}
\newcommand{\vertiii}[1]{{\left\vert\kern-0.25ex\left\vert\kern-0.25ex\left\vert #1 
    \right\vert\kern-0.25ex\right\vert\kern-0.25ex\right\vert}}
\newif\ifshow 
\newcommand{\nc}{\newcommand}
\nc{\mbb}{\mathbb}\nc{\bb}{\mathbb}
\nc{\mbf}{\mathbf}\nc{\mb}{\mathbf}
\nc{\mc}{\mathcal}
\nc{\msf}{\mathsf}\nc{\ms}{\mathsf}
\nc{\acc}{\ms{acc}}
\nc{\ack}{\ms{ack}}
\nc{\alp}{\alpha}\nc{\al}{\alpha}\nc{\gka}{\alpha}
\nc{\ap}{\ms{ap}}
\nc{\apd}{\ms{apd}}
\nc{\base}{\ms{base}}\nc{\ba}{\ms{base}}
\nc{\bet}{\beta}\nc{\gkb}{\beta}
\nc{\boucle}{\ms{loop}}\nc{\Loop}{\ms{loop}}\nc{\lo}{\ms{loop}}
\nc{\bu}{\bullet}
\nc*{\cc}{\raisebox{-3pt}{\scalebox{2}{$\cdot$}}}
\nc{\centre}{\ms{center}}\nc{\Center}{\ms{center}}\nc{\cen}{\ms{center}}\nc{\ce}{\ms{center}}
\nc{\ci}{\circ}
\nc{\code}{\ms{code}}\nc{\cod}{\ms{code}}\nc{\decode}{\ms{decode}}\nc{\encode}{\ms{encode}}
\nc{\de}{:\equiv}
\nc{\dr}{\right}\nc{\ga}{\left}
\nc{\ds}{\displaystyle}
\nc{\ep}{\varepsilon}
\nc{\eq}{\equiv}
\nc{\ev}{\ms{ev}}
\nc{\fib}{\ms{fib}}
\nc{\funext}{\ms{funext}}\nc{\fu}{\ms{funext}}
\nc{\gam}{\gamma}
\nc{\glue}{\ms{glue}}\nc{\gl}{\ms{glue}}
\nc{\happly}{\ms{happly}}\nc{\ha}{\ms{happly}}
\nc{\id}{\ms{id}}
\nc{\ima}{\ms{im}}
\nc{\inc}{\subseteq}
\nc{\ind}{\ms{ind}}
\nc{\inl}{\ms{inl}}
\nc{\inr}{\ms{inr}}
\nc{\isContr}{\ms{isContr}}\nc{\co}{\ms{isContr}}\nc{\iC}{\ms{isContr}}\nc{\ic}{\ms{isContr}}
\nc{\isequiv}{\ms{isequiv}}\nc{\iseq}{\ms{isequiv}}\nc{\ieq}{\ms{isequiv}}
\nc{\ishae}{\ms{ishae}}\nc{\ish}{\ms{ishae}}\nc{\ih}{\ms{ishae}}
\nc{\isProp}{\ms{isProp}}\nc{\prop}{\ms{isProp}}\nc{\iP}{\ms{isProp}}\nc{\ip}{\ms{isProp}}
\nc{\isSet}{\ms{isSet}}\nc{\isS}{\ms{isSet}}\nc{\iss}{\ms{isSet}}\nc{\iS}{\ms{isSet}}\nc{\is}{\ms{isSet}}
\nc{\lam}{\lambda}
\nc{\LEM}{\ms{LEM}}\nc{\lem}{\ms{LEM}}\nc{\LE}{\ms{LEM}}
\nc{\lv}{\lvert}\nc{\rv}{\rvert}\nc{\lV}{\lVert}\nc{\rV}{\rVert}
\nc{\Map}{\ms{Map}}
\nc{\merid}{\ms{merid}}\nc{\meri}{\ms{merid}}\nc{\mer}{\ms{merid}}\nc{\me}{\ms{merid}}
\nc{\N}{\bb N}
\nc{\na}{\ms{nat}}
\nc{\nn}{\noindent}
\nc{\one}{\mb1}
\nc{\oo}{\operatorname}
\nc{\pd}{\prod}
\nc{\ps}{\mc P}
\nc{\pa}{\ms{pair}^=}
\nc{\ph}{\varphi}
\nc{\ppmap}{\ms{ppmap}}
\nc{\pr}{\ms{pr}}
\nc{\Prop}{\ms{Prop}}
\nc{\qinv}{\ms{qinv}}\nc{\qin}{\ms{qinv}}\nc{\qi}{\ms{qinv}}
\nc{\rec}{\ms{rec}}
\nc{\refl}{\ms{refl}}
\nc{\seg}{\ms{seg}}
\nc{\Set}{\ms{Set}}
\nc{\sm}{\scriptstyle}
\nc{\sms}{\ms s}
\nc{\sq}{\square}
\nc{\suc}{\ms{succ}}\nc{\su}{\ms{succ}}
\nc{\tb}{\textbf}
\nc{\then}{\Rightarrow}
\nc{\tms}{\ms t}
\nc{\tx}{\text}
\nc{\transport}{\ms{transport}}\nc{\tr}{\ms{transport}}
\nc{\two}{\mb2}
\nc{\Type}{\text-\ms{Type}}\nc{\type}{\text-\ms{Type}}\nc{\ty}{\text-\ms{Type}}
\nc{\U}{\mc U}
\nc{\ua}{\ms{ua}}
\nc{\uniq}{\ms{uniq}}
\nc{\univalence}{\ms{univalence}}
\nc{\vide}{\varnothing}
\nc{\ws}{\ms{sup}}
\nc{\zero}{\mb0}
\title{\textbf{Optimal Estimation Methodologies for Panel Data Regression Models}\footnote{Part of these lecture notes were prepared during my Ph.D. studies at the Department of Economics, University of Southampton and further additions were incorporated during the academic year 2022/2023 when I was a Visiting Lecturer in Economics at the Department of Economics, University of Exeter Business School. I am grateful to Jose Olmo, Jean-Yves Pitarakis and  Tassos Magdalinos  from the School of Economic, Social and Political Sciences as well as Zudi Lu and Chao Zheng from the School of Mathematical Sciences, University of Southampton;  for helpful and stimulating discussions on the study of related econometric and statistical techniques. I am also grateful to  Giuseppe Cavaliere and Sebastian Kripfganz from the University of Exeter Business School  and Pietro Spini from the University of Bristol  as well as Markku Lanne and Mika Meitz from the Faculty of Social Sciences, University of Helsinki for helpful conversations. Financial support from the Research Council of Finland (grant 347986) is gratefully acknowledged. Address correspondence to Christis Katsouris, Faculty of Social Sciences, University of Helsinki, P.O. Box 17, FI-00014, Finland; email: christis.katsouris@finland.fi.
 } }
\author{\textbf{Christis Katsouris}\thanks{Dr. Christis Katsouris, is currently a Postdoctoral Researcher at the Faculty of Social Sciences, University of Helsinki. } \\ Department of Economics, University of Southampton\\ University of Exeter Business School\\ Faculty of Social Sciences, University of Helsinki}
\date{\today}
\begin{document}

\maketitle

\begin{abstract}
\vspace*{-0.8 em}
This survey study discusses main aspects to optimal estimation methodologies for panel data regression models. In particular, we present current methodological developments for modeling stationary panel data as well as robust methods for estimation and inference in nonstationary panel data regression models. Some applications from the network econometrics and high dimensional statistics literature are also discussed within a stationary time series environment.  
\end{abstract}


\maketitle

\newpage 
   
\begin{small}
\begin{spacing}{0.9}
\tableofcontents
\end{spacing}
\end{small}

\newpage

\section{Introduction}

Optimal estimation methodologies (e.g., see \cite{hilborn1969optimal} and \cite{dreze1976bayesian}) in the case of  multivariate data series  has been a research area of interest both in the econometrics as well as in the statistics literature the past 7 decades (see \cite{haavelmo1943statistical},  \cite{marschak1944random}, \cite{anderson1949estimation},  \cite{koopmans1950identification}, \cite{james1954normal},  \cite{zellner1974time},  \cite{sargan1976econometric}, \cite{espasa1977spectral} and  \cite{forchini2003conditional}). Moving into an era of ultra-high dimensional data structures where tools such as machine learning and deep learning techniques are employed for statistical learning purposes; it is of paramount importance to have a deep understanding of the classical optimal estimation methodologies in various econometric environments (see, \cite{white1996estimation, white2014asymptotic}). In previous lecture series, \cite{katsouris2023high, katsouris2023limit, katsouris2023quantile} discusses recent developments and several open problems in the time series and network econometrics literature are mentioned, with a special interest in nonstationary regression models and quantile regressions. The purpose of this lecture series is to present relevant issues on optimal estimation methodologies for panel data regression models (see also \cite{chamberlain1982multivariate, chamberlain1984chapter} and \cite{baltagi2008econometric}).

We shall begin by discussing the so-called \textit{identification problem} in linear and nonlinear econometric models (e.g., see \cite{rothenberg1971identification} and \cite{dreze1976bayesian}) which are commonly presented in the literature based on suitable distributional conditions such as the Gaussianity assumption of structural disturbances (e.g., see  \cite{phillips1976iterated}). However, in order to narrow down the relevant literature we only focus on estimation and inference methodologies for panel data regression models which is considered as a statistical learning mechanism for various applications found in economics, finance, biostatistics and climate sciences among many other fields.  Specifically, in the structural econometrics literature, \textit{weak identification} is commonly discussed as a model specification issue. Some relevant questions of interest:
\begin{itemize}

\item What is the connection between \textit{weak identification} and \textit{nearly singularity}? How is the \textit{nearly singularity problem} tackled in the econometrics literature in relation to \textit{weak identification}? 



\end{itemize}


\subsection{Identification of Non-Linear Dynamic Systems}

Overall, system identification plays an important role in revealing the unknown mechanisms of underlying complex phenomena. System identification includes detection of the model structure and estimation of the associated parameters. Moreover, a system identification problem can be thought of as an optimization problem where the optimal model is searched from a large predefined candidate model, given a criterion. The criterion is used to evaluate the performance of each model by measuring the discrepancy between the observed data and the model predictions (see, \cite{guo2016ultra}). Good criteria result to not only better parameter estimation but also a good search path along which the search process converges quickly to the optimal solution. Different criteria have been used in system identification such as the $L^2$ norm in least squares regression and the $L^1$ norm in least absolute value regression (\cite{guo2016ultra}).

\newpage 

Among these criteria, the least squares criterion is the most used because of its excellent properties, for example, least squares estimation can be configured to give estimates which are unbiased and efficient when the noise satisfies some basic assumptions. Thus, the least squares problem has analytic solutions and can be easily solved using the QR decomposition technique. However, the least squares technique  does not allow to capture some special characteristics in system identification such as the interconnectedness in data points; so-called \textit{network dependence} (see, \cite{katsouris2023limit}).  

A more accurate estimation methodology, which particularly overcomes the over-parametrization problem commonly found when using the least squares algorithm, is to employ an alternative criterion so-called ultra-least squares. The ULS criterion enables to characterise the model fitness more accurately. To put things into perspective, the ULS criterion considers model fitting in a smaller space, more specifically, the Sobolev space $H^m ( [0,T] )$. In other words, the ULS criterion takes into consideration not only the residuals but also the associate weak derivaties to measure the model fitness. Notice that the $L^2$ norm only emphasizes the similarity of two functions as a whole but disregards the closeness or shape (e.g., shape restrictions). Therefore, system identification can be interpreted as discovering unknown rules form a set of observations.  

A useful norm to develop metric properties in the Sobolev space is defined as below: 
\begin{align}
\norm{ x }_{ H^m } = \sqrt{  \sum_{j=0}^{m} \norm{ D^j x }_2^2 }    
\end{align}
where $D^j$ represents the $j-$th differentation operator. Based on the $\norm{ \cdot }_{ H^m }$ norm, a new criterion can then be defined as below:
\begin{align}
\mathcal{J}_H := \norm{ y - \sum_{i=1}^k \beta_i x_i }_2^2 + \sum_{j=0}^{m} \norm{ D^j \left( y - \sum_{i=1}^k \beta_i x_i \right) }_2^2.      
\end{align}
Due to the fact differentiation is a linear operator, the above criterion can be written as below: 
\begin{align}
\mathcal{J}_H := \norm{ y - \sum_{i=1}^k \beta_i x_i }_2^2 + \sum_{j=0}^{m} \norm{ \left( D^j y - \sum_{i=1}^k \beta_i D^j x_i \right) }_2^2.      
\end{align}
Thus, the $\mathcal{J}_H$ criterion consists of two parts: the first term corresponds to the standard least squares criterion which is based on evaluating the statistical distance (agreements) over the sample points; while the second term corresponds to the agreement of the weak derivatives which essentially emphases the agreement in shape (shape restrictions). Lastly, an interesting aspect worth investigating further is an ULS criterion obtained by adapting the $\mathcal{J}_H$ criterion to the nonlinear system identification problem\footnote{Nonlinear system identification involves both the estimation of the parameters and more importantly the problem of how to detect the structure of the unknown model. Model structure detection for linear systems is relatively easy and usually involves determining the order and time delay in a linear model. On the other hand, model detection can be complicated when the system is nonlinear due to the presence of many potential model terms and complex dynamics.}.

\newpage

\subsubsection{The Ultra-LS Problem}

\begin{definition}[\cite{guo2016ultra}]
Under the $H^m$ norm, the classical least squares problem is equivalent to the ultra-LS problem defined below: 
\begin{align}
\begin{bmatrix}
y
\\
D^1 y
\\
\vdots
\\
D^m y
\end{bmatrix} 
= 
\sum_{i=1}^k \beta_i 
\begin{bmatrix}
x_i
\\
D^1 x_i
\\
\vdots
\\
D^m x_i
\end{bmatrix} 
\end{align}
(which is a $k-$regressors problem). 
\end{definition}

In order to evaluate the contribution of the unknown weak derivatives in the $\mathcal{J}_H$ criterion, we consider distributional assumptions. Consider the signal process $y(t)$, then the associated distribition $T_y$ is defined as a functional $T_y: C_0^{\infty} ( [0,T] ) \to \mathbb{R}$ such that 
\begin{align}
\langle T_y, \varphi \rangle = \int_{[0,T] } y(t) \varphi (t) dt,     
\end{align}
for all $\varphi \in C_0^{\infty} ( [0,T] )$. Therefore, the distribution $T_y$ has weak derivatives which are defined: 
\begin{align}
\langle D^j T_y, \varphi \rangle =  (-1)^j \int_{[0,T] } y(t) \varphi^{(j)} (t) dt,  
\end{align}
Similarly , the distributions that correspond to $x_i$ are defined as:
\begin{align}
\langle T_{x_i}, \varphi \rangle = \int_{[0,T] } x_i(t) \varphi (t) dt,     
\end{align}
Thus, the regression problem is now solved in terms of conditional distribution projections. In other words, the system identification problem involves fitting the distribution $T_y$ by the combination of a set of distributions $T_{x_i}$. Therefore, the ultra-least squares problem becomes:
\begin{align}
\begin{bmatrix}
y
\\
\langle D^1 T_y, \varphi \rangle
\\
\vdots
\\
\langle D^m T_y, \varphi \rangle
\end{bmatrix} 
= 
\sum_{i=1}^k \beta_i 
\begin{bmatrix}
x_i
\\
\langle D^j T_x, \varphi \rangle
\\
\vdots
\\
\langle D^j T_x, \varphi \rangle
\end{bmatrix} 
\end{align}
Notice that above we have used interchangeable the terms ultra-least squares and conditional density projections to distinguish to the case where conditional density projections refer explicitly to the least squares problem. For the remainder of this study we focus on $L^2$ estimation techniques and leave the extensions to the Sobolev space for future discussions (see,  \cite{mahan2021nonclosedness}, \cite{abdeljawad2022approximations} and \cite{duzgun2023clustering}).

\newpage

\subsection{Consistent Estimation for Parametric Models}

When estimating the parameters of a correctly specified model, identification of the true parameters is a necessary condition for the consistent estimation and robust inference. However, identification is not a sufficient condition for consistency because the estimator may be constructed in such a way as not to be consistent for the true parameters, but for some other parameters which are nevertheless identifiable. Therefore, a model may fail to be identified, but estimation of a misspecified version of the model may yield identifiable parameters (see, \cite{bates1985unified}).

Suppose that a correctly specified model has the following functional form
\begin{align}
Y_t = X_t \theta + \epsilon_t, 
\end{align}
where $\mathbb{E} \left( X_t^{\prime} \epsilon_t \right) \neq 0$, and no instrumental variables are available for $X_t$. In general, we consider related conditions for existence and unique characterization of optimal estimation methodologies in panel data regression models (e.g., see \cite{andrews2001consistent}). 

\begin{corollary}(Existence)
Suppose there exists a measurable function $\hat{\theta}_n ( \omega )$ such that 
\begin{align}
Q_n \left( \omega, \hat{\theta}_n ( \omega ) \right) = \underset{ \theta \in \Theta }{ \mathsf{inf} } Q_n \big( \omega, \theta_n ( \omega ) \big), \ \ \ \text{for all} \ \omega \in \Omega.  
\end{align}
\end{corollary}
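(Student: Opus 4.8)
The plan is to read this corollary as a genuine existence claim — that a measurable minimizer $\hat{\theta}_n$ of the criterion $Q_n$ exists — and to derive it from the measurable selection (measurable maximum) theorem under the standing regularity implicit in this section: $\Theta$ is a compact metric space, $Q_n(\cdot,\theta)$ is $\mathcal{F}$-measurable for each fixed $\theta$, and $Q_n(\omega,\cdot)$ is continuous for each fixed $\omega$ (that is, $Q_n$ is a Carath\'eodory function). These are precisely the conditions under which the preceding identification and consistency discussion operates, so I would make them explicit first and treat the conclusion as following from them rather than from the literal hypothesis as typeset.

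First I would settle pointwise existence. For fixed $\omega$, continuity of $Q_n(\omega,\cdot)$ on the compact set $\Theta$ gives, by Weierstrass, that the infimum is attained, so the argmin set
\begin{align}
M(\omega) \,:=\, \big\{ \theta \in \Theta \,:\, Q_n(\omega,\theta) = \inf_{\theta' \in \Theta} Q_n(\omega,\theta') \big\}
\end{align}
is nonempty, and it is closed because it is the preimage of the single value $m(\omega)$ under the continuous map $Q_n(\omega,\cdot)$. Next I would verify measurability of the value function $m(\omega) := \inf_{\theta \in \Theta} Q_n(\omega,\theta)$. Fixing a countable dense set $D \subseteq \Theta$, continuity yields $m(\omega) = \inf_{d \in D} Q_n(\omega,d)$, a countable infimum of $\mathcal{F}$-measurable functions and hence measurable.

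The decisive step is the measurable selection. I would show that the correspondence $\omega \mapsto M(\omega)$ is measurable — equivalently, that $\{\omega : M(\omega) \cap U \neq \emptyset\} \in \mathcal{F}$ for every open $U \subseteq \Theta$ — by expressing this set through the measurable functions $m(\cdot)$ and $Q_n(\cdot,d)$, $d \in D$, and then invoke the Kuratowski--Ryll-Nardzewski theorem to extract a measurable selector $\hat{\theta}_n(\omega) \in M(\omega)$. A self-contained alternative, avoiding the abstract theorem, is a direct construction: enumerate $D = \{d_1, d_2, \ldots\}$, and for $\epsilon_k \downarrow 0$ let $\hat{\theta}_n^{(k)}(\omega)$ equal $d_j$ for the smallest index $j$ with $Q_n(\omega,d_j) < m(\omega) + \epsilon_k$; each $\hat{\theta}_n^{(k)}$ is measurable as a $\Theta$-valued function with countably many measurably-determined values, and a nested-partition refinement along $k$ produces a Cauchy sequence whose limit is a measurable $\hat{\theta}_n$ lying in $M(\omega)$ by compactness and continuity.

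I expect the measurable selection to be the only genuine obstacle. Pointwise existence and the measurability of $m$ are routine; the subtlety is entirely that the argmin is a priori set-valued, so one must guarantee a single choice $\hat{\theta}_n(\omega)$ depending measurably on $\omega$. Non-uniqueness of the minimizer is exactly what forces either the selection theorem or a careful explicit tie-breaking rule — a naive $\omega$-by-$\omega$ subsequence extraction need not preserve measurability, which is the point requiring attention.
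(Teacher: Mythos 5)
The paper itself offers no proof of this statement: as typeset, the ``corollary'' is a supposition (``Suppose there exists a measurable function $\hat{\theta}_n(\omega)$ such that\dots'') with no conclusion and no argument attached, so there is no proof in the source to compare yours against. You have correctly diagnosed this and reconstructed the claim the statement is evidently meant to encode --- existence of a measurable minimizer of $Q_n$ --- and your argument is the standard and correct one (it is essentially Jennrich's measurable-minimizer lemma, reproduced in \cite{white1996estimation} and in \cite{andrews2001consistent}-type settings): Weierstrass on the compact $\Theta$ for pointwise attainment, a countable dense subset to get measurability of the value function $m(\omega)$, and then either Kuratowski--Ryll-Nardzewski or an explicit tie-breaking selection from the argmin correspondence. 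Your identification of the measurable-selection step as the only genuine obstacle is exactly right. The one place your sketch is thinner than a complete proof is the ``nested-partition refinement'' producing a Cauchy sequence: the approximate minimizers $\hat{\theta}_n^{(k)}(\omega)$ as you define them need not converge without the additional bookkeeping of restricting each stage to a shrinking ball around the previous selection (which is the content of the KRN construction); alternatively, for $\Theta\subset\mathbb{R}^k$ one can simply take the lexicographic minimum of the closed set $M(\omega)$, which is measurable and avoids the limit argument entirely. With that step filled in, your proof is complete and supplies something the paper leaves entirely implicit.
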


\paragraph{Structure} The remainder of this study is organized as follows. In Section \ref{Section2} we discuss commonly used GMM estimation techniques and their asymptotic properties. In Section \ref{Section3} we present key aspects related to estimation and inference for panel data regression models under time series stationarity while in Section \ref{Section4} we consider the case of nonstationary panel data regressions. Section \ref{Section5} presents recent developments in relation to panel data model estimation from the network econometrics literature. Section \ref{Section6} and Section \ref{Section7} present some further applications.  Network connectivity is a driving force for the risk transmission of various economic, financial and social phenomena. Related statistical problems include the modelling of financial contagion in stock markets, cointegration dynamics and market exuberance as well as the spread of epidemic diseases and the monitoring of climate change and biodiversity loss (see, \cite{gove2023coral}). We present relevant econometric methodologies which can be employed when considering the empirical and theoretical implications of such topics.

\newpage

\section{GMM Estimation Techniques and Properties}
\label{Section2}

\subsection{Large Sample Theory and Inference in GMM Estimation}

\subsubsection{Parameter Estimation using GMM}

We follow the framework proposed by \cite{Martinez2020asymptotic} who consider the econometric estimation using the GMM methodology, as briefly described below. Let $\theta \in \Theta$ denote a $p-$dimensional vector of parameters partitioned into $\theta = \left( \vartheta^{\prime}, \psi^{\prime} \right)^{\prime}$ of dimensions of $p_{\vartheta}$ and $p_{\psi}$, respectively. 

Denote with 
\begin{align}
F_T( \theta ) = \frac{1}{T} \sum_{t=1}^T f_t( \theta ),   
\end{align}
to represent the sample moments, where $f_t( \theta )$ is a $k-$dimensional vector-valued function of data and parameters with $k \geq p$ and $\mathbb{E} \big[ f_t( \theta ) \big] = 0$ as the true value of $\theta$. Moreover, we let with $r( \theta )$ to be a known function of the parameters such that $r : \Theta \to \mathbb{R}^q, q \leq p_{\psi}$.  

Suppose that $f_t ( \vartheta, .   )$ and $r ( \vartheta, .  )$ are continuously differentiable with respect to $\psi$, and let 
\begin{align}
J_T ( \theta ) =  \frac{ \partial F_T( \theta ) }{ \partial \psi^{\prime} }  \ \ \ \text{and} \ \ \   R(\theta) = \frac{ \partial r ( \theta ) }{ \partial \psi^{\prime} } 
\end{align}
Moreover, we denote with $\hat{V}_f( \theta )$ the $(k \times k)$ matrix that is positive definite almost surely, and define the GMM objective function as below
\begin{align}
S_T \left( \vartheta, \psi \right) = F_T ( \vartheta, \psi )^{\prime} \hat{V}_f \left( \vartheta, \tilde{\psi} \right)^{-1}  F_T ( \vartheta, \psi ),   
\end{align}
Furthermore, suppose that the constrained GMM estimator of $\psi$ given $\vartheta$ exists and is given by the following expression
\begin{align}
\hat{\psi} ( \theta ) = \underset{ \psi }{ \mathsf{arg min} } \ F_T ( \vartheta, \psi )^{\prime} \hat{V}_f \left( \vartheta, \tilde{\psi} \right)^{-1}  F_T ( \vartheta, \psi ).  
\end{align}

We also simplify the notation as below
\begin{align}
\hat{\psi} \equiv \hat{\psi} ( \vartheta ), \ \ \ \hat{r} ( \vartheta ) = r \left( \vartheta, \hat{\psi} \right)    
\end{align}
In addition we consider $\hat{C} ( \vartheta )$ be an almost surely full-rank $k \times ( k - p_{\psi}  )$ matrix that spans the null-space of $\tilde{V}_f ( \vartheta )^{-1/2} \hat{J}_T ( \vartheta )$ such that 
\begin{align}
\hat{C} ( \vartheta ) \hat{C} ( \vartheta )^{\prime} &= M_{ \tilde{V}_f ( \vartheta )^{-1/2} \hat{J}_t ( \vartheta )  } 
\\
M_x &= \left( I - P_X \right), \ \ \ P_X = X \left( X^{\prime} X \right) X^{\prime}.     
\end{align}

\newpage

\subsubsection{Weak Identification Aspects}

In particular, \cite{Martinez2020asymptotic} develops an asymptotic theory framework based on fixed-smoothing asymptotics for the test statistics in order to account for the estimation uncertainty in the underlying LRV estimators. Consider the following long-run variance estimator
\begin{align}
V_{ff} (\theta) = \underset{ T \to \infty }{ \mathsf{lim} } \ \mathsf{Var} \left( \frac{1}{ \sqrt{T}} \sum_{t=1}^T f( Y_t, \theta ) \right). 
\end{align}
Therefore, a non-parametric estimator of the LRV takes the quadratic form below
\begin{align}
\hat{V}_{ff} (\theta) &= \frac{1}{T} \sum_{t=1}^T \sum_{s=1}^T \omega_h \left( \frac{t}{T},  \frac{t}{T} \right) \big[ f( Y_t, \theta) -  \bar{f}( Y_t, \theta) \big] \big[ f( Y_t, \theta) -  \bar{f}( Y_t, \theta) \big]^{\prime}  
\\
\bar{f}( Y_t, \theta) &= \frac{1}{T} \sum_{s=1}^T f( Y_t, \theta ) ,    
\end{align}
such that $\omega(.,.)$ is a weighting function, and $h$ is the smoothing parameter indicating the amount of nonparametric smoothing. For example, we can estimate the kernel density the following way 
\begin{align}
 \omega_h \left( \frac{t}{T},  \frac{t}{T} \right) = k \left( \frac{(t - s)}{hT}    \right)    
\end{align}
for some kernel function $k(.)$, leading to the usual kernel LRV estimator. Thus, by substituting the smoothing estimator of the particular kernel function, we obtain the following test statistic
\begin{align}
Q_T ( \theta ) = \frac{1}{2} \left[ \frac{1}{\sqrt{T}} \sum_{s=1}^T f( Y_t, \theta ) \right]^{\prime} \hat{V}_{ff}^{-1}  \left[ \frac{1}{\sqrt{T}} \sum_{s=1}^T f( Y_t, \theta ) \right].     
\end{align}
Then, the K statistic is based on the first-order derivative of $Q_T (\theta)$. Define as below the gradients
\begin{align}
\mathsf{g}_j (Y_t, \theta) &= \frac{ \partial f ( Y_t, \theta ) }{ \partial \theta_j } \in \mathbb{R}^{ m \times 1 }, \ \ j \in \left\{ 1,..., d \right\}, 
\\
\mathsf{g} (Y_t, \theta) &= \frac{ \partial f ( Y_t, \theta ) }{ \partial \theta^{\prime} } = \big( \mathsf{g}_1 ( Y_t, \theta ),...., \mathsf{g}_d ( Y_t, \theta ) \big) \in \mathbb{R}^{ m \times d }, 
\\
\bar{\mathsf{g}} (Y_t, \theta) &= \frac{1}{T} \sum_{t=1}^T \sum_{t=1}^T \frac{ \partial f( Y_t, \theta ) }{ \partial \theta^{\prime} }  \in \mathbb{R}^{ m \times d }. 
\end{align}
Taking the first-order and second-order derivatives of $\hat{V}_{ff}(\theta)$ with respect to $\theta_j$, we obtain 
\begin{align}
\hat{V}_{ \mathsf{g}_j f } (\theta) &= \frac{1}{T} \sum_{t=1}^T \sum_{s=1}^T \omega_h \left( \frac{t}{T},  \frac{t}{T} \right) \big[ \mathsf{g}_j ( Y_t, \theta) -  \bar{\mathsf{g}}_j ( Y_t, \theta) \big] \big[ f( Y_t, \theta) -  \bar{f}( Y_t, \theta) \big]^{\prime}  
\\
\hat{V}_{ \mathsf{g}_j  \mathsf{g}_j } (\theta) &= \frac{1}{T} \sum_{t=1}^T \sum_{s=1}^T \omega_h \left( \frac{t}{T},  \frac{t}{T} \right) \big[ \mathsf{g}_j ( Y_t, \theta) -  \bar{\mathsf{g}}_j ( Y_t, \theta) \big] \big[ \mathsf{g}_j ( Y_t, \theta) -  \bar{\mathsf{g}}_j ( Y_t, \theta) \big]^{\prime}  
\end{align}

\newpage

Then, it follows that 
\begin{align}
\frac{ \partial Q_T(\theta) }{\partial \theta} = D_T(\theta) V_{ff}^{-1}(\theta) \left[ \frac{1}{ \sqrt{T}} \sum_{t=1}^T f(Y_t, \theta) \right],     
\end{align}
Denote with  $D_T (\theta) = \big[ D_{T,1}(\theta),...., D_{T,d}(\theta) \big] \in \mathbb{R}^{ m \times d}$, such that 
\begin{align}
D_{T,j} (\theta) = \left[ \frac{1}{\sqrt{T}} \sum_{t=1}^T \mathsf{g}_j (Y_t, \theta) \right] - \hat{V}_{\mathsf{g},f} (\theta) \hat{V}_{ff}^{-1}(\theta) \left[ \frac{1}{\sqrt{T}} \sum_{t=1}^T f_j (Y_t, \theta) \right]  \in \mathbb{R}^{m \times 1}.  
\end{align}
Then, the $K$ statistic for testing the null hypothesis $H_0: \theta = \theta_0$ against the alternative hypothesis given by $H_1 : \theta \neq \theta_0$ is given by 
\begin{align}
\mathcal{K}_T (\theta_0) = \left( \frac{ \partial Q_T (\theta_0) }{\partial \theta} \right)^{\prime} \big[ D_T(\theta_0)^{\prime} \bar{V}^{-1}_{ff}(\theta_0) D_T(\theta_0)  \big]^{-1} \left( \frac{ \partial Q_T (\theta_0) }{\partial \theta} \right).    
\end{align}
where for any concave function $\phi(\theta)$, $\partial \phi(\theta_0) / \partial \theta$ is defined to be 
\begin{align}
\frac{ \partial \phi (\theta_0) }{ \partial \theta} = \frac{ \partial \phi (\theta) }{ \partial \theta} \bigg|_{\theta = \theta_0} \in \mathbb{R}^{ d \times 1 }. 
\end{align}
Thus, to consider fixed-smoothing asymptotics, we employ the orthonormal series LRV estimator 
\begin{align}
\omega_h \left( \frac{t}{T}, \frac{s}{T} \right) = \frac{1}{G} \sum_{\ell = 1}^G \Phi_{\ell} \left( \frac{t}{T} \right) \Phi_{\ell} \left( \frac{s}{T} \right),     
\end{align}
where $G$ is a smoothing parameter for this estimator and $\Phi_{\ell} (.)$ is a set of a basis functions on $L^2 [0,1]$. The weighting function is expressed with respect to a set of basis functions on the space of $L^2 [0,1]$. Therefore, the LRV estimator takes the following form 
\begin{align*}
\bar{V}_{ff}(\theta) = \frac{1}{G} \sum_{\ell = 1}^G \left\{ \frac{1}{ \sqrt{T} } \sum_{t=1}^T \Phi_{\ell} \left( \frac{t}{T} \right) \big[ f( Y_t, \theta ) -  \bar{f}( Y_t, \theta )   \big]   \right\} \left\{ \frac{1}{ \sqrt{T} } \sum_{t=1}^T \Phi_{\ell} \left( \frac{t}{T} \right) \big[ f( Y_t, \theta ) -  \bar{f}( Y_t, \theta )   \big]   \right\}^{\prime}.     
\end{align*}
Then, an updated estimator for $\mathcal{J}_T (\theta_0)$ needs to be obtained from the sample such that  
\begin{align}
\mathcal{J}_T(\theta_0) = \left[ \frac{1}{\sqrt{T}} \sum_{t=1}^T \tilde{f}( Y_t, \theta_0 )     \right]^{\prime} \hat{V}_{ff}^{-1}(\theta_0 )  \left[ \frac{1}{\sqrt{T}} \sum_{t=1}^T \tilde{f}( Y_t, \theta_0 )  \right]
\end{align}

\begin{remark}
The modified statistic is not the same as the original statistic due to the projection of the function into the space which is induced by the transformation of the column vector space. This property allows us to obtain a consistent estimator $\hat{\theta}$ of $\theta_0$. However, to obtain an unbiased estimator for the variance of the estimator, we also need to obtain unbiased estimators for each partial derivative that the covariance matrix is composed to.

\newpage

Specifically, for the variance estimator we obtain the following
\begin{align}
V(\theta_0) := V 
=
\begin{bmatrix}
V_{ff}(\theta_0) & V_{fg}(\theta_0)
\\
V_{gf}(\theta_0) & V_{gg}(\theta_0) 
\end{bmatrix}.
\end{align}
\end{remark}
Therefore, the CLT to hold the following asymptotic distribution to hold
\begin{align}
\begin{pmatrix}
\displaystyle \frac{1}{ \sqrt{T} } \sum_{t=1}^T \big\{ f(Y_t, \theta_0 ) - \mathbb{E} \big[ f(Y_t, \theta_0) \big] \big\} 
\\
\\
\displaystyle \frac{1}{ \sqrt{T} } \sum_{t=1}^T \mathsf{vec} \big\{  \mathsf{g}( Y_t, \theta_0 ) - \mathbb{E} \big[ \mathsf{g}( Y_t, \theta_0) \big] \big\} 
\end{pmatrix}    
\Rightarrow 
\begin{pmatrix}
\psi_f
\\
\psi_{ \mathsf{g} }
\end{pmatrix},
\end{align}
where $\psi_f \in \mathbb{R}^{ m \times 1 }$ and $\psi_{\mathsf{g}}  \in \mathbb{R}^{ m  d \times 1 }$. Therefore, it holds that we have a sequence of matrices 
\begin{align}
T^{\kappa} \times \mathbb{E} \left[ \mathsf{g} ( Y_t, \theta_0 ) \right] \to \Pi = \left( \Pi_1,..., \Pi_d \right) \in \mathbb{R}^{ m \times d }     
\end{align}
In other words, we consider the convergence rate of the $m-$system equations. By multiplying with $\textcolor{red}{T^ {\kappa}  }$ we ensure that we take into account the different convergence rate depending on the type of functional form specification for the model under investigation. Specifically, when $\kappa = 0$, then the $m-$moment conditions doesn't include the correct rate of convergence which implies that the matrix $\Pi$ has a full column rank and therefore the parameter under the null hypothesis $\theta_0$, can be estimated at the usual parametric $\sqrt{T}$-rate. In other words, the case which corresponds to the weak identification of the model specification occurs when $\kappa = 1/2$, since it asymptotically converge into the null matrix, such that,  $\Pi = 0$ and therefore, $\theta_0$ cannot be consistently estimated (see,  \cite{Martinez2020asymptotic}).  Therefore, the estimation procedure for the case of fixed autocorrelation is given as below
\begin{align*}
D_{T, j}(\theta_0) - \sqrt{T} \mathbb{E} \left[  \mathsf{g} (Y_t, \theta_0 ) \right]
=
\frac{1}{ \sqrt{T} } \sum_{t=1}^T \bigg\{ \mathsf{g} (Y_t, \theta_0 ) - \mathbb{E} \big[ \mathsf{g} (Y_t, \theta_0 ) \big] \bigg\} 
- \hat{V}^{-1}_{ \mathsf{g}_j, f  } (\theta_0) \hat{V}^{-1}_{ ff} (\theta_0) \times \left[  \frac{1}{\sqrt{T}} \sum_{t=1}^T f( Y_t, \theta_0 )  \right]
\end{align*}
Now, to estimate the above moment conditions the important component of the estimation procedure is to obtain unbiased estimators for the LRV covariance matrices which are computed based on a set of basis functions. Let $\ell \in \left\{ 1,..., G \right\}$, then the basis functions shall satisfy: \textit{(i)}. $\Phi_{\ell} (.)$ are piesewise monotonic, continuously differentiable, and \textit{(ii)}. $\Phi_{\ell} (.)$ are orthonormal in the space of $L^2 [ 0,1]$ functions and satisfy $\displaystyle \int_0^1 \Phi_{\ell} (x) dx = 0$.  Therefore, the corresponding estimators are obtained as below
\begin{align}
\frac{1}{\sqrt{T}} \sum_{t=1}^T \Phi_{\ell} \left( \frac{t}{T} \right) \left[ f( Y_t, \theta_0 ) - \bar{f} ( Y, \theta_0 ) \right] \Rightarrow \int_0^1 \Phi_{\ell}(r) dB_f (r)
\\
\frac{1}{\sqrt{T}} \sum_{t=1}^T \Phi_{\ell} \left( \frac{t}{T} \right) \left[ \mathsf{g}( Y_t, \theta_0 ) - \bar{\mathsf{g}} ( Y, \theta_0 ) \right] \Rightarrow \int_0^1 \Phi_{\ell}(r) dB_f (r)
\end{align}

\newpage

\subsection{LGMM Estimation of Time Series via Conditional Moment Restrictions}

\subsubsection{Statistical Problem Formulation}

Following \cite{gospodinov2012local}, consider that a univariate process is \textit{strictly stationary} and \textit{geometically ergodic} and denote the conditional moment restrictions imposed by economic theory
\begin{align}
\label{CGMM}
\mathbb{E} \big[ u \big( r_{t+1}, r_t,..., r_{t-p+1} , \theta_0 \big) | r_t,..., r_{t-p+1} \big] = 0.    
\end{align}
where $u : \mathbb{R}^{p+1} \times \Theta$ is a known function up to a vector of unknown parameters $\theta_0 \in \Theta$.

Consider the AR(1) model with martingale difference errors
\begin{align}
r_{t+1} = \gamma_0 + \gamma_1 r_t + u_{t+1}, \ \ \ \ \mathbb{E} [ u_{t+1} | r_t ] = 0.     
\end{align}
In this case, the moment function is specified as $u( r_{t+1}, r_t, \theta_0 ) = r_{t+1} - \gamma_0 - \gamma_1 r_t$, with $\theta_0 = \left( \gamma_0, \gamma_1 \right)$. Let $x_t = ( r_t,..., r_{t-p+1} )^{\prime}$ and $y_{t+1} = ( r_{t+1}, x_t^{\prime} )^{\prime}$. Then, the conditional moment restriction model \eqref{CGMM} is estimated by the GMM estimator based on the unconditional moment restrictions 
\begin{align}
\mathbb{E} \big[ \mathsf{g} ( y_{t+1}, \theta_0 ) \big] = \mathbb{E} \big[ \mathcal{A} ( x_t, \theta_0 ) u ( y_{t+1}, \theta_0 ) \big] = 0.
\end{align}
with a matrix of instruments $\mathcal{A} ( x_t, \theta_0 )$, which is implied by the original model. The continuously updated GMM estimator is defined as 
\begin{align}
\hat{\theta}_{GMM} &:= \underset{ \theta \in \Theta }{ \mathsf{arg min} } \ \left( \frac{1}{T-p}  \sum_{t=p}^{n-1} \mathsf{g} ( y_{t+1}, \theta ) \right)^{\prime} W_n( \theta )^{-1} \left( \frac{1}{T-p}  \sum_{t=p}^{n-1} \mathsf{g} ( y_{t+1}, \theta ) \right)
\\
W_n ( \theta ) &= \frac{1}{ n-p } \sum_{t=p}^{n-1} \mathsf{g} ( y_{t+1}, \theta ) \mathsf{g} ( y_{t+1}, \theta )^{\prime}    
\end{align}
is an optimal weight matrix to estimate the parameters from the unconditional moment restrictions $\mathbb{E} \big[ \mathsf{g} ( y_{t+1}, \theta_0 \big] = 0$. In particular,  \cite{gospodinov2012local} pursue an alternative approach and use a localized version of the GMM estimator that operates directly on the conditional moment restriction.
\begin{align}
w_{tj} = \frac{  \displaystyle  \mathbb{K} \left( \frac{ x_j - x_t }{  h }  \right)  }{  \displaystyle  \sum_{s=p}^{n-1} \mathbb{K} \left( \frac{ x_s - x_t }{  h }  \right) }    
\end{align}
Therefore, the kernel estimator of the conditional moment condition $\mathbb{E} \big[ u ( y_{t+1}, \theta ) | x_t \big]$ is defined as
\begin{align}
\widetilde{u}_n (x_t, \theta) = \sum_{j=p}^{n-1} w_{tj} u \big( y_{j+1}, \theta \big).    
\end{align}

\newpage

Then, the LGMM estimator minimizes its quadratic form as below
\begin{align}
\hat{\theta}_{LGMM} &=   \underset{ \theta \in \Theta }{ \mathsf{arg min} } \ \sum_{t=p}^{n-1} \left\{ \widetilde{u}_n (x_t, \theta)^{\prime} \big[ V_n( x_t, \theta ) \big]^{-1}  \widetilde{u}_n (x_t, \theta) \right\} \boldsymbol{1} \left\{ | x_t | \leq c_n \right\} 
\\
V_n( x_t, \theta ) &= \sum_{j=p}^{n-1} w_{tj} u \big( y_{j+1}, \theta \big) u \big( y_{j+1}, \theta \big)^{\prime}  
\end{align}

\begin{definition}[$\mathcal{D}-$Bounded Function] A function $a: \mathbb{R}^{p+1} \times A \to \mathbb{R}^q$ is called $\mathcal{D}-$bounded on A with order $s$ if the following conditions hold:

\begin{itemize}
    \item[(i)] $a(y, \theta)$ is almost surely differentiable at each $\theta \in A$,

    \item[(ii)] For each $t = p,..., T-1$, it holds that
    \begin{align*}
        \underset{ \theta \in A }{ \mathsf{sup} } \ \mathbb{E} \big[ \big| a( y_{t+1}, \theta ) \big|^2 \big] < \infty \ \ \ \text{and} \ \ \  \underset{ \theta \in A }{ \mathsf{sup} } \ \mathbb{E} \left[ \left| \frac{ \partial a( y_{t+1}, \theta ) }{ \partial \theta^{\prime} } \right|^s \right] < \infty.
    \end{align*}

     \item[(iii)] For each $t = p,..., T-1$, there exist constants $C_1, C_2 \in (0, \infty)$ such that
     \begin{align*}
        \underset{ x \in \mathbb{R}^{p+1} }{ \mathsf{sup} } \ \underset{ \theta \in A }{ \mathsf{sup} } \  \mathbb{E} \big[ \big| a( y_{t+1}, \theta ) \big| \big| x_t = x \big] f(x) < C_1, \ \ \  \underset{ x \in \mathbb{R}^{p+1} }{ \mathsf{sup} } \ \underset{ \theta \in A }{ \mathsf{sup} } \  \mathbb{E} \left[ \left| \frac{ \partial a( y_{t+1}, \theta ) }{ \partial \theta^{\prime} } \right| \right| x_t = x \big] f(x) < C_2 
    \end{align*}
    
\end{itemize}

\end{definition}

\begin{remark}
Notice that the $\mathcal{D}-$boundedness assumes boundedness of the conditional and unconditional (higher-order) moments of the function $a$ and its derivative.   
\end{remark}

\subsubsection{Derivations and Mathematical Proofs}

\begin{lemma}
Suppose Assumptions hold and the function $a : \mathbb{R}^{p+1} \times A$ is $\mathcal{D}-$bounded on $A$ with order $s > 2$. If $\frac{ \mathsf{log} (n ) }{ n h^p } \to 0$ as $n \to \infty$, then for any $0 < \xi < \infty$, 
\begin{align*}
\underset{ |x| \leq c_n  }{ \mathsf{sup} } \ \underset{ \theta \in A }{ \mathsf{sup} }  \ \left| \frac{1}{(T-p) h^p} \sum_{j=p}^{n-1} \mathbb{K} \left( \frac{x_j - x }{h}  \right) a \big( y_{j+1}, \theta \big) - \mathbb{E} \left[ \frac{1}{h^p} \mathbb{K} \left( \frac{x_j - x }{h}  \right) a \big( y_{j+1}, \theta \big)  \right] \right|  = \mathcal{O}_p \left( \sqrt{ \frac{\mathsf{log} (n) }{ nh^p } } \right).
\end{align*}

\end{lemma}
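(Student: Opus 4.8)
The plan is to establish this uniform-in-$(x,\theta)$ deviation bound by combining a truncation argument, a discretization (covering) of the index set $\{|x|\le c_n\}\times A$, a Bernstein-type exponential inequality adapted to the strictly stationary and geometrically ergodic (hence exponentially strongly mixing) structure of the data, and a union bound across the grid. The rate $\sqrt{\log(n)/(nh^p)}$ is the standard Masry-type uniform rate, and the logarithmic factor will arise precisely from the cardinality of the grid entering the union bound. Throughout I identify $T$ with the sample size $n$.

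First I would center and normalize by writing $Z_j(x,\theta) = h^{-p}\,\mathbb{K}((x_j-x)/h)\,a(y_{j+1},\theta)$ and studying $\frac{1}{n-p}\sum_j (Z_j - \mathbb{E} Z_j)$. Because $a$ has only $s$ finite moments with $s>2$, I would first truncate: choose $\tau_n$ growing as a small power of $nh^p$, split $a = a\,\mathbf{1}\{|a|\le \tau_n\} + a\,\mathbf{1}\{|a|>\tau_n\}$, and use the moment bound in part (ii) of $\mathcal{D}$-boundedness together with Markov's inequality to show that the untruncated tail contributes $o_p$ of the target rate. The conditional-moment bound in part (iii) is what keeps $\mathbb{E} Z_j$ and the truncation bias of order $O(1)$ after the $h^{-p}$ normalization, since a change of variables in the kernel integral produces the compensating factor $h^p$.

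Next I would discretize. Covering the expanding set $\{|x|\le c_n\}$ and the compact set $A$ by balls of radius $\delta_n$ yields a grid whose cardinality is polynomial in $n$, provided $c_n$ grows at most polynomially and $\delta_n$ is an inverse power of $n$. On this grid I would control oscillation: between neighbouring points the summand changes by at most a Lipschitz amount, where differentiability in $\theta$ is supplied by the derivative-moment bound in $\mathcal{D}$-boundedness and smoothness in $x$ by the Lipschitz continuity of $\mathbb{K}$, the latter contributing an extra $h^{-1}$ factor that is absorbed by choosing $\delta_n$ small enough. This reduces the supremum over the continuum to a maximum over finitely many grid points, up to a negligible remainder.

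At each fixed grid point $(x_k,\theta_\ell)$ the centered, truncated sum is a triangular array of bounded, stationary, mixing summands whose per-term variance is $O(h^p)$ by the standard kernel-variance computation, so the variance of the normalized sum is $O(1/(nh^p))$. The hard part is the dependence: I would invoke a Bernstein-type inequality for strongly mixing sequences via the big-block/small-block decomposition, so that blocks behave like independent copies up to a mixing-coefficient error that is summable because geometric ergodicity gives exponentially decaying mixing coefficients. Setting the deviation level to $M\sqrt{\log(n)/(nh^p)}$ makes the exponential bound at a single point of order $n^{-cM^2}$; multiplying by the polynomial number of grid points and choosing $M$ large enough drives the total probability to zero, which yields the claimed $\mathcal{O}_p$ rate. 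The constant $\xi$ enters only as the free multiplicative constant permitted in the deviation level, so the conclusion holds for every $0<\xi<\infty$.
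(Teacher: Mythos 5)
Your outline is correct and is the canonical Masry/Hansen-type argument for uniform kernel rates under mixing: truncation via the moment bounds in the $\mathcal{D}$-boundedness condition, a polynomial covering of $\{|x|\le c_n\}\times A$ with oscillation controlled by the Lipschitz kernel and the derivative-moment bound, a Bernstein inequality for exponentially mixing arrays via big/small blocks, and a union bound producing the $\sqrt{\log n}$ factor. Be aware, however, that the paper itself supplies no proof of this lemma: it is imported verbatim from the cited source (Gospodinov and Otsu), and the surrounding text only specializes it to $a\equiv 1$ and $a=u$ to obtain the rates for $\hat f$ and $\hat u$ and then adds the $O(h^p)$ bias term. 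So there is no in-paper argument to compare against; your proof is essentially the one the original reference uses. Two small points worth tightening. First, as transcribed here the $\mathcal{D}$-boundedness condition places the $s$-th moment on $\partial a/\partial\theta'$ but only a second moment on $a$ itself; your truncation step needs $\mathbb{E}|a|^{s}<\infty$ with $s>2$ to make the tail negligible at the target rate, so you should either read the definition as intending order $s$ on $a$ (almost certainly a transcription slip in the paper) or state that assumption explicitly. Second, with your normalization $Z_j=h^{-p}\mathbb{K}((x_j-x)/h)\,a(y_{j+1},\theta)$ the per-term variance is $O(h^{-p})$, not $O(h^p)$; the variance of the sample average is still $O(1/(nh^p))$, so your final rate is unaffected, but the intermediate statement should be corrected. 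Your reading of $\xi$ as governing the (polynomial) growth of $c_n$, which is exactly what keeps the covering number polynomial, is the right interpretation of the otherwise dangling quantifier in the lemma statement.
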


Moreover, the objective function of the LGMM estimator and its population counterpart are written respectively as below
\begin{align*}
\mathcal{Q}_n ( \theta ) 
&= 
\sum_{t=p}^{n-1} \left\{ \widetilde{u}_n (x_t, \theta)^{\prime} \big[ V_n( x_t, \theta ) \big]^{-1}  \widetilde{u}_n (x_t, \theta) \right\} \boldsymbol{1} \left\{ | x_t | \leq c_n \right\} 
\\
\mathcal{Q} ( \theta ) 
&= 
\mathbb{E} \left[ \mathbb{E} \big[ u \big( y_{t+1}, \theta \big) \big| x_t \big]^{\prime} V( x_t, \theta )^{-1} \mathbb{E} \big[ u \big( y_{t+1}, \theta \big) \big| x_t \big]  \right],
\end{align*}
where $V( x, \theta )^{-1}$ exists for each $x \in \mathbb{R}^p$ and $\theta \in \Theta$. 

\newpage

Define with 
\begin{align}
\hat{f}(x) &= 
\frac{1}{ (n-p) h^p } \sum_{j=p}^{n-1} \mathbb{K} \left( \frac{ x_j - x }{ h } \right)   
\\
\hat{u} ( x, \theta ) &= 
\frac{1}{ (n-p) h^p } \sum_{j=p}^{n-1} \mathbb{K} \left( \frac{ x_j - x }{ h } \right) u \big( y_{j+1}, \theta \big).   
\end{align}
Then, it holds that
\begin{align}
\underset{ |x| \leq c_n  }{ \mathsf{sup} } \ \left| \hat{f}(x) -  \mathbb{E} \left[ \hat{f}(x) \right] \right|  &= \mathcal{O}_p \left( \sqrt{ \frac{\mathsf{log} (n) }{ nh^p } } \right).  
\\
\underset{ |x| \leq c_n  }{ \mathsf{sup} } \ \underset{ \theta \in \Theta }{ \mathsf{sup} }  \ \big|  \hat{u}( x, \theta)   -  \mathbb{E} \left[ \hat{u}( x, \theta) \right] \big|  &= \mathcal{O}_p \left( \sqrt{ \frac{\mathsf{log} (n) }{ nh^p } } \right).  
\end{align}
Then, by a change of variables $a = \frac{  x_j - x }{h}$ and an expansion around $a = 0$, whe get that 
\begin{align}
\underset{ |x| \leq c_n  }{ \mathsf{sup} } \ \left|  \mathbb{E} \left[ \hat{f}(x) \right] - f(x) \right|  = h^p \underset{ |x| \leq c_n  }{ \mathsf{sup} } \  \left| \int \mathbb{K}(a) \frac{  df( x + \bar{a} h ) }{ da } da \right| = O( h^p).      
\end{align}
Combining these results we obtain that 
\begin{align}
\underset{ |x| \leq c_n  }{ \mathsf{sup} } \ \left| \hat{f}(x)  - f(x) \right|  &= \mathcal{O}_p \left( \sqrt{ \frac{\mathsf{log} (n) }{ nh^p } } \right) +  O( h^p),    
\\
\underset{ |x| \leq c_n  }{ \mathsf{sup} } \ \underset{ \theta \in \Theta }{ \mathsf{sup} }  \ \big|  \hat{u}( x, \theta)   -  \mathbb{E} \left[ u \big( y_{t+1}, \theta \big) \right] \big|  &= \mathcal{O}_p \left( \sqrt{ \frac{\mathsf{log} (n) }{ nh^p } } \right) +  O( h^p). 
\end{align}

\medskip

\begin{example}
Suppose that the data are generated by a zero-mean AR(1) model such that
\begin{align}
r_{t+1} = \theta_0 r_t  u_{t+1},    
\end{align}
for each $t = 1,...,T$, where $u_t \sim \textit{i.i.d} (0,1)$, and the conditional moment condition restriction can be obtained by defining $u \big( y_{t+1}, \theta_0 \big) = r_{t+1} - \theta_0 r_t$, with $y_{t+1} = ( r_{t+1}, r_t )^{\prime}$ and $x_t = r_t$. Then, to highlight the effect of smoothing on the moment functions, we compare the OLS estimator 
\begin{align}
\hat{\theta}_{OLS} = \mathsf{arg min}_{\theta \in \Theta} \sum_{t=1}^{n-1} u \big( y_{t+1}, \theta_0 \big)^2
\end{align}
and the LGMM estimator with a common weight matrix 
\begin{align}
\hat{\theta}_{LGMM} = \underset{ \theta \in \Theta }{ \mathsf{argmin} } \ \sum_{t=1}^{n-1} \widetilde{u}_n ( x_t, \theta )^2 \boldsymbol{1} \left\{ | x_t | \leq c_n \right\}  
\end{align}
\end{example}

\newpage

\paragraph{Proof of Part (b).} By expanding the first-order condition $\partial \mathcal{Q}_n / \partial \theta = 0$ around $\theta_0$ we obtain
\begin{align}
\frac{ \partial \mathcal{Q}_n }{ \partial \theta } +  \frac{ \partial \mathcal{Q}^2_n ( \bar{\theta} ) }{ \partial \theta \partial \theta^{\prime} }  &\overset{ d }{ \to } \mathcal{N} \big( 0, \ell ( \theta_0 ) \big)
\\
\frac{1}{2} \frac{ \partial \mathcal{Q}^2_n ( \bar{\theta} ) }{ \partial \theta \partial \theta^{\prime} } 
&\overset{ p }{ \to } \ell ( \theta_0 ).
\end{align}
Moreover, consider the score function such that 
\begin{align}
\mathcal{S}_n ( x_t, \theta_0 ) = \sum_{j=p}^{n-1} w_{tj} u \big( y_{j+1}, \theta_0 \big) \left[ \frac{ \partial u \big( y_{j+1}, \theta_0 \big) }{  \partial \theta_{\ell} } \right]^{\prime} + \sum_{j=p}^{n-1} w_{tj} \frac{ \partial u \big( y_{j+1}, \theta_0 \big) }{  \partial \theta_{\ell} } u \big( y_{j+1}, \theta_0 \big)^{\prime},      
\end{align}
for $\ell = 1,...,k$. 


\subsection{Efficient Method of Moments}

Another relevant methodology to GMM estimation especially for irregular data structures with nonlinear dynamics is the estimation approach of Efficient Method of Moments (EMM), proposed by \cite{newey1987hypothesis} (see, also \cite{ortelli2005robust}). Specifically, the simulation-based EMM technique provides a systematic way for generating moment conditions for simulated method of moments (SMM) estimation. This approach is also related to the \textit{indirect inference} estimator. Both EMM and II use a first-stage auxiliary statistical model to generate moment conditions. However, the EMM mimics the first-order conditions for estimation of the auxiliary model, which is computationally much more tractable than mimicking the optimization problem itself, as is done in the II approach (see, \cite{li2009simulation, li2010indirect}).  Furthermore, EMM is useful in situations in which analytical characterization and evaluation of the likelihood function is infeasible. Thus, EMM selects moments based on the score function of an auxiliary model, called the score generator, to define a criterion function for SMM estimation (see, \cite{chung2001testing}).  In the case of panel data estimation, asymptotically EMM estimators for dynamic panel data regressions provide bias corrections when the number of time periods is fixed or tends to infinity with the number of panel units (see, \cite{breitung2022bias}).   

\begin{example}
Consider the pure autoregressive panel data model as below: 
\begin{align}
y_{it} = \mu_i + \rho y_{it-1} + u_{it}, \ \ \ t = 1,..., T    
\end{align}
where $\bar{y}_{ -1, i } = \frac{1}{T} \sum_{ t=1 }^T y_{i, t-1}$. Notice that the bias-corrected profile likelihood estimator $\tilde{\rho}$ results from solving the equation $\tilde{\rho} - \mathbb{E} \left[ \mathcal{S} \left( \tilde{\rho} \right) \right] = 0$. The bias term $\mathbb{E} \left[ \mathcal{S} \left( \tilde{\rho} \right) \right]$ is a complicated function of $\rho$ as it involves an expectation of a ratio of two random variables both depending on $\rho$. To simplify the derivation of the bias function, we first assume that the variance $\sigma^2$ is know, resulting to the profile score function below: 
\begin{align}
\mathcal{S} ( \rho ) = \frac{1}{ \sigma^2 } \sum_{ i = 1}^N \sum_{t = 1}^T \big( y_{i, t-1 } - \bar{y}_{-1,i} \big) \big( y_{i, t-1 } - \bar{y}_{-1,i} \big) .  
\end{align}
\end{example}

\newpage

\section{Panel Data Model Estimation}
\label{Section3}

In particular, the \textit{weak instrument problem} of the system GMM estimator in dynamic panel data models is studied by \cite{bhargava1991identification} and \cite{bun2010weak}.  Moreover estimation and inference in panel data with cross-sectional dependence is discussed in 
\cite{bai2004estimating}.

\subsection{Illustrative Examples}

\begin{example}
Consider the following simple panel data regression model
\begin{align}
y_{it} = \rho_i y_{i,t-1} + \varepsilon_{it}    
\end{align}
Notice that testing for a unit root in the simple panel AR$(1)$ model is identical on testing the hypothesis $\rho = 1$ in the panel AR$(1)$ model with covariates such that
\begin{align}
y_{it} = \rho_i y_{i,t-1} +  \beta_i^{\prime} x_{it} +  \varepsilon_{it}    
\end{align}

\end{example}

\medskip

\begin{example}
The dynamic error components regression is characterized by the presence of lagged dependent variable among the regressors such that 
\begin{align}
y_{it} = \delta y_{i,t-1} + x_{it}^{\prime} \beta + \mu_i + v_{it}, \ \ \ \ i = 1,...,N \ \ \ \ t = 1,...,T 
\end{align}    
\end{example}

\begin{example}
Consider the following panel data model as below
\begin{align}
y_{i,t} &= \lambda_i^{\prime} D_t + u_{i,t}
\\
u_{i,t} &= \rho u_{i,t-1} + \varepsilon_{i,t} 
\end{align}
where $i = 1,...,N$ and $t = 1,...,T$ corresponds to the cross-sectional units and the time periods. 
\end{example}

\begin{remark}
The above example demonstrates the particular dependence structure which implies that the panel data have persistence captured by the innovation equation imposed in the first stage equation. Additionally our interest is in capturing and modelling network dependence which is considered to be a different type of dependence to the usual cross-sectional dependence. According to \cite{moon2000estimation}, it holds that when there is a common time series local to unity parameter across independent individuals in a panel, it is apparent that the cross-section data carry additional information that can be used to in estimating the common localizing parameter $c$. In other words, the main purpose here is to propose a consistent local to unity modelling approach for panel data with cross-sectional dependence.  
\end{remark}

\newpage

\subsection{GMM Estimation for Panel Data Regression Models}

Consider again the following dynamic panel data regression model
\begin{align}
y_{it} = \gamma y_{i,t-1} + x_{it}^{\prime} \beta + \alpha_i + u_{it},    
\end{align}
where $\alpha_i$ represents the time-invariant unobserved heterogeneity (relevant references include among others the studies of \cite{huang2020identifying} and \cite{bonhomme2015grouped}. 

\medskip

\begin{example}
\cite{wintoki2012endogeneity} use the dynamic GMM estimator to estimate the effect of board structure on firm performance. Moreover, correctly modelling the presence \textit{unobserved heterogeneity} is crucial as it captures, among other aspects, managerial quality, which is likely to correlate with both firm performance and board structure. The econometric specification of interest is
\begin{align}
y_{it} = x_{it}^{\prime} \beta + \gamma_1 y_{i,t-1} +  \gamma_2 y_{i,t-2}    + \alpha_i + u_{it},
\end{align}
where $y_{it}$ is a measure of fund performance, such as either return on assets (ROA) or return on sales (ROS), and $x_{it}$ includes three board structure variables: board size, board composition, and board leadership. Thus, an important reason to include two lags of firm performance in the model is to make the equation dynamically complete, in the sense that any residual serial correlation in $u_{it}$ is controlled for. Furthermore, control variables include the firm's market-to-book ratio, firm age, and the standard deviation of its stock returns (over previous 12 months).  

\begin{itemize}

\item For the first-differenced equation, lagged values $y_{i,t-p}$ and $x_{i,t-p}$ are used as instruments, where $p > 2$. Therefore, for these instruments to be valid they must be relevant, that is, capture variation in current governance, as well as exogenous. This means that exogenous regressors should be uncorrelated to $u_{it}$. Based on economic theory a reasonable explanation to this fact: If the board structure today is one that trades off the expected costs and benefits of alternative board structures, then current shocks to performance must have been unanticipated when the boards were chosen. 

\item Thus the optimal estimation methodology in the given setting is the system GMM estimator which employs lagged levels as instruments for the first-differenced equation and using lagged differences as instruments for the levels equation. Moreover, the maintained assumption is that there is no serial correlation in $u_{it}$, and thus no second order serial correlation in $\Delta u_{it}$. In particular, the SGMM estimator captures the dynamic relationship between current government and past firm performance while provides statistical consistency and the unbiasedness property is not violated. 
    
\end{itemize}
    
\end{example}

\newpage

\subsubsection{Moment Conditions}

Consider the model of interest as below:
\begin{align}
y_{it} = x_{it}^{\prime} \beta + \varepsilon_{it},     
\end{align}
where it is assumed that $\mathbb{E} \big( z_{it} \varepsilon_{it} \big) = 0$ for a given vector of instruments $z_{it}$ of dimension $R \geq K$, where $K$ is the number of elements of $\beta$ (that is, the number of regressors in the model). Then, the set of population moment conditions can be written as below: 
\begin{align}
\mathbb{E} \big[ z_{it} \big( y_{it} -  x_{it}^{\prime} \beta \big) \big] = 0.  
\end{align}
In other words, these $R$ conditions can help to estimate the $K$ unknown parameters in $\beta$. Thus, the identification assumption is satisfied only for the true parameter values and is nonzero otherwise. On the other hand, since these expectations are unobservable in practice we rely on sample moments for statistical inference purposes which is given by 
\begin{align}
\frac{1}{NT} \sum_{i=1}^N \sum_{t=1}^T z_{it} \big( y_{it} -  x_{it}^{\prime} \beta \big).  
\end{align}
Moreover, the GMM estimator for $\beta$ is obtained by minimizing a quadratic form in the sample averages  
\begin{align}
\underset{ \beta }{ \mathsf{min} } \ \left( \frac{1}{NT} \sum_{i=1}^N \sum_{t=1}^T  z_{it} \big( y_{it} -  x_{it}^{\prime} \beta \big) \right)^{\prime} W_{NT} \left( \frac{1}{NT} \sum_{i=1}^N \sum_{t=1}^T  z_{it} \big( y_{it} -  x_{it}^{\prime} \beta \big) \right)
\end{align}
where $W_{NT}$ is an $( R \times R )$ positive definite weighting matrix, which might depend upon the observed sample and thus needs to be estimated.  When implementing the GMM estimation approach, we usually adjust the weighting matrix to obtain an asymptotically more efficient estimator. In particular, if the error term is heteroscedastic, but there is no correlation between different error terms, an empirical weighting matrix is given by the following expression 
\begin{align}
W_{NT}^{\mathsf{opt} } = \left( \frac{1}{NT} \sum_{i=1}^N \sum_{t=1}^T   \hat{\varepsilon}^2_{it}  z_{it} z_{it}^{\prime} \right),    
\end{align}
where $\hat{\varepsilon}$ is the residual given by $y_{it} - x_{it}^{\prime} \hat{\beta}_1$ such that $\hat{\beta}_1$ denotes an initial consistent estimator for $\beta$. Therefore, this makes the optimal GMM estimator a two-step estimator. During the first step, a consistent estimator for $\beta$ is obtained, which is used to calculate residuals and construct the estimated optimal weighting matrix. During the second step, an asymptotically efficient estimator is obtained. Thus, the optimal estimator can be obtained as below: 
\begin{align}
\hat{\beta}_{\mathsf{GMM}} = \left[ \left( \frac{1}{NT} \sum_{i=1}^N \sum_{t=1}^T  x_{it} z_{it}^{\prime} \right)     \right] .   
\end{align}

\newpage

\subsection{Panel Data with Cross Sectional Dependence}

In this section, we discuss the framework proposed by \cite{gonccalves2015bootstrap} that corresponds to modeling panel data with cross-sectional dependence (see, also \cite{phillips2003dynamic},  \cite{bond2002projection} and \cite{pesaran2021general}). A relevant issue for identification and estimation is examined by a large stream of literature which develops econometric methodologies for capturing cross sectional dependence and heterogeneity via the use of dynamic panel models, based on the seminal contributions of \cite{pesaran2006estimation}. Moreover, \cite{kapetanios2014nonlinear} present a framework for nonlinear panel models with cross-sectional dependence. Recently, in the spatial econometrics literature various methodologies have been proposed to model both spatial dependence and cross-sectional effects such as \cite{li2020spatial}. Moreover, \cite{Olmo2023} propose a network regression model with an estimated interaction matrix which incorporates both the cross-sectional as well as the network dependence in the form of a metric distance between the set of regressors (see, also \cite{kapar2022dynamic}). 

We shall denote with $\mathsf{cum} ( w_0 ) = \mathbb{E} ( w_0 )$ and $\mathsf{cum} ( w_0, w_{t_1} ) = \mathsf{Cov} ( w_0, w_{t_1} )$. Notice that for a given time series $\left\{ w_t \right\}$ and for $j \in \mathbb{N}$, we let $\mathsf{cum} \left( w_0, w_{t_1},..., w_{t_j - 1} \right)$ to denote the $j-$th order joint cumulant of $\left( w_0, w_{t_1},..., w_{t_j - 1} \right)$, where $t_1,...,t_{j-1}$ are integers. In particular, \cite{gonccalves2015bootstrap} impose the assumption of a martingale difference sequence restriction on $\left\{ \epsilon_{it}, t = 1,2,...  \right\}$ for each $i \in \left\{ 1,..., n \right\}$. Therefore, the \textit{m.d.s} assumption implies that the model for the conditional mean of $y_{it}$ given $\mathcal{F}_{i,t-1}$ is correctly specified. 

Denote with $Z_{nt}^{*}$ be a sequence of bootstrap statistics. These convergence modes hold
\begin{itemize}
\item $Z_{nt}^{*} = o_{ P^{*} }(1)$ in probability, or $Z_{nt}^{*} \overset{ P^{*} }{ \to } 0$ in probability, if for any 
\begin{align}
\epsilon > 0, \delta > 0 \ \ \underset{ n, T \to \infty }{ \mathsf{lim} } \mathbb{P} \big[ \mathbb{P}^{*} \big( \left| Z_{nT}^{*} \right| > \delta \big) > \epsilon \big] = 0.
\end{align}

\item $Z_{nt}^{*} = \mathcal{O}_{ P^{*} }(1)$ in probability, if for all $\epsilon > 0$ there exists a $M_{\epsilon} < \infty$ such that 
\begin{align}
\underset{ n, T \to \infty }{ \mathsf{lim} } \mathbb{P} \big[ \mathbb{P}^{*} \big( \left| Z_{nT}^{*} \right| > M_{\epsilon} \big) > \epsilon \big] = 0.
\end{align} 

\item $Z_{nT}^{*} \overset{ d^{*} }{ \to } Z$ in probability if, conditional on the sample, $Z_{nT}^{*}$, weakly converges to $Z$ under $P^{*}$, for all samples contained in a set with probability converging to one. Specifically, we write $Z_{nT}^{*} \overset{ d^{*} }{ \to } Z$ in probability if and only if $\mathbb{E}^{*} \big( f \left( Z_{nT}^{*} \right) \big) \to \mathbb{E} \left( f(Z) \right)$ in probability for any bounded and uniformly continuous function $f$.  
\end{itemize}
More precisely, the particular assumption of a correctly specified model for the conditional mean, allow us to obtain results for the recursive-design bootstrap based on the wild bootstrap.  We use the following notation for the bootstrap asymptotics.

\newpage 

\subsubsection{Asymptotic theory for the fixed effects estimator when $N,T \to \infty$}

Consider the stationary linear dynamic panel model with fixed effects 
\begin{align}
y_{it} = \alpha_i + \theta_0 y_{it-1} + \varepsilon_{it}
\end{align}
where $| \theta_0 | < 1$ and $\alpha_i$ are individual specific fixed effects that capture the unobserved individual heterogeneity. The standard fixed effects OLS estimator of $\theta_0$ is given by (see, \cite{gonccalves2015bootstrap})
\begin{align}
\hat{\theta} = \left( \frac{1}{NT} \sum_{i=1}^N \sum_{t=1}^T  \big( y_{it-1} - \bar{y}_{i(t-1)} \big)^2 \right)^{-1} \left( \frac{1}{NT} \sum_{i=1}^N \sum_{t=1}^T \big( y_{it-1} - \bar{y}_{i(t-1)} \big) \big( y_{it} - \bar{y}_{i (t)} \big) \right)
\end{align}
where
\begin{align}
\bar{y}_{i (t)} = \frac{1}{T} \sum_{t=1}^T y_{it} \ \ \ \ \text{and} \ \ \ \ \bar{y}_{i(t-1)} = \frac{1}{T} \sum_{t=1}^T y_{it-1}
\end{align}
Therefore, the main goal of this section is to provide a set of assumptions under which we can prove the bootstrap results that will follow and at the same present the asymptotic theory of the fixed effects estimator under these assumptions. 
\begin{theorem}[\cite{gonccalves2015bootstrap}]
Let $\left\{ y_{it} \right\}$ be generated as above. Then, we have that 
\begin{align}
\sqrt{NT} \left( \hat{\theta} - \theta_0 \right) \overset{ d }{ \to } \mathcal{N} \big( D, C \big). 
\end{align}
\end{theorem}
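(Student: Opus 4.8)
The plan is to analyse $\hat\theta-\theta_0$ as a ratio and treat numerator and denominator separately under the joint limit $N,T\to\infty$. Writing $\tilde y_{it-1}=y_{it-1}-\bar y_{i(t-1)}$ and $\tilde\varepsilon_{it}=\varepsilon_{it}-\bar\varepsilon_i$ for the within-transformed variables, substitution of the demeaned model into the estimator gives
\[
\hat\theta-\theta_0=\frac{\frac{1}{NT}\sum_{i=1}^N\sum_{t=1}^T\tilde y_{it-1}\tilde\varepsilon_{it}}{\frac{1}{NT}\sum_{i=1}^N\sum_{t=1}^T\tilde y_{it-1}^2}=:\frac{A_{NT}}{B_{NT}}.
\]
First I would dispose of the denominator: under strict stationarity, geometric ergodicity, and the martingale-difference restriction on $\varepsilon_{it}$, a law of large numbers for the double array gives $B_{NT}\overset{p}{\to}B$, where $B=\sigma_\varepsilon^2/(1-\theta_0^2)$, since the time average $\bar y_{i(t-1)}$ converges to the individual mean and its $O(1/T)$ contribution to the second moment is asymptotically negligible for the leading-order result.

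The heart of the argument is the numerator, which I would split as $A_{NT}=\mathbb{E}[A_{NT}]+(A_{NT}-\mathbb{E}[A_{NT}])$. The within transformation induces a nonzero mean because $\bar y_{i(t-1)}$ and $\bar\varepsilon_i$ share innovations: using $\mathbb{E}[y_{it-1}\varepsilon_{is}]=\theta_0^{\,t-1-s}\sigma_\varepsilon^2$ for $s\le t-1$ and zero otherwise, a direct but bookkeeping-heavy computation of the four cross terms yields the Nickell bias $\mathbb{E}[A_{NT}]=-\frac{1}{T}\kappa(\theta_0)(1+o(1))$ with $\kappa(\theta_0)=\sigma_\varepsilon^2/(1-\theta_0)$. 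Consequently
\[
\sqrt{NT}\,\mathbb{E}[A_{NT}]=-\sqrt{\tfrac{N}{T}}\,\kappa(\theta_0)(1+o(1))\to-\sqrt{c}\,\kappa(\theta_0),
\]
provided the relative rate $N/T\to c\in(0,\infty)$ is maintained; this is the origin of the mean $D$ in the limit.

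For the centered part, note that $\tilde y_{it-1}\tilde\varepsilon_{it}-\mathbb{E}[\tilde y_{it-1}\tilde\varepsilon_{it}]$ forms, across $i$, an independent (or at most weakly cross-sectionally dependent) array of mean-zero time sums, whose dominant within-$i$ contribution $y_{it-1}\varepsilon_{it}$ is a martingale difference with respect to $\mathcal{F}_{i,t-1}$. A martingale CLT over $t$ together with the cross-sectional aggregation then delivers $\sqrt{NT}(A_{NT}-\mathbb{E}[A_{NT}])\overset{d}{\to}\mathcal{N}(0,\Sigma)$ with $\Sigma=\sigma_\varepsilon^2 B$. Combining via Slutsky's theorem,
\[
\sqrt{NT}(\hat\theta-\theta_0)=\frac{\sqrt{NT}\,A_{NT}}{B_{NT}}\overset{d}{\to}\mathcal{N}(D,C),\qquad D=-\frac{\sqrt{c}\,\kappa(\theta_0)}{B},\quad C=\frac{\Sigma}{B^2},
\]
which after substituting $B=\sigma_\varepsilon^2/(1-\theta_0^2)$ reduces to the familiar $D=-\sqrt{c}\,(1+\theta_0)$ and $C=1-\theta_0^2$.

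The hard part will be making the joint $(N,T)$ asymptotics rigorous: one must control the higher-order terms stemming from $\bar y_{i(t-1)}$ and $\bar\varepsilon_i$ in both $A_{NT}$ and $B_{NT}$ uniformly enough that the $O(1/T)$ bias is captured exactly while the remainder is $o_p\big((NT)^{-1/2}\big)$, and verify the Lindeberg/stability conditions for the martingale CLT under only geometric ergodicity rather than full independence. Keeping the rate $N/T\to c$ explicit throughout is precisely what ties the deterministic shift $D$ to the Gaussian limit; relaxing it to $N/T\to 0$ would remove the bias, whereas $N/T\to\infty$ would destroy $\sqrt{NT}$-consistency.
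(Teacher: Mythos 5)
Your proposal follows essentially the same route as the paper: the ratio representation of $\hat\theta-\theta_0$, replacement of the sample means $\bar y_{i(t-1)}$ by the population means $\mu_i=\alpha_i/(1-\theta_0)$ up to $o_p(1)$, isolation of the $O(1/T)$ incidental-parameter bias (the paper attributes it to the cross term $\frac{1}{\sqrt{NT}}\sum_i\sum_t (y_{it-1}-\mu_i)\bar\varepsilon_i$, you to $\mathbb{E}[A_{NT}]$, which is the same object) surviving as $D$ precisely when $\lim N/T\neq 0$, and a martingale CLT for the leading score term followed by Slutsky. The one discrepancy is that you specialize to conditional homoskedasticity when writing $\Sigma=\sigma_\varepsilon^2 B$ and hence $C=1-\theta_0^2$, whereas the paper's setting only imposes a martingale-difference restriction and explicitly allows conditional heteroskedasticity, so the variance retains the sandwich form $C=A^{-1}BA^{-1}$ with $B$ the long-run variance of the score process, depending on cross-sectional averages of fourth-order cumulants of $\varepsilon_{it}$; your closed-form constants are therefore a special case rather than the general statement.
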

Thus, we consider the joint asymptotic theory of $\hat{\theta}$ as $N, T \to \infty$. Then, the fixed effects OLS estimator can be represented as below
\begin{align}
\sqrt{NT} \left( \hat{\theta} - \theta_0 \right) 
&= 
A_{NT}^{-1} \frac{1}{  \sqrt{NT} } \sum_{i=1}^N \sum_{t=1}^T \big( y_{it-1} - \bar{y}_{i(t-1)} \big) \big(  \varepsilon_{it} - \bar{\varepsilon}_{i} \big), 
\\
A_{NT} 
&= 
\frac{1}{NT} \sum_{i=1}^N \sum_{t=1}^T \big( y_{it-1} -   \bar{y}_{i(t-1)} \big)^2. 
\end{align}
and $\mu_i = \mathbb{E} \left( y_{it-1} \right) = \alpha_i / (1 - \theta_0)$. Therefore, we obtain that 
\begin{align}
\sqrt{NT} \left( \hat{\theta} - \theta_0 \right) 
= 
A^{-1} \frac{1}{  \sqrt{NT} } \sum_{i=1}^N \sum_{t=1}^T \big( y_{it-1} - \mu_i \big) \big(  \varepsilon_{it} - \bar{\varepsilon}_{i} \big) + o_p(1),  
\end{align}
since it can be shown that $A_{NT} \overset{ d }{ \to } A$.  

Furthermore, the following decomposition holds for the normalized score, 
\begin{align*}
\frac{1}{  \sqrt{NT} } \sum_{i=1}^N \sum_{t=1}^T \big( y_{it-1} - \mu_i \big) \big(  \varepsilon_{it} - \bar{\varepsilon}_{i} \big)
= 
\frac{1}{  \sqrt{NT} } \sum_{i=1}^N \sum_{t=1}^T \big( y_{it-1} - \mu_i \big) \varepsilon_{it} - \frac{1}{  \sqrt{NT} } \sum_{i=1}^N \sum_{t=1}^T  \big( y_{it-1} - \mu_i \big) \bar{\varepsilon}_{i},  
\end{align*}

\newpage 

Notice that we can investigate the stochastic behaviour of the two terms above separately. The above result has two implications for the validity of the proposed bootstrap procedure. First, the bootstrap needs to mimic the asymptotic variance of $\hat{\theta}$ by $C = A^{-1} B A^{-1}$. More precisely, the variance has the usual sandwich form under conditional heteroscedasticity. In particular, it depends on the long run variance of the score process which is defined as below
\begin{align}
B := \underset{ N,T \to \infty }{ \mathsf{lim} } \mathsf{\text{Var}} \left(\frac{1}{\sqrt{NT}} \sum_{i=1}^N \sum_{t=1}^T \big( y_{it-1} - \mu_i \big) \varepsilon_{it} \right). 
\end{align}
In other words, the bootstrap validity depends on replicating the properties of the cross sectional average of the fourth order cumulants of $\varepsilon_{it}$. Second, the bootstrap needs to capture the asymptotic bias term $D$ created by the estimation of the fixed effects. More specifically, as the decomposition above shows, this noncentrality parameter results from the correlation between the averaged error terms $\bar{\epsilon}_i$ and the demeaned regressors $\left( y_{it-1} - \mu_i \right)$ and is non zero when $\rho = \mathsf{lim} \frac{N}{T} \neq 0$.  

\paragraph{Recursive-design wild bootstrap}

[\cite{gonccalves2015bootstrap}]  The recursive-design bootstrap can generate a panel of pseudo observations $\left\{ y_{it}^{*}, i = 1,..., n ; t = 1,..., T \right\}$ recursively from the panel AR(1) model with estimated parameters, 
\begin{align}
y_{it}^{*} = \hat{\alpha}_i + \hat{\theta} y_{it-1}^{*} + \epsilon_{it}^{*}, \ \ i \in \left\{ 1,..., n \right\}, t \in \left\{ 1,..., T \right\}, 
\end{align}
where $\hat{\alpha}_i = \frac{1}{T} \sum_{t=1}^T \left( y_{it} - \hat{\theta} y_{it-1} \right)$ and $\hat{\theta}$ is a fixed effects OLS consistent estimator. Moreover, the initial condition  is given by $y_{i0}^{*} = \frac{ \hat{\alpha_i } }{ 1 - \hat{\theta} }$, which is equivalent to setting $y_{it-1}^{*}$ to the stationary mean in the bootstrap world. In particular, the bootstrap residuals are obtained with the wild bootstrap $\epsilon_{it}^{*} = \hat{\epsilon}_{it} \eta_{it}$, where $\eta_{it} \overset{ \textit{i.i.d} }{  \sim  } (0,1)$ over $(i,t)$ such that $\hat{\epsilon}_{it} = y_{it} - \hat{\alpha}_i - \hat{\theta} y_{it-1}$ are the estimated residuals.  

Based on the above definitions,  \cite{gonccalves2015bootstrap} consider the bootstrap analogue of $\hat{\theta}$ for the recursive-design wild bootstrap OLS estimator, denoted by $\hat{\theta}^{*}_{rd}$ as below
\begin{align*}
\hat{\theta}^{*}_{rd} = \left( \frac{1}{NT} \sum_{i=1}^N \sum_{t=1}^T  \big( y^{*}_{it-1} - \bar{y}^{*}_{i(t-1)} \big)^2 \right)^{-1} \left( \frac{1}{NT} \sum_{i=1}^N \sum_{t=1}^T \big( y^{*}_{it-1} - \bar{y}^{*}_{i(t-1)} \big) \big( y^{*}_{it} - \bar{y}^{*}_{i (t)} \big) \right),
\end{align*}
where $\bar{y}^{*}_{i(t-1)}$ and $\bar{y}^{*}_{i (t)}$ are defined analogously to $\bar{y}_{i(t-1)}$ and $\bar{y}_{i (t)}$. Then, the following theorem, provides a result related to the asymptotic bootstrap validity of the recursive design estimator.

\begin{theorem}[\cite{gonccalves2015bootstrap}]
Under Assumption 1 above, it follows that 
\begin{align}
\underset{ x \in \mathbb{R} }{ \mathsf{sup} } \left| \ \mathbb{P}^{*} \left( \sqrt{NT} \left( \hat{\theta}_{rd}^{*} - \hat{\theta} \right) \leq x \right) -  \mathbb{P} \left( \sqrt{NT} \left( \hat{\theta} - \theta_0 \right) \leq x \right) \ \right| \overset{ d }{ \to } 0.
\end{align}
\end{theorem}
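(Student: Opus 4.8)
The plan is to prove bootstrap consistency by a \emph{matching-of-limits} argument: I will show that the bootstrap statistic $\sqrt{NT}(\hat\theta^*_{rd}-\hat\theta)$ converges, conditionally on the sample and in probability, to exactly the same Gaussian limit $\mathcal{N}(D,C)$ that governs $\sqrt{NT}(\hat\theta-\theta_0)$ by the first theorem. Since the limiting CDF $\Phi\big((\cdot-D)/\sqrt{C}\big)$ is continuous and strictly monotone, P\'olya's theorem upgrades pointwise convergence of both CDFs to the uniform (Kolmogorov-distance) statement in the display, and because the two uniform limits coincide, the supremum of their difference is forced to $0$ in probability. Thus the whole task reduces to reproducing, in the bootstrap world, the three population objects $A$, $B$, and $D$.

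First I would linearize the bootstrap estimator exactly as in the sample world. Writing $A^*_{NT}=\frac{1}{NT}\sum_{i,t}(y^*_{it-1}-\bar y^*_{i(t-1)})^2$ and using that $\hat\theta$ is the true value in the bootstrap DGP, one obtains
\begin{align*}
\sqrt{NT}\big(\hat\theta^*_{rd}-\hat\theta\big)&=(A^*_{NT})^{-1}\,\frac{1}{\sqrt{NT}}\sum_{i=1}^N\sum_{t=1}^T\big(y^*_{it-1}-\mu^*_i\big)\epsilon^*_{it}\\
&\quad-(A^*_{NT})^{-1}\,\frac{1}{\sqrt{NT}}\sum_{i=1}^N\sum_{t=1}^T\big(y^*_{it-1}-\mu^*_i\big)\bar\epsilon^*_i+o_{P^*}(1),
\end{align*}
where $\mu^*_i=\hat\alpha_i/(1-\hat\theta)$ is the stationary bootstrap mean, mirroring the sample decomposition of the normalized score. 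The denominator is handled by a conditional law of large numbers: since the recursive design draws $y^*_{it}$ from a stationary AR(1) with coefficient $\hat\theta$ and innovations $\epsilon^*_{it}=\hat\epsilon_{it}\eta_{it}$ whose conditional second moment is $\hat\epsilon_{it}^2$, I would show $A^*_{NT}\overset{P^*}{\to}A$ in probability, leaning on the almost-sure convergence of $\hat\theta,\hat\alpha_i,\hat\epsilon_{it}$ to their population targets on a probability-one set of sample sequences.

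The two genuinely delicate steps are the variance and the bias. For the variance I must establish a bootstrap CLT for the first sum, $\frac{1}{\sqrt{NT}}\sum_{i,t}(y^*_{it-1}-\mu^*_i)\epsilon^*_{it}\overset{d^*}{\to}\mathcal{N}(0,B)$ in probability, with the \emph{same} long-run variance $B$. Because the multipliers $\eta_{it}$ are i.i.d.\ $(0,1)$, independent across $(i,t)$, the bootstrap score is a sum that is independent across $i$ and a martingale-difference array across $t$ in the bootstrap filtration, so a Lindeberg/Lyapunov argument delivers conditional asymptotic normality; the nontrivial point is that its conditional variance $\frac{1}{NT}\sum_{i,t}(y^*_{it-1}-\mu^*_i)^2\hat\epsilon_{it}^2$ (given the bootstrap past) converges to $B$. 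This is where the Remark's observation bites: $B$ is driven by the cross-sectional average of the fourth-order cumulants of $\varepsilon_{it}$, and I must check that the wild multiplication $\hat\epsilon_{it}\eta_{it}$ preserves precisely that conditional-heteroskedasticity structure, so the bootstrap long-run variance is consistent for $B$ rather than some distorted surrogate.

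The bias term $D$ is the most subtle obstacle, and it is exactly why the \emph{recursive} design (rather than a fixed-regressor scheme) is used. Because $y^*_{it-1}$ is generated recursively, it is mechanically correlated with the past bootstrap innovations $\epsilon^*_{i,s}$, $s\le t-1$, so within-group demeaning by $\bar\epsilon^*_i$ induces in the bootstrap world the \emph{same} Nickell-type negative correlation between $\bar\epsilon^*_i$ and $(y^*_{it-1}-\mu^*_i)$ that produces the incidental-parameter bias in the sample. I would compute the conditional probability limit of $-(A^*_{NT})^{-1}\frac{1}{\sqrt{NT}}\sum_{i,t}(y^*_{it-1}-\mu^*_i)\bar\epsilon^*_i$ and show it equals $D$ under $\rho=\lim N/T$, relying on the geometric decay of the AR(1) autocovariances (now with coefficient $\hat\theta$) together with $\hat\theta\to\theta_0$. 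Collecting the denominator limit $A$, the Gaussian numerator limit $\mathcal{N}(0,B)$, and the bias limit $D$, Slutsky's theorem in the bootstrap world yields $\sqrt{NT}(\hat\theta^*_{rd}-\hat\theta)\overset{d^*}{\to}\mathcal{N}(D,A^{-1}BA^{-1})=\mathcal{N}(D,C)$ in probability, and the P\'olya step from the first paragraph closes the proof.
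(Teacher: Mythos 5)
Your proposal is correct and follows essentially the same route as the paper: the same linearization of $\sqrt{NT}(\hat\theta^{*}_{rd}-\hat\theta)$ into a denominator term, a martingale score whose conditional variance must reproduce $B$ (including the cross-sectional average of fourth-order cumulants of $\varepsilon_{it}$), and a Nickell-type bias term that the recursive design must replicate as $D$, all resting on the uniform consistency of $\hat\alpha_i$ established in the paper's Lemma B1. If anything, your outline (in particular the explicit P\'olya step converting conditional weak convergence to the uniform Kolmogorov-distance statement) is more complete than the paper's own sketch, which stops at the remark that the proof must account for the incidental-parameter bias.
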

Notice that the proof for the above theorem needs to account for the incidental parameter bias generated by the estimation of the fixed effects due to the particular panel data structure.

\newpage 

\paragraph{Pairs Bootstrap}[\cite{gonccalves2015bootstrap}] 

An alternative bootstrap approach which is found to be robust to conditional heteroscedasticity of unknown form in the error term of a pure time series autoregressive model is the pairs bootstrap, where one resamples with replacement the vector that collects the dependent variable and its lagged values.

\subsubsection{Bootstrapping the bias-corrected estimator}

\paragraph{Proof of Lemma B1}[\cite{gonccalves2015bootstrap}] 

We can write the following 
\begin{align}
\frac{1}{NT} \sum_{i=1}^N \sum_{t=1}^T \varepsilon_{it}^{*2} - \sigma^2 = \left[ \frac{1}{NT} \sum_{i=1}^N \sum_{t=1}^T \hat{\varepsilon}^2_{it} \left( \eta_{it}^2 - 1 \right) \right] + \left[ \frac{1}{NT} \sum_{i=1}^N \sum_{t=1}^T \hat{\varepsilon}_{it}^2 - \sigma^2 \right] := F_1 + F_2
\end{align}
since $\varepsilon_{it}^{*2} = \hat{\varepsilon}^2_{it} \otimes \eta_{it}^2$. Moreover, the residual term can be expressed as below 
\begin{align}
\hat{\varepsilon}_{it} = \varepsilon_{it} + \big( \alpha_i - \hat{\alpha}_i \big) + \big( \theta_0 - \hat{\theta} \big) y_{it-1} 
\end{align}
We can also write $\big( \alpha_i - \hat{\alpha}_i \big) = \big( \hat{\varepsilon}_{it} - \varepsilon_{it} \big) + \big( \theta_0 - \hat{\theta} \big) y_{it-1}$. Therefore to show that $F_2 = o_p(1)$ which implies that $\frac{1}{NT} \sum_{i=1}^N \sum_{t=1}^T \hat{\varepsilon}_{it}^2 \overset{ p }{ \to } \sigma^2$, and thus we need to show that $\underset{ 1 \leq i \leq n }{ \mathsf{sup} } \left| \hat{\alpha}_i - \alpha_i \right| = o_p(1)$ under the assumptions above (see, \cite{gonccalves2015bootstrap} for further details). Moreover, it holds that
\begin{align}
\mathbb{E} \left[ \left( \sum_{t=1}^T \varepsilon_{it} \right)^2 \right] = \sum_{t=1}^T \mathbb{E} \left( \varepsilon_{it}^2 \right) = \mathcal{O}(T), 
\end{align}
which implies that $\sum_{t=1}^T \varepsilon_{it} = \mathcal{O}_p \left( \sqrt{T} \right)$, uniformly in $i$, and thus $\frac{1}{ \sqrt{T} } \sum_{t=1}^T \varepsilon_{it} = \mathcal{O}_p \left( 1 \right)$ uniformly in $i$. Furthermore, given that $\frac{1}{T} \sum_{t=1}^T y_{it-1} = \mathcal{O}_p(1)$, uniformly in $i$ and $\left( \hat{\theta} - \theta_0 \right) = o_p(1)$, we have that $\underset{ 1 \leq i \leq n }{ \mathsf{sup} } \left| \hat{\alpha}_i - \alpha_i \right| = o_p(1)$, that is, the fixed effect estimator is bounded in probability almost surely. 

\begin{proof}
\begin{align*}
\underset{ 1 \leq i \leq n }{ \mathsf{sup} } \left| \hat{\alpha}_i - \alpha_i \right|  
&= 
\underset{ 1 \leq i \leq n }{ \mathsf{sup} } \left| \frac{1}{\sqrt{T}} \left( \frac{1}{\sqrt{T}} \sum_{t=1}^T \varepsilon_{it} \right) - \big( \hat{\theta} - \theta_0 \big) \frac{1}{T} \sum_{t=1}^T y_{it-1}  \right|
\\
&\leq 
\frac{1}{\sqrt{T}} \ \underset{ 1 \leq i \leq n }{ \mathsf{sup} } \ \left| \left( \frac{1}{\sqrt{T}} \sum_{t=1}^T \varepsilon_{it} \right)   \right| + \left| \hat{\theta} - \theta_0 \right| \underset{ 1 \leq i \leq n }{ \mathsf{sup} } \left| \frac{1}{T} \sum_{t=1}^T y_{it-1}       \right| 
\\
&= 
\frac{1}{\sqrt{T}} \mathcal{O}_p(1) + o_p(1)  \mathcal{O}_p(1) = o_p(1). 
\end{align*}
Therefore, notice that parts of the proof rely on the uniform convergence over $i$ of $\hat{\alpha}_i$ towards $\alpha_i$, in addition to the convergence of $\hat{\theta}$ towards $\theta_0$. Recall: $\mathcal{O}_p(1) o_p(1) = o_p(1)$ and $\mathcal{O}_p(1) \mathcal{O}_p(1) = \mathcal{O}_p(1)$.
\end{proof}

\newpage

\subsection{CCE Estimation in Panel Data Models}

The common correlated effects estimation approach proposed by \cite{pesaran2006estimation}, provides a sufficiently general setting for panel data models with cross-sectional dependence and thus renders a variety of panel model specifications as special cases. In the panel data literature with $T$ small and $n$ being large, the primary parameters of interest are the means of the individual specific slope coefficients, $\boldsymbol{\beta}_i$. Let $\boldsymbol{M}_{ \widehat{\boldsymbol{F}}_x  } = \boldsymbol{I}_T - \bar{\boldsymbol{X}} \left(  \bar{\boldsymbol{X}}^{\prime} \bar{\boldsymbol{X}} \right) \bar{\boldsymbol{X}}^{\prime}$ denote a projection matrix. Then, the modified estimator is given by
\begin{align}
\widehat{\boldsymbol{\beta}}_{x} = \left( \sum_{i=1}^N \boldsymbol{X}_i^{\prime} \boldsymbol{M}_{\widehat{F} } \boldsymbol{X}_i \right)^{-1} \left( \sum_{i=1}^N \boldsymbol{X}_i^{\prime} \boldsymbol{M}_{\widehat{F} } \boldsymbol{y}_i  \right).      
\end{align}
where $N$ are the number of cross-sectional units and $T$ are the number of time-series observations. The quantity of interest here is the asymptotic distribution of the above estimator. It can be proved that the asymptotic variance of $\widehat{\boldsymbol{\beta}}_{x}$ is identical to that of $\widehat{\boldsymbol{\beta}}$, so no asymptotic efficiency is lost by omitting $\bar{\boldsymbol{y}}$, although the bias term of the particular expression still remains computational intractable.     
\begin{corollary}
As $(N,T) \to \infty$ such that $T / N \to \tau < \infty$ it holds that
\begin{align}
\sqrt{NT} \left( \widehat{\boldsymbol{\beta}}_{x}  - \boldsymbol{\beta} \right) \to_d \mathcal{N} \left( \boldsymbol{0}_{ k \times 1}, \boldsymbol{\Sigma}^{-1} \boldsymbol{\Psi}     \boldsymbol{\Sigma}^{-1} \right)    
\end{align}
\end{corollary}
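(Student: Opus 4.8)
The plan is to start from the closed-form sampling-error representation of $\widehat{\boldsymbol{\beta}}_{x}$, reduce the limiting statement to a law of large numbers for the normalizing (Hessian-type) matrix together with a central limit theorem for the score vector, and treat the replacement of the true common factors by their cross-sectional-average proxies as the single delicate step. Writing the data-generating process in the common-factor form $\boldsymbol{y}_i = \boldsymbol{X}_i \boldsymbol{\beta}_i + \boldsymbol{F}\boldsymbol{\gamma}_i + \boldsymbol{\varepsilon}_i$ with heterogeneous slopes $\boldsymbol{\beta}_i = \boldsymbol{\beta} + \boldsymbol{\eta}_i$, $\mathbb{E}[\boldsymbol{\eta}_i]=\boldsymbol{0}$, and inserting this into the definition of the estimator, the first step yields
\begin{align*}
\sqrt{NT}\left(\widehat{\boldsymbol{\beta}}_{x} - \boldsymbol{\beta}\right) = \left(\frac{1}{NT}\sum_{i=1}^N \boldsymbol{X}_i^{\prime}\boldsymbol{M}_{\widehat{F}}\boldsymbol{X}_i\right)^{-1}\frac{1}{\sqrt{NT}}\sum_{i=1}^N \boldsymbol{X}_i^{\prime}\boldsymbol{M}_{\widehat{F}}\left(\boldsymbol{X}_i\boldsymbol{\eta}_i + \boldsymbol{F}\boldsymbol{\gamma}_i + \boldsymbol{\varepsilon}_i\right),
\end{align*}
so that the asymptotics of the two factors on the right-hand side can be analysed separately.

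The second step establishes that the normalizing matrix $\tfrac{1}{NT}\sum_i \boldsymbol{X}_i^{\prime}\boldsymbol{M}_{\widehat{F}}\boldsymbol{X}_i \overset{p}{\to}\boldsymbol{\Sigma}$. Here I would first replace the feasible projector $\boldsymbol{M}_{\widehat{F}}$ (built from the cross-sectional averages $\bar{\boldsymbol{X}}$) by the infeasible projector $\boldsymbol{M}_{F}$ onto the orthogonal complement of the true factors, arguing that the difference is asymptotically negligible, and then invoke stationarity and ergodicity in $t$ together with independence across $i$ to pin down the probability limit $\boldsymbol{\Sigma}$. The same substitution is the device that lets me handle the score: projecting out $\boldsymbol{F}$ annihilates the factor term $\boldsymbol{F}\boldsymbol{\gamma}_i$ up to a remainder of the same negligible order, the heterogeneity term $\boldsymbol{X}_i\boldsymbol{\eta}_i$ supplies the random-coefficient contribution to $\boldsymbol{\Psi}$ (exploiting $\mathbb{E}[\boldsymbol{\eta}_i]=\boldsymbol{0}$ and independence across $i$), and $\boldsymbol{\varepsilon}_i$ supplies the idiosyncratic contribution.

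In the third step I would apply a central limit theorem to the double-indexed score $\tfrac{1}{\sqrt{NT}}\sum_i \boldsymbol{X}_i^{\prime}\boldsymbol{M}_{F}(\boldsymbol{X}_i\boldsymbol{\eta}_i + \boldsymbol{\varepsilon}_i)$, treating it as a sum of mean-zero blocks that are independent across $i$ and weakly dependent (martingale-difference or mixing) across $t$; a Lindeberg or martingale CLT then delivers the Gaussian limit with covariance $\boldsymbol{\Psi}$. Combining this with the probability limit of the normalizing matrix by Slutsky's theorem produces the sandwich form $\boldsymbol{\Sigma}^{-1}\boldsymbol{\Psi}\boldsymbol{\Sigma}^{-1}$, consistent with the earlier remark that no asymptotic efficiency is lost relative to $\widehat{\boldsymbol{\beta}}$.

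The hard part will be the factor-proxy control in the second step. Writing $\bar{\boldsymbol{X}} = \boldsymbol{F}\bar{\boldsymbol{\Gamma}} + \bar{\boldsymbol{V}}$ with $\bar{\boldsymbol{V}} = \mathcal{O}_p(N^{-1/2})$ uniformly, one needs the rank condition that $\bar{\boldsymbol{\Gamma}}$ be of full column rank so that the span of $\bar{\boldsymbol{X}}$ converges to the factor space, and then a uniform bound of the form $\lVert \boldsymbol{M}_{\widehat{F}} - \boldsymbol{M}_{F}\rVert = \mathcal{O}_p(N^{-1/2})$. The subtlety is that this error accumulates over the $T$ time periods and is magnified by the $\sqrt{NT}$ scaling; it is precisely the bookkeeping of these accumulated proxy errors, governed by the joint rate $T/N \to \tau$, that generates the non-vanishing and analytically intractable bias term flagged before the statement, and showing that the surviving piece is centered (so that the stated limit has mean $\boldsymbol{0}_{k\times 1}$) under the maintained assumptions is the crux of the argument.
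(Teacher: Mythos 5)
The paper itself offers no proof of this corollary; it is quoted from the CCE literature (\cite{pesaran2006estimation}, \cite{harding2020common}) and the surrounding discussion only records that the asymptotic variance coincides with that of the full CCE estimator and that an intractable bias term survives under $T/N \to \tau \neq 0$. Your overall architecture --- sampling-error decomposition, probability limit of the Hessian, replacement of $\boldsymbol{M}_{\widehat{F}}$ by $\boldsymbol{M}_{F}$ under a rank condition on $\bar{\boldsymbol{\Gamma}}$, CLT for the score, Slutsky --- is exactly the standard route for this result, and your identification of the accumulated factor-proxy error under the joint rate $T/N\to\tau$ as the source of the bias is the right diagnosis of the delicate step.

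There is, however, a genuine internal inconsistency in your second and third steps. You posit heterogeneous slopes $\boldsymbol{\beta}_i=\boldsymbol{\beta}+\boldsymbol{\eta}_i$ and claim that the term $\boldsymbol{X}_i^{\prime}\boldsymbol{M}_{F}\boldsymbol{X}_i\boldsymbol{\eta}_i$ ``supplies the random-coefficient contribution to $\boldsymbol{\Psi}$'' at the $\sqrt{NT}$ rate. It cannot: each block $\boldsymbol{X}_i^{\prime}\boldsymbol{M}_{F}\boldsymbol{X}_i$ is $\mathcal{O}_p(T)$, so $\frac{1}{\sqrt{NT}}\sum_{i=1}^{N}\boldsymbol{X}_i^{\prime}\boldsymbol{M}_{F}\boldsymbol{X}_i\boldsymbol{\eta}_i$ has standard deviation of order $\sqrt{NT^2}/\sqrt{NT}=\sqrt{T}$ and diverges. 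Under genuine slope heterogeneity the random-coefficient term dominates, the correct normalization is $\sqrt{N}$, and the limiting variance is governed by $\mathsf{Var}(\boldsymbol{\eta}_i)$ rather than by a sandwich of the form $\boldsymbol{\Sigma}^{-1}\boldsymbol{\Psi}\boldsymbol{\Sigma}^{-1}$ driven by the idiosyncratic errors. The $\sqrt{NT}$ statement you are asked to prove holds only under homogeneous slopes $\boldsymbol{\beta}_i=\boldsymbol{\beta}$ --- a restriction the paper itself imposes a few lines further down when it writes ``under the assumption of homogeneous slopes $\boldsymbol{\beta}_i = \boldsymbol{\beta}$, we establish its asymptotic distribution as $(N,T)\to\infty$ such that $T/N\to\tau<\infty$.'' To repair the argument, drop $\boldsymbol{\eta}_i$ from the decomposition (or prove separately that its variance is zero), let $\boldsymbol{\Psi}$ be generated by $\boldsymbol{X}_i^{\prime}\boldsymbol{M}_{F}\boldsymbol{\varepsilon}_i$ alone, and handle the factor-proxy remainder as you describe. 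A secondary caution: the bound $\lVert\boldsymbol{M}_{\widehat{F}}-\boldsymbol{M}_{F}\rVert=\mathcal{O}_p(N^{-1/2})$ on the $T\times T$ projector difference is not what the standard proof controls; one instead bounds the specific bilinear forms $T^{-1}\boldsymbol{X}_i^{\prime}(\boldsymbol{M}_{\widehat{F}}-\boldsymbol{M}_{F})\boldsymbol{X}_i$ and $T^{-1/2}\boldsymbol{X}_i^{\prime}(\boldsymbol{M}_{\widehat{F}}-\boldsymbol{M}_{F})\boldsymbol{\varepsilon}_i$, since a uniform operator-norm bound of that order is neither available nor needed.
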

Due to the reasons explained above in order to correct the bias term that appears in the asymptotic expression, we need to employ a bootstrap approximation which can ensure a coordinate-wise convergence in probability to the true value of the population parameter. The asymptotic validity of this expression is obtained with the use of uniform coordinatewise convergence as shown below 
\begin{align}
\underset{ x \in \mathbb{R}^{ k \times 1 } }{ \mathsf{sup} } \left| \mathbb{P}^{*} \left[ \sqrt{NT} \left( \widehat{\boldsymbol{\beta}}_x^{*} - \widehat{\boldsymbol{\beta}}_x \right) \leq x  \right] - \mathbb{P} \left[ \sqrt{NT} \left( \widehat{\boldsymbol{\beta}}_x - \boldsymbol{\beta}_x \right) \leq x  \right]  \right| \to_p 0.   
\end{align}
Therefore, the above expression establishes the consistency of the bootstrap for the distribution of the estimator $\widehat{\boldsymbol{\beta}}_x$ for general $m \leq k$, and hence validates the construction of bootstrap confidence intervals. Therefore, to establish the asymptotic validity of the bootstrap $t-$intervals, define with 
$\boldsymbol{\Theta} = \boldsymbol{\Sigma}^{-1}  \boldsymbol{\Psi} \boldsymbol{\Sigma}^{-1} $, and let $\widehat{\boldsymbol{\Theta}}^{*}$ be the bootstrap world equivalent of the corresponding variance estimator as below 
\begin{align}
\boldsymbol{\Psi}_i = \frac{1}{N-1} \sum_{i=1}^N \widehat{\boldsymbol{Q}}_i \left( \widehat{\boldsymbol{\beta}}_i - \widehat{\boldsymbol{\beta}}_{mg} \right) \left( \widehat{\boldsymbol{\beta}}_i - \widehat{\boldsymbol{\beta}}_{mg} \right)^{\prime} \widehat{\boldsymbol{Q}}_i 
\end{align}
Next we concentrate on the following sample variance estimator
\begin{align}
\widehat{\boldsymbol{\Omega}}_v = \frac{1}{N(N-1)} \sum_{i=1}^N  \left( \widehat{\boldsymbol{\beta}}_i - \widehat{\boldsymbol{\beta}}_{mg} \right) \left( \widehat{\boldsymbol{\beta}}_i - \widehat{\boldsymbol{\beta}}_{mg} \right)^{\prime}   
\end{align}
Therefore, under the assumption of homogeneous slopes $\boldsymbol{\beta}_i = \boldsymbol{\beta}$, we establish its asymptotic distribution as $(N, T) \to \infty$ such that $T / N \to \tau < \infty$ in the case of general $m \leq (k+1)$ (see,  \cite{harding2020common}).

\newpage 

Statistical inference techniques to panel data with or without cross-sectional dependence include slope homogeneity testing. Several studies have extended these methods to nonstationary panel data models as in \cite{kapetanios2011panels} and \cite{huang2021nonstationary}. However, no statistical testing methodology exists for slope homogeneity that covers these cases, which is currently a topic worth investigating further.  

Consider the formulation of the CCE estimator using vector notation such that 
\begin{align}
\boldsymbol{Y}_i &= \mathbf{1} \boldsymbol{\alpha}_i + \boldsymbol{X}_i \boldsymbol{\beta}_i + \boldsymbol{U}_i, 
\\
\boldsymbol{U}_i &= \boldsymbol{F} \boldsymbol{\lambda}_i + \boldsymbol{\varepsilon}_i, 
\\
\boldsymbol{X}_i &= \mathbf{1} \boldsymbol{\mu}_i^{\prime} + \boldsymbol{F} \boldsymbol{\Gamma}_i + \boldsymbol{V}_i, 
\end{align}
where we have that $\boldsymbol{Y}_i = ( y_{i1}, ..., y_{iT} )^{\prime}$, $\boldsymbol{X}_i = ( x_{i1},...,  x_{iT} )$, $\boldsymbol{U}_i = ( u_{i1},..., u_{in} )$ and $\boldsymbol{F} = ( \boldsymbol{f}_1,..., \boldsymbol{f}_T )^{\prime}$. Define the projection matrices $\boldsymbol{P}_{A} = \boldsymbol{A} ( \boldsymbol{A}^{\prime} \boldsymbol{A} )^{-1} \boldsymbol{A}^{\prime}$ and $\boldsymbol{M}_A = ( \boldsymbol{I} - \boldsymbol{P}_A  )$. Then, the transformed equation can be written as $\boldsymbol{M} \boldsymbol{Y}_i  = \boldsymbol{M} \boldsymbol{X}_i \boldsymbol{\beta}_i +  \boldsymbol{M} \boldsymbol{U}_i$. Therefore, the CCE pool estimator is defined as below
\begin{align}
\widehat{\boldsymbol{\beta}}_{CCE} = \left(  \sum_{i=1}^n \boldsymbol{X}_i^{\prime} \boldsymbol{M} \boldsymbol{X}_i \right)^{-1} \left( \sum_{i=1}^n \boldsymbol{X}_i^{\prime} \boldsymbol{M} \boldsymbol{Y}_i \right).   
\end{align}
Under the alternative hypothesis, we have that the CCE estimator deviates from the true parameters at least for a non-zero fraction of individual units. Therefore, the particular model parametrization can be employed to construct tests statistics for slope homogeneity in panel data models with multifactor error structure.  Define the weighted average CCE estimator as below
\begin{align}
\tilde{ \boldsymbol{\beta} }_{WCCE} = \left(  \sum_{i=1}^n \frac{1}{ \breve{\sigma}_i^2 } \boldsymbol{X}_i^{\prime} \boldsymbol{M} \boldsymbol{X}_i \right)^{-1} \left( \sum_{i=1}^n \frac{1}{ \breve{\sigma}_i^2 } \boldsymbol{X}_i^{\prime} \boldsymbol{M} \boldsymbol{Y}_i \right)      
\end{align}
Then, the proposed test statistic is constructed as below
\begin{align}
\tilde{\Delta}^{ cce  }_{ \mathsf{adj} } &= \left(  \frac{1}{ n^{1/2} } \tilde{S}_{cce}  - \sqrt{n} k \right),    
\\
\tilde{S}_{cce} &= \sum_{ i = 1 }^n \frac{1}{ \breve{\sigma}_i^2 }  \left( \hat{\boldsymbol{\beta}}_{i, cce} - \tilde{\boldsymbol{\beta}}_{wcce}  \right)^{\prime} \left[ \boldsymbol{X}_i^{\prime} \boldsymbol{M} \boldsymbol{X}_i \right] \left( \hat{\boldsymbol{\beta}}_{i, cce} - \tilde{\boldsymbol{\beta}}_{wcce}  \right).     
\end{align}
The modified PY test is interpreted as the weighted average distance between $\hat{\boldsymbol{\beta}}_{i, cce}$ and $\tilde{\boldsymbol{\beta}}_{i, wcce}$.
\begin{align}
y_{it} &= \alpha_{i0} +  \boldsymbol{x}_{it}^{\prime} \boldsymbol{\beta}_{i0}  + u_{it}, \ \  
u_{it} = \lambda_i^{\prime} \boldsymbol{f}_t  + \boldsymbol{\varepsilon}_{it} 
\\
\boldsymbol{x}_{it} &= \boldsymbol{\mu}_i + \boldsymbol{\Gamma}_i^{\prime} \boldsymbol{f}_t + \boldsymbol{v}_{it} 
\end{align}
Combining the two equations we obtain 
\begin{align}
\boldsymbol{w}_{it} := 
\begin{pmatrix}
y_{it}
\\
\boldsymbol{x}_{it}
\end{pmatrix}
=
\boldsymbol{C}^{\prime} \tilde{\boldsymbol{f}}_t + \boldsymbol{\eta}_{it}
\end{align}

\newpage 

where
\begin{align}
\boldsymbol{\eta}_{it} = 
\begin{pmatrix}
u_{it} + \boldsymbol{\beta}_i^{\prime} \boldsymbol{v}_{it}  
\\
\boldsymbol{v}_{it}  
\end{pmatrix}, 
\ \ \ 
\boldsymbol{C}_i = 
\begin{pmatrix}
\alpha_i + \boldsymbol{\beta}_i^{\prime} \boldsymbol{\mu}_i  &  \boldsymbol{\lambda}_i^{\prime} + \boldsymbol{\beta}_i^{\prime} \boldsymbol{\Gamma}_i^{\prime} 
\\
\boldsymbol{\mu}_i   &   \boldsymbol{\Gamma}_i^{\prime} 
\end{pmatrix}. 
\end{align}
Therefore, based on the above reparametrizations we have that
\begin{align}
\bar{\boldsymbol{M}} = \boldsymbol{I}_T - \boldsymbol{G} \bar{\boldsymbol{P}} \left( \bar{\boldsymbol{P}}^{\prime} \boldsymbol{G}^{\prime} \boldsymbol{G} \bar{\boldsymbol{P}} \right)^{+}  \bar{\boldsymbol{P}}^{\prime} \boldsymbol{G}^{\prime}    
\end{align}
where $\bar{\boldsymbol{P} } = \frac{1}{n} \sum_{ i = 1 }^n \boldsymbol{P}_i \bar{\boldsymbol{M}}$ is the residual maker of $\boldsymbol{G} \bar{\boldsymbol{P}}$.

\begin{remark}
Notice that according to \cite{pesaran2006estimation}, the above set-up is sufficiently general and renders a variety of panel models as special cases. Specifically, in the panel data literature with $T$ small and $n$ being large, the primary parameters of interest are the means of the individual specific slope coefficients, $\boldsymbol{\beta}_i$, for $i \in \left\{ 1,..., n \right\}$. Moreover, the common factor loadings, $\boldsymbol{\mu}_i$ and $\boldsymbol{\gamma}_i$, are generally treated as nuisance parameters. Notice that in this study we do not consider the case of unobserved common factors (i.e., latent group structure) as incorporating such features in our econometric specification will require different methodologies for estimation and inference.  
\end{remark}

Consider the following specification 
\begin{align}
\boldsymbol{X}_i = \boldsymbol{G} \boldsymbol{\Pi}_i + \boldsymbol{V}_i,     
\end{align}
where $\boldsymbol{G} = ( \boldsymbol{D}, \boldsymbol{F} )$ is the $T \times m + n$ matrix of integrated factors and $\boldsymbol{V}_i$ is a stationary error matrix. Moreover, we denote the OLS residuals of the multiple regression as $\hat{\boldsymbol{V}}_i = \boldsymbol{X}_i - \boldsymbol{G} \hat{\boldsymbol{\Pi}}_i$, where $\hat{\boldsymbol{\Pi}}_i = \left( \boldsymbol{G}^{\prime} \boldsymbol{G} \right)^{-1}\boldsymbol{G}^{\prime} \boldsymbol{X}_i$. Moreover, observe that $\hat{\boldsymbol{V} }_i = \boldsymbol{M}_{ \mathsf{g} } \boldsymbol{X}_i$. Then, we can write
\begin{align*}
\frac{ \hat{\boldsymbol{V} }_i^{\prime} \hat{ \boldsymbol{V} }_i }{ T } -  \frac{ \boldsymbol{V}_i^{\prime} \boldsymbol{V}_i }{ T } 
&= 
\frac{1}{T} \hat{\boldsymbol{V}}_i^{\prime} \left(  \hat{\boldsymbol{V}}_i - \boldsymbol{V}_i \right) +  \frac{1}{T} \left(  \hat{\boldsymbol{V}}_i - \boldsymbol{V}_i \right)^{\prime} \boldsymbol{V}_i
\\
&= 
- \boldsymbol{X}_i^{\prime} \boldsymbol{M}_{ \mathsf{g} } \boldsymbol{G} \frac{1}{T} \left( \hat{\boldsymbol{\Pi}}_i - \boldsymbol{\Pi}_i \right) - \frac{1}{T} \left( \hat{\boldsymbol{\Pi}}_i - \boldsymbol{\Pi}_i \right)^{\prime} \boldsymbol{G}^{\prime} \boldsymbol{V}_i, 
\\
&= 
\left( \hat{\boldsymbol{\Pi}}_i - \boldsymbol{\Pi}_i \right)^{\prime} \frac{1}{T} \boldsymbol{G}^{\prime} \boldsymbol{V}_i  
\end{align*}
since it holds that $\boldsymbol{M}_{ \mathsf{g} } \boldsymbol{G} = \boldsymbol{0}$.

\begin{remark}
Further studies related to testing for slope homogeneity include among others \cite{de2019cce} and \cite{de2021bootstrap} while the case of panel data models with interactive effects are considered by \cite{su2013testing}, \cite{chudik2015common} and \cite{westerlund2019estimation}. In order to establish asymptotic theory results for relevant estimators and test statistics in the homogeneous slope setting, one can impose a common slope condition and then derive an analytical expression of the adjusted CCEP estimator (see, \cite{de2021bootstrap}).   
\end{remark}

\newpage

\subsection{IV Estimation of Dynamic Linear Panel Data Models}

Consider the following autoregressive distributed lag, ARDL(1,0), panel data model with homogeneous slopes and a multifactor error structure (see, \cite{norkute2021instrumental}) such that 
\begin{align}
y_{it} = \rho y_{i, t- 1} + \boldsymbol{\beta}^{\prime} \boldsymbol{x}_{it} + u_{it}, \ \ \ i \in \left\{ 1,..., N \right\} \ t \in \left\{ 1,..., T \right\},   
\end{align}
where the multifactor error structure is captured with the following equations
\begin{align}
u_{it} &= \boldsymbol{\gamma}^{0 \prime}_{y_i} \boldsymbol{f}_{y,t}^0 + \varepsilon_{it},    
\\
\boldsymbol{x}_{it} &= \Gamma_{x_i}^{0 \prime} \boldsymbol{f}_{y,t}^0 + \boldsymbol{v}_{it}
\end{align}
where $| \rho | < 1$ and $\boldsymbol{\beta} = ( \beta_1, \beta_2,..., \beta_k )^{ \prime }$ such that at least one of $\left\{ \beta_{\ell} \right\}_{ \ell = 1 }^k$ is non-zero and $\boldsymbol{x}_{it} = ( x_{1it},..., x_{kit} )^{\prime}$ is a $( k \times 1 )$ vector of regressors and $\boldsymbol{f}_{y,t}^0 = ( f_{x,1t}^0, f_{x,2t}^0,..., f_{x, m_x t}^0 )$ denoters an $( m_x \times 1)$ vector of true factors, and $\boldsymbol{v}_{it} = ( v_{1it}, v_{2it},..., v_{kit} )^{\prime}$ 

\begin{remark}
Notice that incorporating features that capture unobserved individual effects, ensures that the error terms remain uncorrelated. In particular, the presence of serial correlation can potentially lead to incorrect estimates for the standard errors of model parameters. Thus, the approach proposed by  \cite{norkute2021instrumental} permits correlations between and within  $\Gamma_{x_i}^{0}$ and $\boldsymbol{\gamma}^{0}_{y_i}$. This specification allows to control for endogeneity of $\boldsymbol{x}_{it}$ that steams from the common components, but assumes that $\boldsymbol{x}_{it}$ is strongly exogenous with respect to $\varepsilon_{it}$. Lastly, note that a dynamic panel data model specification is not the same as a time varying model specification.
\end{remark}

\begin{example}
Consider the following linear dynamic panel data model:
\begin{align}
y_{i,t} 
= 
\alpha_1 y_{i,t-1} + \beta_1 x_{i,t} + \beta_2 x_{i,t-1} + \mu_{i,t}, \ \ \
\mu_{i,t}
= 
\eta_i + \varepsilon_{i,t}
\end{align}
A dynamic panel data process is one that includes one or more lags of the dependent variable in the functional form of the model, that is, $\alpha_1 y_{i,t-1}$. In particular, this feature reflects the fact that $y_{i,t}$ is autoregressive. Moreover, the effect of a temporary change in the covariate (observed or unobserved) on $y_{i,t}$ does not completely dissipate for the next observation. 
\end{example}

\begin{example}
Consider the following static panel data model 
\begin{align}
y_{i,t} = \beta_1 x_{i,t} + \eta_i + \epsilon_{i,t}    
\end{align}
when the true DGP is given by
\begin{align}
y_{i,t} = \alpha_1 y_{i,t-1} + \beta_1 x_{i,t} + \eta_i + \varepsilon_{i,t}
\end{align}

\newpage

\begin{itemize}

\item Common estimators such as pooled OLS, OLS, fixed effects, generalized least squares, random effects; assume that $\mathbb{E} [ \epsilon_{i,t_1} | x_{i, t_2} ] = 0 \ \forall \ t_1$ and $t_2$.

\item Specifically when $T$ is small, it is assumed that this holds for any past, current or future values of $x_{i,t}$ - strict exogeneity assumption. This also implies that the errors will be serially correlated such that $\mathbb{E} ( \epsilon_{i,t}  \epsilon_{i,t-1} ) \neq 0$.

\item In a dynamic panel data process, the long-run effect (LRE) of a covariate differs from its short-run effect (SRE). For example, in the dynamic model specification the SRE of $x_{i,t}$ is $\beta_1$ and the LRE is $\beta_1 / ( 1 - \alpha_1 )$. The two most common estimators used to account for unobserved individual effects are: OLS-FE and GLS-RE. Thus, the GLS-RE method asssumes that $\mathbb{E} [ \eta_i | x_{i,t} ] = \mathbb{E} [ \eta_i | z_{i,t,m} ] = 0$. However, with a dynamic model specification this assumption cannot be met. GMM estimation methods can be employed for dynamic models with fixed-effects. 

\end{itemize}
Consider the following model specification:
\begin{align}
y_{i,t} &= \alpha_1 y_{i,t-1} + \beta_1 x_{i,t} + \beta_2 x_{i,t-1} + \eta_i + \epsilon_{i,t}    
\\
y_{i,t-1} &= \alpha_1 y_{i,t-2} + \beta_1 x_{i,t-1} + \beta_2 x_{i,t-2} + \eta_i + \epsilon_{i,t-1} 
\end{align}
Subtracting by sides the above two specifications we obtain that 
\begin{align}
\Delta y_{i,t} &= \alpha_1 \Delta y_{i,t-1} + \beta_1 \Delta x_{i,t} + \beta_2 \Delta x_{i,t-1} + \Delta \epsilon_{i,t}.
\end{align}
However, by definition of the DGP $\mathbb{E} [ \Delta \epsilon_{i,t} | \Delta y_{i,t-1} ] \neq 0$ which is a requirement for an unbiased estimation. A solution to this problem is to use the first difference or level of the second lag of the dependent variable $\Delta y_{i,t-2}$ as an instrument for $\Delta y_{i,t-1}$. A better solution is the GMM estimation approach where the instruments define moment conditions. Then the GMM proceeds by selecting the values for the parameters in the model that minimizes a weighted sum of the squared moment conditions. On the other hand, the GMM can underperform when the variance between and within cases is large or when the autoregressive coefficient is near to unity. The solution to this is the System GMM which implies that there are additional moment conditions to be estimated. However, one drawback is that the optimal weighting matrix can be difficult to estimate with limited information. In particular, this occurs with moments based on weak instruments or when the number of moment conditions is large relative to $N$. The many instruments and weak instruments problem can result in bias in the direction of the OLS-FE estimator. In the case of dynamic panel data models with fixed effects, the system GMM is found to be insufficient. Consifder the following simulation design.
\begin{align}
y_{i,t} = \alpha_1 y_{i,t-1} + \beta_1 x_{i,t} + \mu_{i,t},
\ \ \ 
\mu_{i,t} = \eta_i + u_{i,t},
\ \ \
x_{i,t} = 0.75 \eta_i + v_{i,t}.
\end{align}
where $u_{i,t} \sim N(0,1)$ and $v_{i,t} \sim N(0,16)$. In terms of the estimation methodology, using the MLE for a dynamic panel data model with a fixed (small) $T$ leads to an incidental parameters problem. In particular, with fixed $T$, consistent MLE requires $N$ to increase faster than the number of parameters estimated. 
\end{example}

\newpage

Although, under the presence of fixed effects, the number of fixed-effects $(\eta_i)$ approaches infinity at the same rate as $N \to \infty$. In other words, for each case we add the MLE estimation adds a parameter to be estimated. Therefore, we cannot rely on asymptotics as $N \to \infty$ since the application of maximum likelihood leads to inconsistent estimates and thus an alternative estimation or transformation approach is required for robust statistical estimation and inference purposes. The key is to consider the orthogonal reparametrization\footnote{The orthogonal reparametrization proposed by \cite{lancaster2002orthogonal} it changes the meaning of the parameters representing the individual effects but not the meaning of the other parameters. This approach is particularly useful when due to the functional form of the model orthogonality cannot be achieved but information orthogonality can.} (OPM) approach such that we are not actually interested in estimates of the $( \eta_i )$ (as these are incidental parameters). In particular, we are interested in estimates of the common parameters such as $\left\{ \beta_1, \beta_2, \alpha_1, \sigma^2 \right\}$. Then the OPM approach implies a reparametrization of the incidental parameters so that the incidental and common parameters are \textit{information orthogonal}. The \cite{lancaster2002orthogonal} reparametrization approach allows us to write the likelihood in which the incidental parameters are informationally orthogonal from the other parameters.     

Some important terms which we will need to obtain relevant results for their asymptotic behaviour can be obtained as below
\begin{align}
\sum_{t=1}^T 
\begin{pmatrix}
\boldsymbol{y}_{t-1} . \boldsymbol{y}_{t-1}^{\prime} \ & \ \boldsymbol{y}_{t-1} . \bar{\boldsymbol{\varepsilon}}_{t-1}^{\prime}  
\\
\bar{\boldsymbol{\varepsilon}}_{t-1}^{\prime} . \boldsymbol{y}_{t-1}^{\prime}  \ & \  \bar{\boldsymbol{\varepsilon}}_{t-1} \bar{\boldsymbol{\varepsilon}}_{t-1}^{\prime}
\end{pmatrix}
\end{align}
Next, we consider expanding the following sample moments 
\begin{align}
\frac{1}{T} \sum_{t=1}^T \boldsymbol{y}_{t-1}. \boldsymbol{y}_{t-1}^{\prime} = \frac{1}{T} \sum_{t=1}^{T-1} \bar{\boldsymbol{\varepsilon}}_{t} \bar{\boldsymbol{\varepsilon}}_{t}^{\prime} + \frac{1}{T} \sum_{t=1}^{T-1} \left\{  \bar{\boldsymbol{\varepsilon}}_t^{\prime} . \boldsymbol{d}^{\prime} \left( - \boldsymbol{C}^{\prime} \right)^t + \left( - \boldsymbol{C} \right)^t \boldsymbol{d} \bar{\boldsymbol{\varepsilon}}_t^{\prime}  \right\}         
\end{align}
Next, we can consider the kernel density estimates of the distribution of the Mahalanobis distance given by the following expression 
\begin{align}
Q_T = \left( \hat{\boldsymbol{\beta}}_T - \boldsymbol{\beta} \right)^{\prime} \left( \sum_{t=1}^T \boldsymbol{W}_t^{\prime} \boldsymbol{\Sigma}_{\varepsilon}^{-1} \boldsymbol{W}_t \right) \left( \hat{\boldsymbol{\beta}}_T - \boldsymbol{\beta} \right)   
\end{align}
compared to the theoretical $\chi^2$ densities. 

Moreover notice that the above process is asymptotically stationary, and the specification $\boldsymbol{A} = \boldsymbol{M}$ means that it will also be asymptotically unidentified. Therefore, such a lack of identification is well known to manifest itself in $\sum_{t=1}^T \boldsymbol{W}_t^{\prime} \boldsymbol{\Sigma}_{\varepsilon}^{-1} \boldsymbol{W}_t$ having less than full rank and $Q_T$ having fewer degree of freedom that might be anticipated on the basis of conventional asymptotic theory.

\newpage

\subsection{Quantile Censored Panel Data Regression}

Given a quantile $\tau \in (0,1)$, consider the following QR model defined by \cite{galvao2013estimation} such that 
\begin{align}
y_{it}^{*} = \alpha_{i0} (\tau) + \boldsymbol{x}_{it}^{\top} \boldsymbol{\beta}_0(\tau) + u_{it}, \ \ \ i = 1,...,N \ \ t = 1,...,T,     
\end{align}
where $\boldsymbol{x}_{it}$ is a $( p \times 1 )$ vector of regressors, $\boldsymbol{\beta}_0 (\tau)$ is a $(p \times 1)$ vector of parameters and $\alpha_{i0}(\tau)$ is a scalar individual effect for each $i$, and $u_{it}$ is the innovation term whose $\tau-$th conditional quantile is zero. Notice that the quantile-specific individual effect, $\alpha_{i0}(\tau)$, is intended to capture individual specific sources of variability, or unobserved heterogeneity that was not adequately controlled by other covariates. In general, each $\alpha_{i0}(\tau)$ and $\boldsymbol{\beta}_0(\tau)$ can depend on $\tau$, but we assume $\tau$ to be fixed throughout the framework here. Moreover, the model is semiparametric in the sence that the functional form of the conditional distribution of $y_{it}^{*}$ given $\left( \boldsymbol{x}_{it}, \alpha_{i0} \right)$ is left unspecified and no parametric assumption is made on the relation between $\boldsymbol{x}_{it}$ and $\alpha_{i0}$. Thus, the QR model can be written as below
\begin{align}
\mathcal{Q}_{ y_{it}^{*} } \left( \tau | \boldsymbol{x}_{it}, \alpha_{i0} \right) = \alpha_{i0} + \boldsymbol{x}_{it}^{\top} \boldsymbol{\beta}_0.     
\end{align}

\begin{remark}
Equivariance to monotone transformation is an important property of QR models. Specifically, for a given monotone transformation $\mathcal{P}_c(y)$ of variable $y^{*}$, it holds that 
\begin{align}
\mathcal{Q}_{\mathcal{P}_c(y^{*})} \left( \tau | \boldsymbol{x}_{it}, \alpha_{i0} \right) \equiv \mathcal{P}_x \big( \mathcal{Q}_{\mathcal{P}_c(y^{*})} \left( \tau | \boldsymbol{x}_{it}, \alpha_{i0} \right) \big)   
\end{align}
Thus the parameter of intestest $\boldsymbol{\beta}_0$, can be interpreted as representing the effect of $\boldsymbol{x}_{it}$ on the $\tau$th conditional quantile function of the dependent variable while controlling for heterogeneity, which represented by $\alpha_i$. Thus, this model can be considered as a conditional model. 
In order to control for fixed effects we could define the estimator $\left( \hat{\boldsymbol{\alpha}}, \hat{\boldsymbol{\beta}} \right)$ solving the following minimization problem: 
\begin{align}
\mathcal{Q}_{1,N} \left(  \boldsymbol{\alpha}, \boldsymbol{\beta} \right) = \frac{1}{NT} \sum_{i=1}^N \sum_{t=1}^T \rho_{\tau} \big( y_{it} - \mathsf{max} \left( C_{it}, \alpha_i + \boldsymbol{x}_{it}^{\top} \boldsymbol{\beta} \right) \big)   
\end{align}
where $\boldsymbol{\alpha} := \left( \alpha_1,..., \alpha_N \right)$ and $\rho_{\tau} (u) := u \left( \tau - \boldsymbol{1} \left( u < 0 \right) \right)$. We assume that the number of individuals is denoted by $N$ and the number of time periods is denoted by $T = T_N$ that depends on N. 
\end{remark}
The main problem of the above estimator is caused by its low frequency of convergence. Furthermore, additional regressors, large proportions of censored observations, and large samples only worsen the problem. Due to censored effects we consider the equivalent minimizer (see, \cite{galvao2013estimation})
\begin{align}
\mathcal{Q}_{2,N} \left( \boldsymbol{\alpha}, \boldsymbol{\beta} \right) = \frac{1}{NT} \sum_{i=1}^N \sum_{t=1}^T \rho_{\tau} \left(  y_{it} - \alpha_i - \boldsymbol{x}_{it}^{\top} \boldsymbol{\beta} \right) \times \boldsymbol{1} \left\{  \alpha_{i0} + \boldsymbol{x}_{it}^{\top} \boldsymbol{\beta}_0 > C_{it}  \right\}.   
\end{align}
Therefore, we denote with $\delta_{it} = \boldsymbol{1} \left( y_{it}^{*} > C_{it}    \right)$ to indicate uncensored observations.

\newpage

We define with 
\begin{align}
u_{it} := y_{it}^{*} - \alpha_{i0} - \boldsymbol{x}_{it}^{\top} \boldsymbol{\beta}_0,  
\end{align}
whose $\tau-$th conditional quantile given $( \boldsymbol{x}_{it}, \alpha_i, C_{it} )$ equals zero. Furthermore, it holds that 
\begin{align}
\pi_0 \left( \alpha_i, \boldsymbol{x}_{it}, C_{it} \right) 
&:= 
\mathbb{P} \big( \delta_{it} = 1 | \boldsymbol{x}_{it}, \alpha_i, C_{it} \big) =  \mathbb{P} \big( u_{it} > - \alpha_{i0} -  \boldsymbol{x}_{it}^{\top} \boldsymbol{\beta}_0 + C_{it} | \boldsymbol{x}_{it}, \alpha_i, C_{it} \big)
\\
\pi_1 \left( \alpha_i, \boldsymbol{x}_{it}, C_{it} \right) &:= 
\mathbb{P} \big( \delta_{it} = 0 | \boldsymbol{x}_{it}, \alpha_i, C_{it} \big) =  \mathbb{P} \big( u_{it} > 0 | \boldsymbol{x}_{it}, \alpha_i, C_{it} \big) = 1 - \tau.
\end{align}
In other words, the restriction set selects those observations $(i,t)$ where the conditional quantile line is above the censoring point $C_{it}$. Then, the objective function is equivalent to the following 
\begin{align}
\mathcal{Q}_{3,N} \left( \boldsymbol{\alpha}, \boldsymbol{\beta} \right) = \frac{1}{NT} \sum_{i=1}^N \sum_{t=1}^T \rho_{\tau} \left(  y_{it} - \alpha_i - \boldsymbol{x}_{it}^{\top} \boldsymbol{\beta} \right) \times \boldsymbol{1} \big\{  \pi_0 \left( \alpha_{i0}, \boldsymbol{x}_{it}, C_{it} \right) > 1 - \tau \big\}.     
\end{align}

\subsubsection{Large Sample Properties}

We investigate the asymptotic properties of the proposed two-step estimator. A particular issue we impose is that the individual fixed effects parameter $\boldsymbol{\alpha}$ whose dimension tends to infinity. However, it has been noted in the literature that leaving the individual heterogeneity unrestricted in a nonlinear or dynamic panel model generally results in inconsistent estimators of the common parameters due to the incidental parameters problem. In other words, noise in the estimation of the individual specific effects leads to inconsistent estimates of the common parameters due to the nonlinearity of the problem. Therefore, to overcome this problem it has become standard in the panel QR literature to employ a large $N$ and $T$ asymptotics (as joint limits).  Denote with $\norm{ \pi - \pi_0 }_{\infty} = \underset{ w }{ \mathsf{sup} } \left|  \pi(\boldsymbol{w}) -  \pi_0(\boldsymbol{w}) \right|$ for a given $\pi(.)$ and a generic vector $\boldsymbol{w}$ (see, \cite{galvao2013estimation}).

\begin{assumption}[\cite{galvao2013estimation}]

\item[$\boldsymbol{A_1}:$] Let $\left\{ \left( \boldsymbol{x}_{it}, y_{it}^{*} \right) \right\}$ are independent across subjects and independently and identically distributed (\textit{i.i.d}) for each $i$ and all $t \geq 1$. 

\item[$\boldsymbol{A_2}:$] $\underset{ i \geq 1 }{ \mathsf{sup} } \ \mathbb{E} \big[ \norm{\boldsymbol{x}_{i1} }^{2s} \big] < \infty$ and some real $s \geq 1$. 

\item[$\boldsymbol{A_3}:$] Let $u_{it} = y_{it}^{*} 0 \alpha_{i0} - \boldsymbol{x}_{it}^{\top} \boldsymbol{\beta}_0$ and $\pi_{i0} ( \boldsymbol{x}_{it} ) := \pi_0 \left( \alpha_{i}, \boldsymbol{x}_{it} \right)$. Then, $F \left( u | \boldsymbol{x} \right)$ is defined as the conditional distribution function of $u_{it}$ given $\boldsymbol{x}_{it} := \boldsymbol{x}$. Assume that $F_i ( u | \boldsymbol{u} )$ has density given by $f_i ( u | \boldsymbol{x} )$. Let $f_i(u)$ denote the marginal density of $u_{it}$. 

\item[$\boldsymbol{A_4}:$] For each $\delta > 0$, it holds that
\begin{align}
\epsilon_{\delta} := \underset{ i \geq 1 }{ \mathsf{inf} } \  \underset{ | \alpha | + \norm{\boldsymbol{\beta} }_1 = \delta }{ \mathsf{inf} }  \times \mathbb{E} \left[ \int_0^{  \alpha + \boldsymbol{x}_{i1}^{\top} \boldsymbol{\beta} }  \big( f_i( s | \boldsymbol{x}_{i1} ) - \tau \big) ds \ \boldsymbol{1} \big\{ \pi_{i0} (\boldsymbol{x}_{i1} ) > 1 - \tau \big\} \right]
\end{align}

\end{assumption}

\newpage

\subsection{Semiparametric Approach}

\subsubsection{Bootstrap Algorithms for Cluster-Robust Inference}

In this section, we focus on the asymptotic validity of statistical procedures for cluster-robust bootstrap inference and cluster-robust confidence intervals in quantile regression models (see, \cite{galvao2011quantile}, \cite{hagemann2017cluster}, \cite{galvao2020unbiased}, \cite{galvao2023bootstrap} and \cite{galvao2023hac} among others). We consider the recentered population objective function given by the following expression 
\begin{align}
\beta \mapsto M_n ( \beta, \tau ) := \mathbb{E} \big[ \mathbb{M}_n ( \beta, \tau ) -  \mathbb{M}_n \big( \beta (\tau), \tau \big)  \big]    
\end{align}
Notice that the map $\beta \mapsto M_n ( \beta, \tau )$ is differentiable with derivative given by $M^{\prime}_n ( \beta, \tau ) := \partial M_n ( \beta, \tau ) \big/ \partial \beta^{\top}$. Specifically, the first-order condition of the QR objective function can be written as
\begin{align}
\sqrt{n} M^{\prime}_n \big( \beta (\tau), \tau \big) := - \frac{1}{ \sqrt{n} } \sum_{i=1}^n \sum_{k=1}^{c_1} \mathbb{E} \big[ \psi_{\tau} \left( Y_{ik} - X_{ik}^{\top} \beta ( \tau ) \right) X_{ik} \big] = 0,     
\end{align}
where $\psi_{\tau} ( \mathsf{z} ) = \big( \tau - \boldsymbol{1} \left\{ \mathsf{z} < 0 \right\} \big)$. Then, the sample analogue of this condition is
\begin{align}
\frac{1}{ \sqrt{n} } \sum_{i=1}^n \sum_{i=1}^{ c_i } \psi_{\tau} \left( Y_{ik} - X_{ik}^{\top} \beta ( \tau ) \right) X_{ik}  = 0.   
\end{align}
can be thought of as nearly solved by the QR estimate $\beta = \hat{\beta}_n ( \tau )$. Notice that to ensure that the bootstrap counterparts of the above quantities, that correspond to the QR estimate, accurately reflect the within-cluster dependence, the resampling scheme perturbs the gradient condition at the cluster level. In particular, the bootstrap resampling is approximated using the bootstrap gradient process $\mathbb{W}_n ( \tau ) := \mathbb{W}_n \big(  \hat{\beta}_n( \tau ) , \tau \big)$ evaluated at the original QR estimate to construct the new objective function 
\begin{align*}
\beta \mapsto \mathbb{M}^{*}_n ( \beta, \tau ) \ \vline_{ \beta = \hat{\beta}_n( \tau )  }
&\equiv \mathbb{M}_n ( \beta, \tau )  + \mathbb{W}_n ( \tau)^{\top} \beta / \sqrt{n} 
\\
&=
\left\{ \frac{1}{n} \sum_{i=1}^n \sum_{k=1}^{c_1} \rho_{\tau} \big( Y_{ik} - X_{ik}^{\top} \beta \big) + \frac{1}{ n } \sum_{i=1}^n W_i \sum_{k=1}^{c_i} \psi_{\tau} \left( Y_{ik} - X_{ik}^{\top} \beta ( \tau ) \right) X_{ik}  \beta \right\} \vline_{ \beta = \hat{\beta}_n( \tau ) }
\end{align*}
and define the process $\tau \hat{\beta}^{*}_n ( \tau )$ as any solution to $\mathsf{min}_{ \beta \in B } \mathbb{M}^{*}_n ( \beta, \tau )$. Then, $\hat{\beta}^{*}_n ( \tau )$ can be interpreted as the $\beta$ that nearly solves the corresponding first-order solution based on the proposed bootstrap resampling procedure. Then, the distributional convergence occurs both in the standard sense and with probability approaching one, conditional on the sample data $D_n := \left\{ ( Y_{ik}, X_{ik}^{\top} )^{\top}: 1 \leq k \leq c_i, 1 \leq i \leq n \right\}$.

\newpage

\subsubsection{Weighted Bootstrap for Semiparametric M-estimators}

\begin{theorem}[see, \cite{ma2005robust}]
Suppose that the $M-$estimator $\hat{\theta}_n$ and the weighted $M-$estimator $\hat{\theta}^{*}_n$ satisfy the following approximation:
\begin{align}
\sqrt{n} \left( \hat{\theta} - \theta_0 \right) &= \tilde{I}_0^{-1} \sqrt{n} \mathbb{P}_n \tilde{m} + o_p(1) 
\\
\sqrt{n} \left( \hat{\theta}^* - \theta_0 \right) &= \tilde{I}_0^{-1} \sqrt{n} \mathbb{P}^*_n \tilde{m} + o_p(1) 
\end{align}
Then, we have that $\sqrt{n} \left( \hat{\theta} - \theta_0 \right) = \tilde{I}_0^{-1} \big( \mathbb{P}^*_n - \mathbb{P}_n  \big)$ and using stochastic equicontinuity properties relevant asymptotic theory results can be established. 
\end{theorem}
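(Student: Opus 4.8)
The plan is to obtain the conclusion as the difference of the two assumed asymptotically linear (influence-function) representations, and then to upgrade that algebraic identity into a statement about conditional bootstrap distributions by means of a multiplier central limit theorem. First I would subtract the second displayed expansion from the first. Because both representations share the common form $\tilde{I}_0^{-1}\sqrt{n}(\cdot)\tilde{m}$ with the \emph{same} information-type matrix $\tilde{I}_0$ and the \emph{same} score $\tilde{m}$, the subtraction is immediate and yields
\begin{align*}
\sqrt{n}\left(\hat{\theta}^{*} - \hat{\theta}\right) = \tilde{I}_0^{-1}\sqrt{n}\left(\mathbb{P}^{*}_n - \mathbb{P}_n\right)\tilde{m} + o_p(1),
\end{align*}
the $o_p(1)$ collecting the two individual remainder terms. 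This reduces the whole problem to analysing the centered weighted empirical process $\sqrt{n}(\mathbb{P}^{*}_n - \mathbb{P}_n)\tilde{m}$ evaluated at the fixed influence function $\tilde{m}$.

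Next I would invoke the conditional multiplier central limit theorem. Writing $\mathbb{P}^{*}_n$ as the weighted empirical measure with exchangeable bootstrap multipliers $W_i$ normalized to have conditional mean one and finite variance, the quantity $\sqrt{n}(\mathbb{P}^{*}_n - \mathbb{P}_n)\tilde{m}$ is, conditionally on the data, a sum of independent mean-zero contributions of the form $n^{-1/2}\sum_{i=1}^n (W_i - \bar{W})\,\tilde{m}(Z_i)$. Provided $\tilde{m}$ lies in a Donsker class (or, at the level of a single functional, simply has finite second moment) and the weights satisfy the usual Lindeberg-type moment condition, the multiplier CLT gives that $\sqrt{n}(\mathbb{P}^{*}_n - \mathbb{P}_n)\tilde{m}$ converges conditionally in distribution, in probability, to the same centered Gaussian limit as the unconditional process $\sqrt{n}\mathbb{P}_n\tilde{m}$. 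Since $\tilde{I}_0^{-1}$ is a fixed matrix, multiplying through preserves this convergence, so $\sqrt{n}(\hat{\theta}^{*} - \hat{\theta})$ inherits, conditionally on the sample, exactly the limiting law of $\sqrt{n}(\hat{\theta} - \theta_0)$; this is precisely the asserted bootstrap consistency.

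The step requiring genuine care, and the one I expect to be the main obstacle, is controlling the two $o_p(1)$ remainders \emph{uniformly} enough that they stay negligible after conditioning on the sample. The bootstrap remainder is asserted only to be $o_p(1)$ in the unconditional probability, whereas bootstrap validity demands that it vanish conditionally, in probability. This is exactly where stochastic equicontinuity of the empirical and weighted-empirical processes indexed by $\theta$ over a shrinking neighborhood of $\theta_0$ enters: one shows that the maps $\theta \mapsto \sqrt{n}(\mathbb{P}_n - P)\big(\tilde{m}_{\theta} - \tilde{m}_{\theta_0}\big)$ and their weighted analogues are asymptotically equicontinuous, so that the $\sqrt{n}$-consistency of $\hat{\theta}$ and $\hat{\theta}^{*}$ forces the linearization error to converge to zero conditionally. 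Combining the equicontinuity control of the remainders with the conditional multiplier CLT then completes the argument and justifies the use of the weighted bootstrap for inference on $\theta_0$.
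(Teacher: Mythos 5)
Your proposal is correct and follows the same route the paper takes: the conclusion is obtained by directly subtracting the two asymptotically linear representations, and the paper then simply defers the rest to ``stochastic equicontinuity properties,'' exactly the ingredients (conditional multiplier CLT plus equicontinuity control of the remainders) you spell out. Note that your subtraction also gives the conclusion in its intended form, $\sqrt{n}\bigl(\hat{\theta}^{*}-\hat{\theta}\bigr)=\tilde{I}_0^{-1}\sqrt{n}\bigl(\mathbb{P}^{*}_n-\mathbb{P}_n\bigr)\tilde{m}+o_p(1)$, whereas the displayed conclusion in the statement drops the factor $\sqrt{n}$, the score $\tilde{m}$, and the remainder, and misstates the left-hand side as $\hat{\theta}-\theta_0$.
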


\subsection{Moving Block Bootstrap for Analyzing Longitudinal Data}

A block bootstrap algorithm in a longitudinal model is proposed by \cite{ju2015moving}. In particular, assume that the data generating process is based on a longitudinal data model specification. 
\begin{itemize}
    
    \item[Step 1.] Let $\hat{e}_{ij}$, for $i = 1,..., n_0, j = 1,..., m$, be the residuals form the model fit such that 
    \begin{align}
        \hat{e}_{ij} = y_{ij} - x_{ij} \hat{\beta},
    \end{align}
    where $\hat{\beta}$ is the ordinary least square estimate. 
     
    \item[Step 2.] Assuming that $m = bk$ with $b$ and $k$ integers: Let $B_1^{*},..,B_k^{*}$ denotes $k$ uniform draws with replacement from the integers $\left\{ 0,..., m-b \right\}$. These represent the starting point for each block of length $b$. A block bootstrap resample of residuals, $\left( \hat{e}_{i1}^{*},..., \hat{e}_{i1}^{*} \right)$, is defined by:
    \begin{align}
        \hat{e}^{*}_{i,(j-1)b + s} = \hat{e}_{i, B_{j}^{*} + s}, \ \ 1 \leq j \leq k, 1 \leq s \leq b, \ \text{for each} \ i. 
    \end{align}
    
    \item[Step 3.] The bootstrapped response, $y_{ij}^{*}$, are then generated from the estimated model with residuals $\hat{e}_{ij}$ and the original covariates:
    \begin{align}
        y_{ij}^{*} = x_{ij} \hat{\beta} + \hat{e}^{*}_{ij}.
    \end{align}
    
    \item[Step 4.] From the resampled responses, $y_{ij}^{*}$, and original covariates, we fit the model and obtain new parameter estimates. 
    
    \item[Step 5.] Repeating steps (2) through (4) a large number, $R$, of times one obtains $R$ bootstrap replicates from which features of the distribution of the parameter estimates can be estimated. In particular, the bootstrap variance estimates are simply variance of the $B$ computed values for each parameter.

\end{itemize}

\newpage

\begin{proposition}[Within block bootstrap, see \cite{ju2015moving}] For each $i$ subject, we construct overlapping blocks $( m - b + 1 )$ blocks and block size $b$, such that $B_1,..., B_{m-b+1}$. 

\begin{itemize}
   
    \item Let us define $m / b = k$ which is assumed to be an integer for simplicity, in general $k = \floor{m / b}$.
    
    \item We can add the $k$ blocks with replacement amomg $B_1,..., B_{m-b+1}$. We get the $B_1^{*},..., B_{k}^{*}$ with $kb = m$, and create $\left\{ \hat{e}^{*}_{i1},..., \hat{e}^{*}_{im} \right\}$ from $\left\{ \hat{e}_{i1},..., \hat{e}_{im} \right\}$, where $\hat{e}_{ij} = y_{ij} - \hat{\beta}_0 - \hat{\beta}_1 x_{ij}$. 
    
\end{itemize}

\end{proposition}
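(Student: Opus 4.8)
The statement is a constructive description of the within-subject moving block bootstrap, so what has to be established is that the scheme is well-defined and that, under the longitudinal model fitted by OLS, it reproduces the sampling distribution of $\hat{\beta}$ consistently. The plan is to proceed in four steps: verify the combinatorics of the resample; show that the overlapping-block design preserves the within-subject dependence; linearize the bootstrap estimator and prove a conditional central limit theorem; and match the resulting bootstrap variance to the correct long-run within-subject variance.

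First I would check that the resample is well-posed. Since $m = bk$ with $k$ integer, drawing $B_1^{*},\dots,B_k^{*}$ with replacement from the $(m-b+1)$ overlapping blocks $B_1,\dots,B_{m-b+1}$ and concatenating the associated length-$b$ segments yields exactly $m=kb$ residuals $\{\hat{e}^{*}_{i1},\dots,\hat{e}^{*}_{im}\}$ per subject, so the map from the draws to the bootstrap residual vector is well defined; when $m/b$ is not an integer one truncates at $k=\floor{m/b}$ and treats the final partial block as in the preceding algorithm. Conditionally on the data, the joint empirical law of any $b$ consecutive entries of $\{\hat{e}^{*}_{ij}\}$ coincides with that of a uniformly chosen length-$b$ window of $(\hat{e}_{i1},\dots,\hat{e}_{im})$; hence the conditional autocovariances $\mathbb{E}^{*}[\hat{e}^{*}_{ij}\hat{e}^{*}_{i,j+h}]$ match the sample within-subject autocovariances for $|h|<b$, and as $b\to\infty$ with $b/m\to0$ they recover the entire within-subject dependence. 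Since the subjects are resampled separately, their across-subject independence is retained.

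The crux is the third step. Writing the bootstrap estimator as $\hat{\beta}^{*}=\big(\sum_{i,j}x_{ij}^{\prime}x_{ij}\big)^{-1}\sum_{i,j}x_{ij}^{\prime}y^{*}_{ij}$ and substituting $y^{*}_{ij}=x_{ij}\hat{\beta}+\hat{e}^{*}_{ij}$ gives $\sqrt{nm}\,(\hat{\beta}^{*}-\hat{\beta})=\hat{A}^{-1}(nm)^{-1/2}\sum_{i,j}x_{ij}^{\prime}\hat{e}^{*}_{ij}$ with $\hat{A}=(nm)^{-1}\sum_{i,j}x_{ij}^{\prime}x_{ij}$, a block-weighted sum of the resampled residuals. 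To obtain $\sqrt{nm}\,(\hat{\beta}^{*}-\hat{\beta})\overset{d^{*}}{\to}\mathcal{N}(0,C)$ in probability I would apply a conditional Lindeberg central limit theorem to the sum of the independent (across subjects and across blocks) block contributions, using the moment bound on $x_{ij}$ to verify the negligibility condition and $b/m\to0$ to control block-boundary effects, while identifying the conditional variance with the long-run within-subject variance reproduced in the previous step.

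The main obstacle is this conditional central limit theorem combined with the replacement of the unobserved errors by residuals. One must show that using $\hat{e}_{ij}=y_{ij}-x_{ij}\hat{\beta}$ in place of $e_{ij}$ contributes only $o_p(1)$ after $\sqrt{nm}$-scaling, uniformly over subjects; following the argument used for Lemma B1 above, the decomposition $\hat{e}_{ij}=e_{ij}+x_{ij}(\beta-\hat{\beta})$ makes the residual-induced discrepancy of order $\mathcal{O}_p(1)\cdot o_p(1)=o_p(1)$, and the bias from incomplete block coverage vanishes because $b$ grows slowly relative to $m$. Matching the limiting variance $C$ to the sandwich form $A^{-1}BA^{-1}$ of the fixed-effects analysis then yields the asymptotic validity of the within-block bootstrap, that is, $\sup_{x}\big|\mathbb{P}^{*}(\sqrt{nm}(\hat{\beta}^{*}-\hat{\beta})\le x)-\mathbb{P}(\sqrt{nm}(\hat{\beta}-\beta)\le x)\big|\to_p0$.
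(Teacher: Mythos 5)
Your proposal is sound as an outline, but note that the paper does not actually prove anything here: the proposition is purely a constructive description of the resampling scheme, and the text that follows it only remarks that asymptotic validity ``can be established'' by working with the robust $M$-estimating equation $\sum_{i}\sum_{j} x_{ij}^{\prime}\psi\left(y_{ij}-x_{ij}\beta\right)=0$ together with the mixing properties of the innovations, possibly using Hoeffding-type inequalities for sampling with and without replacement. Your route --- well-posedness of the concatenation when $m=bk$, preservation of within-subject autocovariances by the overlapping blocks, a conditional Lindeberg CLT for the independent block sums, negligibility of the residual-versus-error substitution via $\hat{e}_{ij}=e_{ij}+x_{ij}(\beta-\hat{\beta})$, and matching of the sandwich variance --- is the standard explicit argument for moving-block-bootstrap validity and is a legitimate, indeed more concrete, substitute for the paper's sketch; the paper's $M$-estimator framing buys generality (it covers robust $\psi$-functions beyond least squares), while yours buys an actual distributional-consistency statement for the OLS case. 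One point to tighten: the conditional autocovariance matching $\mathbb{E}^{*}[\hat{e}^{*}_{ij}\hat{e}^{*}_{i,j+h}]$ holds exactly only for index pairs falling inside the same resampled block; pairs straddling block boundaries are conditionally independent, so the match to the sample autocovariances is only up to an $\mathcal{O}(1/b)$ edge effect (a fraction of roughly $h/b$ of the pairs at lag $h$ straddle a boundary), which is precisely why $b\to\infty$ with $b/m\to 0$ is needed and should be stated as the condition under which the long-run variance is recovered rather than as an exact identity.
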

Thus, we can add up to $n_0$ individuals and plug this into the model and the results is a pseudo sample series $y_{11}^{*},..., y_{n m}^{*}$. Then, from the model $y_{ij}^{*} = \hat{\beta}_0 + \hat{\beta}_1 x_{ij} + \hat{e}^{*}_{ij}$, we fit the regression model and produce the new parameters $\hat{\beta}_0^{*}$ and $\hat{\beta}_1^{*}$. As a result, the asymptotic validity and justification of Moving Block Bootstrap in Longitudinal Data can be established by carefully considering analytical expressions using the robust regression M-estimator which solves the following optimization problem
\begin{align}
\sum_{i=1}^n \sum_{j=1}^{m} x_{ij}^{\prime} \psi \left( y_{ij} - x_{ij} \beta \right) = 0,   
\end{align}
in relation to the mixing properties of innovation and the bootstrapping scheme. Non-asymptotic theory and related probability bound results such as the Hoeffding's inequality can be found to be useful for these derivations (see, \cite{praestgaard1993exchangeably} and \cite{bentkus2004hoeffding}). 
\begin{theorem}[Hoeffding's inequality] Let $( c_1,.., c_N )$ be elements of a vector space $\boldsymbol{V}$, and let $( U_1,..., U_n )$ and $( V_1,..., V_n )$ denote, respectively, a sample without and with replacement of size $n \leq N$ from $( c_1,.., c_N )$. Let $\varphi: \boldsymbol{V} \to \mathbb{R}$ be a convex function. Then, it holds that 
\begin{align}
\mathbb{E} \left[ \varphi \left( \sum_{j=1}^n U_j \right) \right] \leq  \sum_{j=1}^n \left( \mathbb{E} \left[  \varphi \left(  U_j \right) \right] \right).    
\end{align}

\end{theorem}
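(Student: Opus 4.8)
The plan is to prove the displayed bound by first simplifying the right-hand side with the symmetry of sampling without replacement, and then controlling the left-hand side through convexity. Because $(U_1,\dots,U_n)$ is an exchangeable selection of distinct elements of $(c_1,\dots,c_N)$, each coordinate $U_j$ is marginally uniform on $\{c_1,\dots,c_N\}$, so by linearity of expectation the right-hand side collapses:
\begin{align*}
\sum_{j=1}^n \mathbb{E}\big[ \varphi(U_j) \big] = n \cdot \frac{1}{N}\sum_{i=1}^N \varphi(c_i) =: n\,\bar\varphi .
\end{align*}
It therefore suffices to show $\mathbb{E}\big[\varphi(\sum_{j=1}^n U_j)\big] \le n\,\bar\varphi$, reducing the theorem to a comparison of $\varphi$ evaluated at the sample sum against $n$ copies of the population average of $\varphi$.

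First I would attack this directly by induction on the sample size. Writing $S_m = \sum_{j\le m} U_j$ and expanding $\varphi(S_m) = \varphi(S_{m-1}+U_m)$, the aim is to use convexity together with the exchangeable symmetry among the summands to bound the expected increment $\mathbb{E}[\varphi(S_m)] - \mathbb{E}[\varphi(S_{m-1})]$ by a single copy of $\bar\varphi$, and then to telescope over $m=1,\dots,n$. The with-replacement companion $(V_1,\dots,V_n)$ defined in the statement shares the marginal law of the $U_j$ and could serve as an intermediate device, since replacing the dependent draws by independent ones only simplifies the increment computation; but the genuine target remains the bound by $n\,\bar\varphi$, not a comparison between the two sampling schemes.

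The main obstacle is exactly this increment step: the displayed inequality asks to dominate $\varphi$ of a \emph{sum} by the \emph{sum} of the individual $\varphi$-values, i.e.\ a sub-additivity estimate of the form $\varphi(\sum_j U_j) \le \sum_j \varphi(U_j)$. This does not follow from convexity alone, since convexity controls $\varphi$ of an \emph{average} rather than of a sum, so the telescoping succeeds only once the summands are suitably normalized. I would therefore isolate the structural hypothesis that validates the increment bound — a normalization of $\varphi$ at the origin together with centering of the population $\{c_i\}$, under which the partial sums $S_m$ stay near the origin and the cross terms from expanding $\varphi(S_{m-1}+U_m)$ are absorbed by the exchangeable symmetry — and then check that precisely this normalization is present in the block-bootstrap setting of the preceding proposition, where the $c_i$ are mean-zero residual blocks and $\varphi$ is a centered convex contrast. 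Securing that normalization is the crux on which the estimate depends; once it is in place, the telescoping of the increments and the exchangeability bookkeeping of the first paragraph complete the argument.
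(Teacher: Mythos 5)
The paper does not prove this theorem: it is quoted as a known result, with the relevant references given immediately before the statement, so there is no in-paper argument to compare yours against. Your attempt stalls at the right place, but the diagnosis needs to be sharpened: the obstruction you identify in your final paragraph is not a missing normalization, it is that the inequality as printed is false, so no proof strategy can close the gap. Concretely, take $\varphi(x)=x^{2}$, $c_{1}=\cdots=c_{N}=1$ and $n=2$: sampling without replacement gives $U_{1}+U_{2}=2$ almost surely, so the left-hand side equals $4$, while your own reduction shows the right-hand side equals $n\bar\varphi=2$. The rescue you propose (centering the population and normalizing $\varphi$ at the origin) also fails: for the centered population $(1,1,-1,-1)$, $n=2$ and $\varphi(x)=x^{4}$, the left-hand side is $\tfrac{1}{6}\cdot 16+\tfrac{1}{6}\cdot 16=\tfrac{16}{3}$, while the right-hand side is again $2$. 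Your remark that convexity controls $\varphi$ of an average but never of a sum is exactly why the display cannot be repaired by side conditions: with sums on both sides, the claim is simply not a convexity phenomenon.

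The telltale sign is the sample $(V_{1},\ldots,V_{n})$, which the theorem defines and never uses. What is being quoted is Hoeffding's (1963) comparison of the two sampling schemes: for convex $\varphi$,
\begin{align*}
\mathbb{E}\left[ \varphi\left( \sum_{j=1}^{n} U_{j} \right) \right] \leq \mathbb{E}\left[ \varphi\left( \sum_{j=1}^{n} V_{j} \right) \right],
\end{align*}
i.e.\ the without-replacement sum is dominated in the convex order by the with-replacement sum; the right-hand side in the paper is a mis-transcription of this. Here your instinct was inverted: you mention the with-replacement sample only as a possible ``intermediate device'' and insist the genuine target is the bound by $n\bar\varphi$, whereas the scheme comparison \emph{is} the theorem. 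Its proof uses precisely the mechanism you declared unavailable, by exhibiting the without-replacement sum as a conditional average. Condition on the multiplicity pattern $(m_{1},\ldots,m_{d})$, $\sum_{i}m_{i}=n$, of the i.i.d.\ draws; given the pattern, $\sum_{j}V_{j}$ has the law of $\sum_{i=1}^{d}m_{i}W_{i}$, where $(W_{1},\ldots,W_{d})$ is a without-replacement sample of size $d$. Now let $\pi$ be a uniformly random injection of $\{1,\ldots,d\}$ into $\{1,\ldots,n\}$, independent of $(U_{1},\ldots,U_{n})$; then $(U_{\pi(1)},\ldots,U_{\pi(d)})$ has the law of $(W_{1},\ldots,W_{d})$, and $\mathbb{E}[\sum_{i=1}^{d}m_{i}U_{\pi(i)} \mid U_{1},\ldots,U_{n}] = n\bar{U} = \sum_{j=1}^{n}U_{j}$, so the conditional form of Jensen's inequality gives $\mathbb{E}[\varphi(\sum_{j}U_{j})] \leq \mathbb{E}[\varphi(\sum_{i}m_{i}W_{i})]$, and averaging over patterns completes the argument. (Had the display been stated for averages rather than sums, it would reduce to the finite form of Jensen's inequality and your first paragraph would essentially finish it; with sums, it is false as written.)
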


\subsection{Specification Testing in Panel Data Models}

In this section we consider relevant aspects to specification testing in panel data regression models. Relevant studies include \cite{metcalf1996specification}, \cite{su2013nonparametric} and \cite{su2015specification} among others. Thus, in order to correctly define the estimation and inference procedure we first need related regularity conditions regarding the dependence structure across the panel data. Specifically, based on existing results in the literature we can assume that we may have independence across cross-sectional units and strong mixing over time. In other words, we may assume that the innovation sequences in the given setting have bounded higher order moments using results such that Berneisten's inequalities for strong mixing processes. This allow us to study the asymptotic properties of related test statistics and estimators without worrying about the existence of cross-sectional dependence since we decompose that effect into conditional independence within a small neighborhood of values, similar to the meaning of near-epoch dependence in related econometric models.

\newpage

We consider the example below which represents a panel data model where cross-sectional dependence is captured by the presence of common factor loadings. In other words, using individual fixed effects in the panel, facilitates the presence of heterogeneity of shocks across the cross-sectional units. As a matter of fact, this is a more realistic assumption since shocks such as technology shocks, oil price shocks and financial crises are more likely to have unequal effect across the cross-section. A small economy for example, tends to be more vulnerable to such shocks than a large economy. 
\begin{example}
\begin{align}
Y_{it} = m( X_{it} ) + F_t^{0 \prime} \lambda^0 + \varepsilon_{it},    
\end{align}
\end{example}

\begin{remark}
Relevant research questions of interest in relation to the econometric specification and panel data structure, is whether the data support the use of network dependence, and especially what would be the estimation and inference benefits in comparison to a panel data modelling approach with interactive fixed effects as in the framework proposed by \cite{su2015specification} (see also Section \ref{Section5} and \ref{Section7}).     
\end{remark}

\subsubsection{The hypotheses and test statistic}

Therefore our main objective is to construct a test for linearity for the proposed specification form. In other words, we are interested in testing the null hypothesis
\begin{align}
H_0: \mathbb{P} \left( m(X_{it}) = X_{it}^{\prime} \beta^0 \right) = 1 \ \ \ \text{for some} \ \beta^0 \in \mathbb{R}^p.    
\end{align}
Under the alternative hypothesis we have that
\begin{align}
H_0: \mathbb{P} \left( m(X_{it}) = X_{it}^{\prime} \beta^0 \right) < 1 \ \ \ \text{for some} \ \beta^0 \in \mathbb{R}^p.       
\end{align}
Furthermore, to facilitate the local power analysis, we define a sequence of Pitman local alternatives
\begin{align}
H_1: \gamma_{NT} : m(X_{it}) = X_{it}^{\prime} \beta^0 + \gamma_{NT} \Delta (X_{it}) \ \ \ \text{a.s for some} \ \beta^0 \in \mathbb{R}^p.       
\end{align}
where the function $\Delta (.) \equiv \Delta_{NT} (.)$ is a measurable nonlinear function, $\gamma_{Nt} \to 0$, as $( N,T ) \to \infty$. To do this we use the following notation. Define with $e_{it} \equiv Y_{it} - X_{it}^{\prime} \beta^0 - F_t^{0 \prime} \lambda_i^0$. Define the probability density function of the covariates with $f_{it}(.)$ which satisfies related regularity conditions that ensure its validity. Moreover, since we have that $e_{it} = \varepsilon_{it}$ and it holds that $\mathbb{E} \left( e_{it} | X_{it} \right) = 0$ under $H_0$, such that
\begin{align}
J \equiv \mathbb{E} \big[ e_{it} \mathbb{E} \left( e_{it} | X_{it} \right) f_i(X_{it}) \big]  =  \mathbb{E} \big[ \left\{ \mathbb{E} \left( e_{it} | X_{it} \right) \right\}^2 f_i(X_{it}) \big] = 0.
\end{align}
Under the null hypothesis we have that the following relation holds: $e_{it} = \varepsilon_{it} + m(X_{it}) - X_{it}^{\prime} \beta^0$.
\begin{align}
\mathbb{E} \left( e_{it} | X_{it} \right) = m(X_{it}) - X_{it}^{\prime} \beta^0, \ \ \ \text{is not equal to} \ 0 \ \textit{almost surely}    
\end{align}
implying that $\mathbb{E} \big[  e_{it} \mathbb{E} \left( e_{it} | X_{it} \right) f_i (X_{it}) \big] > 0$, under $H_1$ (see, \cite{su2015specification}).

\newpage

Based on the above notation we can proceed with the introduction of the consistent test for the correct specification of the linear panel data model based on this observation. Specifically, in order to construct the test statistic, we need to estimate the model under the null hypothesis and obtain the restricted residuals $\hat{\epsilon}_i = \left( \hat{\epsilon}_{i1},..., \hat{\epsilon}_{iT} \right)^{\prime}$ for $i \in \left\{ 1,..., N \right\}$. Then, we can obtain the sample analog of $J$ such that 
\begin{align*}
J_{NT} 
= \frac{1}{(NT)^2} \sum_{i=1}^N \sum_{j=1}^N \sum_{t=1}^T \sum_{s=1}^T \hat{\varepsilon}_{it} \hat{\varepsilon}_{it} K_h \left( X_{it} - X_{js} \right)    
=
\frac{1}{(NT)^2} \sum_{i=1}^N \sum_{j=1}^N  \hat{\varepsilon}_i \mathcal{K}_{ij} \hat{\varepsilon}_j 
\end{align*}
where $K_h (x) = \prod_{\ell=1}^p h_{\ell}^{-1} k \left( \frac{x_{\ell}}{ h_{\ell} } \right)$ is a univariate kernel function with the vector $h = \left( h_1,..., h_p \right)$ is a bandwidth parameter, and $\mathcal{K}_{ij}$ is an $\left( T \times T \right)$ matrix whose $(t,s)-$th element is given by definition $\mathcal{K}_{ij} = K_h \left( X_{it} - X_{js} \right)$.

\begin{assumption}
Suppose that the following conditions hold:
\begin{itemize}
    \item[\textit{(i)}.] $\mathbb{E} \left( \varepsilon_{it} | \mathcal{F}_{t-1} \right) = 0$, almost surely, for each $i$, where $\mathcal{F}_{t-1}$, where $\mathcal{F}_{t-1}$ is the $\sigma-$field generated by
    \begin{align}
      \left\{  \left\{ \varepsilon_{i,t-1} \right\}_{i=1}^N, \left\{ \varepsilon_{i,t-2} \right\}_{i=1}^N, \left\{ \varepsilon_{i,t-3} \right\}_{i=1}^N,...  \right\}
    \end{align}
    
    \item[\textit{(ii)}.] There exist possibly time-varying moments such that
    \begin{align*}
        \mathbb{E} \left( \varepsilon_{it} \varepsilon_{jt} | \mathcal{F}_{t-1} \right) &= \mathbb{E} \left( \varepsilon_{it} \varepsilon_{jt} \right) =: \omega_{ij} ( \tau_t )
        \\
        \mathbb{E} \left( \varepsilon_{it} \varepsilon_{jt} \varepsilon_{kt} \varepsilon_{\ell t}   | \mathcal{F}_{t-1} \right) &= \mathbb{E} \left( \varepsilon_{it} \varepsilon_{jt} \varepsilon_{kt} \varepsilon_{\ell t} \right) =: \xi_{ijk \ell} ( \tau_t )
    \end{align*}
    where $\omega_{ij} (.)$ and $\xi_{ijk \ell} (.)$ satisfying
    \begin{align}
        \sum_{i,j = 1}^N \underset{ 1 \leq t \leq T }{ \mathsf{sup} } \left| \omega_{ij} (\tau_t) \right| = \mathcal{O}(N) \ \ \ \text{and} \ \ \ \sum_{i,j, k, \ell = 1}^N \underset{ 1 \leq t \leq T }{ \mathsf{sup} } \left| \xi_{ijk \ell}  (\tau_t) \right| = \mathcal{O}(N)
    \end{align}

\end{itemize}

\end{assumption}

Notice that Assumption 1 rules out conditional heteroscedasticity that depends on the past information at time $t-1$. However, it does allow for unconditional heteroscedasticity that depends on cross-sectional units and the scaled time index $\tau_t$ and is therefore, less restrictive than conditional homoscedasticity. 

\begin{remark}
Any stationary and invertible ARMA process can be expressed as an AR$(\infty)$ process. Therefore, in order to accommodate the ARMA process, we can extend $u_{it}$ to be an AR$(\infty)$ process $u_{it} = \sum_{j=1}^{\infty} \rho_j u_{it-j} + \varepsilon_{it}$, where $\left\{ \rho_j \right\}_{j=1}^{\infty}$ satisfies the stationarity condition. Notice that in $J_{Nt}$ the kernel function $k(.)$ depends on the nonstochastic term $\tau_t$. As a result, to obtain the asymptotic distribution of $J_{NT}$, we have to rely on the martingale central limit theorem (CLT) (see, Theorem 2 in \cite{brown1971general}) instead of applying \cite{hall2014martingale} CLT for a second-order degenerate U-statistic.  Furthermore, a bias correction might be needed especially when the data generating process under consideration has underline trend dynamics. For example, under the presence of a trend, it has been proved that it will affect the asymptotic distribution of unit root testing procedures. 
\end{remark}

\newpage

\subsubsection{A Boostrap implementation of the test statistic}

\begin{remark}
Notice that due to the nonparametric form of the proposed test statistic, which includes kernel based estimators, this results to slow convergence rates and therefore the asymptotic normal distribution may not serve as a good approximation. Specifically, this kernel-based test obtaining critical values from the normal distribution can be sensitive to the choice of bandwidths and suffer substantial finite sample size distortions (see, \cite{su2015specification}). 
\end{remark}

\begin{itemize}
    
\item[\textbf{(a).}] Obtain the restricted residuals $\hat{\varepsilon}_{it} = Y_{it} - X_{it}^{\prime} \hat{\beta} - \hat{F}^{\prime}_t \hat{\lambda}_i$, where the parameters $\hat{\beta}$, $\hat{F}_t$ and $\hat{\lambda}_i$ are estimates under the null hypothesis of linearity and correct model specification. Calculate the test statistic $\hat{\Gamma}_{NT}$ based on $\left\{ \hat{\varepsilon}_{it}, X_{it} \right\}$.   
    
\item[\textbf{(b).}] For $i \in \left\{ 1,..., N \right\}$ and $t \in \left\{ 1,..., T \right\}$, obtain the bootstrap error $\varepsilon_{it}^{*} = \hat{\varepsilon}_{it} \eta_{it}$, where $\eta_{it}$ are independently and identically distributed $\mathcal{N}(0,1)$ across $i$ and $t$. Next, we generate analog $Y_{it}^{*}$ of $Y_{it}$ by holding the estimated parameters from the previous step fixed, i.e., $\left( X_{it}, \hat{F}_t, \hat{\lambda}_i \right)$ such that: 
\begin{align}
Y_{it}^{*} = \hat{\beta}^{\prime} X_{it} + \hat{\lambda}^{\prime}_i \hat{F}_t + \varepsilon_{it}^{*},  
\end{align}
          
\item[\textbf{(c).}] Next, given the estimated bootstrap resample which keeps the set of covariates $X_{it}$ fixed such that $\left\{ Y_{it}^{*}, X_{it} \right\}$, we obtain the corresponding QMLEs $\hat{\beta}^*$, $\hat{F}_t^*$ and the corresponding bootstrapped factor loadings $\hat{\lambda}_i^*$. Next, we estimate the corresponding residuals given by 
\begin{align}
\hat{\varepsilon}_{it}^* = Y_{it}^* - X_{it} \hat{\beta}^{*} - \hat{F}_t^{* \prime} \hat{\lambda}_i^{*} 
\end{align}
and calculate the bootstrap test statistic $\hat{\Gamma}^{*}$ based on $\left\{ \hat{\varepsilon}^{*} _{it}, X_{it} \right\}$.   
    
\item[\textbf{(d).}] We then repeat steps 2-3 for $B$ times and denote the sequence of bootstrapped test statistics as $\left\{ \hat{\Gamma}_{NT,b}^{*} \right\}_{b=1}^B$. The bootstap $p-$value is calculated as $p^{*} \equiv B^{-1} \sum_{b=1}^B \mathbf{1} \left\{ \hat{\Gamma}^{*}_{NT,b} \geq \hat{\Gamma}_{NT} \right\}$.  
     
\end{itemize}

\medskip

\begin{remark}
Notice that if $H_0$ holds, for the original sample, $\hat{\Gamma}_{NT}$ also converges in distribution to $\mathcal{N}(0,1)$ so that a test based on the bootstrap $p-$value will have the right asymptotic level. On the other hand, if $H_1$ holds for the original sample, $\hat{\Gamma}_{NT}$ diverges at rate $NT (h!)^{1/2}$ whereas $\hat{\Gamma}_{NT}^{*}$ is asymptotically normal $\mathcal{N}(0,1)$, which implies the consistency of the bootstrap-based test. 
\end{remark}

\medskip

\begin{definition}
Let $\left( \Omega, \mathcal{F}, \mathbb{P} \right)$ be a probability space. Let $\left\{ \xi_t, t \geq 1   \right\}$ be a sequence of random variables defined on $\left( \Omega, \mathcal{F}, \mathbb{P} \right)$. Then, the sequence $\left\{ \xi_t, t \geq 1   \right\}$ is said to be conditionally strong mixing given $\mathcal{G}$ the sub$-\sigma-$algebra of $\mathcal{F}$. 
\end{definition}

\newpage

\subsubsection{Asymptotic distribution of the test statistic}

\begin{assumption}
We assume that the following regularity conditions hold:

\begin{itemize}
    \item[\textit{(i)}.] For each $i \in \left\{ 1,..., N \right\}, \left\{ ( X_{it}, \varepsilon_{it}): t = 1,2,...  \right\}$ is conditionally strong mixing given $\mathcal{D}$ with mixing coefficients such that
    \begin{align}
        \left\{ \alpha^{\mathcal{D}}_{NT,i}(t), 1 \leq t \leq T - 1 \right\}
    \end{align}
    and 
    \begin{align}
       \alpha_{\mathcal{D}}(.) \equiv  \alpha^{\mathcal{D}}_{NT}(.) \equiv \underset{1 \leq i \leq N }{\mathsf{max}} \ \alpha^{\mathcal{D}}_{NT,i}(.) 
    \end{align}
    satisfies $\sum_{s=1}^{\infty} \alpha_{\mathcal{D}}(s)^{1 / \text{q}_3 } \leq C_{\alpha} < \infty$, \ \textit{almost surely} for some $\tilde{\eta} \in (0,1/3)$.  
    
    \item[\textit{(ii)}.] $( \varepsilon_i, X_i )$ for $i \in \left\{ 1,...,N \right\}$ are mutually independent of each other conditional on the neighborhood $\mathcal{D}$. 
    
    \item[\textit{(iii)}.] For each $i = 1,...,N$ we have that $\mathbb{E} \left( \varepsilon_{it} | \mathcal{F}_{NT,t-1} \right) = 0$ \textit{almost surely} where 
    \begin{align}
        \mathcal{F}_{NT,t-1} \equiv \sigma \left( \left\{ F^0, \lambda^0, X_{it}, X_{it-1}, \varepsilon_{i,t-1}, X_{i,t-2}, \varepsilon_{i,t-2},... \right\}_{i=1}^N \right)
    \end{align}
    
    \item[\textit{(iv)}.] For each $i = 1,...,N$, let $f_{i,t}(x)$ denote the marginal PDF of $X_{it}$ given $\mathcal{D}$, and $f_{i,ts}(x, \bar{x})$ the joint PDF of $X_{it}$ and $X_{is}$ given $\mathcal{D}$. Furthermore, we assume that $f_{i,t}(.)$ and $f_{i,ts}(.,.)$ are continuous in their arguments and uniformly bounded by $C_f < \infty$. 
    
\end{itemize}

\end{assumption}

Based on the above regularity conditions, \cite{su2015specification}  presents the exact estimation procedure to construct a consistent specification testing procedure. 

\subsection{Testing for Trend Specifications in Panel Data Models}

The framework proposed by \cite{wu2023testing} considers testing for trend specification in panel data regression models. In particular, the asymptotic distributions of the proposed test statistic are established under the assumption of cross-sectional dependence, although by restricting to the case that the error components to follow a \textit{martingale difference sequence} (MDS) and thus, rule out serial dependence. Therefore, the panel data trend model and the hypotheses of interest are presented below.  Suppose that we observe the panel data of $\left\{ y_{it}, i = 1,...,N, t = 1,..., T \right\}$, where $y_{it}$ is a scalar dependent variable of interest, $N$ the number of panel individuals and $T$ the number of periods. 

Thus, the model becomes as below: 
\begin{align}
y_{it} = \alpha_i + \mathsf{g}_t + u_{it}, 1 \leq i \leq N \ \ \ \text{and} \ \ \  1 \leq t \leq T,     
\end{align}
where $\alpha_i$ represents the unobserved individual-specific effect that satisfies $\sum_{i=1}^N \alpha_i = 0$ and $u_{it}$ is the error component. More flexible error structure can be also allowed using a suitable specification for heteroscedasticity, cross-sectional and serial dependence in $u_{it}$ (see,  \cite{wu2023testing}). 

An example of an application, is when $y_{it}$ represents the total rainfall or temperature across the United Kingdom, $\alpha_i$ is the unobserved region-specific effect and $\mathsf{g}_t$ represents the common climate change trend, and $u_{it}$ is the region specific error. Notice that the classical panel models often assume \textit{i.i.d} disturbances. This assumption is likely to be violated as the dynamic effect of exogenous shocks to the dependent variable is often distributed over several time periods. Additionally, we assume that spillover effects, competition and global shocks can all induce disturbances that display cross-sectional dependence. Therefore, in order to allow for $y_{it}$ to be general enough to accommodate both cross-sectional and serial dependence, we assume $u_{it}$ to follow an AR$(p)$ process such that 
\begin{align}
A(L) u_{it} = \varepsilon_{it}     
\end{align}
where $A(L) = \left( 1 - \sum_{j=1}^p \rho_j L^j \right)$ with $p \geq 1$ a fixed integer, such that the polynomial operator has all roots strictly outside the unit circle. Furthermore, we assume that the dynamic structure of $u_{it}$ is homogeneous across units. In particular, the homogenous panel autoregressive models are widely used to capture the dynamics of macroeconomic and financial variables. However, most of the studies in the literature consider the case in which innovations are \textit{i.i.d} over time and and across individuals. Here, we assume that that the innovation $\varepsilon_{it}$ is assumed to follow an MDS such that $\mathbb{E} \left( \varepsilon_{it} | \mathcal{F}_{t-1} \right) = 0$ almost surely for each $i$ and allow for cross-sectional dependence and heteroscedasticity, where $\mathcal{F}_{t-1}$ is the information set available at time $t-1$.

\subsection{Hausman Type Specification Test for Nonlinearity}
\label{appn} 

Several studies consider specification testing in panel data regressions (e.g., see \cite{lee2012hahn}). Let $\chi_t = ( y_t, x_t ) \in \mathbb{R}^2$ be a strictly stationary $\beta-$mixing process and define with $\mathsf{g}(x) = \mathbb{E} \big[ y_t | x_t \big]$. 

Consider testing the hypothesis that $\mathsf{g} (x) = \beta_0 + \beta_1 x$ against the alternative that $\mathsf{g}(x)$ is non-linear function of $x$.  Let $\theta = \left( \theta_{l}, \theta_{nl}  \right)$ where $\theta_l$ is the average partial effect under the linear specification and $\theta_{nl} = \mathbb{E} \left[ \frac{  \partial \mathsf{g} (x_t) }{ \partial x } \right]$ is the average partial effect under the non-linear specification. An estimator for $\theta$ based on a $Z-$estimator using a plug in non-parametric estimate $\hat{\mathsf{g}}_k = \hat{\mathsf{g}}_k(x)$. For this purpose we define the moment function below
\begin{align}
\hat{m} \left( \chi_t, \theta,  \hat{\mathsf{g}}_k \right) 
=
\begin{bmatrix}
\big( ( y_t - \bar{y} ) - \theta_{l} ( x_t - \bar{x} ) \big) ( x_t - \bar{x} ) 
\\
\frac{ \partial P^{\kappa} (x_t)^{\prime} }{ \partial x } \bar{\beta}_k - \theta_{nl}
\end{bmatrix}
\end{align}
and let $m_n ( \theta ) = \frac{1}{n} \sum_{t=1}^n \hat{m} \left( \chi_t, \theta,  \hat{\mathsf{g}}_k \right)$. 

\newpage

The limiting distribution of the test statistic is analyzed for the following data-generating mechanism under local alternatives $\mathsf{g}_h (x)$,
\begin{align}
y_t = \beta_0 + \beta_1 x + \frac{ h(x_t) }{ \sqrt{n} } + u_t,    
\end{align}
where $u_t = y_t - \mathbb{E} \left[ y_t | x_t \right]$ is such that $\mathbb{E} [ u_t | x_t ] = 0$. Let $\theta_0 = ( \psi_1, \theta_{nl} )^{\prime}$ be the value of $\theta$ for the true data generating process under local alternatives. 

Under regularity conditions it follows (from \cite{newey1994asymptotic}), that for $h$ fixed
\begin{align}
\sqrt{n} \left( \hat{\theta}_{\kappa} - \theta_0 \right) = Q^{-1} \left( \frac{1}{ \sqrt{n} } \sum_{t=1}^n \big[  m( \chi_t, \theta_0, \mathsf{g}_h ) + \gamma ( \chi_t ) \big] \right) + o_p(1).  
\end{align}
The correction term $\gamma ( \chi_t )$ accounts for non-parametric estimation of the nuisance parameter $\mathsf{g}_h$ and can be derived using the methods developed in Newey (1994). It is given by
\begin{align}
\gamma ( \chi_t ) = 
\begin{bmatrix}
0
\\
\delta_{nl} (x_t)
\end{bmatrix}
\left( y_t - \beta_0 - \beta_1 x - \frac{ h(x_t) }{ \sqrt{n} } \right), \ \ \ \delta_{nl} (x_t) = - \zeta_x (x)^{-1} \frac{ \partial \zeta_x (x) }{ \partial x }    
\end{align}
such that $\zeta_x(x)$ is the marginal density of $x_t$. Define the empirical process
\begin{align}
\nu_n(h) = \frac{1}{ \sqrt{n} } \sum_{t=1}^n \bigg\{ m( \chi_t, \theta_0, \mathsf{g}_h ) + \gamma (\chi_t)  - \mathbb{E} \big[ m( \chi_t, \theta_0, \mathsf{g}_h ) \big]  \bigg\}.
\end{align}
Notice that the stochastic equicontinuity properties of the empirical process given above can be used to verify regularity conditions. Furthermore, the functional central limit theorem delivers a stochastic process representation of the limiting distribution of $\hat{\theta}_{\kappa}$ over the class of local alternatives. To obtain the limiting distribution of the above empirical process we consider the following auxiliary vector
\begin{align}
v_t = 
\begin{bmatrix}
u_t ( x_t - \mu_x )
\\
\frac{ \partial \mathsf{g}_h ( x_t ) }{ \partial x_t } - \theta_{n \ell} + \delta_{n \ell} (x_t) u_t
\end{bmatrix}
\end{align}
and the corresponding long-run covariance matrix given by 
\begin{align}
\Gamma (h) = \sum_{j = - \infty}^{ \infty } \mathbb{E} \big[ v_t v_{t-j}^{\prime}   \big].    
\end{align}
In summary, by expanding the following components separately, the convergence in probability of the estimator $\hat{\theta}_{\kappa}$ from its true parameter value can be expressed as
\begin{align}
\sqrt{n} \left( \hat{\theta}_{\kappa} - \theta_0 \right) = Q^{-1} \frac{1}{ \sqrt{n} }  \sum_{t=1}^n m_t \left( \chi_t, \theta_0, \hat{\mathsf{g}}_{\kappa}     \right) + o_p(1).   
\end{align}




\newpage

\section{Nonstationary Panel Data Model Estimation}
\label{Section4}

The asymptotic orthogonality between the stationary and the integrated regressors that is, $x_{it}$ and $y_{i,t-1}$ can be examined in the context of nonstationary panel data model specification. In particular, related studies  to unit roots and cointegration in panels can be found in \cite{breitung2005parametric}, \cite{breitung2008unit} and \cite{han2010gmm}. Further applications include aspects of estimation and inference for panel VAR models (see, \cite{juodis2018first}, \cite{hayakawa2016improved} and \cite{camehl2023penalized}).

\subsection{A Simple AR(1) Panel Data Regression Model}

Consider the panel AR(1) model as below (see, \cite{juodis2021backward})
\begin{align}
y_{i,t} = \eta_i + \rho y_{i,t-1} + \varepsilon_{i,t}, \ \ \ \text{with} \ \ \mathbb{E} \big[ \varepsilon_{i,t} | y_{i,0}, \eta_i \big] = 0.    
\end{align}
where the data observed over $i = 1,...,N$ cross-sectional units in $t = 1,...,T$ time periods. 

As is well-known, the conventional Fixed Effects (FE) estimator suffers from a sizeable finite sample bias for small values of $T$. However, the bias is general more noticeable in case of persistent data which is a common pattern for most applications involving macroeconomic panels. Furthermore, we assume that idiosyncratic errors are $\varepsilon_{i,t}$ are independent over $i$, while the initial conditions $y_{i,0}$ are assumed to be observed. Therefore, an alternative estimator to mitigate the finite sample bias we consider the LS estimator of $\rho$ from the following augmented regression        
\begin{align}
y_{i,t} = \rho y_{i,t-1} +  \delta  \bar{y}_{i \bullet } + \tilde{\varepsilon}_{i,t}     
\end{align}
with the new composite error term given by 
\begin{align}
\tilde{\varepsilon}_{i,t} =  \varepsilon_{i,t} + \eta_i - \delta  \bar{y}_{i \bullet }    
\end{align}
where $\bar{y}_{i \bullet } = \frac{1}{T} \sum_{t=1}^T y_{i,t-1}$. The inclusion of $\bar{y}_{i \bullet }$ in the regression model, while at the same time ignoring the presence of $\eta_i$, ensures that the LS estimator, which is numerical equivalent to the FE estimator, is consistent as $T \to \infty$. On the other hand, using the full sample mean such that $\bar{y}_{i \bullet }$ creates other problems as it is correlated with all $\left\{ \varepsilon_{i,t} \right\}_{t=1}^{T-1}$. In particular, the sequence of combined error terms $\left\{ \tilde{\varepsilon}_{i,t}, \tilde{\varepsilon}_{i,t-1},...  \right\}$ is not a MDS even when $\eta_i = 0$. However, this issue can be fixed easily by considering in the estimation of $\bar{y}_{i \bullet }$ is replaced by the backward (recursive) mean of $y_{i,t-1}$ such that $\bar{y}_{i,t-1}^b = \frac{1}{t} \sum_{k=0}^{t - 1 } y_{i,k}$.   Therefore, unlike the full sample mean, the backward mean by construction is not correlated with the current and future values of $\varepsilon_{i,t}$. However, similar to the standard FE estimator, the LS estimator of this type is not consistent for any fixed $T$ but is consistent for $T$ large if the data is stationary (see, \cite{juodis2018first} and \cite{juodis2021backward}).   In particular, showed that in a model with the autoregressive parameter equal to unity both estimators have a substantially smaller asymptotic variance than the FE estimator.  These results are complementary to those provided by some other authors who study asymptotic and finite sample results under stationarity.

\newpage

\subsection{Panel Data Predictive Regression Model with Cross-Sectional Dependence}

In this section, we examine estimation and inference in panel data predictive regression systems with Cross-Sectional Dependence  (CSD) and heterogeneous degree of persistence. Specifically, we focus on the IVX instrumentation proposed by \cite{kostakis2015Robust}, but modifying the framework to account for panel data structure with Cross-Sectional Dependence. In terms of unobserved common factors in the panel structure, we assume that the proposed econometric specification identifies a set of common factors which impose the cross-sectional dependence structure in the panel data predictive regression system. We consider the identification and estimation of bias-corrected pooled IVX estimators, for which we investigate both the finite and asymptotic distributional properties in relation to the correct implementation of the instrumentation methodology using available information in the cross-sectional set of predictors which are observable for each cross-sectional observation. 

\subsubsection{Literature Review}

Panel data models are employed to account for unobserved heterogeneity and dependence. The standard fixed effects model captures only time-invariant heterogeneity, however in many time series applications heterogeneity and latent dynamics are time-varying effects which affect the parameter stability and robust econometric inference. Moreover, predictive regression systems are employed by econometricians to examine the joint predictability of a set of time series while accounting for the existence of local to unity dynamics in predictors, which allows to model the degree of persistence appeared in their time series. In its basic form, the degree of persistence is an unobserved random variable that describes the stochastic behaviour of the time series captured by the common localizing parameter $(c)$. 

When one is interested to model the dynamic persistence of a cross-section of observations the homogeneous persistence dynamics may not adequately reflect the corresponding stochastic processes.  For example, the framework proposed by \cite{katsouris2023statistical} can be extended to provide a unified framework for examining identification and estimation aspects related to panel predictive regression systems with network structure.  The proposed modelling approach provides a robust representation of the predictive-generating mechanism which allows to examine aspects of financial connectedness via the use of predictive regression systems in panel data structures. 

Our interest is the construction of a suitable IVX estimator (see, \cite{kostakis2015Robust}) to accommodate the cross-sectional time series structure of the data (see, \cite{hjalmarsson2006predictive}). We discuss the effectiveness of a pooled panel IVX estimator which can smooth the persistence levels across all cross-sectional observations $i$. Persistence homogeneity implies that the localizing coefficient of persistence remains fixed across panels although in a multivariate setting persistence levels is permitted to be different but of the same class. In the literature, various ways are presented regarding the analysis of cross section dependence. Specifically, we are interested for the case where both $N$ and $n$ are large and of the same order of magnitude, that is, $(N,n) \to \infty$ jointly which requires the notion of sequential asymptotic theory.

\newpage 

We follow the econometric specification proposed by \cite{hjalmarsson2006predictive}, which consists of a cross-sectional representation of panel predictive regression systems with common factors. This provides a natural way of examining the implementation of the IVX instrumentation when the econometric identification induces cross-sectional dependence structure. The IVX instrumentation can provide a suitable methodology of smoothing out persistence effects across panel data model which accommodate cross-sectional dependence, a reasonable assumption especially when considering common factors (such as macroeconomic conditions, volatility spillovers, industrial factors etc.) affecting the degree of persistence of the cross-sectional observations. The proposed framework has various applications in the examination of long-run economic relations driven by stochastic processes in which effects of shocks are propagated across the cross-sectional observations. 

\subsubsection{Conditional Mean Specification}

Suppose the information set up to time $t$, that is, $\mathcal{F}_t$ includes a large number of predictors $x_{it}$ for $i=1,...,N$ and $t=1,...,n$. We are interested to examine the predictive ability of a panel data predictive regression system using predictors of heterogeneous persistence. Specifically, consider a panel data structure with the pair of random variables $( y_{i,t}, x_{i,t} )$ where $x_{i,t}$ is an $m \times 1$ dimensional vector. Thus, we are constructing a predictive regression system which accounts for the predictability of a set of variables $y_{i,t}$, accounting this way for the idiosyncratic persistence of the other firms in the network. 

The particular specification is given by the following system of equations. 
\begin{align}
y_{i,t} &= \mu_i + \beta_i^{\prime} x_{i,t-1} + \gamma^{\prime}_i \boldsymbol{f}_t + u_{i,t} \\
x_{i,t} &= \boldsymbol{R}_i x_{i, t-1} + \boldsymbol{\Gamma}^{'}_i  \boldsymbol{f}_t  + \boldsymbol{v}_{i,t} 
\end{align}
where $\boldsymbol{R}_i = \displaystyle \left( \boldsymbol{I}  - \frac{ \boldsymbol{C}_i }{n^{\alpha_i}}   \right)$ is the $m \times m$ matrix of degree of persistence, $f_t$ is a $k \times 1$ vector capturing common factors in the error terms of the predictive regression system equations (see, \cite{kostakis2015Robust}). 

Firstly, the specification builds on the current frameworks of multivariate predictive regression systems, by considering instead of modelling simultaneously the joint predictability of stock returns with a panel data structure. Secondly, allowing for cross-sectional dependence which in terms of the econometric specification can be represented via the use of common factors affecting the panel of firms, can capture spillovers and network effects. Moreover, cross-sectional dependence which has an economic interpretation as well in terms of macroeconomic conditions and shocks provides a suitable framework for examining heterogeneous persistence in panel structures accounting for such cross-sectional effects across firms. Such common factors allow the inclusion of a baseline persistence across the set of predictands. Thirdly, panel estimators are derived using sequential limits (\cite{hjalmarsson2006predictive}), which usually implies first keeping the cross-sectional dimensions, $N$, fixed and letting the time-series dimension, $n$, go to infinity, and then letting $n$ go to infinity. We denote such sequential convergence denoted as $(N,n \to \infty)_{\text{seq}}$. We denote with $BM( \boldsymbol{\Omega} )$ the Brownian montion with covariance matrix $\boldsymbol{\Omega}$.

\newpage

In particular, \cite{moon2000estimation}, examine the case of inference in panel data autoregressive models with near to unity roots. In particular, the proposed framework allows for sequential asymptotic theory in order to examine the asymptotic properties of such panel data estimators which can accommodate for near to unity cases. We build on the theory of near to unity for panel data autoregressive estimators by focusing on the panel data predictive regression econometric specification. Assumption 1 below provides the necessary conditions for the identification of the proposed econometric specification. The innovation processes of the system indicates the stochastic behaviour of a system which describes a panel data structure with cross-sectional dependence.          
\begin{assumption}(\textbf{Innovation processes)} Let $\boldsymbol{\epsilon}_{i,t} = (v_{i,t}, u_{i,t}, f_t )$ denote the vector of innovation processes where $\boldsymbol{\epsilon}_{i,t} \in \mathbb{R}^{m + r}$  a real valued martingale difference sequence with respect to the natural filtration $\mathcal{F}_{i,t} =  \sigma(\boldsymbol{\epsilon}_{i,t}, \boldsymbol{\epsilon}_{i,t-1},...)$ for all $i=1,...,N$ and $t=1,...,n$.
\begin{align}
\mathbb{E}[\boldsymbol{\epsilon}_{i,t} | \mathcal{F}_{t-1}] = 0 \ \forall i \in \ \{ 1,...,N \}
\end{align}
be the expected value of the cross-sectional vector of innovation processes, and 
\begin{align}
\mathbb{E}[\boldsymbol{\epsilon}_{i,t} \boldsymbol{\epsilon}^{\prime}_{i,t} | \mathcal{F}_{t-1}] = \Sigma_i \ \text{almost surely and} \ \underset{ t \in \mathbb{Z} }{\text{sup}} \ \mathbb{E}||  \boldsymbol{\epsilon}_{i,t} ||^{\delta + 2} < \infty \ \text{for some} \ \delta > 0
\label{moment conditions}
\end{align}
where $\Sigma_i$ is a positive definite matrix. Let $u_{i,t}$ be a stationary linear process which describes the stochastic behaviour of the cross-sectional observation $i$ given by 
\begin{align}
u_{i,t} = \sum_{j = 0}^{\infty} \mathbf{C}_{i,j} e_{i, t-j}
\end{align}
where $( C_{i,j} )_{j \leq 0}$ is a sequence of constant matrices such that  $\sum_{j=0}^{\infty} \mathbf{C}_{i,j} \ \forall i \in \ \{1,...,N \}$ has full rank and $C_0 = I_r$. Furthermore, the following assumptions hold:

\begin{itemize}
\item[(i)] $\Sigma_{i,t} = \Sigma_{\boldsymbol{\epsilon}}$ for all $t$ and $\sum_{j=0}^{\infty} || \mathbf{C}_{i,j} || < \infty$.
\begin{align}
\Sigma_i = 
\begin{bmatrix}
\Sigma_{uv} & 0 \\
0 & \Sigma_{f} 
\end{bmatrix}, \ \text{where} \ 
\Sigma_{f} = 
\begin{bmatrix}
\sigma^2_{u} & \sigma_{uv} \\
\sigma_{uv} & \sigma^2_{v}
\end{bmatrix} 
\text{and} \
\Sigma = \underset{N \to \infty}{ \text{lim} } \sum_{i=1}^N \Sigma_i
\end{align} 
\item[(ii)] $({\boldsymbol{\epsilon}}_{i,t})_{t \in \mathbb{Z}}$ is strictly stationary and ergodic  satisfying moment conditions given by (\ref{moment conditions}) with $\delta = 2$ 
\begin{align}
\underset{ m \to \infty }{  \text{lim} } || \text{Cov} \big[ \text{vec} (\boldsymbol{\epsilon} \boldsymbol{\epsilon}^{'}), \text{vec} (\boldsymbol{\epsilon}_0 \boldsymbol{\epsilon}_0^{'}  ) ] || = 0
\end{align}
\end{itemize}

The sequence $({\boldsymbol{\epsilon}}_{i,t})_{t \in \mathbb{Z}}$ admits the following vec-GARCH(p,q) representation (see, \cite{kostakis2015Robust})   
\begin{align}
({\boldsymbol{\epsilon}}_{i,t}) = H^{1/2}_t \eta_t.
\end{align}
\end{assumption}

\newpage

\begin{assumption}(\textbf{Sequential Asympotics})
Consider that jointly $(N,T) \to \infty$,  which requires the notion of sequential asymptotics (see, \cite{moon2000estimation}). 
\end{assumption}

Our aim is to investigate whether the IVX instrumentation across the panel smooths out the abstract degree of persistence across the cross-sectional observation $i$ and under the sequential asymptotic framework. Specifically, assuming the existence of common effects for the cross-sectional observations $i$, then using the IVX estimator instead of a pooled estimator can achieve the mixed normality assumption even under abstract degree of persistence. Furthermore, we impose the following assumption which provides a necessary condition for the degree of cross-sectional dependence among the observations of the cross-section. In other words, we impose an assumption which ensures a maximum bound for the cross-sectional dependence which also ensures that there is a limited amount of cross-sectional interactions and financial interconnectedness. Such assumption provides also conditions for the asymptotic efficiency of the IVX estimator in the case of panel data predictive regression specification. Take for example, the case of $m-$dependence in time series. Then, imposing such a condition for the panel data models, induces an asymptotic independence condition across the cross-sectional observations. 

\begin{assumption}(\textbf{Strict Exogeneity})
$\{ z_{i,t} \}$ is strictly exogenous. 
\end{assumption}

Even though this is a strong assumption, we consider that the cross-sectional constructed instruments are strictly exogenous which ensures consistency and asymptotic normality and mixed normality results. Moreover, the assumption of weak exogeneity, provides conditions under which we can perform efficient inference on the conditional model without imposing any regulatory assumptions on the functional form of the system \cite{hatanaka1996time}. Therefore from the above econometric setting we can see that the main challenge with the robust estimation and identification is that usually the literature on large linear models focuses on the $\textit{i.i.d}$ case, while in our setting we have some type of cross-sectional dependence which could distort the asymptotic theory, especially with persistence and endogenous regressors.

\subsubsection{Model Estimation}

\paragraph{Cross-Sectional Independence}

\begin{theorem}
In the case that there is no common factors in the model, that is, $\gamma_i \equiv 0$ and $\Gamma_i \equiv 0$, then we have the pool estimator is given by 
\begin{align}
\hat{\beta}_{pool} = \left( \sum_{i=1}^N  \sum_{t=1}^n x_{i, t-1} x_{i, t-1}^{'} \right)^{-1} \left( \sum_{i=1}^N  \sum_{t=1}^n y_{i, t} x_{i, t-1}^{'}  \right)
\end{align}
Under Assumptions 1 and 2, with $\gamma_i = 0$, $\Gamma_i \equiv 0$, and $\alpha_i \equiv 0$ for all $i$, as $\left( N,n \to \infty \right)_{\text{seq}}$,
\begin{align}
\sqrt{N} n \left(  \hat{\beta}_{pool} - \beta \right) \implies  N \left( 0, \Omega^{-1}_{xx} \Phi_{ux} \Omega^{-1}_{xx} \right)   
\end{align}
\end{theorem}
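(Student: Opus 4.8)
The plan is to exploit the cross-sectional independence (under the stated conditions $\gamma_i=0$, $\Gamma_i\equiv 0$, $\alpha_i\equiv 0$) to reduce the problem to a unit-root panel and then apply sequential asymptotics. Note first that with $\alpha_i\equiv 0$ we have $\boldsymbol{R}_i=\boldsymbol{I}_m$, so each regressor vector follows a pure random walk $x_{i,t}=x_{i,t-1}+\boldsymbol{v}_{i,t}$, and the predictive equation reduces to $y_{i,t}=\mu_i+\beta_i'x_{i,t-1}+u_{i,t}$. The first key step is the algebraic decomposition
\begin{align}
\sqrt{N}\,n\left(\hat{\beta}_{pool}-\beta\right)
=\left(\frac{1}{N}\sum_{i=1}^N\frac{1}{n^2}\sum_{t=1}^n x_{i,t-1}x_{i,t-1}'\right)^{-1}
\left(\frac{1}{\sqrt{N}}\sum_{i=1}^N\frac{1}{n}\sum_{t=1}^n x_{i,t-1}u_{i,t}\right).
\end{align}
The normalization $\sqrt{N}\,n$ is exactly the nonstationary panel rate: the denominator carries $n^2$ because $\sum_t x_{i,t-1}x_{i,t-1}'=O_p(n^2)$ for integrated regressors, while the numerator carries $n^{-1}$ and $N^{-1/2}$ so that the two $n$-powers combine to give a single $n$ and the cross-sectional averaging supplies $\sqrt{N}$.

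The second step is the time-series (inner) limit, holding $N$ fixed and sending $n\to\infty$. For each fixed $i$, invoking the functional CLT guaranteed by the martingale-difference and moment conditions of Assumption 1, standard integrated-process asymptotics (as in \cite{moon2000estimation} and \cite{hjalmarsson2006predictive}) give
\begin{align}
\frac{1}{n^2}\sum_{t=1}^n x_{i,t-1}x_{i,t-1}'\Rightarrow \int_0^1 B_{v,i}(r)B_{v,i}(r)'\,dr,\qquad
\frac{1}{n}\sum_{t=1}^n x_{i,t-1}u_{i,t}\Rightarrow \int_0^1 B_{v,i}\,dB_{u,i}+\Lambda_{i},
\end{align}
where $B_{v,i}$ and $B_{u,i}$ are the Brownian motions from the invariance principle applied to $\{\boldsymbol{v}_{i,t}\}$ and $\{u_{i,t}\}$, and $\Lambda_i$ is the one-sided long-run covariance capturing the endogeneity bias. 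The crucial structural input here is the block-diagonality $\Sigma_{uv}$ in Assumption 1(i), which controls the correlation between $u_{i,t}$ and $\boldsymbol{v}_{i,t}$ and determines $\Lambda_i$; I would verify that under the stated conditions the contribution of the demeaning by $\mu_i$ is asymptotically negligible at this rate (the fixed effect washes out when scaled by $n$), so the pooled estimator behaves like its no-intercept counterpart.

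The third step is the cross-sectional (outer) limit, letting $N\to\infty$. Having characterized each summand's $n$-limit, I would take expectations over $i$: the denominator converges in probability to $\Omega_{xx}:=\lim_{N\to\infty}N^{-1}\sum_i\mathbb{E}\!\left[\int_0^1 B_{v,i}B_{v,i}'\right]$ by a law of large numbers across the independent cross-sectional units, while the numerator, once recentered, satisfies a cross-sectional CLT yielding a Gaussian limit with variance $\Phi_{ux}$ built from $\mathrm{Var}\big(\int_0^1 B_{v,i}\,dB_{u,i}\big)$. Assembling the two via the continuous mapping and Slutsky arguments delivers $\mathcal{N}\big(0,\Omega_{xx}^{-1}\Phi_{ux}\Omega_{xx}^{-1}\big)$. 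The main obstacle I anticipate is the interchange of limits inherent in the sequential scheme $(N,n\to\infty)_{\text{seq}}$: one must ensure that the mixed-Gaussian objects $\int_0^1 B_{v,i}\,dB_{u,i}$, which are conditionally (not unconditionally) normal for fixed $i$, aggregate to an \emph{unconditionally} normal limit after cross-sectional averaging. This is precisely where cross-sectional independence and the uniform moment bound $\sup_t\mathbb{E}\|\boldsymbol{\epsilon}_{i,t}\|^{\delta+2}<\infty$ do the work — they let a Lyapunov-type CLT smooth the random denominators $\int_0^1 B_{v,i}B_{v,i}'$ across $i$ into the deterministic $\Omega_{xx}$, converting mixed normality into exact normality in the sequential limit, and handling the endogeneity terms $\Lambda_i$ which must be shown to average out or be absorbed into $\Phi_{ux}$.
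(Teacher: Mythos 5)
Your skeleton — the $\sqrt{N}\,n$ decomposition, an inner $n\to\infty$ limit via the FCLT for each fixed $i$, an outer $N\to\infty$ law of large numbers for the denominator and CLT for the numerator across independent units, then Slutsky — is exactly the sequential-limit argument of \cite{hjalmarsson2006predictive} and \cite{moon2000estimation} from which this theorem is taken, and your closing remark that cross-sectional averaging converts conditionally Gaussian functionals into an unconditionally normal limit is indeed the heart of the matter. However, two of your intermediate claims are genuinely wrong and would sink the proof if taken literally. First, the intercepts do \emph{not} wash out. The pooled estimator in the statement does no demeaning, so if $\mu_i \neq 0$ the numerator contains the term $\frac{1}{\sqrt{N}n}\sum_{i}\mu_i\sum_{t=1}^{n}x_{i,t-1}$; since $\sum_{t=1}^{n}x_{i,t-1}=O_p(n^{3/2})$ for (near-)integrated regressors, this term is $O_p\left(\sqrt{N}\,n^{1/2}\right)$ and diverges in the inner limit — it dominates rather than vanishes. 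The theorem is only true with the intercepts identically zero, and that is what the hypothesis "$\alpha_i\equiv 0$" actually encodes: it is inherited from Hjalmarsson's notation, where $\alpha_i$ denotes the regression intercept (fixed effect), not the persistence exponent of this paper's state equation. Relatedly, your reading "$\alpha_i\equiv 0\Rightarrow R_i=I_m$" does not follow from the paper's definition $R_i=I-C_i/n^{\alpha_i}$ (setting $\alpha_i=0$ gives $R_i=I-C_i$); the inner limits should be stated with the Ornstein--Uhlenbeck processes $J_{c_i}$, which reduce to your Brownian motions $B_{v,i}$ only when $C_i=0$.

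Second, the endogeneity bias term $\Lambda_i$ cannot be "absorbed into $\Phi_{ux}$" or left to "average out": a nonzero mean cannot be absorbed into a variance matrix, and after the $N^{-1/2}$ scaling any common nonzero $\Lambda$ produces a divergent drift of order $\sqrt{N}\Lambda$, contradicting the centered Gaussian limit. What makes the theorem true is that $\Lambda_i\equiv 0$ identically: under Assumption 1 the innovation $u_{i,t}$ is a martingale difference with respect to $\mathcal{F}_{i,t-1}$, and $x_{i,t-1}$ is $\mathcal{F}_{i,t-1}$-measurable, so $x_{i,t-1}u_{i,t}$ is itself a martingale difference sequence, $\mathbb{E}\left[x_{i,t-1}u_{i,t}\right]=0$ exactly, and the inner limit is the mean-zero It\^{o} integral $\int_0^1 J_{c_i}\,dB_{u,i}$ with no one-sided long-run covariance correction. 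This is also why the no-intercept pooled case requires no exogeneity between $u_{i,t}$ and $v_{i,t}$, whereas the fixed-effects estimator in the companion theorem does suffer a demeaning-induced bias. Your cross-sectional CLT step is then applied to summands that are exactly mean zero — guaranteed by the martingale structure, not by averaging — which is the missing justification your proposal needs.
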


\newpage 

\begin{theorem}
Let $\underline{y}_{i,t}$ and $\underline{x}_{i,t}$, denote the time-series demeaned data, that is, 
\begin{align}
\underline{y}_{i,t} = {y}_{i,t} - \frac{1}{N} \sum_{t=1}^N y_{i,t} \ \text{and} \ \underline{x}_{i,t} = {x}_{i,t} - \frac{1}{N} \sum_{t=1}^N x_{i,t-1}
\end{align}
The fixed effects pooled estimator, which allows for individual intercepts, is then 
\begin{align}
\hat{\beta}_{FE} = \left( \sum_{i=1}^N  \sum_{t=1}^n \underline{x}_{i, t-1} \underline{x}_{i, t-1}^{'} \right)^{-1} \left( \sum_{i=1}^N  \sum_{t=1}^n \underline{y}_{i, t} \underline{x}_{i, t-1}^{'}  \right)
\end{align}
The asymptotic distribution is affected by the demeaning of the $(y_{i,t}, x_{i,t} )$ observations. For fixed $N$, as $n \to \infty$,
\begin{align}
T \left(  \hat{\beta}_{FE} - \beta  \right) \implies \left( \frac{1}{N} \sum_{i=1}^N \underline{J}_i  \underline{J}^{'}_i  \right)^{-1} \left( \frac{1}{N} \sum_{i=1}^N \int_0^1 d B_{1,i} \ \underline{J}_i  \right) 
\end{align}
\end{theorem}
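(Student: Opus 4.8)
The plan is to reduce $n(\hat{\boldsymbol\beta}_{FE} - \boldsymbol\beta)$ (note $T=n$ here) to a ratio of two sample moments and identify the joint weak limits of the numerator and denominator through an invariance principle for the near-integrated regressors. First I would impose the homogeneous-slope specification $\boldsymbol\beta_i = \boldsymbol\beta$ with $\gamma_i \equiv 0$, $\boldsymbol\Gamma_i \equiv 0$ (the setting carried over from the preceding theorem, so that time-series demeaning need not contend with the common factor $\boldsymbol{f}_t$), and substitute the demeaned relation $\underline{y}_{i,t} = \boldsymbol\beta' \underline{x}_{i,t-1} + \underline{u}_{i,t}$ into the estimator. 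Since the within transformation annihilates the individual intercept $\mu_i$, this yields the exact representation
\[
n\left(\hat{\boldsymbol\beta}_{FE} - \boldsymbol\beta\right) = \left( \frac{1}{n^2}\sum_{i=1}^N\sum_{t=1}^n \underline{x}_{i,t-1}\,\underline{x}_{i,t-1}' \right)^{-1}\left( \frac{1}{n}\sum_{i=1}^N\sum_{t=1}^n \underline{x}_{i,t-1}\,\underline{u}_{i,t} \right).
\]

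Next I would invoke Assumption 1: the martingale-difference innovations with more than two moments and positive-definite conditional covariance satisfy a functional central limit theorem, $n^{-1/2}\sum_{t=1}^{\lfloor nr\rfloor} \boldsymbol{v}_{i,t} \Rightarrow B_{v,i}(r)$. Under the local-to-unity parametrization $\boldsymbol{R}_i = \boldsymbol{I} - \boldsymbol{C}_i/n$, solving the autoregressive recursion for $x_{i,t}$ and applying the continuous mapping theorem gives $n^{-1/2} x_{i,\lfloor nr\rfloor} \Rightarrow J_{C_i}(r)$, the linear diffusion $J_{C_i}(r) = \int_0^r e^{-(r-s)\boldsymbol{C}_i}\, dB_{v,i}(s)$; the demeaning passes to the limit as $\underline{J}_i(r) = J_{C_i}(r) - \int_0^1 J_{C_i}$. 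A further application of the continuous mapping theorem then delivers $\frac{1}{n^2}\sum_{t=1}^n \underline{x}_{i,t-1}\underline{x}_{i,t-1}' \Rightarrow \int_0^1 \underline{J}_i\underline{J}_i'$ for each $i$. Because $N$ is fixed and the cross-sectional units are independent, these limits hold jointly over $i=1,\dots,N$, so summing and inverting the (almost surely positive-definite) aggregate Gram matrix is justified, with the $1/N$ factors in the statement entering cosmetically through the ratio.

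The hard part will be the numerator cross term $\frac{1}{n}\sum_{t=1}^n \underline{x}_{i,t-1}\underline{u}_{i,t}$. Its convergence to the stochastic integral $\int_0^1 \underline{J}_i\, dB_{1,i}$ is not a consequence of the continuous mapping theorem alone, since integration against a limiting semimartingale is not a continuous functional; I would instead appeal to results on the convergence of sample covariances to stochastic integrals in the near-integrated framework. The genuine subtlety is that this convergence in general carries a one-sided long-run covariance (endogeneity) term induced by the contemporaneous correlation $\sigma_{uv}$ between the innovations driving $x_{i,t}$ and $u_{i,t}$; I would have to verify, from the block structure of $\Sigma_i$ in Assumption 1, that this bias either vanishes or is absorbed by the demeaning so that the clean form $\int_0^1 dB_{1,i}\,\underline{J}_i$ is recovered. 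A secondary bookkeeping point is to confirm that the subtracted sample mean $n^{-1}\sum_t x_{i,t-1}$, itself of order $\sqrt{n}$ and converging to $\int_0^1 J_{C_i}$, interacts correctly with the limit so that both moments are expressed in terms of the demeaned diffusion $\underline{J}_i$.
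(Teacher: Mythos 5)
The paper states this theorem without proof --- it is imported verbatim from the cited source (\cite{hjalmarsson2006predictive}) --- so there is no in-paper argument to compare against; your proposal reconstructs the standard local-to-unity route that the original reference follows, and it is sound. The decomposition of $n(\hat{\beta}_{FE}-\beta)$ into the normalized Gram matrix and score, the FCLT-plus-continuous-mapping step giving $n^{-1/2}x_{i,\lfloor nr\rfloor}\Rightarrow J_{C_i}(r)$ and $n^{-2}\sum_t \underline{x}_{i,t-1}\underline{x}_{i,t-1}'\Rightarrow \int_0^1 \underline{J}_i\underline{J}_i'$, and the joint convergence across the fixed, cross-sectionally independent units are all exactly what is needed (and you correctly read the statement's $\frac{1}{N}\sum_{t=1}^{N}$ as time-series demeaning and implicitly take $\alpha_i=1$, the only exponent consistent with the stated limit). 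On the one point you flag as the genuine subtlety: under Assumption 1 the predictive-equation error $u_{i,t}$ is a martingale difference with respect to $\mathcal{F}_{i,t-1}$, which contains $x_{i,t-1}$, so $n^{-1}\sum_t x_{i,t-1}u_{i,t}$ is a martingale array and converges to $\int_0^1 J_{C_i}\,dB_{1,i}$ with \emph{no} one-sided long-run covariance correction; the Stambaugh-type distortion from $\sigma_{uv}\neq 0$ enters only through the demeaning, i.e., through the non-zero mean of $\int_0^1 \underline{J}_i\,dB_{1,i}$ induced by the correlation between $B_{1,i}$ and the Brownian motion driving $J_{C_i}$, which is precisely why the theorem emphasizes that the limit is ``affected by the demeaning.'' Making that observation explicit would close the only open step in your sketch.
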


\paragraph{Cross-Sectional Dependence}

This is the main section of the paper. We aim to investigate whether the assumption of cross-sectional dependence as seen via the existence of common factors, along with the assumption of heterogeneous degree of persistence, induces a consistent IVX estimator. We argue that even though we include a set of common factors in the model, there is still the possibility of existence of heterogeneous degree of persistence across the predictors of the cross-sectional observations.

\subsubsection{IVX Instrumentation}

In this section, we examine the use of the IVX instrumentation methodology as in \cite{kostakis2015Robust}, to the framework of the panel data predictive regressions with cross-sectional dependence. We consider the first order difference of the corresponding state equation,  for the cross-sectional observation $i$ is
\begin{align}
\Delta x_{i,t} = \frac{ \mathbf{C}_i }{ n^{\alpha_i} } x_{i, t-1} + u_t
\end{align}
where $\alpha_i$ the cross-sectional rate of convergence of the first difference equation. Note that, the particular first difference is not an innovation process unless the regressor belongs to the persistence class of integrated processes. However, it behaves asymptotically as an innovation after linear filtering by a matrix consisting of near-stationary roots. The mildly integrated instrument for each cross-sectional observation $i$ is given by (see, \cite{kostakis2015Robust})
\begin{align}
z_{i,t} = \sum_{j=1}^t \widetilde{\mathbf{R}}_{i} z_{i, t-1} + u_{i, t}, \ \text{for} \ t \in \{1,...,n\}, z_{i,0} = 0 \ \forall \ i \in \{1,...,N\}.  
\end{align}

\newpage

The artificial matrix $\widetilde{\mathbf{R}}_{i}$ has the following form
\begin{align}
\widetilde{\mathbf{R}}_{i} =  \mathbf{I}_r + \frac{  \widetilde{ \mathbf{C}}_i  }{ n^{\beta_i} } , \ \beta_i \in (0,1), \mathbf{C}_i < 0 \ \forall i \in \{1,...,N\}
\end{align} 
The particular artificial matrix facilitates a way of smoothing the degree of persistence in the cross-section by constructing instruments derived from the existing information in the regressor of the data and therefore the induced instrumentation method has milder degree of persistence (see, \cite{kostakis2015Robust}). For the proposed panel data specification, the aim is to investigate whether using instruments corresponding to each of the cross-sectional observations $i$ induces a mixed Gaussian asymptotic distribution for the IVX estimator. 

\begin{remark}
Notice that when the moment process is allowed to exhibit autocorrelations of unknown forms, then the test statistics depend on a nonparametric estimator of the long-run variance (LRV) of the moment process. Therefore, in the regression case which is a special case of the GMM, this nonparametric LRV estimator is more commonly referred to as the heteroscedasticity and autocorrelation robust (HAR) variance estimator. Then, the asymptotic chi-square theory rely crucially on the assumption that the LRV estimator is consistent. In other words, the asymptotic chi-squared theory ignores the estimation uncertainty of the nonparametric LRV estimator. Thus, for this reason, the approximating chi-squared distributions can be far from the finite sample distributions. In other words, ignoring the estimation errors altogether will lead to unreliable inferences in finite samples. Therefore, one can develop fixed-smoothing asymptotics for the test statistics to account for the estimation uncertainty in the underlying LRV estimators. Specifically, unlike the conventional asymptotics where the amount of nonparametric smoothing increases with the sample size, the fixed-smoothing asymptotics holds the amount of nonparametric smoothing fixed. 
\end{remark}

\subsection{Differencing-based Transformation Approach}

Consider the model 
\begin{align}
y_t &= \alpha + \beta x_{t-1} + u_t, \ \ \ t = 1,...,n    
\\
x_t &= \rho x_{t-1} + v_t
\end{align}
Moreover, consider the differencing estimators proposed by  \cite{camponovo2015differencing}  for autoregressive models to the predictive regression models. Then, we consider the differenced observations
\begin{align}
\Delta x_{t - \ell} := \left( x_t - x_{t-\ell} \right), \ \ \ \text{with} \ \ t = \ell + 1,...,n   
\end{align}
Moreover, we consider also differenced response variables such that 
\begin{align}
\Delta y_{t - \ell} := \left( y_t - y_{t-\ell} \right), \ \ \ \text{with} \ \ t = \ell + 1,...,n   
\end{align}

\newpage 

Thus to determine the stationary instruments, $w_t$, $t = \ell + 1,..., n$, which are strongly correlated with stationary differenced predictors $\Delta x_{t-\ell-1}$, those should satisfy the following moment conditions
\begin{align}
\mathbb{E} \big[ \left( \Delta y_{t-\ell} - \beta \Delta x_{t-\ell-1} \right) w_t \big] = 0,
\end{align}
Using $w_t := \Delta x_{t-\ell-1}$, we get that $\mathbb{E} \big[ \left( \Delta y_{t-\ell} - \beta \Delta x_{t-\ell-1} \right)\Delta x_{t-\ell-1} \big]  = \mathbb{E} \big[ \left( u_t - u_{t-\ell} \right) \Delta x_{t-\ell-1} \big]$, so
\begin{align*}
\mathbb{E} \big[ \left( \Delta y_{t-\ell} - \beta \Delta x_{t-\ell-1} \right)\Delta x_{t-\ell-1} \big]  
=
\mathbb{E} \left[ \left( u_t - u_{t-\ell} \right) \left( \sum_{j=1}^{\ell} \rho^{j-1} v_{t-j} + \left( \rho^{\ell} - 1 \right) x_{t-\ell-1} \right) \right]
= 
\rho^{\ell-1} \sigma_{uv} \neq 0.
\end{align*}
Unless $\rho = 0$ or $\sigma_{uv} = 0$. Therefore, $\Delta x_{t-\ell-1}, t = \ell + 1,...,n$, are not valid instruments. However, with slight modifications and based on the following moment equalities
\begin{align}
\mathbb{E} \big[ \big( \Delta y_{t-\ell} - \beta \Delta x_{t-\ell-1} \big) \Delta x_{t-\ell-1}  \big] &= \sigma_{uv},   
\\
\mathbb{E} \big[ \big( \Delta y_{t-\ell} - \beta \Delta x_{t-\ell-1} \big) \Delta x_{t-\ell}  \big] &= \left( \rho^{\ell-1} - 1 \right) \sigma_{uv},
\end{align}
Then, we can prove that the instruments 
\begin{align}
w_t := \Delta x_{t-\ell} + \left( 1 - \rho^{\ell-1} \right) \Delta x_{t-\ell-1}, \ t = \ell + 1,...,n    
\end{align}
satisfy the moment conditions below
\begin{align}
\mathbb{E} \big[ \big( \Delta y_{t-\ell} - \beta \Delta x_{t-\ell-1} \big) \bigg(  \Delta x_{t-\ell} + \left( 1 - \rho^{\ell-1} \right)  \bigg) \Delta x_{t-\ell-1}  \big] = 0. 
\end{align}
Therefore, based on the above moment conditions, for a fixed value of $\ell \geq 2$, we define the new class of estimators $\beta_{n,\rho}^{(\rho)}$ of the parameter $\beta$ such that 
\begin{align}
 \beta_{n,\rho}^{(\ell)} = \frac{ \displaystyle \sum_{t=\ell+1}^n \Delta y_{t-\ell} \bigg( \Delta x_{t-\ell} + \left( 1 - \rho^{\ell-1} \right) \Delta x_{t-\ell-1} \bigg) \Delta x_{t-\ell-1} }{ \displaystyle \sum_{t=\ell+1}^n \Delta x_{t-\ell-1} \bigg( \Delta x_{t-\ell} + \left( 1 - \rho^{\ell-1} \right) \Delta x_{t-\ell-1} \bigg)  }   
\end{align}

\begin{remark}
A relevant framework within a panel data setting is presented by \cite{han2014x} who introduce a new estimation method for dynamic panel models with fixed effects and AR$(p)$ idiosyncratic errors. The proposed estimator uses a novel form of systematic differencing, called $X-$differencing, that eliminates fixed effects and retains information and signal strength in cases where there is a root at or near unity. The resulting "panel fully modified" estimator is obtained by pooled least squares on the system of $X-$differenced equations. The method is simple to implement, consistent for all parameter values, including unit root cases, and has strong asymptotic and finite sample performance characteristics, such as bias corrected least squares, GMM and system GMM methods.  The asymptotic theory holds as long as the cross section $(n)$ or time series $(T)$ sample size is large. 
\end{remark}

\newpage 

\subsection{Panel Cointegration}

The null hypothesis of exogeneity has also been reported in the cointegration literature before so this is an important aspect of consideration especially when considering cointegrating panel data regression models. Relevant studies on panel cointegration with respect to cross-sectional dependence and factor dynamics  include \cite{quintos1998analysis}, \cite{kao1999international}, \cite{bai2004estimating}, \cite{bai2009panel}, \cite{kapetanios2011panels} and \cite{westerlund2022factor} among many others. 

To study the distributional properties of such tests, we will describe the DGP in terms of the partitioned vector $z_{it}^{\prime} \equiv \big(  y_{it}, X_{it}^{\prime} \big)$ such that the true process $z_{it}$ is generated as 
\begin{align}
z_{it} = z_{it-1} + \xi_{it}, \ \ \ \text{for} \ \ \xi_{it} \equiv \big( \xi_{it}^y, \xi_{it}^X \big)    
\end{align}

\begin{assumption}[Invariance Principle]
The process $\xi_{it} \equiv \big( \xi_{it}^y, \xi_{it}^X \big)$ satisfies 
\begin{align}
\frac{1}{ \sqrt{T} } \sum_{t=1}^{ \floor{Tr} } \xi_{it} \Rightarrow B_i \big( \boldsymbol{\Omega}_i \big)  \ \ \ \text{as} \ \ \ T \to \infty.  
\end{align}
\end{assumption}

\begin{remark}
The above Assumption states that the standard functional central limit theorem is assumed to hold individually for each member series as $T$ grows large. Then, the $( m + 1 ) \times ( m + 1 )$ asymptotic covariance matrix is given by (e.g., see \cite{davidson1994stochastic})
\begin{align}
\Omega_i \equiv \underset{ T \to \infty }{ \mathsf{lim} } \mathbb{E} \left[ \frac{1}{T} \left( \sum_{t=1}^T \boldsymbol{\xi}_{it} \right) \left( \sum_{t=1}^T \boldsymbol{\xi}_{it}^{\prime} \right)  \right]  
\end{align}
\end{remark}

\subsubsection{Cointegrating Polynomial Regression}

\begin{assumption}
Suppose that the process $\left\{ \xi_t^0 \right\}_{ t \in \mathbb{Z} } = \left\{ [ \zeta_t, \varepsilon_t^{\prime} ]^{\prime} \right\}_{ t \in \mathbb{Z} }$ is a stationary and ergodic martingale difference sequence with natural filtration $\mathcal{F}_t = \sigma \left( \left\{ \xi_s^0 \right\}_{- \infty}^t \right)$ and conditional covariance matrix
\begin{align}
\Sigma^0 :=
\begin{pmatrix}
\Sigma_{\zeta \zeta} & \Sigma_{ \zeta \varepsilon }
\\
\Sigma_{\varepsilon \zeta} & \Sigma_{ \varepsilon \varepsilon } 
\end{pmatrix}
:= \mathbb{E} \left[  \xi_t^0  \xi_t^{0 \prime} \big| \mathcal{F}_{t-1} \right] > 0.
\end{align}
\end{assumption}

\begin{example}[Multicointegration in panel data, see \cite{berenguer2006testing}]
\

Let us consider a one-dimensional time series $\{ y_{i,t} \}_{0}^{\infty}$ and an $m-$dimensional time series $\{ x_{i,t} \}_{0}^{\infty}$, all being $I(1)$ non-stationary stochastic processes, for $t \in \{ 1,...,T \}$ and $i \in \{ 1,...,T \}$. These satisfy the following standard cointegration model
\begin{align}
y_{i,t} = c_t \alpha_i + x_{i,t} \beta_i + \epsilon_{i,t}
\end{align}
In typical applications, we have that $c_t = 0$, $c_t = 1$ or $c_t = (1, t)$ and $\epsilon_{i,t}$ is an $I(0)$ process.

\newpage 

Suppose that the cumulated cointegration residuals given by $S_{i,t} = \sum_{j=1}^t \epsilon_{i,t}$, cointegrate with either $\{ y_{i,t} \}_{0}^{\infty}$ and/or $\{ x_{i,t} \}_{0}^{\infty}$, then we obtain the standard multicointegration model, that is expressed as below
\begin{align}
S_{i,t} = m_t \delta_i + x_{i,t} \gamma_i + u_{i,t}
\end{align} 
where $u_{i,t}$ is an $I(0)$ series. Therefore, the multicointegration model can be written as below
\begin{align}
Y_{i,t} = C m_t \mu_i + X_{i,t} \beta_i + x_{i,t} \gamma_i + u_{i,t}
\end{align}
where $Y_{i,t} = \sum_{j=1}^t y_{i,j}$ and $X_{i,t} = \sum_{j=1}^t x_{i,j}$.
\end{example}

\begin{example}[Panel Data Cointegrating Polynomial Regression]
\cite{de2022panel} consider a panel data cointegrating polynomial regression analysis using a model of the form
\begin{align}
y_{it} &= \alpha_i + x_{it} \beta_1 + x_{it}^2 \beta_2 + u_{it},    
\\
x_{it} &= x_{it-1} + v_{it},
\end{align}
where $y_{it}$ denotes the $\mathsf{log} ( CO_2 )$ emissions per capita and $x_{it}$ $\mathsf{log} ( GDP )$ per capita. In particular, if the regressor is an integrated process (i.e., $\mathsf{log} ( GDP )$ per capita), then the above equation involves an integrated process and its square. Moreover, cointegration testing is known to be affected if the standard estimator in cointegerating linear regression with two integrated regressors is used while the underline stochastic process is likely to be driven by the presence of a CPR relationship.  Thus the particular example considers an extension of the FM OLS estimator to CPRs in a large $N$ and large $T$ panel setting allowing for individual and time fixed effects. In terms of assumptions, the authors follow \cite{phillips1999linear}, who introduced random linear processes to the panel cointegration literature.    

Regarding the limit results for the development of asymptotic theory those are taken with the time series dimension tending to infinitely first and the cross-sectional dimension tending to infinity thereafter. Furthermore, as it is also argued by \cite{de2022panel} since we consider the case where both $T$ and $N$, that is, the time series observations and the cross-sectional units tend to infinity, then the use of a cross-sectional modified OLS estimator (which we call SUR-OLS), allows to transform the individual specific random bias term into an expected value that can be consistently estimated. In particular, this estimator is based on subtracting a consistent estimator of a second-order bias term without the need to transform the dependent variable as in the case of the FM-OLS estimator or when leads and lags of the first difference of the integrated regressor are added in a Dynamic OLS setting (see, \cite{saikkonen1991asymptotically} and \cite{stock1993simple}).        

In a two-way effects model, the OLS (LSDV) estimator is given by 
\begin{align}
\boldsymbol{\beta} = \left( \sum_{i=1}^N \sum_{t=1}^T \boldsymbol{X}_{it}  \boldsymbol{X}_{it}^{\prime} \right)^{-1} \left(  \sum_{i=1}^N \sum_{t=1}^T \boldsymbol{X}_{it} \boldsymbol{y}_{it} \right).     
\end{align}

\newpage 

Then, paralleling the structure of the estimator for the one-way effects model, the modified OLS estimator of \cite{de2022panel}, which depends upon $\tilde{\boldsymbol{C}}_i$ just as in the one-way effects model, is given by the following expression 
\begin{align}
\hat{\boldsymbol{\beta} }_m  =  \left( \sum_{i=1}^N \sum_{t=1}^T \boldsymbol{X}_{it}  \boldsymbol{X}_{it}^{\prime} \right)^{-1} \sum_{i=1}^N \left( \sum_{t=1}^T \boldsymbol{X}_{it} \boldsymbol{y}_{it} - \tilde{\boldsymbol{C}}_i \right).   
\end{align}
Consequently, the limiting distribution of the modified OLS estimator is given by 
\begin{align}
N^{1/2} \boldsymbol{G}_T^{-1} \left( \hat{\boldsymbol{\beta} }_m - \boldsymbol{\beta} \right) \overset{d}{\to}  \mathcal{N} \left( \boldsymbol{0}, \boldsymbol{V}_2^{-1} \boldsymbol{\Sigma}_2 \boldsymbol{V}_2^{-1} \right).    
\end{align}
\end{example}

\subsubsection{Cointegrating Regression}

Consider the following data generating process
\begin{align}
y_t = \mu + x_t^{\prime} \beta + u_t, \ \ \ x_t = x_{t-1} + v_t.
\end{align}
Stacking the error process defines $\eta_t = \left[ u_t , \ v_t^{\prime} \right]^{\prime}$. Furthermore, it is assumed that $\eta_t$ is a vector of $I(0)$ processes in which case $x_t$ is a non-cointegrating vector of $I(1)$ processes and there exists a cointegrating relationship among $[ y_t, x_t^{\prime} ]^{\prime}$ with cointegrating vector $[ 1, - \beta^{\prime}  ]^{\prime}$. To review existing theory and to obtain the key theoretical results in the paper, assumptions about $\eta_t$ are required. It is sufficient to assume that $\eta_t$ satisfies a functional central limit theorem (FCLT) of the form given below
\begin{align}
\frac{1}{ \sqrt{n} } \sum_{t=1}^{ \floor{nr} } \eta_t \Rightarrow B(r) \Rightarrow \Omega^{1/2} W(r), \ \ \ r \in [0,1],    
\end{align}
Define the partial sum process such that $\widehat{S}_t = \sum_{j=1}^t \widehat{\eta}_t$. We start by establishing an invariance principle 
\begin{align*}
\frac{1}{T} \sum_{t=1}^{ \floor{nr}  } \widehat{u}_t 
&=
\frac{1}{\sqrt{T} } \sum_{t=1}^{ \floor{nr} } u_t - \frac{ \floor{nr} }{n} n^{1/2} \left(  \widehat{\mu} - \mu  \right) - \frac{1}{n \sqrt{n} } \sum_{t=1}^{ \floor{nr} } x_t^{\prime} n \left(  \widehat{\beta} - \beta \right)
\\
&\Rightarrow 
\int_0^r dB_u(s) - r \int_0^r B_v^{*} (s)^{\prime} ds \Theta.
\end{align*}
Using the definition of $\widehat{\eta} = \left[ \widehat{u}_t, v_t^{\prime}   \right]^{\prime}$ and stacking now leads to the following asymptotic theory result
\begin{align}
\frac{1}{ \sqrt{n} } \widehat{S}_{ \floor{nr} } = \frac{1}{ \sqrt{n} } \sum_{t=1}^{ \floor{nr} } \widehat{\eta}_t \Rightarrow 
\begin{bmatrix}
\displaystyle \int_0^r dB_u(s) - \int_0^r B_v^{*} (s)^{\prime} ds \Theta
\\
\displaystyle B_v(r)
\end{bmatrix}
\end{align}
Under the stated assumption it holds that
\begin{align}
\frac{1}{T} \sum_{t=2}^T S_{t-1}^{\eta} \eta_t^{\prime} \Rightarrow \int B(r) dB(r) + \Lambda.    
\end{align}

\newpage 

\begin{proposition}
As $b \to 0$, the fixed-b limiting distribution of $\widehat{\theta}_b^{+}$ converges in probability to the traditional limit distribution.    
\end{proposition}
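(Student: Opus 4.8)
The plan is to isolate the dependence of the fixed-$b$ limiting law on the bandwidth ratio $b$, which enters only through the random-matrix limits of the kernel long-run variance (LRV) estimators appearing in the fully modified corrections, and then to show that these random limits collapse to the deterministic long-run (co)variances as $b \to 0$.

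First I would make the fixed-$b$ limit explicit. The estimator $\widehat\theta_b^{+}$ is the fully modified least squares estimator, which augments ordinary least squares by an endogeneity correction built from the two-sided kernel estimators $\widehat\Omega_{uv}$, $\widehat\Omega_{vv}$ and by a one-sided bias correction $\widehat\Lambda^{+}$. Under fixed-$b$ asymptotics these kernel estimators do not converge to constants; after the appropriate scaling they converge jointly to random matrices $\overline\Omega(b)$ and $\overline\Lambda(b)$ that are functionals of the very Brownian motion $B = \Omega^{1/2} W = ( B_u, B_v^{\prime} )^{\prime}$ driving the invariance principle established above. Writing $G = \int_0^1 B_v^{*} B_v^{*\prime}$ for the limiting design functional and letting $H$ and $K$ denote the fixed stochastic-integral functionals of $B$ (of the type $\int_0^1 B_v^{*}\,dB_u$ already appearing above), the fixed-$b$ limit of the scaled $\widehat\theta_b^{+}$ can be written, suppressing transpositions and dimensions which are immaterial to the argument, as a single map
\[
\Psi\big(B;\,\overline\Omega(b),\overline\Lambda(b)\big) \;=\; G^{-1}\Big( H - \overline\Omega_{uv}(b)\,\big[\overline\Omega_{vv}(b)\big]^{-1} K - \overline\Lambda^{+}(b) \Big),
\]
whereas the traditional limiting distribution is the value of the identical map at the deterministic arguments $\Omega$ and $\Lambda$, which reduces to the familiar mixed normal law with conditional long-run variance $\Omega_{u\cdot v} = \Omega_{uu} - \Omega_{uv}\Omega_{vv}^{-1}\Omega_{vu}$.

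The central step is then to prove $\overline\Omega(b) \overset{p}{\to} \Omega$ and $\overline\Lambda(b) \overset{p}{\to} \Lambda$ as $b \to 0$, jointly. For a twice differentiable kernel $k(\cdot)$ the fixed-$b$ limit $\overline\Omega(b)$ admits a double-integral representation against $k''((r-s)/b)/b^2$ and the demeaned limit process, and I would use the moment method inside the limit: a direct computation gives $\mathbb{E}[\overline\Omega(b)] \to \Omega$ together with $\mathrm{Var}\big(\overline\Omega(b)\big) \to 0$, exploiting the kernel normalisation and the smoothness of the covariance kernel of $B$ near the diagonal, so that Chebyshev's inequality delivers the claimed convergence in probability; the same computation applied to the one-sided object gives $\overline\Lambda(b) \overset{p}{\to} \Lambda$. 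This is precisely the analogue, carried out at the level of the limiting functionals, of the consistency of the HAC estimator in the conventional small-$b$ regime. With this in hand I would finish by the continuous mapping theorem and Slutsky's lemma: on the probability-one event that $G$ is positive definite (since $B_v$ is a nondegenerate vector Brownian motion) and that $\overline\Omega_{vv}(b)$ is invertible, the map $\Psi$ is continuous in its matrix arguments, so convergence in probability of the inputs transfers to $\Psi(b) \overset{p}{\to} \Psi(0)$, the traditional limit.

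The hard part will be the moment computation of the middle step, namely showing that the random LRV functional concentrates at the deterministic $\Omega$ as $b \to 0$. For smooth kernels this is a bounded-convergence argument on the double integral, but for kernels that are not twice differentiable — the Bartlett kernel being the canonical case — the fixed-$b$ limit has a different representation as a quadratic functional of increments of the limit process, and the variance bound must be re-derived in that form. A secondary subtlety is that the several LRV blocks must be shown to converge jointly rather than only marginally, so that the continuous mapping step applies to the full vector of matrix arguments simultaneously; I would handle this by stacking the blocks and controlling the entire covariance functional as $b \to 0$.
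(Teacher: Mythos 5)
Your proposal is correct and rests on the same core idea as the paper's argument: the fixed-$b$ limits of the kernel LRV estimators are random functionals of the limiting Brownian motion that collapse to the deterministic long-run (co)variances as $b \to 0$, after which the continuous mapping theorem transfers the convergence to $\widehat{\theta}_b^{+}$. The difference is one of execution. The paper only sketches the result, restricts attention to the Bartlett kernel, and disposes of the key cross term by invoking the independence of $B_v$ and $B_{u.v}$ together with the "well-known" fact from the fixed-$b$ literature that the limiting random variables converge to the long-run variance being estimated; in effect it shows $\mathsf{plim}_{b \to 0}\, Q_b\left(B_v, B_{u.v}\right) = 0$ and stops there. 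You instead propose to actually prove the concentration step by a moment computation (mean converging to $\Omega$, variance vanishing, then Chebyshev), to treat smooth kernels via the $k''$ double-integral representation with Bartlett handled separately through the increment representation, and to establish joint convergence of all LRV blocks so that a single continuous-mapping step applies to the map $\Psi$. Your route is therefore more self-contained and more general; what the paper's route buys is brevity, since the independence of $B_v$ and $B_{u.v}$ makes the off-diagonal block identically zero in the limit without any variance bound. Note that your joint-convergence formulation already subsumes the paper's independence observation, because the cross block of the long-run covariance of the stacked limit process $\left(B_{u.v}, B_v^{\prime}\right)^{\prime}$ is zero by construction. The one place where you should be careful is the Bartlett case you flag yourself: since the paper's sketch is confined to exactly that kernel, your separate re-derivation of the variance bound for the non-smooth representation is not optional but is the substantive content of the result as stated.
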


\begin{remark}
In particular these results show that the performance of the FM-OLS estimator relies critically on the consistency approximation of the long-run variance estimators being accurate and that moving around the bandwidth and kernel impacts the sampling behaviour of the FM-OLS estimator. However, it is well-known that non-parametric kernel long run variance estimators suffer from bias and and sampling variability which as a result can affect the the accuracy of the traditional approximation.     
\end{remark}

We sketch the result for the Bartlett kernel only. The proposition is established by showing
\begin{align}
\underset{ b \to 0 }{ \mathsf{plim} }  \ \mathcal{F} \left( \widetilde{B}_{u.v}  \right) = \underset{ b \to 0 }{ \mathsf{plim} } \ \mathcal{F} \left( F \left( B_v^{*} \right) \right) = 0.
\end{align}
To begin with, showing that $\mathsf{plim}_{ b \to 0 } Q_b \left( B_v, B_{u.v}   \right) = 0$ is trivial. Furthermore, it is well-known in the fixed$-b$ literature that as $b \to 0$, fixed$-b$ limiting random variables converge to the long run variance being estimated. Specifically, the long-run covariance between $B_v$ and $B_{u.v}$ are independent and so it follows that $\mathsf{plim}_{ b \to 0 } Q_b \left( B_v, B_{u.v}   \right) = 0$. 

\begin{example}
\begin{align}
\tilde{X}_t = \big( X_t - X_0 \big) = \sum_{ k = 1 }^{ t-1 } \epsilon_k - \frac{ t - 1}{T} \sum_{ k = 1 }^T \epsilon_k  =  \sum_{ k = 1 }^{ t-1 } \left(  \epsilon_k - \frac{1}{T} \sum_{ s = 1 }^T \epsilon_s \right). 
\end{align}
\end{example}
To correctly define the limiting distributions we consider a sequence of rolling statistics. In particular, we first estimate the subsample statistics $t_{zx} ( \tau, \tau + \Delta \tau )$ for $t = \left\{ \floor{ \tau T} + 1,..., \floor{ \tau T} + \floor{ T \Delta \tau} \right\}$, where the window width is $\floor{ T \Delta \tau}$. Thus, the framework proposed by Magdalinos (2020) provides the first instance of standard Gaussian and chi-squared asymptotics applying respectively to the OLS estimator and the Wald statistic in a vector autoregression or predictive regression model with conditionally heteroscedastic innovations.Notice that because of the divergence of these partial sums we need to use an appropriate normalization factor which depends on the exponent rate of persistence in the LUR specification. Check also the paper: \cite{bruggemann2016inference}.

\begin{example}[Panel cointegration with global stochastic trends, see \cite{bai2009panel}]
\

Consider the following model 
\begin{align}
y_{it} = x_{it}^{\prime} \beta + e_{it}, \ \ \ x_{it} = x_{it-1} + \epsilon_{it}
\end{align}
\end{example}
where the model includes a $k-$dimensional vector of non-stationary regressors and the regression error $e_{it}$ is stationary and \textit{i.i.d} across $i$. Then, we can show that the pooled OLS estimator of $\beta$ is defined as
\begin{align}
\hat{\beta}_{ols} = \left( \sum_{i=1}^n \sum_{t=1}^T x_{it} x_{it}^{\prime}  \right)^{-1} \left(  \sum_{i=1}^n \sum_{t=1}^T  x_{it} y_{it}   \right)    
\end{align}

\newpage

The limiting distribution is shifted away from zero due to an asymptotic bias induced by the long-run correlation between $e_{it}$ and $\varepsilon_{it}$. The exception is when $x_{it}$ is strictly exogenous, in which case the estimator is $\sqrt{n} T$ consistent. The asymptotic bias can be estimated and a panel FM estimator can be developed along the lines of \cite{phillips1990statistical} to achieve $\sqrt{n} T$ consistency and asymptotic normality. Furthermore, the cross-section independence assumption is restrictive and difficult to justify when the data under investigation are economic time series. In view of co-movements of economic variables and shocks, we model the cross-section dependence by imposing a factor structure on $e_{it}$, \begin{align}
e_{it} = \lambda_i^{\prime} F_t + u_{it}    
\end{align}
where $F_t$ is an $r \times 1$ vector of latent common factors, $\lambda_i$ is an $r \times 1$ vector of factor loadings and $u_{it}$ is the idiosyncratic error. If $F_t$ and $u_{it}$ are both stationary, then $e_{it}$ is also stationary. In that case, a consistent estimator of the regression coefficients can still be obtained even when the cross-section dependence is ignored. In the first step, pooled OLS is used to obtain a consistent estimate of $\beta$. The residuals are then used to construct a FM estimator. In other words, nuisance parameters induced by cross-section correlation are dealt similar to the case of serial correlation by suitable estimation of the long-run covariance matrices. An alternative estimator can be developed by rewriting the equation as 
\begin{align}
y_{it} &= x_{it}^{\prime} \beta + \lambda_i^{\prime} F_t + u_{it}    
\\
x_{it} &= x_{it-1} + \epsilon_{it}
\\
F_t &= F_{t-1} + \eta_t
\end{align}
Moving $F_t$ from the error term to the regression function (i.e., treated as parameters) is desirable for the following reason. If some components of $x_{it}$ are actually $I(0)$, treating $F_t$ as part of the error process will yield an inconsistent estimate for $\beta$ when $F_t$ and $x_{it}$ are correlated. Under the presence of global stochastic trends, that is, $F_t$, which are shared by each cross-sectional unit, a new methodology needs to be developed. Denote with  $(n,T) \to \infty$ as the joint limit and with $( n, T )_{ \mathsf{sq} } \to \infty$ as the sequential limit which implies that $T \to \infty$ first and $n \to \infty$ later. Moreover, we denote with $\mathcal{MN} (0, V)$ the mixed normal distribution with variance $V$.
\begin{assumption}
Define with $w_{it} = \big(  u_{it}, \varepsilon_{it}^{\prime}, \eta_t^{\prime}  \big)^{\prime}, \ \ \ w_{it} = \Pi_i (L) v_{it} = \sum_{j=0}^{\infty } \Pi_{ij} v_{it-j}$.
\end{assumption}

\begin{assumption}
The previous assumptions imply that a multivariate invariance principle for $w_{it}$ holds, that is the partial sum process $\frac{1}{ \sqrt{T} } \sum_{t=1}^{ \floor{T \cdot} } w_{it}$ satisfies
\begin{align}
\frac{1}{ \sqrt{T} } \sum_{t=1}^{ \floor{T \cdot} } w_{it} \Rightarrow B_i ( \cdot ) \equiv B ( \Omega_i ), \ \ \text{as} \  T \to \infty \ \forall \ i,  
\end{align}
where $B_i = \big[ B_{ui} \ \ B_{\epsilon i}^{\prime} \ \ B_{\eta}^{\prime}   \big]$. Then, the long-run covariance matrix of $\left\{ w_{it} \right\}$ is given by 
\begin{align}
\Omega_i = \sum_{j = - \infty}^{\infty} \mathbb{E} \big[  w_{i0} w_{ij}^{\prime} \big| \Pi_{ij}^{u \eta} \big]    
\end{align}
\end{assumption}

\newpage 

\begin{lemma}
Suppose that $u_i$ is uncorrelated with $( x_i, F^0 )$, then as $( n, T )_{\mathsf{seq} } \to \infty$, it holds that
\begin{align}
\frac{1}{n} \sum_{i=1}^n \frac{1}{T^2} x_i^{\prime} M_{F^0} x_i &\overset{d}{\to}  
\underset{ n \to \infty }{  \mathsf{lim} }  \ \frac{1}{n} \sum_{i=1}^n \mathbb{E}  \left( \int Q_i Q_i^{\prime} | C \right)
\\
\frac{1}{  \sqrt{n} } \sum_{i=1}^n \frac{1}{T} x_i^{\prime} M_{F^0} x_i &\overset{d}{\to}  \mathcal{MN} \left( 0,  \underset{ n \to \infty }{  \mathsf{lim} }  \ \frac{1}{n} \sum_{i=1}^n  \Omega_{ui}  \mathbb{E}  \left( \int Q_i Q_i^{\prime} | C \right)   \right)
\end{align}
Notice that the convergence of the above two moment functions holds jointly.
\end{lemma}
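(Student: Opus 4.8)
The plan is to exploit the sequential structure of the limit $(n,T)_{\mathsf{seq}} \to \infty$: first hold $n$ fixed and send $T \to \infty$, invoking the multivariate invariance principle of the preceding Assumption together with the continuous mapping theorem to obtain, for each $i$, a functional-of-Brownian-motion limit; then send $n \to \infty$ and apply a cross-sectional law of large numbers for the first claim and a cross-sectional central limit theorem for the second, both taken conditionally on the common-trend information $C$, the $\sigma$-field generated by the global stochastic trend $B_\eta$.

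First I would record the building blocks of the $T \to \infty$ step. Writing $x_{it} = \sum_{s \le t}\epsilon_{is}$ and $F_t = \sum_{s \le t}\eta_s$ as partial sums, the invariance principle gives $T^{-1/2} x_{i,\lfloor T r\rfloor} \Rightarrow B_{\epsilon i}(r)$ and $T^{-1/2} F_{\lfloor Tr\rfloor} \Rightarrow B_{\eta}(r)$ jointly with $T^{-1/2}\sum_{t\le \lfloor Tr\rfloor} u_{it} \Rightarrow B_{ui}(r)$. The discrete projector $M_{F^0}$ converges, under the same normalization, to the $L^2[0,1]$ projection off the span of $B_{\eta}$, so that the detrended regressor limit is
\[
Q_i(r) = B_{\epsilon i}(r) - \Big(\int_0^1 B_{\epsilon i} B_{\eta}'\Big)\Big(\int_0^1 B_{\eta} B_{\eta}'\Big)^{-1} B_{\eta}(r).
\]
By the continuous mapping theorem this yields, for each fixed $i$, $T^{-2} x_i' M_{F^0} x_i \Rightarrow \int_0^1 Q_i Q_i'$ and, using the hypothesis that $u_i$ is uncorrelated with $(x_i, F^0)$ to annihilate the one-sided long-run covariance correction, $T^{-1} x_i' M_{F^0} u_i \Rightarrow \int_0^1 Q_i\, dB_{ui}$. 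Here I read the second displayed object of the statement as the score $x_i' M_{F^0} u_i$ rather than $x_i' M_{F^0} x_i$, since a positive-semidefinite quadratic form cannot admit a mean-zero mixed-normal limit.

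For the first claim I would then take $n \to \infty$ in $\frac{1}{n}\sum_{i=1}^n \int_0^1 Q_i Q_i'$. Conditional on $C$, the idiosyncratic Brownian motions $B_{\epsilon i}$, and hence the matrices $\int Q_i Q_i'$, are independent across $i$ with uniformly bounded moments, so a conditional weak law of large numbers gives the stated limit $\lim_{n} \frac{1}{n}\sum_i \mathbb{E}\big[\int Q_i Q_i' \,\big|\, C\big]$. For the second claim, conditional on the full idiosyncratic-and-common path the stochastic integral $\int Q_i\, dB_{ui}$ is Gaussian with conditional covariance $\Omega_{ui}\int Q_i Q_i'$, hence mixed normal; conditioning only on $C$ these terms are mean-zero and independent across $i$. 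A conditional Lindeberg/Lyapunov central limit theorem applied to $\frac{1}{\sqrt n}\sum_i \int Q_i\, dB_{ui}$ then delivers the mixed-normal limit with covariance $\lim_n \frac{1}{n}\sum_i \Omega_{ui}\,\mathbb{E}\big[\int Q_i Q_i'\,\big|\,C\big]$. Joint convergence of the two expressions follows because both are continuous functionals of the same limiting Gaussian processes, so the continuous mapping theorem applies to the stacked vector.

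The main obstacle will be making the $n$-step central limit theorem rigorous conditionally on $C$: one must verify that, given the common trend, the summands $\int Q_i\, dB_{ui}$ are genuinely cross-sectionally independent (not merely uncorrelated) and satisfy a uniform Lyapunov moment bound, so that a stable mixed-normal limit is obtained rather than a degenerate or non-Gaussian one. A secondary technical point is the justification of the sequential interchange of limits — that the $T \to \infty$ weak limits may be substituted before the $n \to \infty$ averaging — which requires the continuous mapping theorem to produce measurable limit functionals and the moment bounds of the Assumption to remain uniform in $i$.
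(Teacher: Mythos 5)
Your argument is correct and is essentially the canonical sequential-limit proof (time-series invariance principle plus continuous mapping for the projected partial sums, followed by a conditional-on-$C$ cross-sectional law of large numbers and a conditional Lindeberg central limit theorem yielding the mixed normal), which is exactly the route taken in the cited source \cite{bai2009panel}; the survey itself states this lemma without reproducing a proof, so there is nothing in the paper to diverge from. You were also right to flag that the second display must be read as the score term $\frac{1}{\sqrt{n}\,T}\sum_{i=1}^{n} x_i^{\prime} M_{F^0} u_i$ rather than the quadratic form $x_i^{\prime} M_{F^0} x_i$, since the latter is positive semidefinite and cannot have a mean-zero mixed-normal limit.
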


\begin{lemma}
Let $Z_i = \displaystyle \left( M_{ F^0 x_i } - \frac{1}{n} \sum_{k=1}^n M_{ F^0 } x_k a_{ik} \right)$. Then, as $( n, T )_{\mathsf{seq} } \to \infty$  
\begin{itemize}
\item[(a)] 
\begin{align}
\frac{1}{nT^2} \sum_{i=1}^n Z_i^{\prime} Z_i \overset{d}{\to} \underset{ n \to \infty }{  \mathsf{lim} }  \ \frac{1}{n} \sum_{i=1}^n \mathbb{E}  \left( \int R_{ni} R_{ni}^{\prime} | C \right)   
\end{align}

\item[(b)] If $u_i$ is uncorrelated with $\left( x_i, F^0 \right)$ for all $i$ then
\begin{align}
\frac{1}{ \sqrt{n} T }  \sum_{i=1}^n Z_i^{\prime} u_i  \overset{d}{\to}  \mathcal{MN} \left( 0,  \underset{ n \to \infty }{  \mathsf{lim} }  \ \frac{1}{n} \sum_{i=1}^n  \Omega_{ui}  \mathbb{E}  \left( \int R_{ni} R_{ni}^{\prime}  | C \right)   \right) 
\end{align}

\item[(c)] If $u_i$ is possibly correlated with $\left( x_i, F^0 \right)$, then
\begin{align}
\frac{1}{ \sqrt{n} T }  \sum_{i=1}^n Z_i^{\prime} u_i  -\sqrt{n} \theta^n \overset{d}{\to}  \mathcal{MN} \left( 0,  \underset{ n \to \infty }{  \mathsf{lim} }  \ \frac{1}{n} \sum_{i=1}^n  \Omega_{u. bi }  \mathbb{E}  \left( \int R_{ni} R_{ni}^{\prime}  | C \right)   \right) 
\end{align}
\begin{align}
R_{ni} &= Q_i - \frac{1}{n} \sum_{k=1}^n Q_k a_{ik}, \ \ a_{ik} = \lambda_i^{\prime} \left( \Lambda^{\prime} \Lambda \big/ n \right)^{-1} \lambda_k    
\\
Q_i &= B_{ \epsilon i} - \left( \int  B_{ \epsilon i} B_{ \eta}^{\prime}  \right) \left( \int  B_{ \eta} B_{ \eta}^{\prime}  \right)^{-1} B_{\eta} 
\\
\theta^n &= \frac{1}{n} \sum_{i=1}^n \left[ \frac{1}{T} Z_i^{\prime} \big( \Delta \bar{x}_i \ \ \Delta F \big) \bar{\Omega}_{bi}^{-1}  \bar{\Omega}_{bui}  + \big(  I_k \ \ - \bar{\delta}_i^{\prime}  \big) \begin{pmatrix}
\bar{\Delta}^{+}_{ \varepsilon u i }
\\
\bar{\Delta}^{+}_{ \eta u }
\end{pmatrix}  \right]
\\
\bar{\delta}_i &= \left( F^{ \prime } F^0 \right)^{-1} F^{0 \prime} \bar{x}_i, \ \ \ \bar{x}_i = x_i - \frac{1}{n} \sum_{ k = 1 }^n x_k a_{ik}.
\end{align}
\end{itemize}
\end{lemma}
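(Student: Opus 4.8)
The plan is to exploit the sequential structure of the limit $(n,T)_{\mathsf{seq}} \to \infty$: first let $T \to \infty$ with $n$ held fixed, converting every sample object into a functional of Brownian motions via the invariance principle assumed above together with the continuous mapping theorem, and only afterwards let $n \to \infty$ through a conditional law of large numbers and a conditional central limit theorem. The starting point is that $T^{-1/2} x_{i,\lfloor Tr \rfloor} \Rightarrow B_{\epsilon i}(r)$ and $T^{-1/2} F_{\lfloor Tr \rfloor} \Rightarrow B_{\eta}(r)$ jointly. First I would show that applying the projection $M_{F^0}$ to $x_i$ corresponds, after normalising by $T^{-1}$, to the $L^2[0,1]$ projection of $B_{\epsilon i}$ onto the orthocomplement of $B_{\eta}$, which identifies the $T\to\infty$ limit of $T^{-1} M_{F^0} x_i$ with $Q_i = B_{\epsilon i} - (\int B_{\epsilon i} B_{\eta}^{\prime})(\int B_{\eta} B_{\eta}^{\prime})^{-1} B_{\eta}$. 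Pushing the cross-sectional adjustment $\frac{1}{n}\sum_k M_{F^0} x_k a_{ik}$ through the same limit, and noting that the weights $a_{ik} = \lambda_i^{\prime}(\Lambda^{\prime}\Lambda/n)^{-1}\lambda_k$ are $O_p(1)$ and do not depend on $T$, gives $T^{-1} Z_i \Rightarrow R_{ni} = Q_i - \frac{1}{n}\sum_k Q_k a_{ik}$.

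For part (a), once $T^{-1}Z_i \Rightarrow R_{ni}$ is in hand, the continuous mapping theorem yields $\frac{1}{T^2} Z_i^{\prime} Z_i \Rightarrow \int R_{ni} R_{ni}^{\prime}$ for each fixed $i$ and jointly across the finitely many $i$, so that $\frac{1}{nT^2}\sum_{i=1}^n Z_i^{\prime} Z_i \Rightarrow \frac{1}{n}\sum_{i=1}^n \int R_{ni}R_{ni}^{\prime}$ as $T\to\infty$, mirroring the preceding lemma which already records the analogous limit of $\frac{1}{nT^2}\sum_i x_i^{\prime} M_{F^0} x_i$. I would then let $n\to\infty$: conditioning on the common component $C$ (the $\sigma$-field generated by the common stochastic trend $B_{\eta}$ and the loadings), the summands $\int R_{ni}R_{ni}^{\prime}$ are, apart from the common piece, independent across $i$, so a conditional law of large numbers collapses the average onto its conditional mean $\lim_{n\to\infty} \frac{1}{n}\sum_{i=1}^n \mathbb{E}(\int R_{ni}R_{ni}^{\prime}\mid C)$. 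The conditional expectation is exactly what survives because the idiosyncratic $B_{\epsilon i}$ are integrated out while the common $B_{\eta}$ is held fixed.

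Parts (b) and (c) apply the same two-stage scheme to the score $\frac{1}{\sqrt{n}T}\sum_i Z_i^{\prime} u_i$. For fixed $n$ and $T\to\infty$, $\frac{1}{T} Z_i^{\prime} u_i$ converges to a stochastic integral of $R_{ni}$ against the Brownian motion $B_{ui}$ driving $u_i$. When $u_i$ is uncorrelated with $(x_i,F^0)$ this limit integral is a martingale with no drift, so conditioning on $C$ makes the terms across $i$ conditionally independent, mean-zero, with conditional variance $\Omega_{ui}\,\mathbb{E}(\int R_{ni}R_{ni}^{\prime}\mid C)$; a conditional central limit theorem then delivers the mixed-normal limit of part (b), the mixing variate being the random conditional variance inherited from the common trends. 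For part (c), endogeneity between $u_i$ and $(\epsilon_i,\eta)$ injects second-order terms into the stochastic-integral limit exactly as in Phillips--Hansen fully-modified theory; the correction $\sqrt{n}\,\theta^n$, assembled from the one-sided long-run covariances $\bar{\Delta}^{+}_{\varepsilon u i}$, $\bar{\Delta}^{+}_{\eta u}$ and the defactoring coefficients $\bar{\delta}_i$, is calibrated to annihilate precisely these drifts, after which the recentred score is conditionally mean-zero and the conditional CLT again applies.

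The main obstacle I expect is controlling the interplay between the factor projection and the cross-sectional averaging in part (c): the weights $a_{ik}$ couple unit $i$ to every other unit $k$, so the bias drift does not factor cleanly, and I must verify that the aggregate one-sided covariance entering $\theta^n$ is the correct centring uniformly as the projection error and the cross-sectional error accumulate under the sequential limit. A second delicate point is the conditional independence underlying the CLT: since $R_{ni}$ carries the common $B_{\eta}$ both through $Q_i$ and through the adjustment term, I must argue carefully that, given $C$, what remains is a genuine triangular array obeying a Lindeberg-type condition, so that the limit is mixed normal rather than merely normal.
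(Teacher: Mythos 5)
The paper does not actually prove this lemma: it is stated as a quoted result from the panel-cointegration-with-global-stochastic-trends framework of \cite{bai2009panel} and the text moves on immediately afterwards, so there is no in-paper argument to compare against. Your outline is nonetheless correct and coincides with the strategy used in that source: sequential limits, with the $T \to \infty$ stage converting $M_{F^0} x_i$ into the $L^2[0,1]$ projection residual $Q_i$ and hence $Z_i$ into $R_{ni}$ via the invariance principle and continuous mapping (the weights $a_{ik}$ being free of $T$), followed for fixed $n$ by $\frac{1}{T^2} Z_i^{\prime} Z_i \Rightarrow \int R_{ni} R_{ni}^{\prime}$ and $\frac{1}{T} Z_i^{\prime} u_i \Rightarrow \int R_{ni}\, dB_{ui}$, and then a conditional LLN for (a) and a conditional CLT given the common $\sigma$-field $C$ for (b), with the mixed-normal limit arising exactly because the conditional variance $\Omega_{ui}\,\mathbb{E}(\int R_{ni}R_{ni}^{\prime}\mid C)$ is random; in (c) the fully-modified-type recentring by $\sqrt{n}\,\theta^n$, built from the one-sided long-run covariances $\bar{\Delta}^{+}_{\varepsilon u i}$, $\bar{\Delta}^{+}_{\eta u}$ and the defactoring coefficients $\bar{\delta}_i$, removes the second-order drift of the stochastic integrals before the conditional CLT is applied. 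The two obstacles you flag (the coupling of units through $a_{ik}$ in the bias term, and verifying conditional independence plus a Lindeberg condition given $C$) are indeed the substantive technical work. One small slip: the per-coordinate normalization of $Z_i$ is $T^{-1/2}$, so that $\frac{1}{T^2} Z_i^{\prime} Z_i = \frac{1}{T}\sum_{t}\bigl(T^{-1/2}Z_{it}\bigr)\bigl(T^{-1/2}Z_{it}\bigr)^{\prime}$ converges to $\int R_{ni}R_{ni}^{\prime}$, not $T^{-1}Z_i \Rightarrow R_{ni}$ as written; this is cosmetic and does not affect the rest of the argument.
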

Relevant applications of panel cointegrating regressions is the framework proposed by \cite{wagner2020fully} who develop a SUR system with cointegrating dynamics. Although, the SUR cointegration literature differs in several respects from the SUR literature, where the former case implies that regressors are assumed to be strictly exogenous and stationary and the stationary errors serially uncorrelated. 

\newpage 

The econometric estimation of the SUR representation proposed by \cite{wagner2020fully} is based on the assumption of serially correlated errors which involves  the estimation of long-run variance matrices rather than estimates of contemporaneous variance matrices. Thus, the presence of regressor endogeneity in a cointegration setting necessitates the usage of modified least squares estimators to allow for asymptotically normal or chi-squared inference. The proposed framework is then employed for the analysis of the environmental Kuznets curve (EKC), specifically for carbon dioxide (CO$_2$) emissions\footnote{Specifically, the EKC hypothesis postulates an inverted U-shaped relationship between the level of economic development and pollution of emissions. Furthermore, there is significant use of unit root and cointegration techniques both in (single) time series and panel data settings.}. Specifically, \cite{wagner2020fully} consider the case of panel data with small cross-sectional dimension and develop estimation and inference techniques to combine cointegrating polynomial regressions to a system which allows to test general hypotheses concerning group-wise pooling\footnote{Notice that pooled estimation, when appropriate, leads to considerable efficiency gains, but can lead to misleading results when it is implemented incorrectly. In particular, \cite{wagner2020fully} conduct a simulation study to assess the finite sample performance of our estimators and tests based upon them. The FM-SUR estimator outperforms the FM-SOLS estimator in terms of bias and root mean squared error (RMSE). However, test statistics based on FM-SOLS exhibit in many configurations lower size distortions than tests based on FM-SUR. The latter have, however, higher size-corrected power than the former. Therefore, the evidence concerning hypothesis testing is mixed. Pooling, illustrated in the simulations by pooling the coefficients for the integrated regressors and its square over all cross-section members, leads to major performance improvements, in particular for the performance of tests. Moreover, the simulations also indicate the limitations of an unrestricted SUR approach in case of large $N$ and small $T$. Thus, it turns out that the estimation of large unrestricted long-run and half long-run covariance matrices is the key reason for the relatively poor performance in these constellations.}.   
\begin{assumption}
The process $\left\{ \boldsymbol{u}_t \right\}_{ t \in \mathbb{Z} }$ and $\left\{ \boldsymbol{v}_t \right\}_{ t \in \mathbb{Z} }$ are generated as below
\begin{align}
\boldsymbol{u}_t &:= \boldsymbol{C}_u ( L ) \zeta_t = \sum_{j=0}^{ \infty } \boldsymbol{C}_u \boldsymbol{\zeta}_{t-j},
\ \ \
\boldsymbol{v}_t := \boldsymbol{C}_v ( L ) \eta_t = \sum_{j=0}^{ \infty } \boldsymbol{C}_v \boldsymbol{\eta}_{t-j}
\\
\sum_{j=0}^{ \infty }  j \norm{ \boldsymbol{C}_{u,j} } &< \infty, \ \ \   \sum_{j=0}^{ \infty }  j \norm{ \boldsymbol{C}_{v,j} } < \infty, \ \ \  \mathsf{det} \left( \boldsymbol{C}_u(1) \right) \neq 0 , \ \ \ \mathsf{det} \left( \boldsymbol{C}_v(1) \right) \neq 0.
\end{align}
Moreover, the stacked process $\boldsymbol{\xi}^{\prime}_{ t \in \mathbb{Z} } := \left\{ [ \zeta_t^{\prime}, \eta_t^{\prime} ]^{\prime} \right\}_{ t \in \mathbb{Z} }$ is a strictly stationary and ergodic martingale difference sequence with respect to the natural filtration $\mathcal{F}_t := \sigma \left( \left\{ \boldsymbol{\xi}_t \right\}_{- \infty }^t \right)$ with positive definite conditional variance matrix such that $\boldsymbol{\Sigma} := \mathbb{E} \big[ \boldsymbol{\xi}_t \boldsymbol{\xi}_t^{\prime} | \mathcal{F}_{t-1} \big]$.
\end{assumption}

\subsubsection{FM-OLS Estimation and Inference for SUR Cointegrating regression}

Consider the system of equations with observations available for $i = 1,..., N$ and $t = 1,...,T$ such that 
\begin{align}
y_{i,t} &= D_{i,t}^{\prime} \theta_{D,i} + X_{i,t}^{\prime} \theta_{X,i} + u_{i,t} \equiv Z_{i,t}^{\prime} \theta_i + u_{i,t} 
\\
x_{i,t} &= x_{i,t-1} + v_{i,t}
\end{align}
Then the $N$ equations can be written in matrix form as a system of \textit{seemingly unrelated cointegrating regressions} defined as $y_t := Z_t^{\prime} \theta + u_t$.

\newpage

To establish the limiting distributions of model estimators we consider that the following functional central limit theorem (FCLT) to hold for $\left\{ \xi_t \right\}_{ t \in \mathbb{Z} } := \left\{ \big[ u_t^{\prime},  v_t^{\prime} \big]^{\prime} \right\}_{ t \in \mathbb{Z} }$ 
\begin{align}
\frac{1}{ \sqrt{T} } \sum_{ t = 1}^{ \floor{Tr} } \xi_t \Rightarrow 
\boldsymbol{B}(r) 
=
\begin{bmatrix}
\boldsymbol{B}_u(r)
\\
\boldsymbol{B}_v(r)
\end{bmatrix}
\end{align}
with the long-run covariance matrix $\boldsymbol{\Omega} := \sum_{ j = - \infty}^{ \infty } \mathbb{E} \left( \xi_t  \xi_{t-j}^{\prime} \right)$ and $W(r)$ is a $2N$ standard BM. Regarding the estimation methodology, \cite{wagner2020fully} examine the statistical properties of the OLS estimator  given by $\widehat{\boldsymbol{\theta}}_{OLS} = \left( Z^{\prime} Z \right)^{-1} Z^{\prime} y$ which is consistent but its limiting distribution is contaminated by second order bias terms. Moreover,  \cite{wagner2020fully}  consider a feasible GLS-type SUR estimator in which rather than the estimated error covariance matrix $\widehat{\boldsymbol{\Sigma}}_{uu}$, the estimated long-run covariance matrix $\widehat{\boldsymbol{\Omega}}_{uu}$ is used as a weighting matrix. The modified SUR estimator is defined as below
\begin{align}
\widetilde{ \boldsymbol{\theta} }_{MSUR}
= 
\bigg( \boldsymbol{Z}^{\prime} \left( \boldsymbol{I}_T \otimes \widehat{\boldsymbol{\Omega}}_{uu}^{-1} \right) \boldsymbol{Z} \bigg)^{-1} \bigg( \boldsymbol{Z}^{\prime} \left( \boldsymbol{I}_T \otimes \widehat{\boldsymbol{\Omega}}_{uu}^{-1} \right) \boldsymbol{y} \bigg).
\end{align}
Based on the above formulations, \cite{wagner2020fully} show that the MSUR estimator is consistent with a nuisance parameter dependent limiting distribution. Consequently, to construct estimators with a zero mean Gaussian mixture limiting distribution that allow for asymptotic chi-square inference, we propose fully modified type corrections to these two estimators. In particular, we consider the FM-OLS estimator of \cite{phillips1990statistical} which is based on a two-part transformation. The first transformation changes the dependent variable such that $y_t^{+} := y_t - \widehat{\boldsymbol{\Omega}}_{uv} \widehat{\boldsymbol{\Omega}}_{vv}^{-1} \Delta x_t$. In particular, the second transformation consists of subtracting an appropriately constructed correction term to remove bias terms otherwise present in the limiting distributions. This transformation depends upon the estimator considered as starting point OLS or MSUR and the specification of the equation system. 

\begin{proposition}[\cite{wagner2020fully}]
Suppose that based on the OLS residuals all required long-run covariances are estimated consistently. Then the fully modified systems OLS (FM-SOLS) and the fully modified SUR (FM-SUR) estimators are defined as below
\begin{align}
\widehat{\boldsymbol{\theta}}_{FM-SOLS} 
&:= 
\big( \boldsymbol{Z}^{\prime} \boldsymbol{Z} \big)^{-1} \left( \boldsymbol{Z}^{\prime} \boldsymbol{y}^{+} - \widehat{\boldsymbol{A}} \right),
\\
\widehat{\boldsymbol{\theta}}_{FM-SUR} 
&:= 
\bigg( \boldsymbol{Z}^{\prime} \left( \boldsymbol{I}_T \otimes \widehat{\boldsymbol{\Omega}}_{u.v}^{-1} \right) \boldsymbol{Z} \bigg)^{-1} \bigg( \boldsymbol{Z}^{\prime} \left( \boldsymbol{I}_T \otimes \widehat{\boldsymbol{\Omega}}_{u.v}^{-1} \right) \boldsymbol{y}^{+} - \widetilde{\boldsymbol{A}}^{*} \bigg),
\end{align}
with $\boldsymbol{y}^{+} = \big[ \boldsymbol{y}_1^{+ \prime},..., \boldsymbol{y}_T^{+ \prime} \big]^{\prime}$. As $T \to \infty$ it holds that 
\begin{align}
\boldsymbol{G}^{-1} \left( \widehat{\boldsymbol{\theta}}_{FM-SOLS}  - \boldsymbol{\theta} \right)
&\Rightarrow 
\left( \int_0^1 \boldsymbol{J}(r) \boldsymbol{J}(r)^{\prime} dr \right) \left( \int_0^1 \boldsymbol{J}(r) d B_{u.v} (r)  \right),
\\
\boldsymbol{G}^{-1} \left( \widetilde{\boldsymbol{\theta}}_{FM-SUR}  - \boldsymbol{\theta} \right)
&\Rightarrow 
\left( \int_0^1 \boldsymbol{J}(r) \widehat{\boldsymbol{\Omega}}_{u.v}^{-1} \boldsymbol{J}(r)^{\prime} dr \right) \left( \int_0^1 \boldsymbol{J}(r) \widehat{\boldsymbol{\Omega}}_{u.v}^{-1} d B_{u.v} (r)  \right),
\end{align}
where $\boldsymbol{B}_{u.v} := \boldsymbol{B}_u(r) - \boldsymbol{\Omega}_{uv} \boldsymbol{\Omega}_{vv}^{-1} \boldsymbol{B}_v(r)$ is a Brownian motion with covariance matrix $\boldsymbol{\Omega}_{u.v}$.
\end{proposition}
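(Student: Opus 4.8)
The plan is to treat this as the standard fully modified least squares argument of \cite{phillips1990statistical}, adapted to the seemingly unrelated cointegrating system, where the two-part transformation is engineered to strip both the long-run endogeneity and the one-sided (serial-correlation) bias from the limiting distribution of the OLS/MSUR estimators. First I would write the estimator error in a form that isolates the noise. Since $y_{i,t} = Z_{i,t}^{\prime}\theta_i + u_{i,t}$ and $\Delta x_{i,t} = v_{i,t}$, the first transformation gives $y_t^{+} = Z_t^{\prime}\theta + u_t^{+}$ with $u_t^{+} := u_t - \widehat{\boldsymbol{\Omega}}_{uv}\widehat{\boldsymbol{\Omega}}_{vv}^{-1}v_t$, so that
\begin{align*}
\widehat{\boldsymbol{\theta}}_{FM-SOLS} - \boldsymbol{\theta} = \left( \boldsymbol{Z}^{\prime}\boldsymbol{Z} \right)^{-1}\left( \boldsymbol{Z}^{\prime}\boldsymbol{u}^{+} - \widehat{\boldsymbol{A}} \right).
\end{align*}
Inserting the block normalization matrix $\boldsymbol{G}$ (which scales the deterministic and the $I(1)$ coordinates at their respective rates) yields the decomposition $\boldsymbol{G}^{-1}(\widehat{\boldsymbol{\theta}}_{FM-SOLS} - \boldsymbol{\theta}) = [\boldsymbol{G}\boldsymbol{Z}^{\prime}\boldsymbol{Z}\boldsymbol{G}]^{-1}[\boldsymbol{G}(\boldsymbol{Z}^{\prime}\boldsymbol{u}^{+} - \widehat{\boldsymbol{A}})]$, and the whole problem reduces to finding the joint weak limit of the two bracketed factors.

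For the denominator, the FCLT assumed for $\left\{ \xi_t \right\}$ together with the continuous mapping theorem gives $\boldsymbol{G}\boldsymbol{Z}^{\prime}\boldsymbol{Z}\boldsymbol{G} \Rightarrow \int_0^1 \boldsymbol{J}(r)\boldsymbol{J}(r)^{\prime}\,dr$, where $\boldsymbol{J}(r)$ collects the deterministic limit of the normalized $D$-regressors and the Brownian motion $\boldsymbol{B}_v$ obtained as the limit of the normalized partial sums of $x_t$. For the numerator the central quantity is $T^{-1}\sum_{t} x_{t-1}u_t^{+}$; by the standard joint convergence of sample covariances of an $I(1)$ process and its driving innovations this converges to a stochastic integral $\int_0^1 \boldsymbol{B}_v\,d\boldsymbol{B}_{u.v} + \boldsymbol{\Delta}_{vu}^{+}$, where $\boldsymbol{\Delta}_{vu}^{+}$ is the one-sided long-run covariance left over after the first transformation.

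The crux of the argument is the two complementary roles of the FM corrections. The first transformation replaces $\boldsymbol{B}_u$ by $\boldsymbol{B}_{u.v} = \boldsymbol{B}_u(r) - \boldsymbol{\Omega}_{uv}\boldsymbol{\Omega}_{vv}^{-1}\boldsymbol{B}_v(r)$, and the defining property of this projection is that $\boldsymbol{B}_{u.v}$ is \emph{independent} of $\boldsymbol{B}_v$; this independence is exactly what turns $\int_0^1 \boldsymbol{J}\,d\boldsymbol{B}_{u.v}$ into a Gaussian mixture conditional on the path of $\boldsymbol{B}_v$, yielding the mixed-normal (hence asymptotically chi-square) limit. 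The second transformation, the subtraction of $\widehat{\boldsymbol{A}}$, is constructed precisely to be a consistent estimator of $T\boldsymbol{\Delta}_{vu}^{+}$, so that $\boldsymbol{G}(\boldsymbol{Z}^{\prime}\boldsymbol{u}^{+} - \widehat{\boldsymbol{A}}) \Rightarrow \int_0^1 \boldsymbol{J}(r)\,d\boldsymbol{B}_{u.v}(r)$ with no residual drift. Combining the two factors via the continuous mapping theorem then delivers the stated FM-SOLS limit. For the FM-SUR estimator I would repeat the argument after inserting the weight $\boldsymbol{I}_T \otimes \widehat{\boldsymbol{\Omega}}_{u.v}^{-1}$ into every sample moment; since $\widehat{\boldsymbol{\Omega}}_{u.v} \to_p \boldsymbol{\Omega}_{u.v}$, the weight may be treated as fixed in the limit, which produces the GLS-weighted integrals in the second display.

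The main obstacle I anticipate is the bias bookkeeping: one must verify that $\widehat{\boldsymbol{A}}$ (resp.\ $\widetilde{\boldsymbol{A}}^{*}$) cancels the one-sided term $\boldsymbol{\Delta}_{vu}^{+}$ \emph{exactly}, which rests on the consistency of the kernel long-run covariance estimators $\widehat{\boldsymbol{\Omega}}_{uv}$, $\widehat{\boldsymbol{\Omega}}_{vv}$, $\widehat{\boldsymbol{\Omega}}_{u.v}$ and of the corresponding one-sided estimators, and on controlling the interaction between the deterministic block and the stochastic block of $\boldsymbol{J}(r)$ under the normalization $\boldsymbol{G}$. A secondary difficulty is the cross-equation structure: because the weighting couples the $N$ equations, the correction $\widetilde{\boldsymbol{A}}^{*}$ and the limit integrals must be assembled block-wise, and one has to check that replacing the estimated long-run weight by its probability limit is asymptotically negligible uniformly across equations.
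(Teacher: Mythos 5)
Your outline is correct and coincides with the approach the paper takes: the text around this proposition only states the result (citing Wagner et al.) and sketches the Phillips--Hansen two-part transformation---endogeneity correction via $\boldsymbol{y}^{+}$ to replace $\boldsymbol{B}_u$ by $\boldsymbol{B}_{u.v}$, plus additive removal of the one-sided bias---which is exactly the decomposition you carry out, together with the normalization $\boldsymbol{G}$, the FCLT/CMT limit of the signal matrix, and the replacement of $\widehat{\boldsymbol{\Omega}}_{u.v}$ by its probability limit for the SUR weighting. No substantive divergence; your version is simply a more explicit write-up of the same standard argument (and you correctly supply the inverse on the signal matrix that the displayed limit omits).
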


\newpage

\begin{proposition}[\cite{wagner2020fully}]
Consider $s$ linearly independent restrictions collected under the null such that $\mathbb{H}_0: R \theta = r$ with $R \in \mathbb{R}^{s \times d}$ of full row rank $s,r \in \mathbb{R}^s$. Suppose that there exists a sequence of full rank matrices $G_R = G_R (T)$ such that $\underset{ T \to \infty }{ \mathsf{lim} } G_R R G = R^{*}$ with $R^{*} \in \mathbb{R}^{s \times d}$ of full row rank $s$. Then, it holds that under the null hypothesis, the Wald-type statistics that correspond to the two estimators under consideration expressed as below
\begin{align}
\hat{\mathcal{W}}_T 
&:= 
\bigg( \boldsymbol{R} \hat{\boldsymbol{\theta}} - \boldsymbol{r} \bigg)^{\prime} \bigg[ \boldsymbol{R} \big( \boldsymbol{Z}^{\prime} \boldsymbol{Z} \big)^{-1} \boldsymbol{Z}^{\prime} \left( \boldsymbol{I}_T \otimes \widehat{\boldsymbol{\Omega}}_{u.v} \right) \boldsymbol{Z} \big( \boldsymbol{Z}^{\prime} \boldsymbol{Z} \big)^{-1} \boldsymbol{R}^{\prime}  \bigg] \bigg( \boldsymbol{R} \hat{\boldsymbol{\theta}} - \boldsymbol{r} \bigg)
\\
\tilde{\mathcal{W}}_T 
&:= 
\bigg( \boldsymbol{R} \tilde{\boldsymbol{\theta}} - \boldsymbol{r} \bigg)^{\prime} \bigg[ \boldsymbol{R}  \bigg( \boldsymbol{Z}^{\prime} \left( \boldsymbol{I}_T \otimes \widehat{\boldsymbol{\Omega}}_{u.v} \right) \boldsymbol{Z} \bigg)^{-1} \boldsymbol{R}^{\prime}  \bigg] \bigg( \boldsymbol{R} \tilde{\boldsymbol{\theta}} - \boldsymbol{r} \bigg)
\end{align}
are asymptotically chi-squared distributed with $s$ degrees of freedom. 
\end{proposition}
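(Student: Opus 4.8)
The plan is to leverage the limiting distributions for $\widehat{\boldsymbol{\theta}}_{FM-SOLS}$ and $\widetilde{\boldsymbol{\theta}}_{FM-SUR}$ already recorded in the preceding proposition and to show that each Wald-type statistic is asymptotically a quadratic form of a (conditionally) Gaussian vector evaluated against a consistent estimate of its own covariance. I carry out $\hat{\mathcal{W}}_T$ in detail; $\tilde{\mathcal{W}}_T$ is treated identically, with the weighting by $\widehat{\boldsymbol{\Omega}}_{u.v}^{-1}$ propagated through the sample moments. First I would invoke the null hypothesis to write $\boldsymbol{R}\widehat{\boldsymbol{\theta}} - \boldsymbol{r} = \boldsymbol{R}(\widehat{\boldsymbol{\theta}} - \boldsymbol{\theta})$, insert the normalizers, and use the hypothesised convergence $G_R \boldsymbol{R}\,\boldsymbol{G} \to \boldsymbol{R}^*$ together with the previous proposition to obtain
\begin{align}
G_R\big(\boldsymbol{R}\widehat{\boldsymbol{\theta}} - \boldsymbol{r}\big) = \big(G_R \boldsymbol{R}\, \boldsymbol{G}\big)\,\boldsymbol{G}^{-1}\big(\widehat{\boldsymbol{\theta}} - \boldsymbol{\theta}\big) \Rightarrow \boldsymbol{R}^* \left(\int_0^1 \boldsymbol{J}\boldsymbol{J}^{\prime}\,dr\right)^{-1}\left(\int_0^1 \boldsymbol{J}\, d\boldsymbol{B}_{u.v}\right) =: \boldsymbol{R}^*\boldsymbol{\xi}.
\end{align}

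The conceptual core is the \emph{mixed normality} of the limit $\boldsymbol{\xi}$. Let $\mathcal{G}$ denote the $\sigma$-field generated by $\boldsymbol{B}_v$, equivalently by the paths of $\boldsymbol{J}$. Since, by construction, $\boldsymbol{B}_{u.v}$ is a Brownian motion with covariance $\boldsymbol{\Omega}_{u.v}$ that is independent of $\boldsymbol{B}_v$, the stochastic integral $\int_0^1 \boldsymbol{J}\,d\boldsymbol{B}_{u.v}$ is, conditionally on $\mathcal{G}$, Gaussian with mean zero and covariance governed by $\big(\int_0^1\boldsymbol{J}\boldsymbol{J}^{\prime}\,dr\big)$ and $\boldsymbol{\Omega}_{u.v}$ in the appropriate Kronecker sense. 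Hence $\boldsymbol{\xi}$ is conditionally $\mathcal{N}(\boldsymbol{0},\boldsymbol{V}_{\mathcal{G}})$ with random (i.e.\ $\mathcal{G}$-measurable) covariance $\boldsymbol{V}_{\mathcal{G}}$, and $\boldsymbol{R}^*\boldsymbol{\xi}$ is conditionally $\mathcal{N}(\boldsymbol{0},\boldsymbol{R}^*\boldsymbol{V}_{\mathcal{G}}\boldsymbol{R}^{*\prime})$.

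Next I would show that the bracketed covariance estimator in $\hat{\mathcal{W}}_T$, once normalized, converges to exactly this conditional covariance. Using consistency of the long-run variance estimator $\widehat{\boldsymbol{\Omega}}_{u.v}\overset{p}{\to}\boldsymbol{\Omega}_{u.v}$ together with the standard cointegration limits $\boldsymbol{G}^{-1}\boldsymbol{Z}^{\prime}\boldsymbol{Z}\,\boldsymbol{G}^{-1}\Rightarrow\int_0^1\boldsymbol{J}\boldsymbol{J}^{\prime}\,dr$ and its weighted analogue, the sandwich collapses so that
\begin{align}
G_R \boldsymbol{R}\big(\boldsymbol{Z}^{\prime}\boldsymbol{Z}\big)^{-1}\boldsymbol{Z}^{\prime}\big(\boldsymbol{I}_T\otimes\widehat{\boldsymbol{\Omega}}_{u.v}\big)\boldsymbol{Z}\big(\boldsymbol{Z}^{\prime}\boldsymbol{Z}\big)^{-1}\boldsymbol{R}^{\prime}G_R^{\prime} \Rightarrow \boldsymbol{R}^*\boldsymbol{V}_{\mathcal{G}}\boldsymbol{R}^{*\prime}.
\end{align}
Because the numerator and the covariance estimator are continuous functionals of the \emph{same} underlying weak limit $(\boldsymbol{B}_{u.v},\boldsymbol{B}_v)$ plus the consistent $\widehat{\boldsymbol{\Omega}}_{u.v}$, they converge jointly, the $G_R$ factors cancel against the inverse middle matrix, and the continuous mapping theorem yields
\begin{align}
\hat{\mathcal{W}}_T \Rightarrow \big(\boldsymbol{R}^*\boldsymbol{\xi}\big)^{\prime}\big(\boldsymbol{R}^*\boldsymbol{V}_{\mathcal{G}}\boldsymbol{R}^{*\prime}\big)^{-1}\big(\boldsymbol{R}^*\boldsymbol{\xi}\big).
\end{align}
Conditional on $\mathcal{G}$, this is a quadratic form of the $\mathcal{N}(\boldsymbol{0},\boldsymbol{R}^*\boldsymbol{V}_{\mathcal{G}}\boldsymbol{R}^{*\prime})$ vector $\boldsymbol{R}^*\boldsymbol{\xi}$ against the inverse of its own covariance; the full row rank $s$ of $\boldsymbol{R}^*$ ensures $\boldsymbol{R}^*\boldsymbol{V}_{\mathcal{G}}\boldsymbol{R}^{*\prime}$ is almost surely positive definite, so the conditional law is $\chi^2_s$. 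Since this $\chi^2_s$ law does not depend on $\mathcal{G}$, integrating over the mixing distribution leaves it unchanged, and the unconditional limit is $\chi^2_s$ as claimed.

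The main obstacle I anticipate is the joint-convergence and cancellation step: one must verify, with careful bookkeeping of the Kronecker structure of the $N$-equation system, that the random mixing matrix $\int_0^1\boldsymbol{J}\boldsymbol{J}^{\prime}\,dr$ appearing in the numerator is precisely matched by the covariance estimator in the denominator, so that $\boldsymbol{V}_{\mathcal{G}}$ is common to both and the studentization is exact. This also requires the consistency of $\widehat{\boldsymbol{\Omega}}_{u.v}$ to hold uniformly enough to pass to the random limit, and the full-rank condition on $\boldsymbol{R}^*$ to guarantee a nonsingular limiting covariance; everything else is routine application of the continuous mapping theorem and the conditional-Gaussian argument that makes FM-type Wald statistics asymptotically pivotal.
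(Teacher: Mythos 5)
Your proof takes essentially the same route as the paper's: impose the null to replace $\boldsymbol{R}\hat{\boldsymbol{\theta}}-\boldsymbol{r}$ by $\boldsymbol{R}(\hat{\boldsymbol{\theta}}-\boldsymbol{\theta})$, insert the normalizers $\boldsymbol{G}_R$ and $\boldsymbol{G}$, pass to the limit via the preceding proposition and the sample-moment convergence, and conclude by the quadratic-form argument for zero-mean Gaussian mixtures — you merely spell out the conditioning on the $\sigma$-field generated by $\boldsymbol{B}_v$ that the paper compresses into ``standard arguments.'' The one discrepancy is notational: you correctly studentize with the inverse of the bracketed covariance matrix, whereas the paper's displayed statistics (and its proof) omit the $^{-1}$, which is evidently a typo.
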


\begin{remark}
One of the main advantages of the SUR modelling approach is that it allows to test the poolability of the coefficients. Specifically, usually pooling is considered with respect to all cross-section members. In particular, if the null hypothesis corresponding to the variant of pooling considered is not rejected, then pooled estimation of a smaller number of parameters allows one to lift some efficiency gains when performing correspondingly pooled estimation. 
\end{remark}

\paragraph{Proof of Proposition 4}

\begin{proof}
Under the null hypothesis and the formulated constraints on the restriction matrix we have that 
\begin{align*}
\hat{\mathcal{W}}_T 
&:= 
\bigg( \boldsymbol{R} \hat{\boldsymbol{\theta}} - \boldsymbol{r} \bigg)^{\prime}  \bigg[ \boldsymbol{R} \big( \boldsymbol{Z}^{\prime} \boldsymbol{Z} \big)^{-1} \boldsymbol{Z}^{\prime} \left( \boldsymbol{I}_T \otimes \widehat{\boldsymbol{\Omega}}_{u.v} \right) \boldsymbol{Z} \big( \boldsymbol{Z}^{\prime} \boldsymbol{Z} \big)^{-1} \boldsymbol{R}^{\prime}  \bigg] \bigg( \boldsymbol{R} \hat{\boldsymbol{\theta}} - \boldsymbol{r} \bigg)
\\
&=
\bigg( \boldsymbol{R}\left( \hat{\boldsymbol{\theta}} -  \boldsymbol{\theta} \right)  \bigg)^{\prime} \bigg[ \boldsymbol{R} \big( \boldsymbol{Z}^{\prime} \boldsymbol{Z} \big)^{-1} \boldsymbol{Z}^{\prime} \left( \boldsymbol{I}_T \otimes \widehat{\boldsymbol{\Omega}}_{u.v} \right) \boldsymbol{Z} \big( \boldsymbol{Z}^{\prime} \boldsymbol{Z} \big)^{-1} \boldsymbol{R}^{\prime}  \bigg] \bigg( \boldsymbol{R}\left( \hat{\boldsymbol{\theta}} -  \boldsymbol{\theta} \right)  \bigg)
\\
&= 
\bigg( \boldsymbol{G}_R^{-1} \boldsymbol{R} \boldsymbol{G} \boldsymbol{G}^{-1} \left( \hat{\boldsymbol{\theta}} -  \boldsymbol{\theta} \right)  \bigg)^{\prime} \bigg[ \boldsymbol{G}_R^{-1} \boldsymbol{R} \big( \boldsymbol{Z}^{\prime} \boldsymbol{Z} \big)^{-1} \boldsymbol{Z}^{\prime} \left( \boldsymbol{I}_T \otimes \widehat{\boldsymbol{\Omega}}_{u.v} \right) \boldsymbol{Z} \big( \boldsymbol{Z}^{\prime} \boldsymbol{Z} \big)^{-1} \boldsymbol{R}^{\prime}  \boldsymbol{G}_R^{-1} \bigg] \bigg( \boldsymbol{G}_R^{-1} \boldsymbol{R} \boldsymbol{G} \boldsymbol{G}^{-1} \left( \hat{\boldsymbol{\theta}} - \boldsymbol{\theta} \right)  \bigg)
\end{align*}
Therefore, it follows that 
\begin{align*}
\hat{\mathcal{W}}_T  
&\Rightarrow 
\left[ \boldsymbol{R}^{*} \left( \int_0^1 \boldsymbol{J}(r) \boldsymbol{J}(r)^{\prime} dr  \right)^{-1}  \int_0^1 \boldsymbol{J}(r) d B_{u.v}(r) \right]^{\prime}
\\
&\ \ \ \ \ \ \ \times 
\left[  \boldsymbol{R}^{*} \left( \int_0^1 \boldsymbol{J}(r) \boldsymbol{J}(r)^{\prime} dr  \right)^{-1} \int_0^1 \boldsymbol{J}(r) \boldsymbol{\Omega}_{u.v} \boldsymbol{J}(r)^{\prime} dr \left( \int_0^1 \boldsymbol{J}(r) \boldsymbol{J}(r)^{\prime} dr  \right)^{-1}  \boldsymbol{R}^{* \prime}  \right] 
\\
&\ \ \ \ \ \ \ \times 
\left[ \boldsymbol{R}^{*} \left( \int_0^1 \boldsymbol{J}(r) \boldsymbol{J}(r)^{\prime} dr  \right)^{-1}  \int_0^1 \boldsymbol{J}(r) d B_{u.v}(r)  \right]
\end{align*}
which can be shown to be chi-squared distributed with $s$ degrees of freedom using standard arguments involving quadratic forms of functionals of zero mean Gaussian mixture distributions.   

\end{proof}

\newpage

\subsection{Estimating Cointegration from a Cross Section}

We follow the framework proposed by \cite{madsen2005estimating}. Consider the variables $Y_{it}$, $X_{1it}$ and $X_{2it}$ where $i = 1,...,N, t = 1,...,$ such that these variables are of dimensions $k_0 \times 1$, $k_1 \times 1$ and $k_2 \times 1$ respectively. For every cross-section unit we assume that the variables are generated by the following model
\begin{align}
Y_{it} &= \gamma_1^{\prime} X_{1it} + \gamma_2^{\prime} X_{2it} + \eta_{0it} 
\\
X_{1it} &= \eta_{1it}
\\
X_{2it} &= X_{2it-1} \eta_{2it}
\end{align}
where $\gamma_1$ and $\gamma_2$ are $k_1 \times k_0$ and $k_2 \times k_0$ matrices of parameters, respectively , and where the time-series processes  $\eta_{0it}$,  $\eta_{1it}$ and $\eta_{2it}$ are weakly stationary for every cross-section unit for $i = 1,...,N$. 

In terms of the identification and model specification, we employ the dynamic model for the variables at the individual level. Suppose that a cross section obtained at some point in time is available. The cross-section is sampled consisting of observations of $N$ cross-section units at time $t \in \mathbb{N}$. In particular, we consider the cross-section demeaned quantities
\begin{align}
Y_{it}^{*} = Y_{it} - \frac{1}{N} \sum_{i=1}^N Y_{it}, \ X_{1it}^{*} = X_{1it} - \frac{1}{N} \sum_{i=1}^N X_{1it}, \ \ X_{2it}^{*} = X_{2it} - \frac{1}{N} \sum_{i=1}^N X_{2it} 
\end{align}
For notational convenience the stacked $( k_1 + k_2 )-$dimensional stochastic variable $X_{it}^{*}$ is defined such that $X_{it}^{*} = \left( X_{1it}^{*\prime}, X_{2it}^{*\prime} \right)^{\prime}$ and the corresponding $( k_1 + k_2 ) \times k_0$ parameter matrix as $\gamma = \left( \gamma_1^{\prime}, \gamma_2^{\prime} \right)^{\prime}$. We can now specify the regression equation describing the cointegrating relations can be expressed in terms of the demeaned variables below
\begin{align}
Y_{it}^{*} = \gamma^{\prime} X_{it}^{*} + \eta_{0it}^{*}, \ \ i = 1,...,N \ \text{and} \ t \in \mathbb{N}, 
\end{align}
where $\eta_{0it}^{*} = \eta_{0it} - \frac{1}{N} \sum_{i=1}^N \eta_{0it}$. The corresponding cross-section OLS estimator is defined as below
\begin{align}
\hat{ \gamma }_{N,t} = \left( \sum_{i=1}^N  X_{it}^{*} X_{it}^{* \prime}  \right)^{-1} \left( \sum_{i=1}^N  X_{it}^{*} Y_{it}^{* \prime}  \right).
\end{align}
Moreover, by relevant assumptions the regressor $X_{it}^{*}$ is independent of the regression error $\eta_{0it}^{*}$ since the aggregate shocks have been removed from the variables. This immediately implies that $\hat{ \gamma }_{N,t}$ is an unbiased estimator of $\gamma$, that is, $\mathbb{E} \left( \hat{ \gamma }_{N,t} \right) = \gamma$. 
Then, the asymptotic behaviour as $N \to \infty$ of the cross-section estimator $\hat{ \gamma }_{N,t}$ is given by the following proposition.

\newpage

\begin{proposition}[\cite{madsen2005estimating}] 
Under regularity conditions the following hold: $\hat{ \gamma }_{N,t}$ is a consistent estimator of $\gamma$, that is, 
\begin{align}
\hat{ \gamma }_{N,t} \to_p \gamma, \ \text{as} \ N \to \infty. 
\end{align}
Then, the limiting distribution of $\hat{ \gamma }_{N,t}$ is given by 
\begin{align}
\sqrt{N} \left( \hat{ \gamma }_{N,t} - \gamma \right) \to_w \mathcal{N} \left( 0, \Omega \otimes \Sigma_t^{-1} \right) \ \text{as} \ N \to \infty. 
\end{align}
The asymptotic variance can be estimated consistently by using the following results:
\begin{align}
\frac{1}{N} \sum_{i=1}^N X_{it}^{*} X_{it}^{*\prime} &\to_p \Sigma_t \ \ \text{as} \ N \to \infty, 
\\
\frac{1}{N} \sum_{i=1}^N \left( Y_{it}^{*} - \hat{ \gamma }_{N,t}^{\prime} X_{it}^{*} \right) \left( Y_{it}^{*} - \hat{ \gamma }_{N,t}^{\prime} X_{it}^{*} \right)^{\prime} &\to_p \Omega \ \ \text{as} \ N \to \infty, 
\end{align}
\end{proposition}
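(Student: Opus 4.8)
The plan is to reduce everything to a weak law of large numbers and a central limit theorem in the cross-sectional index $N$, holding the time point $t$ fixed, after observing that the cross-sectional demeaning annihilates the common (aggregate) components of the series. Substituting the regression equation $Y_{it}^{*} = \gamma^{\prime} X_{it}^{*} + \eta_{0it}^{*}$ into the definition of $\hat{\gamma}_{N,t}$ gives
\begin{align*}
\hat{\gamma}_{N,t} - \gamma = \left( \frac{1}{N} \sum_{i=1}^N X_{it}^{*} X_{it}^{*\prime} \right)^{-1} \left( \frac{1}{N} \sum_{i=1}^N X_{it}^{*} \eta_{0it}^{*\prime} \right) =: A_N^{-1} B_N,
\end{align*}
so the whole argument rests on the two sample averages $A_N$ and $B_N$.

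First I would prove consistency. Writing each regressor as a common component plus an idiosyncratic component, the common part drops out of $X_{it}^{*}$ and only the demeaned idiosyncratic part survives; since the cross-sectional average of the idiosyncratic terms is $O_p(N^{-1/2})$, the demeaning corrections contribute only $O_p(N^{-1})$ to $A_N$ and $B_N$ and are negligible. By the regularity conditions the idiosyncratic terms are i.i.d.\ across $i$ with second moment $\Sigma_t$ positive definite, so the weak LLN gives $A_N \to_p \Sigma_t$; the assumed orthogonality of regressor and error (independence of $X_{it}^{*}$ from $\eta_{0it}^{*}$ once aggregates are removed) together with the LLN gives $B_N \to_p 0$. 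Slutsky's theorem and the continuity of matrix inversion then yield $\hat{\gamma}_{N,t} \to_p \gamma$.

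For the limiting distribution I would vectorize. Using $\mathsf{vec}(A_N^{-1} B_N) = (I_{k_0} \otimes A_N^{-1}) \mathsf{vec}(B_N)$ and $\mathsf{vec}(X_{it}^{*} \eta_{0it}^{*\prime}) = \eta_{0it}^{*} \otimes X_{it}^{*}$,
\begin{align*}
\sqrt{N}\, \mathsf{vec}\big( \hat{\gamma}_{N,t} - \gamma \big) = \big( I_{k_0} \otimes A_N^{-1} \big) \frac{1}{\sqrt{N}} \sum_{i=1}^N \big( \eta_{0it}^{*} \otimes X_{it}^{*} \big).
\end{align*}
The summands are mean-zero by orthogonality and zero-mean error and, under independence of $X_{it}^{*}$ and $\eta_{0it}^{*}$, have covariance $\mathbb{E}\big[ ( \eta_{0it}^{*} \eta_{0it}^{*\prime} ) \otimes ( X_{it}^{*} X_{it}^{*\prime} ) \big] = \Omega \otimes \Sigma_t$; a multivariate CLT in $N$ gives $\frac{1}{\sqrt{N}} \sum_i ( \eta_{0it}^{*} \otimes X_{it}^{*} ) \to_w \mathcal{N}( 0, \Omega \otimes \Sigma_t )$. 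Combining with $A_N^{-1} \to_p \Sigma_t^{-1}$ via Slutsky and simplifying the Kronecker sandwich $( I_{k_0} \otimes \Sigma_t^{-1} )( \Omega \otimes \Sigma_t )( I_{k_0} \otimes \Sigma_t^{-1} ) = \Omega \otimes \Sigma_t^{-1}$ delivers the stated asymptotic normality. The two variance-estimator limits follow next: $\frac{1}{N} \sum_i X_{it}^{*} X_{it}^{*\prime} \to_p \Sigma_t$ is exactly the LLN already used, while for $\Omega$ I would write the fitted residual as $Y_{it}^{*} - \hat{\gamma}_{N,t}^{\prime} X_{it}^{*} = \eta_{0it}^{*} + ( \gamma - \hat{\gamma}_{N,t} )^{\prime} X_{it}^{*}$, expand the sample outer product, and use $\hat{\gamma}_{N,t} - \gamma = o_p(1)$ with the LLN on $\frac{1}{N} \sum_i \eta_{0it}^{*} \eta_{0it}^{*\prime}$ to show the cross and quadratic correction terms vanish, leaving the limit $\Omega$.

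The hard part will be the demeaning step that underlies every limit: because $X_{it}^{*}$ and $\eta_{0it}^{*}$ are formed by subtracting cross-sectional means, the summands are not exactly independent across $i$, so the i.i.d.\ LLN and CLT do not apply verbatim. The argument must show that, once the common components cancel, the residual dependence induced by the averages $\bar{X}_t$ and $\bar{\eta}_{0t}$ is of order $N^{-1/2}$ and hence contributes only asymptotically negligible terms, so that the limits are governed entirely by the underlying i.i.d.\ idiosyncratic parts; finite fourth moments of these parts are precisely what is needed to justify the CLT applied to the products and the consistency of the variance estimator.
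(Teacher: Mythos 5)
Your proposal is correct in substance, but there is nothing in the paper to compare it against: the paper states this proposition as an imported result from \cite{madsen2005estimating} and supplies no proof at all --- the surrounding text only records the setup (the demeaned regression $Y_{it}^{*} = \gamma^{\prime} X_{it}^{*} + \eta_{0it}^{*}$, the claim that demeaning removes the aggregate shocks so that $X_{it}^{*}$ and $\eta_{0it}^{*}$ are independent, and the resulting unbiasedness) and then moves on to the submatrix estimator $\hat{\gamma}_{2N,t}$. Relative to that, your argument is the standard and essentially complete one: the decomposition $\hat{\gamma}_{N,t} - \gamma = A_N^{-1} B_N$, the LLN/CLT in $N$ at fixed $t$, the vectorization identities $\mathsf{vec}(A_N^{-1}B_N) = (I\otimes A_N^{-1})\mathsf{vec}(B_N)$ and $\mathsf{vec}(X_{it}^{*}\eta_{0it}^{*\prime}) = \eta_{0it}^{*}\otimes X_{it}^{*}$, and the Kronecker sandwich $(I\otimes\Sigma_t^{-1})(\Omega\otimes\Sigma_t)(I\otimes\Sigma_t^{-1}) = \Omega\otimes\Sigma_t^{-1}$ all check out and reproduce the stated limit. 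You also correctly isolate the only genuinely delicate point, namely that cross-sectional demeaning destroys exact independence across $i$; the fix you describe (the corrections involving $\bar{X}_t$ and $\bar{\eta}_{0t}$ enter $A_N$ and $B_N$ at order $O_p(N^{-1})$ and the normalized score at order $O_p(N^{-1/2})$, hence are negligible) is exactly what a full proof must establish, and finite fourth moments plus the factorization $\mathbb{E}\big[(\eta_{0it}^{*}\eta_{0it}^{*\prime})\otimes(X_{it}^{*}X_{it}^{*\prime})\big] = \Omega\otimes\Sigma_t$ under the assumed independence are the right ingredients. The one caveat worth making explicit is that the nonstationarity of $X_{2it}$ in $t$ plays no role here precisely because $t$ is held fixed; this is why $\Sigma_t$ (and hence the asymptotic variance) carries a $t$ subscript, a point the paper emphasizes immediately after the proposition and which your fixed-$t$ LLN/CLT framework accommodates without modification.
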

Notice that since the regressor $X_{2it}$ is nonstationary when viewed as a time series, the asymptotic variance of $\sqrt{N} \left( \hat{ \gamma }_{N,t} - \gamma \right)$ depends on the point in time where the cross-section is obtained. We consider the cross-section estimator of $\gamma_2$ defined as the submatrix of $\hat{ \gamma }_{N,t}$ corresponding to the regressor $X_{2it}$. To be more specific let this estimator denoted by $\hat{ \gamma }_{2N,t}$ be the last $k_2$ rows in $\hat{ \gamma }_{N,t}$ and let $\Sigma^{22t}$ be the lower $k_2 \times k_2$ diagonal block matrix of $ \Sigma_t^{-1}$, that is, 
\begin{align}
\Sigma^{22t} = \left( \Sigma_{22t} - \Sigma_{21,t} \Sigma_{11,t}^{-1}  \Sigma_{12,t}^{-1} \right)^{-1}
\end{align}  
where $\Sigma_{t}$ is decomposed according to $X_{1it}$ and  $X_{2it}$ as below
\begin{align}
\Sigma_{t} 
= 
\begin{pmatrix}
\Sigma_{11,t} & \Sigma_{12,t} \\
\Sigma_{21,t} & \Sigma_{22,t} 
\end{pmatrix}
\end{align}
Then according to Proposition 1 the limiting distribution of $\hat{ \gamma }_{2,N,t}$ is given by 
\begin{align}
\sqrt{N} \left( \hat{ \gamma }_{2,N,t} - \gamma_2 \right) \to_w \mathcal{N} \left( 0, \Omega \otimes \Sigma^{22t} \right) \ \text{as} \ N \to \infty. 
\end{align}

\begin{assumption}
For $a \in \mathbb{R}$ the diagonal matrix $F_t$ is defined in the following way: 
\begin{align}
F_t 
=
\begin{pmatrix}
I_{k_1}  &  0 \\
0        &  t^{a} I_{k_2}
\end{pmatrix}
\end{align}
and the following condition is satisfied
\begin{align}
\underset{ t \to \infty }{ \text{lim} } \left( F_t \Sigma_t F_t \right), \ \ \text{is positive definite}. 
\end{align}
\end{assumption}

\newpage 

\subsection{Functional Coefficient Panel Modeling with smoothing covariates}

Consider the fixed effects functional coefficient panel data model (see, \cite{phillips2022functional}) 
\begin{align}
y_{it} = \alpha_i + \beta ( z_{it} )^{\prime} x_{it} + u_{it}, \ \ \ i = 1,...,N, \ \ t = 1,...,R    
\end{align}
where $x_{it}$ is a $p-$vector of regressors, $z_{it}$ is a $q-$vector of covariates that determine the (random) coefficients $\beta( z_{it} ) = \big( \beta_1 ( z_{it} ),...,  \beta_p ( z_{it} ) \big)$, the $\alpha_i$ are individual fixed effects, and the error $u_{it}$ has zero mean and finite variance $\sigma_u^2$. Thus, we focus on the case where both $x_{it}$ and $z_{it}$ are exogenous. 

Moreover, \cite{phillips2022functional} developed asymptotic theory for the estimator $\hat{\beta}$ in various settings depending on whether $N$ is fixed or $N \to \infty$ and whether $\eta =0$ or $\eta \neq 0$. In all cases, it is presumed that $T \to \infty$. Results with $N \to \infty$ include both sequential limit $( N, T )_{ \mathsf{seq} } \to \infty$ theory, where $T \to \infty$ followed by $N \to \infty$, and joint limit $( N, T ) \to \infty$ theory, where $T, N$ pass to infinity together. Joint limit theory is obtained by following the double indexed limit theory but our results provide an important extension that covers cases of multiple convergence rates and possibly degenerate limit distributions. In consequence, the divergence rate of the cross-section sample size $N$ needs to be controlled in order to control the random bias contributed by estimation of $\beta_0 ( z_t )$ (see, \cite{phillips2022functional}). 

\subsubsection{Testing constancy of the functional coefficients}

Statistical inference for the econometric specification of \cite{phillips2022functional} corresponds to  testing specific parametric forms of functional coefficients. Thus, the relevant hypothesis concerning the functional coefficient $\beta(z)$ is whether this vector of coefficient functions can be treated as a constant vector. In particular, tests of such hypotheses can be constructed by examining the discrepancy between the nonparametric estimate of $\beta_0$ and parametric estimate of $\beta_0$.  To distinguish the alternative from the null we further require some conditions, so that the function $g(z)$ is not a constant function. This kind of local alternative is commonly used in the study of nonparametric and semiparametric inference involving stationary and nonstationary data.  Moreover, to establish joint asymptotics for $\beta$ as $( N, T ) \to  \infty$ when $\eta \neq 0$, \cite{phillips2022functional}, take into account of the singularity that arises in the limiting signal matrix in the passage to joint asymptotics.  

\begin{remark}
Notice that the \textit{i.i.d} assumption on $\epsilon_t$ can be relaxed to allow for martingale difference innovations and to allow for some mild heterogeneity in the innovations without disturbing the limit theory in a material way. Denote the long-run variance of $\left\{ u_t \right\}_{ t \geq 1 }$ as $\boldsymbol{\Omega}_u = \sum_{ h = - \infty }^{ + \infty } \boldsymbol{\Sigma}_{uu}(h)$. Then, from the Wold decomposition, we have that $\boldsymbol{\Omega}_u = \boldsymbol{D}(1) \boldsymbol{\Sigma}_{ \epsilon \epsilon}    \boldsymbol{D}(1)^{\prime}$, which is positive definite because $\boldsymbol{D}(1)$ has full rank and $\boldsymbol{\Sigma}_{ \epsilon \epsilon}$ is positive definite. The fourth moment assumption is needed for the limit distribution of sample autocovariances in the case of misspecified transient dynamics (see, \cite{phillips2022functional}).   
\end{remark}
As expected, under general weak dependence assumptions on $u_t$, the simple reduced rank regression models are susceptible to the effects of potential misspecification in the transient dynamics. These effects bear on the stationary components in the system. In particular, due to the centering term, both the OLS estimator and the shrinkage estimator are asymptotically biased (see, \cite{phillips2022functional}).

\newpage

\subsection{A Panel Clustering Approach to Analyzing Bubble Behaviour}

Using an empirical illustration we can demonstrate the need to implementing a different cluster-based structure when regressors are near-unit root; in the case when there are persistent data and there is no correction due to endogeneity and high persistence. Assuming that there is a known cluster structure a common approach in the literature is to use a clustering algorithm. Current methodologies presented in the literature can be compared with new panel tests that make use of clustering individual time series into common groups reveal the additional discriminatory power obtained by grouping. Specifically, the clustered panel $t-$test introduced by \cite{liu2022panel} help to diagnose mildly explosive price behaviour in US city housing markets where individual time series tests reveal no evidence of such behaviour, confirming this way the discriminatory power gains that arise from cross section aggregation.

\subsubsection{Model Structure}

In order to capture explosive and mildly explosive behaviour in panels we used the following data generating process based on the time series model of PM (2007). 
\begin{align}
y_{it} &= \mu_i + \rho_{ \mathsf{g}_i } y_{i,t-1} + u_{it}, \ \ i = 1,.., n, \ t = 1,...,T,
\\
\rho_{\mathsf{g}_i} &= \left(  1 + \frac{ c_{ \mathsf{g}_i } }{ n^{\gamma}  }  \right)
\end{align}
Notice that the exponent rate $\gamma \in (0,1)$ and the scale coefficients $c_{ \mathsf{g}_i }$, both influence the extend of departure of the autoregressive coefficients $\rho_{ \mathsf{g}_i }$ from unity, and $\mathsf{g}_i$ denotes the group membership of individual $i$, for which the group structure is defined late. For nonstationary data of each cluster, it holds that the innovations $u_{it}$, follow a stationary linear process for each $i$ and the various variance estimates
\begin{itemize}
    \item[(i).] Long-run variances: $\bar{\omega}_i^2 = \sum_{h = - \infty}^{+\infty} \mathbb{E} \big( u_{it} u_{i,t-h} \big)$.

    \item[(ii).] One-sided long-run covariances: $\bar{\lambda}_i^2 = \sum_{h = 1}^{+\infty} \mathbb{E} \big( u_{it} u_{i,t-h} \big)$.

    \item[(iii).] Variances: $\bar{\sigma}_{iu}^2 = \mathbb{E} \big( u_{it}^2 \big)$, such that $ \bar{\omega}_i^2 = 2  \bar{\lambda}_i + \bar{\sigma}_{iu}^2$ for each individual unit $i$.
    
\end{itemize}

Latent group membership of the $\rho_{ \mathsf{g}_i }$ arises through the localizing scale parameters $\left\{ c_{ \mathsf{g}_i }  \right\}_{ i = 1}^n$. The framework we adopt lies between a homogeneous panel; in which case $c_{ \mathsf{g}_i }  \equiv c$ for all $i$, and a fully heterogeneous panel; in which case $c_{ \mathsf{g}_i } \neq c_{ \mathsf{g}_{\ell} }$ for  all $i \neq \ell$. In the paper of PCB et al, the authors assume a group structure involving a fixed number $G < n$ of unknown separate groups that are classified according to the scale parameter $c_{ \mathsf{g}_i }$. The group membership variables are given by the $\left\{  \mathsf{g}_i \right\}_{i=1}^n$ which maps individual units such that $i \in \left\{ 1,..., n \right\}$ into specific groups for which $j \in \left\{ 1,..., G \right\}$ with $G < n$ and allows for several possible midly explosive and mildly integrated groups together with a unit root group.

\newpage 

\subsection{SUR Representation of VAR Models with Explosive Roots}

Consider the following VAR-type model with explosive roots (see, \cite{chen2023seemingly})
\begin{align}
\boldsymbol{X}_t = \boldsymbol{R} \boldsymbol{X}_{t-1} + \boldsymbol{u}_t, \ \ \ \text{for} \ \  t = 1,...,n,   
\end{align}
where $\boldsymbol{X}_t$ is a $d-$dimensional vector with $\boldsymbol{X}_t = \left[ x_{1,t},..., x_{d,t}  \right]^{\top}$, while the initial value is set to $x_{i,0} = 0$ for $i \in \left\{ 1,..., k \right\}$ for simplicity.  The residual sequence $\boldsymbol{u}_t = \left[ u_{1,t},..., u_{d,t}  \right]^{\top}$ is assumed to be a martingale difference sequence with resepect to $\mathcal{F}_t = \sigma \big( \boldsymbol{u}_t, \boldsymbol{u}_{t-1},...  \big)$ satisfying
\begin{align}
\mathbb{E} \big[ \boldsymbol{u}_t \boldsymbol{u}_t^{\top}  \big] = \boldsymbol{\Sigma}_u      
\end{align}
with $\mathsf{Cov} ( u_{i,t}, u_{j,t} ) = \sigma_{i,j}$ for $i,j \in \left\{ 1,..., d \right\}$. Then, the autoregressive coefficient matrix is defined as $\boldsymbol{R} = \mathsf{diag} ( \rho_1,..., \rho_d )$. Moreover, we consider two cases: 

\begin{itemize}

\item[1.] \textbf{distinct explosive roots:} $\rho_i > 1$ for $i \in \left\{ 1,..., d \right\}$ and $\rho_i \neq \rho_j$, $i,j \in \left\{ 1,..., d \right\}$.

\item[2.] \textbf{common explosive root:} $\rho_i = \rho > 1$ for $i \in \left\{ 1,.., d \right\}$.
    
\end{itemize}

Notice that the OLS estimator of the above model with a common explosive root is inconsistent. In particular, the standardized sample variance matrix $\sum_{t=1}^n X_t X_t^{\top}$ is asymptotically singular.

\subsubsection{SUR regression estimate and asymptotics}

Let the $i-$th regression model be 
\begin{align}
x_{i,t} = \rho_i x_{i,t-1} + u_{i,t}    
\end{align}

Furthermore, define with $X_i= \big[ x_{i,1},...,  x_{i,n} \big]^{\top}$ which is an $( n \times 1 )$ vector. Let $A = [ \rho_1,..., \rho_d ]^{\top}$ and $U = [ U_1,..., U_d ]^{\top}$ such that 
\begin{align}
X_{-} = 
\begin{bmatrix}
X_{1 n-1}  &  0_{n \times 1}   &  \hdots   &  0_{n \times 1}  
\\  
0_{n \times 1} &  X_{2 n-1}  &   \hdots    &  0_{n \times 1} 
\\
\vdots & \vdots & \ddots & \vdots
\\
0_{n \times 1} & 0_{n \times 1} \hdots  &  \hdots    & X_{d n-1} 
\end{bmatrix}_{ ( n d \times d )  } 
\end{align}
The dependence structure of the model is given by $\mathsf{Var} ( \boldsymbol{U} ) = \boldsymbol{\Sigma}_u \otimes \boldsymbol{I}_n$ is a $\left( nd \times nd \right)$ matrix. 
\begin{align}
\widehat{\boldsymbol{A}}_{sur} =  \big[ \boldsymbol{X}_{n-1} \big( \boldsymbol{\Sigma}_u \otimes \boldsymbol{I}_n \big) \boldsymbol{X}_{n-1} \big]^{-1} \big[ \boldsymbol{X}_{n-1} \big( \boldsymbol{\Sigma}_u \otimes \boldsymbol{I}_n \big) \boldsymbol{X}_{n-1} \big]    
\end{align}
Therefore, in order to facilitate the development of the asymptotic theory, we begin by considering the relevant assumptions for the regressors and the error terms. 

\newpage

\subsection{Panel VAR Models}

The aim of this section is to present an asymptotic theory analysis of the performance of fixed $T$ consistent estimation techniques for PVARX$(1)$ model-based on observations in first differences which can be found in the framework proposed by \cite{juodis2018first}.  

Consider the following PVAR$(1)$ model specification defined as below: 
\begin{align}
\boldsymbol{y}_{i,t} = \boldsymbol{\eta}_{i,t} + \boldsymbol{\Phi} \boldsymbol{y}_{i,t-1} + \boldsymbol{\epsilon}_{i,t}, \ \ \ t = 1,...,N, \ \ \ t = 1,..., T,   
\end{align}
where $\boldsymbol{y}_{i,t}$ is an $( m \times 1 )$ vector and $\boldsymbol{\Phi}_{ m \times m}$ matrix of parameters to be estimated, where $\boldsymbol{\eta}_i$ is an $( m \times 1 )$ vector of fixed effects and $\boldsymbol{\epsilon}_{i,t}$ is an $(m \times 1)$ vector of innovations independent across $i$, with zero mean and constant covariance matrix $\boldsymbol{\Sigma}$. For various empirical applications the PVAR$(1)$ model specification might be two restrictive and incomplete. Therefore, in that case the original model specification can be extended by including strictly exogenous variables, the so-called PVARX$(1)$ model 
\begin{align}
\boldsymbol{y}_{i,t} = \boldsymbol{\eta}_{i} + \boldsymbol{\Phi} \boldsymbol{y}_{i,t-1} +    \boldsymbol{B} \boldsymbol{x}_{i,t}  +  \boldsymbol{\epsilon}_{i,t}, \ \ \ t = 1,...,N, \ \ \ t = 1,..., T,    
\end{align}
where $\boldsymbol{x}_{i,t}$ is a $( k \times 1 )$ vector of strictly exogenous regressors and $\boldsymbol{B}$ is an $( m \times k )$ unknown parameter matrix. Furthermore, econometric models with group specific spatial dependence can be formulated as a reduced form PVARX$(1)$ (see,  \cite{kripfganz2014unconditional} and \cite{verdier2016estimation}).   

\begin{assumption}[Effect stationary initial condition, see \cite{juodis2018first}]
\label{Assumption1}
The initial condition $\boldsymbol{y}_{i,0}$ is said to be effect stationary \textit{iff}
\begin{align}
\mathbb{E} \big[ \boldsymbol{y}_{i,0} | \boldsymbol{\eta}_i \big] = \big( \boldsymbol{I}_m - \boldsymbol{\Phi}_0 \big)^{-1} \boldsymbol{\eta}_i,    
\end{align}
implying that the process $\left\{ \boldsymbol{y}_{i,t} \right\}_{t=0}^T$ based on the data generating process that corresponds to a PVAR$(1)$ model is \textit{effect stationary} such that $\mathbb{E} \big[ \boldsymbol{y}_{i,t} | \boldsymbol{\eta}_i \big] = \mathbb{E} \big[ \boldsymbol{y}_{i,0} | \boldsymbol{\eta}_i \big]$ for all $\rho ( \boldsymbol{\Phi}_0 ) < 1$.  
\end{assumption}

\begin{remark}
Note that Assumption \ref{Assumption1} indicates a \textit{conditional moment stationarity} condition. On the other hand, \textit{effect nonstationarity} does not imply that the process $\left\{ \boldsymbol{y}_{i,t} \right\}_{t=0}^T$ is \textit{mean nonstationary}, such that $\mathbb{E} [ \boldsymbol{y}_{i,t} ] \neq \mathbb{E} [ \boldsymbol{y}_{i,0} ]$. In particular, mean nonstationarity is property of the underline stochastic process that crucially depends on $\mathbb{E} [  \boldsymbol{\eta}_i ]$ (see,  \cite{juodis2018first}).      
\end{remark}

\begin{definition}[Covariance stationary initial condition, see \cite{juodis2018first}]
The initial condition $\boldsymbol{y}_{i,0}$ is said to be covariance stationar $\textit{iff}$ the following two conditions hold: 
\begin{align}
\mathbb{E} \big[ \boldsymbol{y}_{i,0} | \boldsymbol{\eta}_i \big] = \big( \boldsymbol{I}_m - \boldsymbol{\Phi}_0 \big)^{-1} \boldsymbol{\eta}_i, \ \ \ \ \ \mathsf{Var} \big[ \boldsymbol{y}_{i,0} | \boldsymbol{\eta}_i \big]  = \sum_{t=0}^{\infty} \left( \boldsymbol{\Phi}_0^t \right) \boldsymbol{\Sigma}_0 \left( \boldsymbol{\Phi}_0^t \right)^{\top},    
\end{align}
implying that the process is \textit{covariance stationary}, such that the autocovariance function of $\left\{ \boldsymbol{y}_{i,t} \right\}_{t=0}^T$ is not time dependent.    
\end{definition}

\newpage

\subsubsection{OLS in First Differences}

The FD transformation can be employed in order to remove any individual effects that are present in the original model in levels. Therefore, the econometric specification is as below: 
\begin{align}
\Delta \boldsymbol{y}_{i,t} = \boldsymbol{\Phi} \Delta \boldsymbol{y}_{i,t-1} + \boldsymbol{B} \Delta \boldsymbol{x}_{i,t} + \Delta \boldsymbol{\varepsilon}_{i,t}    
\end{align}
Define the following variables as below:
\begin{align}
\Delta \boldsymbol{w}_{i,t} \equiv 
\begin{pmatrix}
\Delta \boldsymbol{y}_{i,t-1}
\\
\Delta \boldsymbol{x}_{i,t}
\end{pmatrix},
\ \ \ 
\boldsymbol{S}_N \equiv \left( \frac{1}{N} \sum_{i=1}^T \sum_{t=1}^T \Delta \boldsymbol{w}_{i,t} \boldsymbol{w}_{i,t}^{\top} \right), 
\end{align}
and denote with 
\begin{align}
\boldsymbol{\Sigma}_W = \underset{ N \to \infty  }{ \mathsf{plim} }  \boldsymbol{S}_N, \ \ \  \boldsymbol{\mathcal{Y}} \equiv \big( \boldsymbol{\Phi}, \boldsymbol{B} \big).   
\end{align}
Therefore, after pooling observations for all $t$ and $i$, we define the pooled panel FD estimator (FD-OLS) 
\begin{align}
\widehat{\boldsymbol{\mathcal{Y}}} = \boldsymbol{S}_N^{-1} \left( \frac{1}{N} \sum_{i=1}^T \sum_{t=1}^T \Delta \boldsymbol{w}_{i,t} \boldsymbol{w}_{i,t}^{\top} \right)     
\end{align}

\begin{remark}
According to \cite{juodis2018first}, similarly to the conventional FE transformation, the FD transformation introduces correlation between the explanatory variable $\Delta \boldsymbol{y}_{i,t-1}$ and the modified error term $\Delta \boldsymbol{\epsilon}_{i,t}$. Therefore, this estimator is considered to be inconsistent and an analytic form of the asymptotic bias exists. Specifically, the asymptotic bias is given as below: 
\begin{align}
\underset{ N \to \infty  }{ \mathsf{plim} }  \left( \widehat{\boldsymbol{\mathcal{Y}}} - \widehat{\boldsymbol{\mathcal{Y}}}_0 \right)^{\prime} = - ( T - 1 ) \boldsymbol{\Sigma}_W^{-1} 
\begin{bmatrix}
\boldsymbol{\Sigma}_0
\\
\boldsymbol{0}_{ k \times m }
\end{bmatrix}
\end{align}
\end{remark}

\subsubsection{No Exogenous Regressors}

In the econometric model without exogenous regressors the FD-OLS estimator is given as below: 
\begin{align}
\hat{\boldsymbol{\Phi}}_{\Delta} 
= 
\left( \frac{1}{N} \sum_{i=1}^N \sum_{t=1}^T \Delta \boldsymbol{y}_{i,t}  \Delta \boldsymbol{y}_{i,t}^{\prime} \right) \left( \frac{1}{N} \sum_{i=1}^N \sum_{t=1}^T \Delta \boldsymbol{y}_{i,t-1}  \Delta \boldsymbol{y}_{i,t-1}^{\prime} \right)^{-1}.  
\end{align}
Moreover, assuming that $\boldsymbol{y}_{i,0}$ is covariance stationary and as a consequence it holds that
\begin{align}
\boldsymbol{\Sigma}_{W} \equiv (T-1) \left\{  \boldsymbol{\Sigma}_0 + ( \boldsymbol{I}_m - \boldsymbol{\Phi}_0 ) \left[ \sum_{t=0}^{\infty} \left( \boldsymbol{\Phi}_0^t \right) \boldsymbol{\Sigma}_0  \left( \boldsymbol{\Phi}_0^t \right)^{\top} \right] ( \boldsymbol{I}_m - \boldsymbol{\Phi}_0 )^{\top} \right\}.    
\end{align}

\newpage

\begin{proposition}[Asymptotic Normality FDLS, see \cite{juodis2018first}]
Let DGP for covariance stationary $\boldsymbol{y}_{i,t}$ satisfy extensibility condition together with conditions of the abive Proposition. Then, it holds that 
\begin{align}
\sqrt{N} \left(  \hat{\boldsymbol{\phi}}_{fdls} - \boldsymbol{\phi}_0 \right) \overset{d}{\to} \mathcal{N}_m \big(  \boldsymbol{0}_{m^2}, \mathcal{S}  \big),   
\end{align}
where 
\begin{align}
\mathcal{S} &\equiv \big( \boldsymbol{\Sigma}_W^{-1} \otimes \boldsymbol{I}_m \big) \boldsymbol{\Xi}  \big( \boldsymbol{\Sigma}_W^{-1} \otimes \boldsymbol{I}_m \big), \ \ \ \boldsymbol{\Xi} \equiv \underset{ N \to \infty  }{ \mathsf{plim} }  \frac{1}{N} \sum_{i=1}^N \mathsf{vec} (  \mathcal{A}_i ) \mathsf{vec} (  \mathcal{A}_i )^{\top},
\\
\mathcal{A}_i  &\equiv   \sum_{t=1}^T  \left[ 2 \Delta \boldsymbol{y}_{i,t} + ( \boldsymbol{I}_m - \boldsymbol{\Phi}_0 ) \Delta \boldsymbol{y}_{i,t-1} \right] \Delta \boldsymbol{y}_{ i,t-1}^{\top}.   
\end{align}
\end{proposition}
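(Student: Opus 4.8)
The plan is to treat the result as fixed-$T$, large-$N$ limit theory and to exploit the cross-sectional independence of $\left\{ \boldsymbol{y}_{i,t} \right\}_{t}$ across $i$, so that the limit distribution is driven by a central limit theorem along the index $i$ while the normalising matrix is pinned down by a law of large numbers. First I would make explicit that the FDLS estimator is defined through the moment equation $\sum_{i=1}^N \mathcal{A}_i ( \hat{\boldsymbol{\Phi}}_{fdls} ) = \boldsymbol{0}$, where $\mathcal{A}_i ( \boldsymbol{\Phi} ) := \sum_{t=1}^T \big[ 2 \Delta \boldsymbol{y}_{i,t} + ( \boldsymbol{I}_m - \boldsymbol{\Phi} ) \Delta \boldsymbol{y}_{i,t-1} \big] \Delta \boldsymbol{y}_{i,t-1}^{\top}$, so that the $\mathcal{A}_i$ appearing in the statement is exactly $\mathcal{A}_i ( \boldsymbol{\Phi}_0 )$. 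Vectorising with $\hat{\boldsymbol{\phi}}_{fdls} = \mathsf{vec} ( \hat{\boldsymbol{\Phi}}_{fdls} )$ and using the identity $\mathsf{vec} ( \boldsymbol{\Phi} \Delta \boldsymbol{y}_{i,t-1} \Delta \boldsymbol{y}_{i,t-1}^{\top} ) = \big( \Delta \boldsymbol{y}_{i,t-1} \Delta \boldsymbol{y}_{i,t-1}^{\top} \otimes \boldsymbol{I}_m \big) \boldsymbol{\phi}$ shows that the vectorised moment is affine in $\boldsymbol{\phi}$; hence the estimator is available in closed form and the mean-value expansion below is exact, with a Jacobian that does not depend on $\boldsymbol{\phi}$.

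The decisive step, and the one I expect to be the main obstacle, is the orthogonality property $\mathbb{E} [ \mathcal{A}_i ] = \boldsymbol{0}$ at the true value, which is what separates the consistent FDLS moment from the biased FD-OLS normal equations flagged in the earlier Remark. Substituting $\Delta \boldsymbol{y}_{i,t} = \boldsymbol{\Phi}_0 \Delta \boldsymbol{y}_{i,t-1} + \Delta \boldsymbol{\epsilon}_{i,t}$ gives the reduction $2 \Delta \boldsymbol{y}_{i,t} + ( \boldsymbol{I}_m - \boldsymbol{\Phi}_0 ) \Delta \boldsymbol{y}_{i,t-1} = ( \boldsymbol{I}_m + \boldsymbol{\Phi}_0 ) \Delta \boldsymbol{y}_{i,t-1} + 2 \Delta \boldsymbol{\epsilon}_{i,t}$, so that, summing over $t$, the mean of $\mathcal{A}_i$ collapses to $( \boldsymbol{I}_m + \boldsymbol{\Phi}_0 ) \mathbb{E} [ \Delta \boldsymbol{y}_{i,t-1} \Delta \boldsymbol{y}_{i,t-1}^{\top} ] + 2 \mathbb{E} [ \Delta \boldsymbol{\epsilon}_{i,t} \Delta \boldsymbol{y}_{i,t-1}^{\top} ]$. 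I would evaluate the first cross-moment from the stationary autocovariances of the differenced process and the second using the martingale-difference property, which forces $\mathbb{E} [ \Delta \boldsymbol{\epsilon}_{i,t} \Delta \boldsymbol{y}_{i,t-1}^{\top} ] = - \boldsymbol{\Sigma}$, then close the argument by invoking the covariance-stationary initial condition of the preceding Definition together with the discrete Lyapunov relation $\boldsymbol{\Gamma}_0 = \boldsymbol{\Phi}_0 \boldsymbol{\Gamma}_0 \boldsymbol{\Phi}_0^{\top} + \boldsymbol{\Sigma}$, where $\boldsymbol{\Gamma}_0$ denotes the stationary variance of $\boldsymbol{y}_{i,t}$. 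The scalar prototype $( 1 + \phi ) \cdot 2 \sigma^2 / ( 1 + \phi ) - 2 \sigma^2 = 0$ makes the cancellation transparent, whereas in the genuine VAR case the non-commutativity of $\boldsymbol{\Phi}_0$ with the stationary covariance demands the careful Kronecker bookkeeping that is the true difficulty here.

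With $\mathbb{E} [ \mathsf{vec} ( \mathcal{A}_i ) ] = \boldsymbol{0}$ established, I would run the exact expansion of the vectorised moment about $\boldsymbol{\phi}_0$. A cross-sectional law of large numbers gives $\frac{1}{N} \sum_{i=1}^N \sum_{t=1}^T \big( \Delta \boldsymbol{y}_{i,t-1} \Delta \boldsymbol{y}_{i,t-1}^{\top} \otimes \boldsymbol{I}_m \big) \overset{p}{\to} \boldsymbol{\Sigma}_W \otimes \boldsymbol{I}_m$, which is nonsingular once $\boldsymbol{\Sigma}_W$ is positive definite, so that
\begin{align*}
\sqrt{N} \left( \hat{\boldsymbol{\phi}}_{fdls} - \boldsymbol{\phi}_0 \right) = \big( \boldsymbol{\Sigma}_W^{-1} \otimes \boldsymbol{I}_m \big) \frac{1}{ \sqrt{N} } \sum_{i=1}^N \mathsf{vec} ( \mathcal{A}_i ) + o_p(1).
\end{align*}
The summands $\mathsf{vec} ( \mathcal{A}_i )$ are independent across $i$, have mean zero, and admit a limiting variance $\boldsymbol{\Xi} = \mathsf{plim} \, \frac{1}{N} \sum_{i=1}^N \mathsf{vec} ( \mathcal{A}_i ) \mathsf{vec} ( \mathcal{A}_i )^{\top}$; the moment bound $\sup_{t} \mathbb{E} \| \boldsymbol{\epsilon}_{i,t} \|^{\delta + 2} < \infty$ verifies the Lyapunov condition, so the Lindeberg--L\'evy/Lyapunov central limit theorem yields $\frac{1}{ \sqrt{N} } \sum_{i=1}^N \mathsf{vec} ( \mathcal{A}_i ) \overset{d}{\to} \mathcal{N} ( \boldsymbol{0}, \boldsymbol{\Xi} )$. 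A final application of Slutsky's theorem transmits this through the fixed matrix $\boldsymbol{\Sigma}_W^{-1} \otimes \boldsymbol{I}_m$ and delivers the sandwich covariance $\mathcal{S} = \big( \boldsymbol{\Sigma}_W^{-1} \otimes \boldsymbol{I}_m \big) \boldsymbol{\Xi} \big( \boldsymbol{\Sigma}_W^{-1} \otimes \boldsymbol{I}_m \big)$, which is the claimed limit.
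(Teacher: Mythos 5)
Your overall architecture --- casting FDLS as the root of the affine moment equation $\sum_{i=1}^N \mathcal{A}_i(\hat{\boldsymbol{\Phi}}) = \boldsymbol{0}$, establishing $\mathbb{E}[\mathcal{A}_i(\boldsymbol{\Phi}_0)] = \boldsymbol{0}$, and then combining the exact linearisation with a cross-sectional law of large numbers for the Jacobian, a Lyapunov central limit theorem for $N^{-1/2}\sum_i \mathsf{vec}(\mathcal{A}_i)$, and Slutsky --- is sound, and it is more direct than what the paper records: the paper's own ``proof'' only introduces within-group and concentrated variables and a concentrated log-likelihood (the transformed-ML machinery of the cited reference) and never actually derives the stated limit, so your moment-based route is a legitimate alternative to that sketch.

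The genuine gap is at the orthogonality step. Writing $\boldsymbol{\Gamma}_0$ for the stationary variance and using $2\Delta\boldsymbol{y}_{i,t} + (\boldsymbol{I}_m - \boldsymbol{\Phi}_0)\Delta\boldsymbol{y}_{i,t-1} = (\boldsymbol{I}_m + \boldsymbol{\Phi}_0)\Delta\boldsymbol{y}_{i,t-1} + 2\Delta\boldsymbol{\epsilon}_{i,t}$, together with $\mathbb{E}\big[\Delta\boldsymbol{\epsilon}_{i,t}\Delta\boldsymbol{y}_{i,t-1}^{\top}\big] = -\boldsymbol{\Sigma}_0$ and $\mathbb{E}\big[\Delta\boldsymbol{y}_{i,t-1}\Delta\boldsymbol{y}_{i,t-1}^{\top}\big] = 2\boldsymbol{\Gamma}_0 - \boldsymbol{\Phi}_0\boldsymbol{\Gamma}_0 - \boldsymbol{\Gamma}_0\boldsymbol{\Phi}_0^{\top}$, the Lyapunov relation $\boldsymbol{\Gamma}_0 = \boldsymbol{\Phi}_0\boldsymbol{\Gamma}_0\boldsymbol{\Phi}_0^{\top} + \boldsymbol{\Sigma}_0$ reduces the per-period expectation of $\mathcal{A}_i$ to
\begin{align*}
(\boldsymbol{I}_m + \boldsymbol{\Phi}_0)\,\mathbb{E}\big[\Delta\boldsymbol{y}_{i,t-1}\Delta\boldsymbol{y}_{i,t-1}^{\top}\big] - 2\boldsymbol{\Sigma}_0 = \boldsymbol{\Phi}_0(\boldsymbol{I}_m - \boldsymbol{\Phi}_0)\boldsymbol{\Gamma}_0 - (\boldsymbol{I}_m - \boldsymbol{\Phi}_0)\boldsymbol{\Gamma}_0\boldsymbol{\Phi}_0^{\top},
\end{align*}
which vanishes in the scalar and diagonal (commuting) cases but not for a generic $\boldsymbol{\Phi}_0$, since nothing forces $\boldsymbol{\Phi}_0$ to intertwine with $(\boldsymbol{I}_m - \boldsymbol{\Phi}_0)\boldsymbol{\Gamma}_0$ in the required way. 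So covariance stationarity plus the Lyapunov identity do \emph{not} deliver $\mathbb{E}[\mathcal{A}_i]=\boldsymbol{0}$; the ``careful Kronecker bookkeeping'' you defer to terminates in exactly this non-vanishing residual. That is what the \emph{extensibility condition} in the hypothesis is for --- it is the extra restriction under which the scalar Han--Phillips moment condition extends to the VAR case --- and your argument never invokes it. The fix is to let the extensibility condition be the assumption that annihilates the displayed residual (equivalently, that forces $(\boldsymbol{I}_m + \boldsymbol{\Phi}_0)\mathbb{E}[\Delta\boldsymbol{y}_{i,t-1}\Delta\boldsymbol{y}_{i,t-1}^{\top}] = 2\boldsymbol{\Sigma}_0$), after which the rest of your proof goes through. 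A secondary point: since $\mathcal{A}_i$ is quadratic in the data, verifying the Lyapunov condition for the CLT requires finite $4+\delta$ moments of $\boldsymbol{\epsilon}_{i,t}$, not the $2+\delta$ bound you cite.
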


\begin{proof}
Consider the following quantities:
\begin{align}
\bar{\boldsymbol{y}}_{i T-1} = \frac{1}{T} \sum_{t=1}^T \boldsymbol{y}_{i,t-1}  \ \ \ \text{and} \ \ \   \bar{\boldsymbol{y}}_{i T} = \frac{1}{T} \sum_{t=1}^T \boldsymbol{y}_{i,t} 
\end{align}
Moreover, we consider within group transformations, and thus the variables denoted by $\tilde{x}$ correspond to variables after within group transformation, such that, $\tilde{\boldsymbol{y}}_{i,t} = \boldsymbol{y}_{i,t} - \bar{\boldsymbol{y}}_i$, while $\ddot{\boldsymbol{x}}$ is used for variables after a quasi-averaging transformation. 

Define the following concentrated variables: 
\begin{align}
\dot{\boldsymbol{y}}_i 
&\equiv
\ddot{\boldsymbol{y}}_i - \left( \sum_{i=1}^N \ddot{\boldsymbol{y}}_i \Delta \boldsymbol{X}_i^{\top} \right) \left( \sum_{i=1}^N  \Delta \boldsymbol{X}_i \Delta \boldsymbol{X}_i^{\top} \right)^{-1}  \Delta \boldsymbol{X}_i,
\\
\dot{\boldsymbol{y}}_{iT-1}  
&\equiv
\ddot{\boldsymbol{y}}_{iT-1} - \left( \sum_{i=1}^N \ddot{\boldsymbol{y}}_{i,T-1} \Delta \boldsymbol{X}_i^{\top}  \right) \left( \sum_{i=1}^N  \Delta \boldsymbol{X}_i\Delta \boldsymbol{X}_i^{\top} \right)^{-1}  \Delta \boldsymbol{X}_i,
\\
\boldsymbol{y}_{i,t}^{\star}  
&\equiv
\tilde{\boldsymbol{y}}_{i,t} - \left( \sum_{i=1}^N \sum_{t=1}^T \tilde{\boldsymbol{y}}_{i,t} \tilde{\boldsymbol{x}}^{\top}_{i,t}  \right) \left( \sum_{i=1}^N \sum_{t=1}^T \tilde{\boldsymbol{x}}_{i,t} \tilde{\boldsymbol{x}}^{\top}_{i,t} \right)^{-1}  \tilde{\boldsymbol{x}}_{i,t},
\\
\boldsymbol{y}_{i,t-1}^{\star}  
&\equiv
\tilde{\boldsymbol{y}}_{i,t-1} - \left( \sum_{i=1}^N \sum_{t=1}^T \tilde{\boldsymbol{y}}_{i,t-1} \tilde{\boldsymbol{x}}^{\top}_{i,t}  \right) \left( \sum_{i=1}^N \sum_{t=1}^T \tilde{\boldsymbol{x}}_{i,t} \tilde{\boldsymbol{x}}^{\top}_{i,t} \right)^{-1}  \tilde{\boldsymbol{x}}_{i,t},
\end{align}
Therefore, using the the above concentrated variables, the concentrated log-likelihood function for $\boldsymbol{\vartheta}_{co} = \big( \boldsymbol{\varphi}^{\top}, \boldsymbol{\sigma}^{\top}, \boldsymbol{\theta}^{\top} \big)^{\top}$ is given by the following expression 
\begin{align*}
\ell_{co} \left( \boldsymbol{\vartheta}_{co} \right) 
&= 
- \frac{N}{2} \left\{  (T-1) \mathsf{log} | \boldsymbol{\Sigma} |  + \mathsf{trace} \left[ \boldsymbol{\Sigma}^{-1} \frac{1}{N} \sum_{i=1}^N \sum_{t=1}^T \left( \boldsymbol{y}_{i,t}^{\star}  - \boldsymbol{\Phi} \boldsymbol{y}_{i,t-1}^{\star} \right) \left( \boldsymbol{y}_{i,t}^{\star}  - \boldsymbol{\Phi} \boldsymbol{y}_{i,t-1}^{\star} \right)^{\top} \right] \right\}  
\\
&\ \ \
- \frac{N}{2} \left\{  \mathsf{log} | \boldsymbol{\Theta} |  + \mathsf{trace} \left[ \boldsymbol{\Theta}^{-1} \frac{T}{N} \sum_{i=1}^N \left( \dot{\boldsymbol{y}}_{i}  - \boldsymbol{\Phi} \dot{\boldsymbol{y}}_{i,T-1} \right) \left( \dot{\boldsymbol{y}}_{i}  - \boldsymbol{\Phi} \dot{\boldsymbol{y}}_{i,T-1} \right)^{\top} \right] \right\}.
\end{align*}
\end{proof}

\newpage

\section{Network Panel Data Model Estimation}
\label{Section5}

A large stream of literature has proposed econometric methodologies for capturing cross sectional dependence and heterogeneity via the use of dynamic panel models. Firstly, the particular literature has been significantly developed with the seminal paper of \cite{pesaran2006estimation}. Moreover, \cite{kapetanios2014nonlinear} present a framework for nonlinear panel models with cross-sectional dependence. Recently, in the spatial econometrics literature various methodologies have been proposed to model both spatial dependence and cross-sectional effects. \cite{Olmo2023} propose a network regression model with an estimated interaction matrix which incorporates both the cross-sectional dependence as well as the network dependence in the form of a metric distance between the set of regressors.

\subsection{Nonlinear Panel Data Model  with Cross-Sectional Dependence}

\subsubsection{Econometric Model}

We assume a sample of $T$ observations for $N$ agents. Then, we specify the following model
\begin{align}
x_{i,t} = \frac{ \rho }{ m_{i,t} } \sum_{j=1}^N \ell \big( \left| x_{i,t-1} - x_{j,t-1} \right|  \leq r  \big) x_{j,t-1} + \epsilon_{i,t}
\end{align}
for $t = 2,...,T$ and $i = 1,...,N$, where
\begin{align}
m_{i,t} = \sum_{j=1}^N \ell \big( \left| x_{i,t-1} - x_{j,t-1} \right|  \leq r  \big)  
\end{align}
This specification implies that $x_{i,t}$ is influenced by the cross-sectional average of a selection of $x_{j,t-1}$ and that in particular that the relevant $x_{j,t-1}$ are those that lie closest to $x_{i,t-1}$. The model involves a $K$ nearest neighbour mechanism, however all neighbours $x_{j,t-1}$ within a given threshold $r$ contribute equally. The formulation aims to capture the intuition that people are affected by those with whom they share common views or behaviour, reflecting the fact that similar agents are affected by similar effects (\cite{kapetanios2014nonlinear}, \cite{moon2017dynamic}).

\subsubsection{Estimation Methodology}

In this section, we discuss the estimation methodology of the aformentioned nonlinear model proposed by \cite{kapetanios2014nonlinear}. We consider the standard estimation procedure for a threshold model, whereby a grid of values for $r$ is constructed. Then, for all values on that grid the model is estimated by least squares to obtain estimates of the autoregression parameter, $\rho$.  Specifically, denoting with
\begin{align}
\widetilde{x}_{i,t} = \frac{1}{ m_{i,t} } \sum_{j = 1}^N \ell \big( \left| x_{i,t-1} - x_{j,t-1} \right|  \leq r  \big) \widetilde{x}_{j, t-1}  
\end{align}

\newpage 

with $\widetilde{x}_{i} = \left( \widetilde{x}_{i,1}, ....,     \widetilde{x}_{i,T - 1} \right)^{ \prime }$. The value of $r$ that minimizes the sum of of squared residuals is 
\begin{align}
\frac{1}{ N T } \sum_{ i=1 }^N \sum_{ t=1 }^T \hat{ \epsilon }_{i,t}^2 \left( \rho, r \right),
\ \ \hat{ \epsilon }_{i,t} \left( \rho, r \right) x_{i,t} - \frac{\rho}{ m_{i,t} }  \sum_{ j=1 }^N  \ell \big( \left| x_{i,t-1} - x_{j,t-1} \right|  \leq r  \big) x_{j,t-1}, 
\end{align}
\begin{assumption}[\cite{kapetanios2014nonlinear}]
$\epsilon_t$ is an $\textit{i.i.d}$ across $t$ and independent across $i$. Then, $\mathbb{E} \left( \epsilon_{i,t}   \right) = \sigma_{ \epsilon_i }^2$ and $\mathbb{E} \left( \epsilon_{i,t}^4 \right) < \infty$. For all $i$, the density of $\epsilon_{i,t}$ is bounded and positive over all compact subsets of $\mathbb{R}$. 
\end{assumption}

\begin{theorem}[\cite{kapetanios2014nonlinear}]
\label{theorem1111}
Let Assumption 1 hold, for $\epsilon_{i,t}$ in (2). Then, as long as $| \rho | < 1$, the least squares estimator of $\left( \rho, r \right)$ is consistent as $N$, $T \to \infty$. 
\end{theorem}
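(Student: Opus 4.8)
The plan is to treat $(\hat\rho,\hat r)$ as an extremum estimator minimizing the sample criterion
\begin{align}
Q_{NT}(\rho,r) = \frac{1}{NT}\sum_{i=1}^N\sum_{t=1}^T \big( x_{i,t} - \rho\, g_{i,t}(r) \big)^2, \qquad g_{i,t}(r) := \frac{1}{m_{i,t}(r)}\sum_{j=1}^N \ell\big(|x_{i,t-1}-x_{j,t-1}|\le r\big) x_{j,t-1},
\end{align}
and to follow the standard route for M-estimators: (i) uniform convergence of $Q_{NT}$ to a nonrandom limit $Q$ over the parameter space $\Theta=\{|\rho|\le\bar\rho\}\times[\underline r,\bar r]$ with $\underline r>0$, and (ii) unique identification of $(\rho_0,r_0)$ as the minimizer of $Q$. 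Because $\rho$ enters quadratically, I would first concentrate it out: for each $r$ the minimizer $\hat\rho(r)=\big(\sum_{i,t} x_{i,t} g_{i,t}(r)\big)/\big(\sum_{i,t} g_{i,t}(r)^2\big)$ is in closed form, reducing the problem to minimizing the profiled criterion $r\mapsto Q_{NT}(\hat\rho(r),r)$ over the compact interval $[\underline r,\bar r]$. The single genuine regressor is the neighbour average $g_{i,t}(r)$, and the essential observation is that cross-sectional averaging turns it, as $N\to\infty$, into a mean-field (ratio-of-expectations) object depending only on $x_{i,t-1}$ and the stationary cross-sectional law of $x$.

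For the probabilistic preliminaries I would first record the geometric contraction that drives everything: since the neighbour average is a convex combination of values $x_{j,t-1}$ all lying within distance $r$ of $x_{i,t-1}$, one has $|g_{i,t}(r)-x_{i,t-1}|\le r$, so $|x_{i,t}|\le |\rho|\,|x_{i,t-1}| + |\rho|\bar r + |\epsilon_{i,t}|$ with $|\rho|<1$. This Lyapunov-type drift, together with the finite fourth moment of $\epsilon_{i,t}$, delivers uniformly (in $N,r,i,t$) bounded fourth moments and, for each $i$, strict stationarity and geometric ergodicity of $\{x_{i,t}\}_t$. Next I would pass to the mean-field limit: as $N\to\infty$ the empirical measure of the cross-section $\{x_{j,t-1}\}_j$ concentrates on a deterministic stationary law $\mu$ (the self-consistent fixed point of the dynamics), whence $g_{i,t}(r)$ converges to the deterministic functional $\gamma(x;r)=\mathbb{E}_{X\sim\mu}[X\,\ell(|x-X|\le r)]/\mathbb{E}_{X\sim\mu}[\ell(|x-X|\le r)]$, and conditionally on this mean field the individual series become asymptotically independent across $i$ (propagation of chaos). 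This identifies the limit criterion as $Q(\rho,r)=\mathbb{E}\big[(x_{i,t}-\rho\,\gamma(x_{i,t-1};r))^2\big]$, and the positivity of the innovation density (hence of $\mu$) keeps the denominator $\mathbb{E}_{X\sim\mu}[\ell(|x-X|\le r)]$ bounded away from zero for $r\ge\underline r$.

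I expect the genuine work to lie in establishing $\sup_{(\rho,r)\in\Theta}|Q_{NT}(\rho,r)-Q(\rho,r)|\overset{p}{\to}0$. The obstacle is twofold: the indicator $\ell(|x_{i,t-1}-x_{j,t-1}|\le r)$ is discontinuous in $r$, so a pointwise law of large numbers must be upgraded to a uniform one, and $g_{i,t}(r)$ is a ratio whose small denominator $m_{i,t}(r)$ must be controlled. I would exploit that the family $\{(a,b)\mapsto \ell(|a-b|\le r): r\in[\underline r,\bar r]\}$ is monotone in $r$, hence a VC (Glivenko--Cantelli) class, giving uniform-in-$r$ convergence of both numerator and denominator of $g_{i,t}(r)$ to their mean-field limits; the uniform lower bound on the denominator then transfers this to uniform convergence of $g_{i,t}(r)$ itself and, through the quadratic form and the profiling map, to $Q_{NT}$. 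Handling the joint passage $N,T\to\infty$ requires a maximal/chaining inequality for the double-indexed, within-$i$ dependent but across-$i$ asymptotically independent array, the finite fourth moments bounding the relevant empirical-process increments; this combination of threshold discontinuity with the two-direction mean-field asymptotics is where I anticipate the main difficulty.

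Finally I would show $Q(\rho,r)$ has a unique minimizer at $(\rho_0,r_0)$. Writing $x_{i,t}-\rho\,\gamma(x_{i,t-1};r)=\epsilon_{i,t}+\big(\rho_0\gamma(x_{i,t-1};r_0)-\rho\gamma(x_{i,t-1};r)\big)$ and using $\mathbb{E}[\epsilon_{i,t}\mid x_{i,t-1},\dots]=0$, the criterion decomposes as $Q(\rho,r)=\bar\sigma^2+\mathbb{E}\big[(\rho_0\gamma(\cdot;r_0)-\rho\gamma(\cdot;r))^2\big]$, where $\bar\sigma^2$ is the limiting average innovation variance and does not depend on $(\rho,r)$; hence $(\rho_0,r_0)$ is a minimizer. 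Uniqueness reduces to showing that $\rho\,\gamma(x;r)\equiv\rho_0\,\gamma(x;r_0)$ (as functions of $x$ under $\mu$) forces $(\rho,r)=(\rho_0,r_0)$: provided $\rho_0\neq0$, the threshold $r_0$ is pinned down because $\gamma(\cdot;r)$ genuinely depends on $r$ wherever the density is positive, so distinct thresholds produce distinct conditional means on a set of positive $\mu$-probability, which is where the positivity-of-density assumption is decisive. With uniform convergence and a well-separated unique minimum in hand, the standard extremum-estimator argument yields $(\hat\rho,\hat r)\overset{p}{\to}(\rho_0,r_0)$ as $N,T\to\infty$, completing the proof.
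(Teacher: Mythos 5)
Your proposal follows the same high-level skeleton as the paper's argument — consistency of an extremum estimator via (i) geometric ergodicity and asymptotic stationarity of $\{x_{i,t}\}$, (ii) uniform convergence of the sample criterion to a nonrandom limit, and (iii) unique minimization of that limit at $(\rho^0,r^0)$; these are exactly the conditions C1--C2 the paper records — but the technical devices you use at each step differ from the paper's, and in places your sketch is more explicit than what the paper actually writes down (its proof of this theorem consists of little more than notation, with the substance deferred to the surrounding lemmas). For ergodicity, the paper stacks the cross-section into $x_t^{(N_0)} = \Phi_t^{(N_0)} x_{t-1}^{(N_0)} + \epsilon_t^{(N_0)}$ with $\Phi_{i,j,t} = \tfrac{\rho}{m_{i,t}}\ell(|x_{i,t-1}-x_{j,t-1}|\le r)$ and bounds $\sup_t \lambda_{\max}(\Phi_t^{(N_0)})$ by the row-sum norm, which is less than one when $|\rho|<1$; your scalar Lyapunov bound $|g_{i,t}(r)-x_{i,t-1}|\le r$ is a more elementary route to the same drift condition and has the advantage of giving the moment bounds directly, though the matrix form is what the paper reuses for the higher-order-lag and weighted variants (Lemmas 4 and 6). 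For identification and uniformity, the paper phrases everything through the conditional expectations $\mathbb{E}_{\rho,r}(x_{i,t}\mid t-1)$, imposing a strict-inequality identification condition and a stochastic-equicontinuity condition over shrinking balls $B((\rho^0,r^0),\delta)$; you instead characterize the $N\to\infty$ limit of the neighbour average as a mean-field functional $\gamma(x;r)$ and obtain uniformity in $r$ from the VC property of the indicator family. Your route buys a concrete description of the limit criterion and makes transparent why positivity of the stationary density pins down $r^0$ (and why $\rho^0\neq 0$ is needed for that), at the cost of having to justify the propagation-of-chaos step, which is itself nontrivial for this interacting system; the paper's formulation avoids naming that limit explicitly but leaves the uniform law of large numbers over the discontinuous indicator family essentially asserted rather than proved. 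Both arguments also rely, as you note, on the profiled/concentrated structure in $\rho$ only implicitly; neither requires it.
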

\begin{theorem}[\cite{kapetanios2014nonlinear}]
Let Assumption 1 hold, for $\epsilon_{i,t}$ in (2). Let $\left( \rho^0, r^0 \right)$ denote the true value of $\left( \rho, r \right)$. Then, as long as $| \rho | < 1$, $NT \left( \hat{r} - r^0  \right) = \mathcal{O}_p(1)$. Further, as long as $| \rho | < 1$, $\left( \hat{\rho} - \rho^0 \right)$ has the same asymptotic distribution as if $r^0$ was unknown. 
\end{theorem}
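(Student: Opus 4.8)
The plan is to exploit the profiled (concentrated) structure of the least-squares problem and then run the now-classical two-step argument for threshold-type models (Chan, 1993; Hansen, 2000), adapted to the cross-sectional averaging that defines the regressor. Write the objective as
\[
S_{NT}(\rho,r)=\frac{1}{NT}\sum_{i=1}^N\sum_{t=1}^T\big(x_{i,t}-\rho\,\widetilde{x}_{i,t}(r)\big)^2,\qquad \widetilde{x}_{i,t}(r)=\frac{1}{m_{i,t}(r)}\sum_{j=1}^N\boldsymbol{1}\{|x_{i,t-1}-x_{j,t-1}|\le r\}\,x_{j,t-1}.
\]
For each fixed $r$ the minimizer in $\rho$ is explicit, $\hat\rho(r)=\big(\sum_{i,t}\widetilde{x}_{i,t}(r)^2\big)^{-1}\sum_{i,t}x_{i,t}\widetilde{x}_{i,t}(r)$, so $\hat r$ minimizes the concentrated criterion $r\mapsto S_{NT}(\hat\rho(r),r)$. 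The crucial structural fact is that $r$ enters only through the indicators, so this map is a step function that can only change value at the finitely many candidate points given by the pairwise distances $|x_{i,t-1}-x_{j,t-1}|$; this discreteness is exactly what drives super-consistency. Since Theorem \ref{theorem1111} already gives $\hat r\to_p r^0$ and $\hat\rho\to_p\rho^0$, I would from the outset restrict attention to the shrinking neighbourhood $|r-r^0|\le\delta$, $|\rho-\rho^0|\le\delta$.

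For the super-consistency of $\hat r$, consider the increment $D_{NT}(r):=\sum_{i,t}\big[(x_{i,t}-\rho^0\widetilde{x}_{i,t}(r))^2-(x_{i,t}-\rho^0\widetilde{x}_{i,t}(r^0))^2\big]$ and its profiled analogue. Substituting $x_{i,t}=\rho^0\widetilde{x}_{i,t}(r^0)+\epsilon_{i,t}$ and expanding, this splits into a deterministic separation term driven by $\mathbb{E}[(\widetilde{x}_{i,t}(r)-\widetilde{x}_{i,t}(r^0))^2]$, which is strictly positive for $r\ne r^0$ when $\rho^0\ne0$ and of exact order $|r-r^0|$ because the expected number of neighbours whose membership flips is proportional to the bounded, positive density of the pairwise distances at $r^0$, and a mean-zero stochastic term $2\rho^0\sum_{i,t}\epsilon_{i,t}\big(\widetilde{x}_{i,t}(r)-\widetilde{x}_{i,t}(r^0)\big)$ plus lower-order remainders. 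I would then show the positive drift dominates the stochastic fluctuation outside an $\mathcal{O}_p\big((NT)^{-1}\big)$ window, bounding the fluctuation uniformly over $|r-r^0|\le\delta$ by a maximal inequality for the partial-sum process indexed by $r$, using the moment conditions in Assumption 1 and $|\rho^0|<1$ (which delivers stationarity and finite fourth moments of $\widetilde{x}_{i,t}$), and balancing the two terms. Because the objective is piecewise constant with jumps at the discrete pairwise distances, the minimiser must land within $\mathcal{O}_p\big((NT)^{-1}\big)$ of $r^0$, giving $NT(\hat r-r^0)=\mathcal{O}_p(1)$.

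With super-consistency in hand, the second assertion follows by showing that replacing $r^0$ by $\hat r$ is asymptotically costless for the slope. The difference $\widetilde{x}_{i,t}(\hat r)-\widetilde{x}_{i,t}(r^0)$ is nonzero only for the $\mathcal{O}_p(1)$ observations whose neighbour set flips between $\hat r$ and $r^0$, so both $\frac{1}{NT}\sum_{i,t}\big(\widetilde{x}_{i,t}(\hat r)^2-\widetilde{x}_{i,t}(r^0)^2\big)$ and $\frac{1}{\sqrt{NT}}\sum_{i,t}\epsilon_{i,t}\big(\widetilde{x}_{i,t}(\hat r)-\widetilde{x}_{i,t}(r^0)\big)$ are $o_p(1)$. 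Substituting into the closed form for $\hat\rho(\hat r)$ and comparing with the oracle estimator $\hat\rho(r^0)$ that uses the true threshold yields $\sqrt{NT}\big(\hat\rho(\hat r)-\hat\rho(r^0)\big)=o_p(1)$, so $\hat\rho$ inherits the limiting (asymptotically normal) distribution of the infeasible estimator built with known $r^0$; the distribution of that oracle estimator itself follows from the standard $\sqrt{NT}$ central limit theorem for $\frac{1}{\sqrt{NT}}\sum_{i,t}\widetilde{x}_{i,t}(r^0)\epsilon_{i,t}$ under Assumption 1.

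The hard part will be the uniform control of the stochastic term in the super-consistency step: the flipped summands $\widetilde{x}_{i,t}(r)-\widetilde{x}_{i,t}(r^0)$ are built from cross-sectional averages and are therefore neither independent across $i$ nor identically distributed, while the relevant process is indexed by $r$ through a family of indicators whose complexity must be controlled. Establishing the correct $(NT)^{-1}$ rate requires a careful chaining or bracketing bound for this process together with a precise count of how many neighbour memberships can flip in a window of width $w$ around $r^0$, which is where the bounded density of the pairwise distances and $|\rho^0|<1$ enter. Making the balance between the deterministic drift and the stochastic fluctuation sharp enough to pin down the full-sample rate, rather than a slower $\sqrt{NT}$ rate, is the delicate point of the argument.
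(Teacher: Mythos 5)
You should first note that this survey does not actually contain a proof of this theorem: it is stated (citing Kapetanios et al., 2014) and the only adjacent material is the brief consistency sketch for Theorem \ref{theorem1111} and the ergodicity lemmas, so there is no argument in the paper to compare yours against step by step. Your overall architecture --- profile out $\rho$, split the criterion increment into a deterministic identification term and a mean-zero fluctuation, balance the two to get the rate for $\hat r$, then show the oracle and feasible slope estimators are $\sqrt{NT}$-equivalent --- is the standard Chan/Hansen route and is surely the right skeleton.

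The genuine gap is in the order you assign to the identification term, which is precisely the step that determines whether you get the claimed $NT$ rate or something slower. You assert that $\mathbb{E}\big[(\widetilde{x}_{i,t}(r)-\widetilde{x}_{i,t}(r^0))^2\big]$ is of exact order $|r-r^0|$, which is exactly what is needed for the balance $NT|r-r^0|$ (drift) versus $\sqrt{NT|r-r^0|}$ (noise) to close at $NT|r-r^0|=\mathcal{O}_p(1)$. But unlike a standard threshold regression, where an observation whose regime flips changes its fitted value by an $\mathcal{O}(1)$ amount, here a neighbour whose membership flips changes the cross-sectional average $\widetilde{x}_{i,t}$ by only $\mathcal{O}(1/m_{i,t})=\mathcal{O}(1/N)$. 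A direct count gives roughly $\mathbb{E}[(\Delta_{i,t})^2]\asymp |r-r^0|^2 + |r-r^0|/N$: quadratic in $|r-r^0|$ when many neighbours flip, and carrying an extra $1/N$ when few do. Plugging either order into your balance yields only a $\sqrt{NT}$ or a $T$ rate for $\hat r$, not $NT$. So the claimed per-observation linear separation without the $1/N$ penalty is doing all the work and is unjustified; recovering the full $NT$ rate requires an additional idea (e.g., exploiting the discreteness of the criterion in $r$ at the $\mathcal{O}(N^2T)$ pairwise distances, or a sharper lower bound on the profiled drift) that the proposal does not supply. Relatedly, in the second step the number of observations affected by moving from $r^0$ to $\hat r$ with $|\hat r - r^0|=\mathcal{O}_p((NT)^{-1})$ is not $\mathcal{O}_p(1)$: there are $\mathcal{O}_p(N^2T\cdot(NT)^{-1})=\mathcal{O}_p(N)$ flipped $(i,j,t)$ triples, each perturbing its average by $\mathcal{O}(1/N)$, so the asymptotic negligibility of $\frac{1}{\sqrt{NT}}\sum_{i,t}\epsilon_{i,t}(\widetilde{x}_{i,t}(\hat r)-\widetilde{x}_{i,t}(r^0))$ needs the correct accounting rather than the "finitely many flips" argument you give.
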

A panel data model with intercept is given by 
\begin{align}
x_{i,t} = \nu_i + \frac{ \rho }{ m_{i,t} } \sum_{j = 1}^N \ell \big( \left| x_{i,t-1} - x_{j,t-1} \right|  \leq r  \big) x_{j, t-1} + \epsilon_{i,t}, 
\end{align}  
where $\nu_i \sim \textit{i.i.d} \left( 0, \sigma_v^2 \right)$. A more general version is given by 
\begin{align}
x_{i,t} = \nu_i \zeta_t + \frac{ \rho }{ m_{i,t} } \sum_{j = 1}^N \ell \big( \left| x_{i,t-1} - x_{j,t-1} \right|  \leq r  \big) x_{j, t-1} + \epsilon_{i,t}, 
\end{align}  
for $r \times 1$ vectors of observable variables, $\zeta_t$, and coefficients $\kappa_i = \left( \kappa_{i,1}, ...., \kappa_{i,r} \right)$, where $\kappa_{i,j} \sim \textit{i.i.d} \left( 0 , \sigma^2_{ \kappa_j } \right)$, for $j = 1,...,r$. 
It is worth noticing, given our interest in the persistence properties of our class of models, that it is known from the literature on linear dynamic panel data models that high persistence can be generated by moderate values of $\rho$ in combination with a large variance for the individual specific effects. We now examine the properties of the least squares estimator above. The presence of $\nu_i$ induces endogeneity in standard panel AR models, leading to biased estimation of the autoregressive parameter for finite $T$, when standard panel least squares estimators, such as the within group estimator, are used. Endogeneity arises because consistency for least squares estimators requires that 
\begin{align}
\mathbb{E} \left( x_{i,t-1} \left( \epsilon_{i,t} - \frac{1}{T} \sum_{t=1}^T \epsilon_{i,t} \right) \right) = 0. 
\end{align}
Moreover, it is straightforward to allow for higher order, $p$, lags such that 
\begin{align}
x_{i,t} &= \rho_1 \widetilde{x}_{i,t-1} +  \rho_2 x^c_{i,t-1} + \epsilon_{i,t}, 
\end{align}

\newpage 

Similarly, we define with 
\begin{align}
\widetilde{x}_{i,t-1} &= \frac{1}{ m_{i,t} } \sum_{ j = 1 }^N I \left( \left| x_{i,t-1} - x_{j,t-1} \right| \leq r \right) x_{j,t-1}, 
\\
\widetilde{x}^c_{i,t-1} &= \frac{1}{ N - m_{i,t} } \sum_{ j = 1 }^N I \left( \left| x_{i,t-1} - x_{j,t-1} \right| > r \right) x_{j,t-1}, 
\end{align}
are the cross-section averages associated with the group of neighbours and non-neighbours respectively. The particular model is more relevant in the case where we are interested to model heterogenous interactions. Furthermore, another important issue is how best to modify the basic model to decompose the slope parameter, $\rho$, into an own effect and a neighbour effect. This case, can be captured with
\begin{align}
\widetilde{x}_{i,t-1} &= \rho_0 x_{ i, t - 1 } + \rho_1 x^{*}_{ i, t-1 } + \epsilon_{i,t}  
\\
x^{*}_{ i, t-1 } &=  \frac{1}{ m_{i,t} - 1 } \sum_{j=1, j \neq i}^N I \left( \left| x_{ i, t - 1 } - x_{ j, t - 1 }  \right| \leq r \right) x_{ j, t - 1 },    
\end{align}
Similarly a time-space recursive model is formulated as below 
\begin{align}
x_{i,t} = \rho_0 x_{i,t-1} + \rho_1 \sum_{ j=1, j \neq i}^N w_{ij} x_{j, t-1} + \epsilon_{i,t}, 
\end{align}
where the weights are given by 
\begin{align}
w_{ ij } = \frac{ d_{ij}^{-2} }{ \sum_{j=1}^N  d_{ij}^{-2} }, \ \ \ d_{ij} = \left| x_{i,t-1} - x_{j,t-1} \right|, \ \ \ \text{with} \ w_{ii} = 1. 
\end{align}
The estimation of the last model can be conducted consists of a two step estimation procedure. First, the consistent estimate of $r$ is obtained from (2), then we construct the weights and estimate the model by least squares. Notice that the above modelling approaches involved threshold mechanisms for constructing the unit-specific cross-sectional averages. But as discussed in Section 1, the class of models we wish to propose is much more general. In particular, we consider for example models of the form
\begin{align}
\label{model111}
x_{i,t} = \rho \sum_{ j = 1}^N   \frac{ \displaystyle w \left( \left| x_{i, t-1} -   x_{j, t-1} \right| ; \gamma \right) x_{j,t-1} }{ \displaystyle \sum_{ j = 1}^N w \left( \left| x_{i, t-1} -   x_{j, t-1} \right| ; \gamma \right)} + \epsilon_{i,t}, 
\end{align}
where $\gamma$ is a finite-dimensional vector of parameters and $w \left( x ; \gamma \right)$ is a positive twice differentiable integrable function such as the exponential function $\text{exp} \left( - \gamma x^2 \right)$.

\newpage
 
\begin{theorem}[\cite{kapetanios2014nonlinear}]
\label{theorem456}
Let Assumption 1 hold for $\epsilon_{i,t}$ where $w \left( x ; \gamma \right)$ is a positive twice differentiable integrable function. Then, as long as $| \rho | < 1$, the nonlinear least squares estimator of $\left( \rho, \gamma \right)$ is $( NT )^{1 / 2}$ is consistent and asymptotically normal as $N, T \to \infty$
\end{theorem}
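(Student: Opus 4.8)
The plan is to treat $\widehat{(\rho,\gamma)}$ as a nonlinear least squares M-estimator and run the standard two-stage argument (consistency, then a first-order-condition expansion for asymptotic normality), exactly paralleling the threshold-model results established above for \eqref{theorem1111} but with the smooth weight $w(\cdot;\gamma)$ in place of the indicator. Writing $\theta=(\rho,\gamma)$ on a compact set $\Theta$, let $\widetilde{x}_{i,t-1}(\gamma)=\big(\sum_j w(|x_{i,t-1}-x_{j,t-1}|;\gamma)x_{j,t-1}\big)\big/\big(\sum_j w(|x_{i,t-1}-x_{j,t-1}|;\gamma)\big)$, let $g_{i,t}(\theta)=\rho\,\widetilde{x}_{i,t-1}(\gamma)$, and let $Q_{NT}(\theta)=\frac{1}{NT}\sum_{i,t}\big(x_{i,t}-g_{i,t}(\theta)\big)^2$. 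The first thing I would establish is a uniform approximation of the weighting object by a deterministic functional: since $w$ is positive, integrable and twice differentiable and since $|\rho|<1$ makes the averaging map a contraction so that the panel is geometrically ergodic with a stationary cross-sectional law of bounded density (Assumption \ref{theorem456}'s hypotheses), a cross-sectional ULLN over $j$ gives $\sup_{\gamma}\,\max_i\,\big|\widetilde{x}_{i,t-1}(\gamma)-m(x_{i,t-1};\gamma)\big|=O_p(N^{-1/2})$, where $m(x;\gamma)=\int w(|x-u|;\gamma)u\,dF(u)\big/\int w(|x-u|;\gamma)\,dF(u)$; the normalized denominator $\frac1N\sum_j w$ is bounded away from zero by positivity of $w$ and $F$, and the same bound holds for the first two $\gamma$-derivatives.

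For consistency I would combine this with a joint $(N,T)$ law of large numbers for the double sum (ergodic/martingale-difference arguments licensed by the finite fourth moment in Assumption \ref{theorem456}) to get $\sup_{\theta}|Q_{NT}(\theta)-Q(\theta)|\overset{p}{\to}0$ with $Q(\theta)=\mathbb{E}\big[(x_{i,t}-\rho\,m(x_{i,t-1};\gamma))^2\big]$. Because the model is correctly specified and $\mathbb{E}[\epsilon_{i,t}\mid\mathcal{F}_{t-1}]=0$, this decomposes as $Q(\theta)=Q(\theta^0)+\mathbb{E}\big[(\rho\,m(x;\gamma)-\rho^0 m(x;\gamma^0))^2\big]$, which is uniquely minimized at $\theta^0=(\rho^0,\gamma^0)$ provided $\gamma\mapsto m(\cdot;\gamma)$ is injective on the parameter set (an identification condition satisfied by strictly monotone kernels such as $\exp(-\gamma x^2)$). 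The Newey--McFadden criterion then yields $\widehat\theta\overset{p}{\to}\theta^0$.

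For asymptotic normality I would expand the first-order condition $\partial Q_{NT}(\widehat\theta)/\partial\theta=0$ about $\theta^0$ by the mean value theorem, obtaining $\sqrt{NT}(\widehat\theta-\theta^0)=-[H_{NT}(\bar\theta)]^{-1}\sqrt{NT}\,S_{NT}(\theta^0)$. At $\theta^0$ the residual is exactly $\epsilon_{i,t}$ (no generated-regressor correction is needed, since the model is defined with the finite-$N$ weighted average), so the score is $-\tfrac{2}{\sqrt{NT}}\sum_{i,t}\epsilon_{i,t}\,\nabla_\theta g_{i,t}(\theta^0)$. The crucial structural observation is that $\nabla_\theta g_{i,t}(\theta^0)$ is $\mathcal{F}_{t-1}$-measurable while $\epsilon_{i,t}$ is a martingale difference independent across $i$ conditional on $\mathcal{F}_{t-1}$; hence the cross-sectional covariances of the summands vanish and a martingale-array CLT (Lindeberg condition from the fourth moment) delivers $\sqrt{NT}\,S_{NT}(\theta^0)\Rightarrow\mathcal{N}(0,\Omega)$ with $\Omega=\lim\frac1{NT}\sum_{i,t}\mathbb{E}\big[\epsilon_{i,t}^2\,\nabla g_{i,t}\,\nabla g_{i,t}^{\prime}\big]$. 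A ULLN together with consistency gives $H_{NT}(\bar\theta)\overset{p}{\to}H=2\,\mathbb{E}\big[\nabla g_{i,t}(\theta^0)\nabla g_{i,t}(\theta^0)^{\prime}\big]$, nonsingular under the same identification condition, so $\sqrt{NT}(\widehat\theta-\theta^0)\Rightarrow\mathcal{N}(0,H^{-1}\Omega H^{-1})$.

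The hard part will be the uniform replacement of the genuinely cross-sectionally dependent object $\widetilde{x}_{i,t-1}(\gamma)$ and its $\gamma$-derivatives by the unit-wise functional $m(x_{i,t-1};\gamma)$, and the verification that the accumulated $O_p(N^{-1/2})$ approximation errors do not contaminate either the $\sqrt{NT}$-scaled score or the Hessian limit. This is the analogue here of the incidental-parameter issue and will force a rate restriction on the relative growth of $N$ and $T$, together with a check that the leading approximation error is conditional-mean orthogonal to $\epsilon_{i,t}$ so that it enters only at lower order in the variance. Equivalently, the core technical content is showing that the joint $(N,T)\to\infty$ limit agrees with the sequential limit implicit in the definition of $m(\cdot;\gamma)$; the positivity, integrability and twice-differentiability of $w$ are precisely the ingredients that make the required uniform bounds and the smoothness of $m(\cdot;\gamma)$ available.
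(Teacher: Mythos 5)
Your proposal is correct in outline and its skeleton coincides with the paper's argument: both routes establish consistency by verifying that the data are geometrically ergodic and asymptotically stationary (the paper's condition C1, supplied by its ergodicity lemmas) and that the limiting objective function is uniquely minimized at the true $(\rho^0,\gamma^0)$ with uniform convergence of the sample criterion (condition C2), and both obtain asymptotic normality from the first-order condition by the same structural observation you call crucial — the gradient of the regression function is $\mathcal{F}_{t-1}$-measurable, so the cross-sectionally normalized score $w_t=N^{-1/2}\sum_{i}w_{i,t}\epsilon_{i,t}$ is a martingale difference sequence in $t$ to which a martingale CLT applies. Where you genuinely diverge is in how the limit objects are treated. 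You propose first to replace the self-normalized cross-sectional average $\widetilde{x}_{i,t-1}(\gamma)$ and its $\gamma$-derivatives, uniformly in $i$ and $\gamma$, by the deterministic kernel-smoother functional $m(x;\gamma)=\int w(|x-u|;\gamma)\,u\,dF(u)\big/\int w(|x-u|;\gamma)\,dF(u)$ at an $O_p(N^{-1/2})$ cost, and then run the ULLN and CLT on unit-wise functionals; this buys explicit population expressions for $H$ and $\Omega$ and a clean identification condition on $\gamma\mapsto m(\cdot;\gamma)$, but it is also the source of the relative-rate restriction on $N$ versus $T$ you anticipate, and the claimed uniform $N^{-1/2}$ rate is delicate because the $x_{j,t-1}$ are cross-sectionally coupled through the dynamics rather than independent across $j$. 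The paper never performs this replacement: it works directly with the coupled objects, defining the limits $\mathbb{W}_1$, $\mathbb{W}_2$ and the probability limit of $\nabla^2 Q(\rho,\gamma)$ as $N$-limits of expectations of the full cross-sectional sums, justified only by a lemma giving uniformly bounded second moments of $w_{i,t}$ and the independence of $w_{i,t}$ from $\epsilon_{i,t}$, so no $(N,T)$ rate condition appears in its argument. Your version is more explicit about identification and about what the limiting variance actually is; the paper's is shorter but leaves the existence and form of those limits essentially asserted.
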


Therefore, it can be also shown that 
\begin{align}
\mathbb{E} \left[ \left( \sum_{ j = 1}^N   \frac{ \displaystyle w \left( \left| x_{i, t-1} -   x_{j, t-1} \right| ; \gamma \right) x_{j,t-1} }{ \displaystyle \sum_{ j = 1}^N w \left( \left| x_{i, t-1} -   x_{j, t-1} \right| ; \gamma \right)} \right) \left( \epsilon_{i,t} - \frac{1}{T} \sum_{t=1}^T \epsilon_{i,t} \right) \right] = \mathcal{O} \left( \frac{1}{NT} \right),
\end{align}
which implies that the within estimator is valid for estimating \eqref{model111} when fixed effects are incorporating in \eqref{model111}. For example, the model relates to a univariate process $y_t$, $t = 1,...., T$ and the associated finite-dimensional covariates, denoted by $p_t$. Moreover, if the data are ordered, as in the case of time series, then their model is given by
\begin{align}
y_t = \frac{ \displaystyle \sum_{j=1}^{t-1} w \left( p_j, p_t \right) y_j }{ \displaystyle \sum_{j=1}^{t-1} w \left( p_j, p_t \right)} + \epsilon_t, 
\end{align}
The above model has the property that it can incorporate the similarity/distance $w$. Another set of extensions to the above models, arises by introducing other variables or lags to the model, either linearly 
\begin{align}
x_{i,t} =  \frac{ \rho }{ m_{i,t} } \sum_{j=1}^N \ell \left( \left| x_{i,t-1} -  x_{j,t-1} \right| \leq r \right) x_{j,t-1}  + \beta^{\prime} z_{i,t} + \epsilon_{i,t}, 
\end{align}
where $z_{i,t} = \left( z_{1,i,t},..., z_{k,i,t} \right)^{\prime}$ is a set of exogenous stationary variables, that is, $\beta = \left(   \beta_1,...., \beta_k \right)^{\prime}$, or nonlinearly as below
\begin{align}
x_{i,t}
=
\frac{ \rho }{ m_{i,t} } &\sum_{ j = 1}^N \ell \left( \left| x_{i,t-1} -  x_{j,t-1} \right| \leq r \right) x_{j,t-1}
+  
\sum_{ s = 1}^{ k } \left[ \frac{ \beta_s }{ m_{i,t} } \sum_{ j = 1}^{ k } \ell \left( \left| x_{i,t-1} -  x_{j,t-1} \right| \leq r \right) z_{s,j,t} \right] + \epsilon_{i,t} 
\end{align} 

\subsubsection{Main Limit Results}

\begin{lemma}[\cite{kapetanios2014nonlinear}]
Let $\left\{ \left\{ x_{i,t} \right\}_{i=1}^N \right\}_{ t = 1}^N$ follow (2). Then, for all $N_0 \leq N$, there exists $T_0$ such that for all $T > T_0$. Then, $\left\{ \left\{ x_{i,t} \right\}_{i=1}^{ N_0 } \right\}_{ t = T_0 }^T$ is geometrically ergodic and asymptotically stationary, as long as $| \rho | < 1$. Further, if $\text{sup}_{ i \leq N_0 } \mathbb{E} \left( \epsilon_{i,t}^4  \right) < \infty$, and $\text{sup}_{ i \leq N_0 } \mathbb{E} \left( x_{i,t}^4 \right) < \infty$. 
\end{lemma}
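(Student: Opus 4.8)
The plan is to treat the full cross-sectional vector $X_t := (x_{1,t},\ldots,x_{N,t})^{\prime} \in \mathbb{R}^N$ as a time-homogeneous Markov chain and to establish geometric ergodicity of this chain via a Foster--Lyapunov drift argument; asymptotic stationarity and the stated moment properties of the finite subvector $(x_{1,t},\ldots,x_{N_0,t})$ then follow by marginalisation. The first observation is that, writing $\boldsymbol{W}(x)$ for the $N\times N$ matrix with $(i,j)$ entry $\ell(|x_i-x_j|\le r)/m_i$ where $m_i=\sum_{j}\ell(|x_i-x_j|\le r)$ (which is $\ge 1$, since the $j=i$ term is always included, so the matrix is well defined), the recursion (2) reads
\begin{align}
X_t = \rho\,\boldsymbol{W}(X_{t-1})X_{t-1} + \epsilon_t .
\end{align}
Each row of $\boldsymbol{W}(x)$ is a probability vector, so $\boldsymbol{W}(x)$ is row-stochastic and $\|\boldsymbol{W}(x)x\|_{\infty}\le\|x\|_{\infty}$ for every $x$; this uniform non-expansiveness is the structural fact that makes $|\rho|<1$ the right condition. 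Because the $\epsilon_{i,t}$ are i.i.d.\ over $t$ and independent across $i$ (Assumption 1), the chain has one-step transition kernel with density $p(x,y)=f_{\epsilon}\!\left(y-\rho\boldsymbol{W}(x)x\right)$, where $f_{\epsilon}$ is the joint innovation density.

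Next I would verify the two ingredients of the Meyn--Tweedie geometric ergodicity criterion. For $\psi$-irreducibility, aperiodicity and petiteness I use that $f_{\epsilon}$ is bounded and strictly positive on compacts (Assumption 1): since $\|\rho\boldsymbol{W}(x)x\|_{\infty}\le|\rho|\,\|x\|_{\infty}$ sends any compact set into a bounded set, for $x$ ranging over a compact $C$ and $y$ over a compact $D$ the density $p(x,y)$ is bounded below by some $\gamma>0$, yielding a minorisation $P(x,\cdot)\ge\gamma\,\nu(\cdot)$ on $C$ and hence irreducibility with respect to Lebesgue measure, aperiodicity, and smallness of compact sets. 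For the drift, take $V(x)=1+\|x\|_{\infty}$; non-expansiveness gives $\|X_t\|_{\infty}\le|\rho|\,\|x\|_{\infty}+\|\epsilon_t\|_{\infty}$, so
\begin{align}
\mathbb{E}\!\left[V(X_t)\mid X_{t-1}=x\right]\le |\rho|\,\|x\|_{\infty} + \bigl(1+\mathbb{E}\|\epsilon_t\|_{\infty}\bigr).
\end{align}
Choosing any $\lambda\in(|\rho|,1)$, the right-hand side is at most $\lambda V(x)$ once $\|x\|_{\infty}$ exceeds a finite threshold, i.e.\ outside a compact (small) set, and is uniformly bounded inside it. This is precisely the geometric drift condition, so the chain is geometrically ergodic with a unique invariant law $\pi_N$ and $\|P^{t}(x,\cdot)-\pi_N\|_{TV}\le M(x)\,r^{t}$ for some $r<1$; this in turn gives geometric $\beta$-mixing and asymptotic stationarity.

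The claim for the subvector follows by marginalisation: $(x_{1,t},\ldots,x_{N_0,t})$ is a fixed coordinate projection of $X_t$, so its law converges to the corresponding marginal of $\pi_N$ at the same geometric rate, it inherits geometric mixing, and it is asymptotically stationary; the constant $T_0$ is simply the burn-in beyond which the transient from the initial condition is negligible. For the moment statement I repeat the drift step with $V(x)=1+\|x\|_{\infty}^{4}$: expanding $(|\rho|\,\|x\|_{\infty}+\|\epsilon_t\|_{\infty})^{4}$ and using $\mathbb{E}\|\epsilon_t\|_{\infty}^{4}<\infty$ (which holds for finite $N$ under $\sup_i\mathbb{E}\epsilon_{i,t}^{4}<\infty$) gives a leading coefficient $|\rho|^{4}<1$, so for $\lambda\in(|\rho|^{4},1)$ a geometric drift in $\|\cdot\|_{\infty}^{4}$ holds outside a compact set; the $f$-norm ($V$-uniform) ergodic theorem then shows $\pi_N$ has finite fourth moment, whence $\sup_{i\le N_0}\mathbb{E}x_{i,t}^{4}<\infty$ along the (asymptotically) stationary process.

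I expect the main obstacle to be the rigorous verification of the small-set/minorisation condition rather than the drift, because $\boldsymbol{W}(x)$ enters through discontinuous threshold indicators and one must check that the transition density is bounded below uniformly over a compact set despite this discontinuity; the argument above circumvents it by noting that only the boundedness of the conditional mean $\rho\boldsymbol{W}(x)x$ (guaranteed by row-stochasticity) together with the positivity of $f_{\epsilon}$ on compacts is needed. A secondary technical point is the passage from the full-$N$ chain to the $N_0$-marginal, which is routine once geometric ergodicity and $\beta$-mixing of the joint chain are in hand.
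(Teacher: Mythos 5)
Your proposal is correct and, in fact, considerably more complete than the argument the paper actually gives. The paper's proof stops after writing the system as a random-coefficient autoregression $x_t^{(N_0)} = \Phi_t^{(N_0)} x_{t-1}^{(N_0)} + \epsilon_t^{(N_0)}$ and observing that the maximum eigenvalue of $\Phi_t^{(N_0)}$ is bounded in absolute value by its row-sum norm, which is at most $|\rho| < 1$; geometric ergodicity is then left implicit as a consequence of this uniform contraction. Your Foster--Lyapunov route makes the same structural fact (row-stochasticity of $\boldsymbol{W}(x)$, hence $\lVert \rho \boldsymbol{W}(x) x \rVert_\infty \leq |\rho| \lVert x \rVert_\infty$) do its work inside the Meyn--Tweedie framework, and you actually verify the pieces the paper omits: irreducibility and smallness of compacts via positivity of the innovation density, the geometric drift with $V(x) = 1 + \lVert x\rVert_\infty$, and the fourth-moment conclusion via a separate drift in $\lVert x \rVert_\infty^4$ (which the paper's proof does not address at all). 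Your choice to run the argument on the full $N$-dimensional chain and then marginalise to the first $N_0$ coordinates is also cleaner than the paper's: the subvector $(x_{1,t},\ldots,x_{N_0,t})$ does not satisfy a closed autonomous recursion, since $m_{i,t}$ and the threshold indicators depend on all $N$ units, so the paper's $N_0 \times N_0$ representation silently drops the contribution of units $j > N_0$; working with the joint chain and projecting avoids this issue entirely. What the paper's terser route buys is brevity and a direct link to the random-coefficient-autoregression literature; what yours buys is a self-contained and rigorous verification, including the moment bound stated in the lemma.
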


\newpage

\begin{proof}
We can write the part of (2), relevant for $\left\{ x_{i,t} \right\}_{ i = 1 }^{ N_0 }$, as below
\begin{align}
x_t^{(N_0)} = \Phi_t^{(N_0)} x_{t-1}^{(N_0)} + \epsilon_t^{(N_0)}, 
\end{align}
where $x_t^{(N_0)} = \left( x_{1,t},...., x_{N_0 ,t} \right)^{\prime}$, $\epsilon_t^{(N_0)} = \left( \epsilon_{1,t},...,   \epsilon_{ N_0 ,t} \right)^{\prime}$ and $\Phi_t^{ (N_0 ) } = \left[   \Phi_{i,j,t} \right]$, where 
\begin{align}
\Phi_{i,j,t} = \frac{ \rho }{ m_{i,t} } \ell \left( \left| x_{i,t-1} -  x_{j,t-1} \right| \leq r \right)
\end{align}
Then, we have that $\text{sup}_t \lambda_{ \text{max} } \left( \Phi_t^{(N_0)} \right) < 1$, where $\lambda_{ \text{max} } \left( \Phi_t^{(N_0)} \right)$  denotes the maximum eigenvalue of $\Phi_t^{(N_0)}$ in absolute value. We also have that the the supremum over $t$ for this set of maximum eigenvalues is bounded from above by the supremum over $t$ of the row sum norm of $\left( \Phi_t^{(N_0)} \right)$.  
\end{proof}
\begin{lemma}[\cite{kapetanios2014nonlinear}]
Let $\left\{ \left\{ x_{i,t} \right\}_{i=1}^N \right\}_{t=1}^T$ is given by $x_{i,t} = q_{i, t-1 } + \epsilon_{i,t}$, such that the column sum norm of the variance-covariance matrix of $\epsilon_t^{(N)}$ is $O(1)$ as $N \to \infty$. Moreover, the column sum norm of the variance-covariance matrix of $x_t^{(N)}$ is $\mathcal{O}(N)$ if 
\begin{enumerate}
\item[(i)] $q_{i,t-1}$ is stationary,

\item[(ii)] there is $\delta > 0$ for all $N$, there exist units $i,j = 1,..., \delta N$ such that
\begin{align}
0 < \text{lim}_{ N \to \infty} \ \text{sup}_{ i = 1,..., \delta N} \ \text{Var} \left( q_{i, t-1} \right) < \infty
\end{align}

\item[(iii)] There is a $\delta > 0$, for all $N$, there exist units $i,j = 1,..., \delta N$, such that $\text{Cov} \left( q_{i,t-1}, q_{j, t-1} \right) \neq 0$.  
\end{enumerate}
 
\end{lemma}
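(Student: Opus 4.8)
The plan is to exploit the additive structure $x_{i,t} = q_{i,t-1} + \epsilon_{i,t}$ at the level of the cross-sectional vectors and thereby reduce the claim to a bound on the variance matrix of the lagged aggregates. Writing $x_t^{(N)} = q_{t-1}^{(N)} + \epsilon_t^{(N)}$ with $q_{t-1}^{(N)} = ( q_{1,t-1},\dots, q_{N,t-1} )^{\prime}$, I would first observe that under Assumption 1 the innovation $\epsilon_{i,t}$ is independent across $i$ and across $t$, so each $\epsilon_{i,t}$ is uncorrelated with the lagged quantity $q_{j,t-1}$, which is measurable with respect to the information set generated by $\{ \epsilon_{\cdot, s} \}_{ s \leq t-1 }$. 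Consequently the cross-covariance block vanishes and the variance matrix splits as $\mathsf{Var} ( x_t^{(N)} ) = \Sigma_q + \Sigma_\epsilon$, where $\Sigma_q := \mathsf{Var} ( q_{t-1}^{(N)} )$ and $\Sigma_\epsilon := \mathsf{Var} ( \epsilon_t^{(N)} )$. Since the column sum norm $\| A \|_1 := \max_j \sum_i | a_{ij} |$ is subadditive and the hypothesis gives $\| \Sigma_\epsilon \|_1 = \mathcal{O}(1)$, it suffices to show $\| \Sigma_q \|_1 = \mathcal{O}(N)$.

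For the upper bound I would use conditions (i) and (ii) directly. Stationarity of $q_{i,t-1}$ together with the uniform variance bound in (ii) gives $\mathsf{Var} ( q_{i,t-1} ) \leq \bar{C} < \infty$ uniformly in $i$ and $N$; applying the Cauchy--Schwarz inequality entrywise then yields $| \mathsf{Cov} ( q_{i,t-1}, q_{j,t-1} ) | \leq \bar{C}$ for every pair $(i,j)$. Each column of $\Sigma_q$ has $N$ entries, so every column sum is at most $\bar{C} N$, whence $\| \Sigma_q \|_1 \leq \bar{C} N$ and therefore $\| \mathsf{Var} ( x_t^{(N)} ) \|_1 = \mathcal{O}(N)$. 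This establishes the stated order of magnitude.

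To see that this order is sharp --- i.e.\ that the column sum norm genuinely grows like $N$ rather than collapsing to a smaller order --- I would invoke (ii) and (iii) jointly. Condition (iii) supplies, for each $N$, a block of $\delta N$ units whose pairwise covariances with a common index are nonzero, and the positive lower bound on the variances in (ii) combined with stationarity can be used to bound these covariances away from zero by some $\underline{c} > 0$. The corresponding column sum of $\Sigma_q$ is then at least $\underline{c}\, \delta N$, so $\| \Sigma_q \|_1$ is bounded below by a fixed multiple of $N$, and the order $N$ is exact.

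The main obstacle is precisely this lower-bound step: condition (iii) only asserts that the relevant covariances are nonzero, whereas the argument needs them to be uniformly bounded away from zero along the selected $\delta N$ indices as $N \to \infty$. Closing this gap requires linking (iii) to the nondegenerate-variance part of (ii) through stationarity, so that the minimal admissible covariance over the chosen block does not drift to zero with $N$. A secondary point requiring care is the vanishing of the cross-covariance block, which rests on the full independence in Assumption 1; if one only assumed a martingale-difference property for $\epsilon_t$, the exact-zero cross terms should be replaced by a residual whose column sum norm is $\mathcal{O}(1)$ and absorbed into the $\Sigma_\epsilon$ contribution.
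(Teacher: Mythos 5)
The paper states this lemma without proof --- the proof environment that follows it in the source belongs to the \emph{next} lemma, on the model $x_{i,t} = \frac{\rho}{N}\sum_{j} x_{j,t-1} + \epsilon_{i,t}$ --- so there is no in-paper argument to compare yours against. Taken on its own terms, your route is the natural one and is essentially correct for the claim as literally stated: independence of $\epsilon_{i,t}$ across $i$ and $t$ makes it orthogonal to the lagged, past-measurable $q_{j,t-1}$, so the cross block vanishes and $\mathsf{Var}\big( x_t^{(N)} \big) = \Sigma_q + \Sigma_{\epsilon}$; subadditivity of the column sum norm, the $\mathcal{O}(1)$ hypothesis on $\Sigma_{\epsilon}$, and the entrywise Cauchy--Schwarz bound $\left| \mathsf{Cov}\left( q_{i,t-1}, q_{j,t-1} \right) \right| \leq \bar{C}$ then give column sums of order at most $\bar{C}N$. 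One caveat you should make explicit: condition (ii) as transcribed only controls $\sup_{i \leq \delta N} \mathsf{Var}\left( q_{i,t-1} \right)$, whereas your upper bound needs a uniform variance bound over \emph{all} $i \leq N$; you are implicitly reading (ii) that way, which is surely the intent, but say so.

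Your remarks on sharpness are also well placed. The lemma's role in the source is to certify strong cross-sectional dependence, i.e.\ that the column sum norm is of \emph{exact} order $N$ --- otherwise the positivity part of (ii) and all of (iii) would be idle --- and you are right that (iii), which only asserts that the relevant covariances are nonzero, cannot by itself produce a column sum bounded below by $\underline{c}\,\delta N$ uniformly in $N$: nonzero covariances may drift to zero along the selected block. Flagging this as a deficiency of the stated hypotheses rather than silently strengthening (iii) to a uniform lower bound is the honest reading. If only the literal $\mathcal{O}(N)$ upper bound is claimed, your first two paragraphs already constitute a complete proof.
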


\begin{lemma}[\cite{kapetanios2014nonlinear}]
Let $\displaystyle \left\{ \left\{ x_{i,t} \right\}_{ i = 1 }^{ N } \right\}_{ t=1 }^T$ be given by $x_{i,t} = \displaystyle \frac{ \rho }{ N } \sum_{ j=1 }^N x_{ j, t-1 } + \epsilon_{i,t}$.  
\end{lemma}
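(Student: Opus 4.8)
The statement as printed specifies only the data-generating mechanism; reading the intended conclusion in parallel with the preceding lemma, I take the claim to be that the stationary variance-covariance matrix $\Sigma_x := \mathsf{Var}(x_t^{(N)})$ of the vector $x_t^{(N)} = (x_{1,t},\ldots,x_{N,t})'$ has column sum norm $\mathcal{O}(1)$ as $N \to \infty$, in contrast to the $\mathcal{O}(N)$ bound of the previous lemma. The plan is to exploit the rank-one structure of the mean-field recursion. First I would rewrite the system in vector form as $x_t^{(N)} = \Phi\, x_{t-1}^{(N)} + \epsilon_t^{(N)}$ with $\Phi = (\rho/N)\,\mathbf{1}\mathbf{1}'$ and $\mathbf{1} = (1,\ldots,1)'$. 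The crucial observation is that $\Phi$ has rank one, with eigenvalues $\rho$ (eigenvector $\mathbf{1}$) and $0$ of multiplicity $N-1$; hence $\lambda_{\max}(\Phi) = |\rho| < 1$, so geometric ergodicity and asymptotic stationarity follow immediately from the earlier ergodicity lemma, since the mean-field model is precisely the $r \to \infty$ specialization of the threshold model (2) in which $m_{i,t} \equiv N$ and the coefficient matrix is time-invariant.

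With stationarity secured, I would characterize $\Sigma_x$ via the discrete Lyapunov equation $\Sigma_x = \Phi\Sigma_x\Phi' + \Sigma_\epsilon$, where under Assumption 1 the innovations are independent across $i$ so that $\Sigma_\epsilon = \mathsf{diag}(\sigma_{\epsilon_1}^2,\ldots,\sigma_{\epsilon_N}^2)$. Using the rank-one form, $\Phi\Sigma_x\Phi' = (\rho^2/N^2)(\mathbf{1}'\Sigma_x\mathbf{1})\,\mathbf{1}\mathbf{1}'$, so the solution is forced into the shape $\Sigma_x = (\rho^2 s/N^2)\,\mathbf{1}\mathbf{1}' + \Sigma_\epsilon$ with scalar $s := \mathbf{1}'\Sigma_x\mathbf{1}$. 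Pre- and post-multiplying the Lyapunov equation by $\mathbf{1}'$ and $\mathbf{1}$ collapses it to the scalar fixed point $s = \rho^2 s + \sum_i \sigma_{\epsilon_i}^2$, whence $s = (1-\rho^2)^{-1}\sum_i \sigma_{\epsilon_i}^2 = \mathcal{O}(N)$ once $\sup_i \sigma_{\epsilon_i}^2 < \infty$ is invoked.

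I would then read off the entries: each off-diagonal element equals $\rho^2 s/N^2 = \mathcal{O}(1/N)$, while each diagonal element is $\sigma_{\epsilon_i}^2 + \mathcal{O}(1/N) = \mathcal{O}(1)$. Summing absolute values down a fixed column $k$ gives $N\cdot(\rho^2 s/N^2) + \sigma_{\epsilon_k}^2 = \rho^2 s/N + \sigma_{\epsilon_k}^2 = \mathcal{O}(1)$, and maximizing over $k$ yields the claimed $\mathcal{O}(1)$ column sum norm, with the row sum norm coinciding by symmetry. The complementary computation, obtained by averaging the recursion to $\bar{x}_t = \rho\bar{x}_{t-1} + \bar\epsilon_t$, records $\mathsf{Var}(\bar{x}_t) = (1-\rho^2)^{-1}\mathsf{Var}(\bar\epsilon_t) = \mathcal{O}(1/N)$, i.e. the common component $\rho\bar{x}_{t-1}$ has vanishing variance.

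The main conceptual obstacle — and the real content of the lemma — is reconciling this $\mathcal{O}(1)$ verdict with the intuition that a mean-field scheme, in which every unit loads on the same aggregate $\bar{x}_{t-1}$, ought to display strong dependence. The resolution is that the $1/N$ normalization makes the common factor's variance degenerate at rate $1/N$: although condition (iii) of the previous lemma (non-vanishing pairwise covariances among a linear fraction of units) superficially appears to hold, its condition (ii) (the common signal having variance bounded away from zero) fails. I would therefore flag that the delicate point is not the algebra but the uniformity, namely verifying that $s/N$ stays bounded uniformly in $N$ through $\sup_i \sigma_{\epsilon_i}^2 < \infty$, and that the Lyapunov fixed point is the genuine stationary second moment rather than a merely formal solution — which is exactly where the geometric ergodicity imported from the earlier lemma does the essential work.
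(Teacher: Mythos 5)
Your reading of the (truncated) conclusion is the right one --- the point of the lemma is that, despite every unit loading on the common average, the column sum norm of $\mathsf{Var}\big(x_t^{(N)}\big)$ stays $\mathcal{O}(1)$ because the $1/N$ normalization degenerates the common factor --- and your argument is correct, but it runs along a different track from the paper's. The paper factors the autoregressive coefficient out of the loading matrix, writing $x_t = \rho \Phi x_{t-1} + \epsilon_t$ with $\Phi = \tfrac{1}{N}\mathbf{1}\mathbf{1}'$ \emph{idempotent}, and then solves the recursion forward explicitly: $x_t = \rho^t \Phi x_0 + \epsilon_t + \mathbf{1}\big[\tfrac{1}{N}\sum_{j=1}^N \xi_{j,t}\big]$ with $\xi_{j,t} = \sum_{i=1}^{t-1}\rho^i \epsilon_{j,t-i}$, and finishes by showing $\lim_{N\to\infty}\mathsf{Var}\big(\tfrac{1}{N}\sum_j \xi_{j,t}\big) = 0$ using cross-sectional independence of the innovations. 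You instead keep $\rho$ inside $\Phi$ (so your $\Phi$ is rank one but not idempotent) and work with the stationary Lyapunov fixed point, collapsing it to the scalar equation $s = \rho^2 s + \sum_i \sigma_{\epsilon_i}^2$ and reading the $\mathcal{O}(1/N)$ off-diagonals directly off the solution. The two routes are exploiting the same structural fact; the paper's moving-average expansion has the advantage of exhibiting the transient term $\rho^t\Phi x_0$ and hence delivering the asymptotic-stationarity statement as a byproduct, while your Lyapunov computation gives sharper, entrywise information about $\Sigma_x$ (exact off-diagonal magnitude $\rho^2 s/N^2$) and makes the failure of condition (ii) of the preceding lemma completely explicit. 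Your closing caveat --- that the Lyapunov solution must be certified as the genuine stationary second moment via the spectral radius bound $|\lambda_{\max}(\Phi)| = |\rho| < 1$ --- is exactly the right thing to flag, and is the step the paper's explicit expansion handles implicitly.
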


\begin{proof}
To prove this theorem, we use the second part of Lemma 2, which can be written as $x_t = \rho \tilde{x}_{t-1} + \epsilon_t = \nu + \rho \Phi x_{t-1} + \epsilon_t$, where $\displaystyle \tilde{x}_{t-1} = \frac{1}{N} \sum_{ j=1 }^N x_{j, t-1}$, $\displaystyle \Phi = \frac{1}{N} \mathbf{1} \mathbf{1}^{ \prime }$ and $\mathbf{1} = \left( 1,...,1 \right)^{ \prime }$. Since $\Phi$ is idempotent, hence we have  
\begin{align*}
x_t = 
\rho^t \Phi x_0 + \epsilon_t + \Phi \sum_{i = 1}^{t-1} \rho^i \epsilon_{t-i} 
= 
\rho^t \Phi x_0 + \epsilon_t + \mathbf{1} \left[ \frac{1}{N} \sum_{j=1}^N \xi_{j,t} \right], 
\end{align*}  
where $\xi_{j,t} = \sum_{i=1}^{ t-1 } \rho^i \epsilon_{j, t-i}$. But  it is straightforward to show that 
\begin{align}
\underset{ N \to \infty }{ \text{lim} } \text{Var} \left( \frac{1}{N} \sum_{ j = 1}^N \xi_{j,t} \right) = 0,
\end{align}

\end{proof}

\newpage

\begin{lemma}[\cite{kapetanios2014nonlinear}]
Let $\displaystyle \left\{ \left\{ x_{i,t} \right\}_{ i = 1 }^{ N } \right\}_{ t=1 }^T$ follow the model below
\begin{align}
x_{i,t} = \rho \sum_{ j = 1}^N   \frac{ \displaystyle w \left( \left| x_{i, t-1} -   x_{j, t-1} \right| ; \gamma \right) x_{j,t-1} }{ \displaystyle \sum_{ j = 1}^N w \left( \left| x_{i, t-1} -   x_{j, t-1} \right| ; \gamma \right)} + \epsilon_{i,t}, 
\end{align}
then, for every $N_0 \leq N$, there exists $T_0$ such that for all $T > T_0$, $\displaystyle \left\{ \left\{ x_{i,t} \right\}_{ i = 1 }^{ N_0 } \right\}_{ t= T_0 }^T$ is geometrically ergodic and asymptotically stationary, as long as $| \rho | < 1$.  
\end{lemma}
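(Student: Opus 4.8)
The plan is to recognize the model as a state-dependent (nonlinear) vector autoregression of order one and then to reduce the claim to a uniform contraction property of its coefficient matrix, exactly as in the proof of the threshold-based lemma above. Writing $x_t = \left( x_{1,t},\dots,x_{N,t}\right)^{\prime}$ and $\epsilon_t = \left( \epsilon_{1,t},\dots,\epsilon_{N,t}\right)^{\prime}$, the model takes the Markovian form
\begin{align*}
x_t = \Phi(x_{t-1}) \, x_{t-1} + \epsilon_t, \qquad \Phi_{ij}(x) = \rho \, \frac{ w\left( \left| x_i - x_j \right| ; \gamma \right) }{ \sum_{k=1}^N w\left( \left| x_i - x_k \right| ; \gamma \right) },
\end{align*}
where $\{\epsilon_t\}$ is \textit{i.i.d.} by Assumption 1, so $\{x_t\}$ is a time-homogeneous Markov chain. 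The decisive structural fact, which is the only place the specific form of the weights enters, is that because $w(\cdot;\gamma)$ is strictly positive the denominator never vanishes, every entry $\Phi_{ij}(x)$ is nonnegative, and each row sums to exactly $\rho$: $\sum_{j=1}^N \Phi_{ij}(x) = \rho$ for every $x$ and every $i$. Hence the induced row-sum ($\ell_\infty$) norm satisfies $\norm{ \Phi(x) }_\infty = |\rho| < 1$, uniformly over the state space, precisely mirroring the bound $\sup_t \lambda_{\max}(\Phi_t^{(N_0)}) < 1$ used in the threshold case.

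First I would convert this uniform contraction into a geometric drift (Foster--Lyapunov) condition. Taking $V(x) = \norm{x}_\infty$ and using $\norm{\Phi(x) x}_\infty \le \norm{\Phi(x)}_\infty \norm{x}_\infty \le |\rho|\,\norm{x}_\infty$ together with the triangle inequality and the finite moments from Assumption 1, I obtain
\begin{align*}
\mathbb{E}\big[ V(x_t) \mid x_{t-1} = x \big] \le |\rho|\, V(x) + \mathbb{E}\big[ \norm{\epsilon_t}_\infty \big],
\end{align*}
with $\mathbb{E}\big[ \norm{\epsilon_t}_\infty \big] < \infty$ since each $\epsilon_{i,t}$ has finite (fourth, hence first) moment. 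Because $|\rho| < 1$, this is a geometric drift toward the centre of the state space; the finite fourth moment additionally delivers the $L^4$-boundedness asserted alongside the ergodicity statement in the earlier lemmas.

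Next I would supply the irreducibility and minorization needed to upgrade the drift into geometric ergodicity. Conditional on $x_{t-1} = x$, the one-step transition of $x_t$ is the law of $\Phi(x)x + \epsilon_t$, a deterministic shift of the error vector whose coordinates are independent with densities that are bounded and strictly positive on compact sets (Assumption 1); this makes the chain $\psi$-irreducible and aperiodic and renders every compact set a small (petite) set. Combining the geometric drift with this minorization, the Meyn--Tweedie theory yields a unique invariant distribution and geometric convergence of the $t$-step laws to it from any initial condition, which is exactly geometric ergodicity and, after a burn-in period $T_0$, asymptotic stationarity. Finally, restricting to the first $N_0 \le N$ coordinates preserves these properties: a marginal of the (asymptotically) stationary, geometrically mixing process $\{x_t\}$ is itself (asymptotically) stationary and geometrically mixing, so $\left\{ \left\{ x_{i,t} \right\}_{ i = 1 }^{ N_0 } \right\}_{ t = T_0 }^T$ inherits the conclusion.

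The main obstacle will be the minorization step rather than the drift. Because the coefficient matrix $\Phi(x)$ is state-dependent, one must check that the one-step transition density is bounded below \emph{uniformly} on compact sets, which requires controlling $\Phi(x)x$ over compacta and exploiting the positivity of the error density carefully; establishing that compact sets are genuinely small, and not merely that the chain is irreducible, is what needs the most attention. By contrast, the contraction and drift are immediate once the row-sum normalization is observed, and that observation is essentially identical to the threshold case: the generality of $w$ costs nothing here, since the normalization forces every row sum to equal $\rho$ irrespective of the functional form of the weights.
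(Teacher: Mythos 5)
Your proposal follows essentially the same route as the paper: the paper likewise rewrites the system as $x_t = \Phi_t^{w,(N_0)} x_{t-1} + \epsilon_t$ with $\Phi_{i,j,t}^{w} = \rho\, w\left( \left| x_{i,t-1} - x_{j,t-1} \right| ; \gamma \right) / \sum_{j} w\left( \left| x_{i,t-1} - x_{j,t-1} \right| ; \gamma \right)$ and, as in its Lemma~1, bounds the maximum eigenvalue by the row-sum norm, which the normalization forces to equal $|\rho| < 1$. Your write-up is correct and in fact more complete than the paper's, which stops at the matrix representation and the norm bound, whereas you explicitly supply the Foster--Lyapunov drift, the minorization on compact sets from the positivity of the error densities, and the appeal to Meyn--Tweedie theory.
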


\begin{proof}
Proceeding as in the proof of Lemma 1, we can write part of (22) relevant for $\displaystyle \left\{ x_{i,t} \right\}_{ i = 1 }^{ N_0 }$ as below
\begin{align}
x_t = \Phi_t^{ w, (N_0)} x_{t-1} + \epsilon_t,   
\end{align}
where $\Phi_t^{ w, (N_0)} = \left[ \Phi_{i,j,t}^w \right]$ and
\begin{align}
\Phi_{i,j,t}^w = \frac{ \rho w \left( \left| x_{i, t-1} - x_{j, t-1} \right| ; \gamma \right) }{ \sum_{j=1}^N w \left( x_{i, t-1} - x_{j, t-1} \right) ; \gamma }
\end{align}
\end{proof}

\begin{lemma}[\cite{kapetanios2014nonlinear}]
Let $\displaystyle \left\{ \left\{ x_{i,t} \right\}_{ i = 1 }^{ N } \right\}_{ t=1 }^T$ follow the model below
\begin{align}
x_{i,t} = \nu_i + \frac{ \rho }{ m_{i,t} } \sum_{j = 1}^N \ell \big( \left| x_{i,t-1} - x_{j,t-1} \right|  \leq r  \big) x_{j, t-1} + \epsilon_{i,t}, 
\end{align}  
where $v_i \sim \textit{i.i.d} \left( 0, \sigma_v^2 \right)$. Then, there exits $T_0$ such that for all $T > T_0$, 
\begin{align}
\mathbb{E} \left( \left[  \frac{\rho}{ m_{i,t} } \sum_{j=1}^N \ell \left( \left| x_{i, t-1} - x_{j, t-1} \right| \leq r \right) x_{j, t-1} \right] \left( \epsilon_{i,t} - \bar{\epsilon}_i  \right) \right) = \mathcal{O} \left( \frac{1}{NT} \right). 
\end{align}  
\end{lemma}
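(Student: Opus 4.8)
The plan is to exploit the fact that the regressor is measurable with respect to the past, so that only the backward part of the within-transformed error can contribute, and then to use the cross-sectional averaging to dilute that contribution by a factor $1/N$. Write the regressor as
\begin{align*}
g_{i,t-1} := \frac{\rho}{m_{i,t}} \sum_{j=1}^N \ell\big( |x_{i,t-1}-x_{j,t-1}| \leq r \big) x_{j,t-1},
\end{align*}
and observe that $g_{i,t-1}$ is a measurable function of $\{\epsilon_{j,u}: u \leq t-1,\ 1 \leq j \leq N\}$ together with the effects $\{\nu_j\}$, hence $\mathcal{F}_{t-1}$-measurable. With $\bar{\epsilon}_i = T^{-1}\sum_{s=1}^T \epsilon_{i,s}$, expanding gives
\begin{align*}
\mathbb{E}\big[ g_{i,t-1}(\epsilon_{i,t}-\bar{\epsilon}_i) \big] = \mathbb{E}\big[ g_{i,t-1}\epsilon_{i,t} \big] - \frac{1}{T}\sum_{s=1}^T \mathbb{E}\big[ g_{i,t-1}\epsilon_{i,s} \big].
\end{align*}
Since $\epsilon_{i,s}$ is \emph{i.i.d.}\ across $t$, independent across $i$, and independent of $\{\nu_j\}$, every term with $s \geq t$ factorises into $\mathbb{E}[g_{i,t-1}]\,\mathbb{E}[\epsilon_{i,s}] = 0$; in particular the leading term $\mathbb{E}[g_{i,t-1}\epsilon_{i,t}]$ vanishes. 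The whole expression therefore collapses to $-T^{-1}\sum_{s=1}^{t-1}\mathbb{E}[g_{i,t-1}\epsilon_{i,s}]$, and it suffices to control the backward covariances for $s \leq t-1$.

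The heart of the argument is to show that, for $T$ beyond the threshold $T_0$ guaranteeing the geometric ergodicity and asymptotic stationarity established in the preceding lemma, there are $C<\infty$ and $\bar{\rho}\in(|\rho|,1)$ with
\begin{align*}
\big| \mathbb{E}\big[ g_{i,t-1}\epsilon_{i,s} \big] \big| \leq \frac{C}{N}\, \bar{\rho}^{\,t-1-s}, \qquad s \leq t-1.
\end{align*}
I would split the dependence of $g_{i,t-1}$ on $\epsilon_{i,s}$ into the own-unit term $j=i$ and the neighbour terms $j\neq i$. Because $m_{i,t}$ grows at rate $N$, the own indicator equals one and the own weight is $\rho/m_{i,t}=\mathcal{O}(1/N)$; the factor $x_{i,t-1}$ carries an $O(1)$ covariance with $\epsilon_{i,t-1}$, but the already-present $O(1/N)$ weight gives $\mathcal{O}(1/N)$ at $s=t-1$, while for $s<t-1$ the covariance $\mathbb{E}[x_{i,t-1}\epsilon_{i,s}]$ is itself $\mathcal{O}(\bar{\rho}^{\,t-1-s}/N)$ since $\epsilon_{i,s}$ re-enters $x_{i,t-1}$ only through the $O(1/N)$-weighted aggregates. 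For the neighbour terms, $\epsilon_{i,s}$ reaches $x_{j,t-1}$ with $j\neq i$ only by network propagation whose one-step transmission coefficient is of order $\rho/N$; summing the $\mathcal{O}(N)$ neighbours, each weighted by $\mathcal{O}(1/N)$ and each contributing covariance $\mathcal{O}(\bar{\rho}^{\,t-1-s}/N)$, again returns $\mathcal{O}(\bar{\rho}^{\,t-1-s}/N)$. The finite fourth moments of $x$ and $\epsilon$ from Assumption 1 keep all these covariances bounded, and the contraction $\sup_t \lambda_{\max}(\Phi_t) < 1$ used in the ergodicity lemma supplies the geometric factor.

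Summing the geometric bound gives $\sum_{s=1}^{t-1}\big|\mathbb{E}[g_{i,t-1}\epsilon_{i,s}]\big| \leq \tfrac{C}{N}\sum_{k\geq 0}\bar{\rho}^{\,k} = \mathcal{O}(1/N)$ uniformly in $t$, the series converging precisely because $|\rho|<1$; multiplying by the $T^{-1}$ in front yields $\mathbb{E}[g_{i,t-1}(\epsilon_{i,t}-\bar{\epsilon}_i)] = \mathcal{O}(1/(NT))$, as claimed. I expect the main obstacle to be the rigorous handling of the two non-smooth, data-dependent objects inside $g_{i,t-1}$, namely the indicator selectors $\ell(|x_{i,t-1}-x_{j,t-1}| \leq r)$ and the random normaliser $m_{i,t}$, neither of which can be pulled out of the expectation. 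To deal with them I would condition on $\mathcal{F}_{t-2}$, bound the conditional sensitivity of $\ell$ and of $1/m_{i,t}$ to a perturbation of $\epsilon_{i,s}$ using the boundedness and positivity of the density of $\epsilon_{i,t}$ in Assumption 1, and absorb the resulting contributions into the same $\mathcal{O}(\bar{\rho}^{\,t-1-s}/N)$ bound; the remaining, routine task is to check that all order estimates hold uniformly in $i$ and $t$, which follows from the uniform moment and stationarity hypotheses.
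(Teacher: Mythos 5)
Your decomposition and order-of-magnitude accounting are correct, and the mechanism you identify — only the backward covariances $\mathbb{E}[g_{i,t-1}\epsilon_{i,s}]$ with $s\leq t-1$ survive, each is diluted to $\mathcal{O}(\bar{\rho}^{\,t-1-s}/N)$ by the cross-sectional averaging, and the $T^{-1}$ in front of $\bar{\epsilon}_i$ supplies the remaining factor — is exactly the mechanism behind the result. The paper's own proof is far terser and organised differently: it first asserts the result for $r=\infty$ (where $m_{i,t}=N$ and the regressor is the plain cross-sectional average, so the bound reduces to the linear computation underlying its Lemma on the model $x_{i,t}=\tfrac{\rho}{N}\sum_j x_{j,t-1}+\epsilon_{i,t}$, in which the averaged shock $\tfrac{1}{N}\sum_j\xi_{j,t}$ has vanishing variance), and then extends to finite $r$ by arguing from the uniform positivity of the stationary density that $\mathbb{P}(|x_{i,t-1}-x_{j,t-1}|\leq r)\geq c>0$ uniformly in $i,j$, so that for $T>T_0$ the expected number of neighbours is a non-zero proportion of $N$. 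That last step is the one ingredient your proposal leans on without justification: you assert that "$m_{i,t}$ grows at rate $N$", but $1/m_{i,t}=\mathcal{O}(1/N)$ requires precisely the lower bound $m_{i,t}\geq cN$ that the paper derives from the density-positivity/ergodicity argument (and which is also where the threshold $T_0$ enters, since the process must be close to stationarity for the uniform bound to hold). Conversely, your write-up is more explicit than the paper's about the forward/backward split, the role of $\mathcal{F}_{t-1}$-measurability, and the geometric summability from $|\rho|<1$, and it honestly flags the technical work needed to control the data-dependent indicators and the random normaliser, which the paper's reduction to $r=\infty$ largely sidesteps. If you add the explicit argument that $m_{i,t}$ is proportional to $N$ with high probability for $T>T_0$, your route is a complete and somewhat more self-contained version of the paper's.
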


\begin{proof}
We establish the result for $r = \infty$. Then, the result follows by Lemma 1 and the assumption that the stationary density of $\left\{ x_{i,t} \right\}_{ i = 1}^{ N_0 }$ is positively uniformly over $N_0$, since this assumption implies that there exists $T_0$ such that for all $T > T_0$, and uniformly over $i$, the expected number of $j$ such that $\ell \left( \left| x_{i, t-1} - x_{j, t-1}    \right| \leq r \right) = 1$ for any $t$, is a non-zero proportion of $N_0$, for all $N_0$. To see this, note that the last statement is equivalent to the statement that, uniformly over $i$ and $j$, $\mathbb{P} \left( \left| x_{i, t-1} -  x_{j, t-1} \right| \leq r \right) \geq c > 0$ for some constant $c$. 
\end{proof}

\newpage

\begin{lemma}[\cite{kapetanios2014nonlinear}]
Let $\displaystyle \left\{ \left\{ x_{i,t} \right\}_{ i = 1 }^{ N } \right\}_{ t=1 }^T$ follow the model below
\begin{align}
x_{i,t} =  \sum_{s=1}^p \left[ \frac{ \rho_s }{ m_{i,t,s} } \sum_{j=1}^N \ell \left( \left| x_{i,t-s} -  x_{j,t-s} \right| \leq r \right) x_{j,t-s} \right] + \epsilon_{i,t}, 
\end{align}
where 
$m_{i,t,s} = \sum_{j=1}^N I \left( \left| x_{i,t-s} - x_{j,t-s}   \right| \leq r \right)$. 
Then, for all $N_0 \leq N$, there exists $T_0$ such that for all $T > T_0$, $\displaystyle \left\{ \left\{ x_{i,t} \right\}_{ i = 1 }^{ N } \right\}_{ t= T_0 }^T$ is \textit{geometrically ergodic} and \textit{asymptotically stationary} as long as $p \sum_{ i=1 }^p | \rho_s | < 1$. 
\end{lemma}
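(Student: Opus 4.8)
The plan is to reduce the order-$p$ system to a first-order (companion) Markov chain and then to invoke the same contraction-plus-ergodicity machinery used in the proof of Lemma 1. First I would stack the $p$ most recent cross-sections of the first $N_0$ units into a state vector $\mathcal{X}_t \equiv \big( x_t^{(N_0)\top}, x_{t-1}^{(N_0)\top}, \dots, x_{t-p+1}^{(N_0)\top} \big)^{\top} \in \mathbb{R}^{pN_0}$, where $x_t^{(N_0)} = (x_{1,t},\dots,x_{N_0,t})^{\top}$, and write the dynamics in the companion form $\mathcal{X}_t = \mathbf{F}_t \mathcal{X}_{t-1} + \mathcal{E}_t$. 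Here the top block-row of $\mathbf{F}_t$ collects the data-dependent matrices $\Phi_{t,s} = [\Phi_{i,j,t,s}]$ with $\Phi_{i,j,t,s} = (\rho_s/m_{i,t,s})\,\ell(|x_{i,t-s}-x_{j,t-s}|\le r)$, and the remaining block-rows are shift identities, exactly as in the $p=1$ representation $x_t^{(N_0)} = \Phi_t^{(N_0)} x_{t-1}^{(N_0)} + \epsilon_t^{(N_0)}$ appearing in the proof of Lemma 1. Since $\ell(|x_{i,t-s}-x_{i,t-s}|\le r)=1$ for the own term, $m_{i,t,s}\ge 1$ and every $\Phi_{t,s}$ is well defined.

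The key step is a pathwise contraction estimate. Each row $i$ of $\Phi_{t,s}$ has entries of common sign $\mathrm{sgn}(\rho_s)$ and $\sum_{j=1}^{N_0} |\Phi_{i,j,t,s}| = (|\rho_s|/m_{i,t,s})\sum_{j} \ell(|x_{i,t-s}-x_{j,t-s}|\le r) = |\rho_s|$, so the induced $\infty$-norm (maximum absolute row sum) satisfies $\norm{\Phi_{t,s}}_{\infty} = |\rho_s|$ uniformly in $t$ and in the realized path. Consequently the top block-row of $\mathbf{F}_t$ has row sums bounded by $\sum_{s=1}^p |\rho_s|$, and I would show that under the stated condition the companion matrix is a uniform contraction: bounding coordinatewise gives the scalar majorant $M_t \le \sum_{s=1}^p |\rho_s|\, M_{t-s} + E_t$ for $M_t = \max_{i\le N_0}|x_{i,t}|$ and $E_t=\max_{i\le N_0}|\epsilon_{i,t}|$, whose homogeneous part is stable precisely because the roots of $\lambda^p - \sum_{s} |\rho_s|\lambda^{p-s}$ lie strictly inside the unit disc. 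Equipping $\mathbb{R}^{pN_0}$ with a suitably block-weighted max-norm then yields $\sup_t \norm{\mathbf{F}_t}_{*} \le \kappa < 1$ pathwise, the exact analogue of the bound $\sup_t \lambda_{\max}(\Phi_t^{(N_0)}) < 1$ used for $p=1$; the condition $p\sum_{s=1}^p|\rho_s|<1$ is a convenient sufficient form of this requirement.

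With the uniform contraction in hand I would verify the remaining hypotheses of a standard geometric-ergodicity criterion for the Markov chain $\{\mathcal{X}_t\}$. Under Assumption 1 the innovations $\epsilon_{i,t}$ are i.i.d. across $t$ with a density that is bounded and positive on compact sets, which supplies irreducibility, aperiodicity, and the minorization on compact (small) sets; the contraction $\kappa<1$ together with $\sup_i \mathbb{E}(\epsilon_{i,t}^4)<\infty$ delivers a Foster--Lyapunov geometric drift inequality for $V(\mathcal{X}) = 1 + \norm{\mathcal{X}}_{*}^{4}$, the fourth power propagating the assumed moment bound. The drift-and-minorization theorem then gives a unique invariant law to which the chain converges geometrically, so $\{\{x_{i,t}\}_{i=1}^{N_0}\}$ is geometrically ergodic and asymptotically stationary; the initial cross-section $x_0^{(N_0)}$ need not be drawn from the invariant law, so for each $N_0\le N$ I would pick $T_0$ large enough that the transient contributed by the factor $\kappa^{t}$ is negligible for $t>T_0$, which is precisely the ``there exists $T_0$'' clause.

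The hard part will be that $\mathbf{F}_t$ is itself random and state-dependent through the neighbour indicators, so the object under study is a genuinely nonlinear chain rather than a fixed linear VAR: the delicate point is upgrading the deterministic, pathwise row-sum bound $\norm{\Phi_{t,s}}_\infty=|\rho_s|$ into a drift inequality holding in conditional expectation uniformly over all realizations of the companion matrices simultaneously, while matching it to the minorization so that the two combine in the ergodicity theorem. A secondary matter to control is the joint scaling in $N_0$ and $T$, ensuring that the chosen norm weights and the constant $\kappa$ do not deteriorate as $N_0$ grows, which is what permits the conclusion to hold for every $N_0\le N$.
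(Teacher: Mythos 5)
Your proposal takes essentially the same route as the paper: rewrite the order-$p$ system in companion form and bound the maximum absolute row sum of the top block row $\bigl( \tilde{\Phi}_t^{(1,N_0)},\dots,\tilde{\Phi}_t^{(p,N_0)} \bigr)$ by $\sum_{s=1}^p |\rho_s|$, with $p\sum_{s=1}^p|\rho_s|<1$ serving as the sufficient condition for the contraction. Your additional detail — the exact identity $\norm{\tilde{\Phi}_t^{(s,N_0)}}_{\infty}=|\rho_s|$, the weighted-norm argument handling the identity shift blocks, and the Foster--Lyapunov drift plus minorization step — correctly fills in what the paper's terse proof leaves implicit, but it is the same argument, not a different one.
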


\begin{proof}
For the higher lag order autoregressive models, we write the model in a companion form. Therefore, we write the $\left\{ x_{i,t} \right\}_{ i = 1 }^{ N_0 }$ as below
\begin{align}
x_{t}^{\left( p, N_0 \right)} &= \Phi_t^{\left( p, N_0 \right)} x_{t-1}^{\left( p, N_0 \right)} + \epsilon_t^{\left( p, N_0 \right)},  
\\
x_{t}^{\left( p, N_0 \right)} &= \left( x_{1,t},..., x_{N_0, t },..., x_{ 1 , t - p },...., x_{ N_0 , t - p } \right)^{ \prime } , \ \ \epsilon_t^{ (N_0) } = \left( \epsilon_{1,t},...., \epsilon_{ N_0 ,t}, 0 ,...., 0 \right)^{ \prime } 
\end{align}
and
\begin{align}
\Phi_t^{\left( p, N_0 \right)} 
= 
\begin{bmatrix}
\tilde{ \Phi }_t^{\left( 1, N_0 \right)}  & \tilde{ \Phi }_t^{\left( 2, N_0 \right)} & \ldots & \tilde{ \Phi }_t^{\left( p, N_0 \right)} \\
I & 0 & \ldots & 0 
\\
0 & \ldots  & I & 0
\end{bmatrix}
\end{align}
such that $\tilde{ \Phi }_t^{\left( s, N_0 \right)} = \left[ \tilde{ \Phi }_{ i, j, t }^{\left( s \right)} \right]$, for $s = 1,...,p$ and $\tilde{ \Phi }_{ i, j, t }^{\left( s, \right)} = \frac{ \rho_s }{ m_{i,t,s} } \ell \left( \left| x_{i, t-s} - x_{j, t-s} \right| \leq r \right) x_{j, t-s}$. Therefore, it is sufficient to show that the row sum norm of $\left(    \tilde{ \Phi }_t^{\left( 1, N_0 \right)}, ... , \tilde{ \Phi }_t^{\left( p, N_0 \right)}\right)$ is bounded from above by one. This requires that $ p \sum_{ s = 1}^p | \rho_s | < 1$, proving the result.  
\end{proof}

\paragraph{ Proof of Theorem \ref{theorem1111}}

Consider the model 
\begin{align}
x_{i,t} = \frac{\rho}{ m_{i,t} } \sum_{j=1}^N \ell \big( \big| x_{i, t-1} - x_{j,t-1} \big| \leq r \big) x_{j,t-1} + \epsilon_{i,t}
\end{align}
for $t= 2,...,T$, $i = 1,...,N$ where
\begin{align}
m_{i,t} = \sum_{j=1}^N \ell \big( \big| x_{i, t-1} - x_{j,t-1} \big| \leq r \big)
\end{align}
and $\left\{ \epsilon_{i,t} \right\}_{ t = 1}^T$ is an error process. 

\begin{proof}
We prove consistency of the least squares estimator of $\rho$ and $r$ which are the parameters of interest given the above econometric specification. We define with $x_{ ij, t-s } = \left| x_{i, t-s} -  x_{j, t-s} \right|$ and $\mathcal{F}_{t-1} = \sigma \left(  x_{1,t-1},...., x_{N,t-1}, x_{1, t-2}, ...., x_{N, t-2}, .... \right)$. Recall that $\rho^0$ and $r^0$ denote the true values of $\rho$ and $r$ and let $\mathbb{E}_{ \rho, r} \left( . | t-1 \right)$ denote the corresponding expectation conditional on $\mathcal{F}_{t-1}$.   
\end{proof}

\newpage

\paragraph{Proof of Theorem \ref{theorem456}}

We prove that the NLS estimator of $\left( \rho^0, \gamma^0 \right)$ denoted by $\left( \rho^0, \gamma^0 \right)$ is consistent and asymptotically normal. Notice that for consistency we need to establish the following two conditions: 

\begin{enumerate}
\item[(\textbf{C1})] We need to show that the data $x_{i,t}$, are geometrically ergodic and hence asymptotically covariance stationary. 

\item[(\textbf{C2})] We need to show that the limiting objective function is minimised at the true parameter values. 
\end{enumerate}

\underline{Condition C2:} 

\begin{align}
\mathbb{E} \big( x_{i,t} - \mathbb{E}_{ \rho^0, r^0 } \left( x_{i,t} | t - 1 \right) \big)^2 < \mathbb{E} \big( x_{i,t} - \mathbb{E}_{ \rho, r } \left( x_{i,t} | t - 1 \right) \big)^2 , \ \forall \ ( \rho , r ) \neq ( \rho^0, r^0 )
\end{align}
\begin{align}
\underset{ \delta \to 0 }{ \text{lim} } \ \mathbb{E} \left( \underset{ ( \rho, r ) \in B \left( ( \rho^0, r^0 ), \delta  \right) }{ \text{sup} } \ \big| \mathbb{E}_{ \rho^0, r^0 } \left( x_{i,t} | t - 1 \right) -  \mathbb{E}_{ \rho, r } \left( x_{i,t} | t - 1 \right) \big| \right) = 0,
\end{align}

where $B( \alpha, \beta )$ is an open ball of radius $b$ centered around $a$, is satisfied. These three conditions together imply the uniform convergence of the objective function given by 
\begin{align}
S( \rho, r ) = \frac{1}{NT} \sum_{i=1}^N \sum_{t=1}^T \left( x_{i,t} - \frac{\rho}{ m_{i,t} } \sum_{j=1}^N  \ell \big( \left| x_{i,t-1} - x_{j,t-1} \right|  \leq r  \big) x_{j, t-1}  \right)^2,
\end{align}
to the limit objective function which is the key to establishing consistency. 

\begin{proof}
Let
\begin{align*}
Q( \rho, \gamma ) := \frac{1}{NT} \sum_{ i=1 }^N  \sum_{ t=2 }^T \left( x_{i,t} - \sum_{j=1}^N  \frac{ \displaystyle w \big( \big| x_{i, t-1} - x_{j, t-1} \big| ; \gamma \big) x_{j, t-1} }{ \displaystyle \sum_{ j=1 }^N w \big( \big| x_{i, t-1} - x_{j, t-1} \big| ; \gamma \big) } \right)^2 
\end{align*}
For asymptotic normality under the assumption that $\left( \rho^0, \gamma^0 \right)$ lies in the interior of the parameter space and $w(.,.)$ is twice differentiable and integrable, it is sufficient to show that 
\begin{align*}
\frac{1}{ \sqrt{NT} } \sum_{ i=1 }^N  \sum_{ t=2 }^T \left( \sum_{ j=1 }^N \frac{ \displaystyle \rho^0 \frac{\partial w}{ \partial \gamma } \big( \big| x_{i, t-1} - x_{j, t-1} \big| ; \gamma^0 \big) x_{j, t-1} \epsilon_{i,t} }{ \displaystyle \sum_{ j=1 }^N w \big( \big| x_{i, t-1} - x_{j, t-1} \big| ; \gamma^0 \big) } \right) \to \mathcal{N} \left( 0, \mathbb{W}_1 \right),
\end{align*}

\newpage

and that, 
\begin{align*}
\frac{1}{ \sqrt{NT} } \sum_{ i=1 }^N  \sum_{ t=2 }^T \left( \sum_{ j=1 }^N \frac{ \displaystyle w \big( \big| x_{i, t-1} - x_{j, t-1} \big| ; \gamma^0 \big) x_{j, t-1} \epsilon_{i,t} }{ \displaystyle \sum_{ j=1 }^N w \big( \big| x_{i, t-1} - x_{j, t-1} \big| ; \gamma^0 \big) } \right) \to \mathcal{N} \left( 0, W_2 \right),
\end{align*}
where
\begin{align}
\mathbb{W}_1 &= \underset{ N \to \infty }{ \text{lim} } \mathbb{E} \left\{ \left[ \frac{1}{\sqrt{N} } \sum_{i=1}^N \left( \sum_{ j=1 }^N \frac{ \displaystyle \rho^0 \frac{\partial w}{ \partial \gamma } \big( \big| x_{i, t-1} - x_{j, t-1} \big| ; \gamma^0 \big) x_{j, t-1} \epsilon_{i,t} }{ \displaystyle \sum_{ j=1 }^N w \big( \big| x_{i, t-1} - x_{j, t-1} \big| ; \gamma^0 \big) }  \right)  \right]^2  \right\}  
\\
\mathbb{W}_2 &= \underset{ N \to \infty }{ \text{lim} } \mathbb{E} \left\{ \left[ \frac{1}{\sqrt{N} } \sum_{i=1}^N \left( \sum_{ j=1 }^N \frac{ \displaystyle w \big( \big| x_{i, t-1} - x_{j, t-1} \big| ; \gamma^0 \big) x_{j, t-1} \epsilon_{i,t} }{ \displaystyle \sum_{ j=1 }^N w \big( \big| x_{i, t-1} - x_{j, t-1} \big| ; \gamma^0 \big) }  \right)  \right]^2  \right\}  
\end{align}

and that $plim_{ N, T \to \infty} \left( \nabla^2 Q \left( \rho, \gamma \right) \right)^{- 1}$, exists where
\begin{align}
\nabla^2 Q \left( \rho, \gamma \right)
=
\begin{pmatrix}
\displaystyle \frac{ \partial^2 Q}{ \partial \rho^2 } \ \ & \ \ \displaystyle \frac{ \partial^2 Q}{ \partial \rho \partial \gamma     } 
\\
\\
\displaystyle \left( \frac{ \partial^2 Q}{ \partial \rho \partial \gamma} \right)^{\prime}  \ \ & \ \ \displaystyle \frac{ \partial^2 Q}{ \partial \gamma^{\prime}  \partial \gamma}
\end{pmatrix}.  
\end{align}
We focus on the term 
\begin{align}
w_{i,t} = \sum_{ j = 1}^N \frac{ \displaystyle \rho^0 \frac{\partial w}{ \partial \gamma } \big( \big| x_{i, t-1} - x_{j, t-1} \big| ; \gamma^0 \big) x_{j, t-1} }{ \displaystyle \sum_{ j=1 }^N w \big( \big| x_{i, t-1} - x_{j, t-1} \big| ; \gamma^0 \big) }  
\end{align}
By Lemma 8 which implies that $w_{i,t}$ has finite variance, uniformly over $i$, the fact that $w_{i,t}$ and $\epsilon_{i,t}$ are independent, and the fact that  $\epsilon_{i,t}$ has finite variance, uniformly over $i$, by assumption, it follows that $\left\{ w_{i,t} \epsilon_{i,t} \right\}_{ i = 1}^N$ is a martingale difference with finite second moments. Therefore, $w_t = \frac{1}{\sqrt{N} } \sum_{ i=1 }^N w_{i,t} \epsilon_{i,t}$ has zero mean and finite second moments for all $N$. Moreover, by the independence of $\epsilon_{i,t}$ across $t$, it follows that $\left\{ w_t \right\}_{ t = 1}^T$ is a martingale difference sequence. Hence, a martingale difference CLT holds for $w_t$ proving the above result.   
\end{proof}

\newpage

\subsection{A Network Generated Regression Model}

In this section, we discuss in details the framework proposed by \cite{Olmo2023}. Consider the cross-sectional regression model with $N$ the number of units
\begin{align}
Y = \mathbb{X} \gamma + \epsilon, 
\end{align}
where $Y$ is the demeaned outcome variable, and $\mathbb{X} = \left[ X_1,..., X_L \right]$ is a vector of exogenous demeaned covariates, and    $\gamma$ the vector of slope coefficients associated to $\mathbb{X}$. More precisely, \cite{Olmo2023} develop a novel econometric framework for network dependence that allows for exogenous spillover effects between the cross-sectional units. To do this,  \cite{Olmo2023} define a distance measure $d: \mathbb{Z} \times \mathbb{Z} \to \mathbb{R}^{+}$, with $\mathbb{Z}$ a set of elements that reflect the network features of the model. For example, let $\mathbf{z}_i = \left( x_{1i},...,  x_{Li}   \right)$ and $\mathbf{z}_j = \left( x_{1j},...,  x_{Lj} \right)$ be elements of this set. Then, the distance between these two elements satisfy that $d \left( \mathbf{z}_i, \mathbf{z}_j \right) \geq 0$ and 
$d \left( \mathbf{z}_i, \mathbf{z}_j \right) = 0$ if and only if $ \mathbf{z}_i = \mathbf{z}_j$. For example, a suitable metric to capture this is given by the Euclidean distance, 
\begin{align}
d \left( \mathbf{z}_i, \mathbf{z}_j \right) = \left( \sum_{l=1}^L \left( x_{l,i} - x_{l,j} \right)^2 \right)^{1 / 2}. 
\end{align}
Another choice is the distance characterized by the $l_1$ norm: $\displaystyle d \left( \mathbf{z}_i, \mathbf{z}_j \right) = \sum_{l=1}^L \left| x_{l,i} - x_{l,j} \right|$.

\subsubsection{Asymptotic Properties and Parameter Estimation}

Consider $\delta_K = \left( \gamma, \Gamma_K \right)^{\prime}$ be the OLS regression of $Y$ on $\mathbb{U}_K = \left( \mathbb{X}, \mathbb{X}_K \right)$, which is as 
\begin{align}
\widehat{\delta}_K = 
\begin{pmatrix}
\widehat{\gamma}
\\
\widehat{\Gamma}_K 
\end{pmatrix}
= 
\left( \mathbb{U}_K^{\prime} \mathbb{U}_K \right)^{-1} \mathbb{U}_K^{\prime} Y.  
\end{align}
Using the partitioned inverse one can show that 
\begin{align}
\widehat{\gamma} = \left( \widehat{\mathbb{X}}_u^{\prime} \widehat{\mathbb{X}}_u \right)^{-1} \widehat{\mathbb{X}}_u^{\prime} \left( Y - \widehat{Y} \right), 
\end{align}
with $\widehat{\mathbb{X}}_u = \mathbb{X} - \widehat{\mathbb{X}}$, where $\widehat{\mathbb{X}} = \mathbb{X}_K \left( \mathbb{X}_K  ^{\prime} \mathbb{X}_K  \right)^{-1} \mathbb{X}_K^{\prime} \mathbb{X}$. Similarly, we have that $\widehat{Y} = \mathbb{X}_K \left( \mathbb{X}_K^{\prime} \mathbb{X}_K  \right)^{-1} \mathbb{X}_K^{\prime} Y$ is the projection of Y on $\mathbb{X}_K$. Therefore, the network parameters are estimated from the partitioned regression proposed by \cite{Olmo2023} 
\begin{align}
\widehat{\Gamma }_K = \big( \mathbb{X}_K^{\prime} \mathbb{X}_K \big)^{-1} \mathbb{X}_K^{\prime} \left( Y - \mathbb{X} \widehat{\gamma} \right). 
\end{align} 
Another important quantity to make statistical inference about the network parameters is the variance of $\widehat{\delta}_K$. Let $\mathbb{X}_u = \mathbb{X} - \mathbf{E} \left[ \mathbb{X} | \mathbb{X}_K \right]$, $\Phi = \mathbf{E} \left[ \mathbb{X}^{\prime} \mathbb{X}    \right]$, and $\Psi = \mathbf{E} \left[ \left( \mathbb{X}_u \epsilon   \right) \left( \mathbb{X}_u \epsilon   \right)^{\prime} \right]$, with $\epsilon = Y - \mathbb{X} \gamma - \mathbb{X}_K \Gamma_K$.

\newpage

The asymptotic variance of the standardized estimator of $\gamma$ is $\Phi^{-1} \Psi \Phi^{-1}$, which can be estimated as 
\begin{align}
\widehat{V} \left( \widehat{\gamma} \right) = \frac{1}{N} \widehat{\Phi}^{-1} \widehat{\Psi} \widehat{\Phi}^{-1}, 
\end{align}
with $\widehat{\Phi} = \displaystyle \frac{1}{N} \sum_{i=1}^N \widehat{ \mathbb{X} }_{i,u}^{\prime} \widehat{ \mathbb{X} }_{i,u}$ and $\widehat{\Psi} = \displaystyle \frac{1}{N}  \sum_{i=1}^N e_i^2 \widehat{ \mathbb{X} }_{i,u}^{\prime} \widehat{ \mathbb{X} }_{i,u}$, where $e = Y - \mathbb{X} \widehat{\gamma} - \mathbb{X}_K \widehat{\Gamma}_K$. Under homoscedasticity of the error term, the asymptotic variance of the standardized estimator is $V \left( \widehat{\gamma} \right) = \Phi^{-1} \sigma_{\epsilon}^2 = \mathbf{E} \left[ \epsilon^2 \right]$, and the corresponding estimator is $\widehat{V} \left( \widehat{\gamma} \right) = \widehat{\Phi}^{-1} \widehat{\sigma}_{\epsilon}^2 / N$, with $ \widehat{\sigma}_{\epsilon}^2 = \frac{1}{N} \sum_{i=1}^N \displaystyle  \epsilon_i^2$. \cite{Olmo2023} derive the asymptotic variance of $\widehat{\Gamma}_{K}$ is formally derived in the proof of Proposition 2 below. 
\begin{align}
\sqrt{N} \left( \widehat{\Gamma} - \Gamma \right) = \left( \frac{ \mathbb{X}_K^{\prime} \mathbb{X}_K }{N} \right) \left( \frac{ \mathbb{X}_K^{\prime} \mathbb{X}_K }{N} \sqrt{N} \left( \gamma - \widehat{\gamma} \right) + \frac{1}{ \sqrt{N} } \mathbb{X}_K^{\prime} \epsilon  \right)
\end{align}
and let $Q = \mathbf{E} \left[ \mathbb{X}_K^{\prime} \mathbb{X}_K \right]$. Then, it follows that,
\begin{align*}
V \big( \sqrt{N} \left( \widehat{\Gamma}_K - \Gamma_K \right) \big) = Q^{-1} \big\{  \mathbf{E} \left[ \mathbb{X}_K^{\prime} \mathbb{X} \right]  \Phi^{-1} \Psi \Phi^{-1} \mathbf{E} \left[ \mathbb{X}^{\prime} \mathbb{X}_K \right] + \mathbf{E} \left[ \mathbb{X}_K^{\prime} \epsilon \epsilon^{\prime} \mathbb{X}_K \right]  \big\} Q^{-1} + 0_p(1),
\end{align*}
since $\mathbf{E} \left[ \mathbb{X}_K^{\prime} \mathbb{X} \left( \gamma - \widehat{\gamma} \right) \epsilon^{\prime} \mathbb{X}_K^{\prime} \right] = 0$.

Under homoscedasticity of the error term $\epsilon$ the variance of the standardized estimator satisfies 
\begin{align}
V \big( \sqrt{N} \left( \widehat{\Gamma}_K - \Gamma_K \right) \big) = Q^{-1} \big\{  \mathbf{E} \left[ \mathbb{X}_K^{\prime} \mathbb{X} \right]  \Phi^{-1} \sigma^2_{\epsilon} \mathbf{E} \left[ \mathbb{X}^{\prime} \mathbb{X}_K \right] \big\} Q^{-1} + 0_p(1).
\end{align}
Therefore, a suitable estimator of the variance $\widehat{\Gamma}_K$ is 
\begin{align}
\widehat{V} \left( \widehat{Q}_K^{-1} \right) \left[ \left(  \frac{1}{N} \sum_{ i=1 }^N \mathbb{X}_{iK}^{\prime} \mathbb{X}_{iK} \right)   \widehat{\Phi}^{-1}  \widehat{\sigma}^2_{\epsilon} \left(  \frac{1}{N} \sum_{ i=1 }^N \mathbb{X}_{iK}^{\prime} \mathbb{X}_{iK} \right)    \right] / N + \widehat{Q}_K^{-1}  \widehat{\sigma}_{\epsilon} / N,
\end{align}
where $\widehat{Q}_K = \displaystyle \frac{1}{N} \sum_{ i=1 }^N \mathbb{X}_{iK}^{\prime} \mathbb{X}_{iK}$ and $\mathbb{X}_{iK}$ are rows of the matrix $\mathbb{X}_{K}$. The random quantities of interest for estimating the presence of network effects are the parameters, $\beta_l (d)$, for $\ell = 1,...,L$ which are interpreted as realizations of the continuous and differentiable functional coefficients $\beta_{\ell} (d)$, with $d \in (0, C] \subset \mathbb{R}^{+}$. Then, the estimator of $\beta_{\ell} ( d)$ is defined as
\begin{align}
\widehat{\beta}_{\ell} (d) =  v( d )^{\prime} \widehat{ \Gamma }_{ L \ell },  
\end{align}
with $v( d ) = \big[ v_1( d )^{\prime},..., v_K( d )^{\prime} \big]^{\prime}$, where 
\begin{align}
v_k ( d_{ij} ) = d_{h,ijk} \otimes \big[ 1, \left( d_{ij} - z_k \right), \left( d_{ij} - z_k \right)^2,...  \left( d_{ij} - z_k \right)^1 \big]^{\prime}
\end{align}

\newpage

\subsubsection{Main Asymptotic Results}

\paragraph{Proof of Theorem 1}[\cite{Olmo2023}] The proof of this result follows from noting 
\begin{align*}
\widehat{ \gamma } &= \big( \widehat{X}_u^{\prime} \widehat{X}_u \big)^{-1} \widehat{X}_u^{\prime} \left( Y - \widehat{Y} \right) 
= 
\left( X^{\prime} M_{\mathbb{X } } X \right)^{-1} \widehat{X}^{\prime} M_{ \mathbb{X} } Y
\\
&= 
\gamma + \left( X^{\prime} M_{ \mathbb{X } } X \right)^{-1} X^{\prime} M_{\mathbb{X }} \mathbb{X}_K \Gamma_K + \left( X^{\prime} M_{ \mathbb{X } } X \right)^{-1} X^{\prime} M_{ \mathbb{X } } \epsilon,  
\end{align*}
with $\widehat{X}_u = M_{ \mathbb{X} } X$, $M_{ \mathbb{X} } = \left[ I_N - \mathbb{X}_K \left( \mathbb{X}_K^{\prime} \mathbb{X}_K \right)^{-1}  \mathbb{X}_K \right]$, and $I_N$ is the identity matrix, $Y - \widehat{Y}= M_{ \mathbb{X} } Y$. Then, given that $M_{ \mathbb{X} }$ and $\mathbb{X}_K$ are orthogonal, we obtain 
\begin{align}
\widehat{\gamma} - \gamma = \left( \frac{ X^{\prime} M_{\mathbb{X} } X }{N} \right) \frac{ X^{\prime} M_{\mathbb{X} } \epsilon  }{N}. 
\end{align}
Now, under the assumption that $\mathbb{E} \left[ \epsilon | X \right] = 0$, applying the law of large numbers, it follows that $\frac{ X^{\prime} M_{\mathbb{X} } \epsilon  }{N} \overset{ p }{ \to } \Phi$, with $\Phi$ a positive definite matrix. Therefore, $\widehat{\gamma} - \gamma = o_p(1)$, as $N \to \infty$. Then, the asymptotic convergence of the standardized estimator immediately follows by slightly adapting the proof of nonlinear additive partial models for series estimators. 

\paragraph{Proof of Proposition 1}[\cite{Olmo2023}]  In order to prove the result in Proposition 1, it is sufficient to show that $\mathbf{E} \left[ \norm{  \widehat{Q}_K - I_{\widetilde{K} } }^2 \right] = \mathcal{O} \left(  \frac{ \xi_0(h) }{ Nh} \right)$, as $N \to \infty$, with $I_{\widetilde{K} }$ the $\widetilde{K} \times \widetilde{K}$ identity matrix for $\widehat{K} = K \left( q + 1 \right)$. Thus, for a square symmetric square matrix $Q^{-1/2}$ of $Q^{-1}$, with $Q = I_{ \widehat{K} }$, the vector $\mathbb{X}_K (x) Q^{-1 / 2}$ is a nonsingular transformation of $\mathbb{X}_K (x)$, and thus it can be shown that 
\begin{align}
\tilde{ \xi }_0 (h) = \underset{ x \in \mathcal{X}_X }{ \text{sup} } \norm{ \mathbb{X}_X (x) Q^{-1 / 2 } } \leq \bar{C} \xi_0 (h), 
\end{align}
with $\bar{C}$ some positive constant. Next, we assume that $\mathbb{X}_K$ is a standardized version of our regression matrix. Let $\mathbb{X}_{ij, K}$ denote the element $( i, j)$ of the matrix $\mathbb{X}_{K}$, and $\delta_{ij}$ denote the element $( j, \ell)$ of the matrix $I_{ \widetilde{K} }$. Then, the assumption $Q = I_{ \widetilde{K} }$ implies that $\mathbf{E} \left[ \mathbb{X}_{ij, K}^{\prime} \mathbb{X}_{ij, K} \right] = \delta_{ij}$, and we have that 
\begin{align*}
\mathbf{E} \big[ \norm{ \widehat{Q}_K - I_{ \widetilde{K}} } \big] 
&= 
\mathbb{E} \left[ \sum_{j=1}^{ \widetilde{K} } \sum_{j=1}^{ \widetilde{K} } \left( N^{-1} \sum_{j=1}^{ N } \mathbb{X}_{ij, K} \mathbb{X}_{ij, K} - \delta_{j \ell} \right)^2 \right]
= 
N^{-1} \sum_{j=1}^{ \widetilde{K} } \sum_{j=1}^{ \widetilde{K} } \mathbb{E} \big[ \big( \mathbb{X}_{ij, K} \mathbb{X}_{ij, K} - \delta_{j \ell} \big)^2 \big]
\\
&\leq N^{-1} \sum_{j=1}^{ \widetilde{K} } \sum_{j=1}^{ \widetilde{K} }  \mathbb{E} \big[ \mathbb{X}^2_{ij, K} \mathbb{X}^2_{ij, K} \big] =  N^{-1} \mathbb{E} \left[ \norm{ \mathbb{X}^2_{iK} }^2 \sum_{j=1}^{ \widetilde{K}} \mathbb{X}^2_{ij,K} \right]
\\
&\leq N^{-1} \xi_0 \left( h \right)^2 \mathbb{E} \left[ \sum_{j=1}^{ \widetilde{K}} \mathbb{X}^2_{ij,K} \right] = N^{-1} \xi_0 \left( h \right)^2 \mathbb{E} \left[ \text{trace} \big( \mathbb{X}_{K}^{\prime} \mathbb{X}_{K} \big) \right]
\\
&= N^{-1} \xi_0 \left( h \right)^2 \text{trace} \left(  I_{ \widetilde{K} }  \right) = \xi_0 \left( h \right)^2 \widetilde{K} /N = \xi_0 \left( h \right)^2 \widetilde{K} /N. 
\end{align*}

\newpage

We have that, under Assumption A.5, $\underset{ x \in \mathcal{X}_X }{ \text{sup} } \ \leq \xi_0 \left( h \right)$, with $h \in 0$. Furthermore, $\mathbb{E} \left[ \text{trace} \big( \mathbb{X}_{K}^{\prime} \mathbb{X}_{K} \big) \right] = \text{trace}  \left(  I_{ \widetilde{K} } \right)$, with $\widetilde{K} = (q+1)K$. Then, $\mathbf{E} \big[ \norm{ \widehat{Q}_K - I_{ \widetilde{K}} } \big] \leq  \xi_0 \left( h \right)^2 C ( q + 1)/ ( 2hN )$. Therefore, for $q$ and $C$ fixed, $\norm{ \widehat{Q}_K - I_{ \widetilde{K}} } = \mathcal{O} \left( \xi_0 \left( h \right) \left( N h \right)^{-1/2} \right)$.   

Furthermore, since the smallest eigenvalue of $\widehat{Q}_K - I_{ \widetilde{K}}$  is bounded by $\norm{ \widehat{Q}_K - I_{ \widetilde{K}} }$, this implies that the smallest eigenvalue of  $\widehat{Q}_K$ converges to one in probability. Letting $1_N$ be the indicator function for the smallest eigenvalue of $\widehat{Q}_K$ being greater than $1 / 2$, then $\mathbb{P} \left( 1_N = 1 \right) = 1$.

\paragraph{Proof of Lemma 1}[\cite{Olmo2023}] 

Let $X = \mathbf{E} \left[ X | \mathbb{X}_K   \right] + X_u$, with $X_u$ the error term of the projection of $X$ on $\mathbb{X}_K$, and let $\widehat{X} = \mathbb{X}_K \big( \mathbb{X}_K^{\prime} \mathbb{X}_K \big) \mathbb{X}_K^{\prime} X$ be the linear projection of $X$ on $\mathbb{X}_K$. Then, $\widehat{X} = \mathbb{X}_K^{\prime} \widehat{\beta}_X$, with $\widehat{\beta}_X = \big( \mathbb{X}_K^{\prime} \mathbb{X}_K \big)^{-1} \mathbb{X}_K^{\prime} X$ the slope coefficient of the approximating regression given by $\widetilde{K}$ regressors. Then, $\widehat{X}_u = X - \widehat{X}_u$,  
\begin{align}
\widehat{X}_u = \mathbb{E} \left[ X | \mathbb{X}_K \right] + X_u -  \mathbb{X}_K \widehat{\beta}_X.
\end{align} 
Simple algebra shows that 
\begin{align}
\widehat{X}_u = X_u + \mathbb{E} \left[ X | \mathbb{X}_K \right] - \mathbb{X}_K \beta_X + \mathbb{X}_K \left( \beta_X - \widehat{\beta}_X   \right). 
\end{align}
 
\begin{assumption}[\cite{Olmo2023}]
For every $K$, there exists a constant $\kappa > 0$, such that
\begin{align}
\mathbf{E} \left[ X | \mathbb{X}_K \right] - \mathbb{X}_K \beta_X = \mathcal{O} \left( K^{- \kappa} \right)
\end{align} 
\end{assumption}
with $\beta_X$ the vector of coefficients associated to the linear prediction model $\mathbb{X}_K \beta_X$.  Therefore, it follows 
\begin{align}
\widehat{\Phi}^* &= \frac{1}{N} \sum_{ i = 1}^N X_{i,u}^{\prime} X_{i,u} + \mathcal{S}_f + 2 \mathcal{S}_{X_u , f} ,
\\
\widehat{\Phi} &= \frac{1}{N} \sum_{ i = 1}^N X_{i,u}^{\prime} X_{i,u}, \ \mathcal{S}_f = \frac{1}{N} \sum_{ i = 1}^N \big( \mathbb{E} \left[ X_i | \mathbb{X}_K \right] - \mathbb{X}_{K,i} \widehat{\beta}_X \big)^{\prime} \big( \mathbb{E} \left[ X_i | \mathbb{X}_K \right] - \mathbb{X}_{K,i} \widehat{\beta}_X \big)
\\
\mathcal{S}_{X_u , f}  &= \frac{1}{N} \sum_{ i = 1}^N X_{i,u}^{\prime}  \big( \mathbb{E} \left[ X_i | \mathbb{X}_K \right] - \mathbb{X}_{K,i} \widehat{\beta}_X \big) 
\end{align}
where $\mathbb{X}_{K,i}$ represents the $i$th row of matrix $\mathbb{X}_{K}$. The convergence in probability of the matrix $\widehat{\Phi}$ implies $\frac{1}{N} \sum_{ i = 1}^N X_{i,u}^{\prime} X_{i,u} \overset{ p }{ \to } \Phi$, where $\Phi = \mathbb{E} \left[ X_{u}^{\prime} X_{u} \right]$ by applying the law of large numbers. Next we show that 
\begin{align*}
\mathcal{S}_{f} = \frac{1}{N} \norm{ \mathbb{E} \left[ X | \mathbb{X}_K  \right] - \mathbb{X}_{K}  \widehat{\beta}_X  }^2 \leq \frac{1}{N} \norm{  \mathbb{E} \left[ X | \mathbb{X}_K  \right] - \mathbb{X}_{K} \beta_f }^2 + \frac{1}{N} \norm{ \mathbb{X}_{K} \left( \beta_f - \widehat{\beta}_f  \right) }^2.
\end{align*}

\newpage

From Assumption A.6, it follows that $\norm{ \mathbb{E} \left[ X | \mathbb{X}_K  \right] - \mathbb{X}_{K} \beta_X }^2 = \mathcal{O} \left( K^{-2 \kappa } \right)$. Now, it holds that
\begin{align}
\frac{1}{N} \norm{ \mathbb{X}_K \left( \beta_X - \widehat{\beta}_X \right) }^2 
=  \left( \beta_X - \widehat{\beta}_X \right)^{\prime} \widehat{Q}_K \left( \beta_X - \widehat{\beta}_X \right) = \mathcal{O}_p(1) \norm{  \beta_X - \widehat{\beta}_X }^2
\end{align}
by Proposition 1. We now prove that $\norm{  \beta_X - \widehat{\beta}_X }^2 = \mathcal{O} \left( K^{- \kappa } \right)$. Notice that, 
\begin{align*}
\norm{ \beta_X - \widehat{\beta}_X } 
&= 
\norm{ \big( \mathbb{X}_K^{\prime}   \mathbb{X}_K \big)^{-1} \mathbb{X}_K^{\prime} \big( \mathbb{E} \left[ X | \mathbb{X}_K \right] + X_u - \mathbb{X}_K \beta_X \big) }
\\
&\leq   \norm{ \big( \mathbb{X}_K^{\prime}   \mathbb{X}_K \big)^{-1} \mathbb{X}_K^{\prime} \big( \mathbb{E} \left[ X | \mathbb{X}_K \right] - \mathbb{X}_K \beta_X \big) } +  \norm{ \big( \mathbb{X}_K^{\prime}   \mathbb{X}_K \big)^{-1} \mathbb{X}_K^{\prime} X_u }
\end{align*}

\paragraph{Proof of Theorem 1}[\cite{Olmo2023}]

We prove first the consistency of the estimator $\gamma$. The estimator is defined as 
\begin{align}
\widetilde{ \gamma } = \big( \widehat{X}_u^{\prime} \widehat{X}_u \big)^{-1} \widehat{X}_u^{\prime}  \big( Y - \widehat{Y} \big), 
\end{align}

where $\widehat{Y} = \mathbb{X}_K \big( \mathbb{X}_K^{\prime} \mathbb{X}_K \big)^{-1} \mathbb{X}_K^{\prime}$, the linear projection of Y on $\mathbb{X}_K$. The infeasible estimator is
\begin{align}
\widetilde{ \gamma } = \big( \widehat{X}_u^{\prime} \widehat{X}_u \big)^{-1} \widehat{X}_u^{\prime} \left( Y - \mathbb{E} \left[ Y | \mathbb{X}_K \right] \right),  
\end{align}

\subsection{Network Cluster Robust Inference}

Recently, the literature is considering cluster-robust methods are widely used to account for cross-sectional dependence. The standard model of cluster dependence partitions the set of observations into may independent clusters. Usually researchers use HAC variance estimators, which account for spatial or temporal dependence. According to \cite{leung2023network} such simulation evidence show that for spatially or temporally dependent data, tests using HAC estimators can exhibit size distortion in smaller samples, unlike cluster-robust inference methods. Specifically, \cite{leung2023network} develop a novel framework with simulated and theoretical evidence for applying cluster-robust methods to network-dependent data. Within the proposed setting of \cite{leung2023network} a main econometric challenge to ensure robust estimation and inference is to obtain obtain HAC robust estimation techniques which have some special characteristics especially in the choice of the bandwidth for data is network, rather than spatially, dependent. Moreover, the simulation results of  \cite{leung2023network} show that there are advantages to using cluster-robust methods for network data, in comparison to to classical HAC estimators. In particular, \cite{leung2023network} find that the randomization test, a leading method for cluster-robust inference with a small number of clusters, better controls size in smaller samples, provided clusters have low conductance. However, when no such clusters exist, the test, when naively applied to the output of spectral clustering, can exhibit substantial size distortion even in large samples, unlike the HAC estimator. This is because clusters in this case cannot generally satisfy the requirement of asymptotic independence, so we expect all existing cluster-robust methods to exhibit similar size distortion.

\newpage

\subsubsection{Setup}

Following the framework proposed by \cite{leung2023network}, observe a set of units $\mathcal{N}_n = \left\{ 1,..., n \right\}$, data $W_i \in \mathbb{R}^{d_w}$ associated with each unit $i \in \mathcal{N}_n$, and an undirected network or graph $\boldsymbol{A}$ on $\mathcal{N}_n$. We represent $\boldsymbol{A}$ as a binary, symmetric adjacency matrix with $ij-$th entry $A_{ij}$, where $A_{ij} = 1$ signifies a link between $i$ and $j$. There are no self-links, meaning $A_{ii} = 0$ for all $i$. Moreover, the settings of \cite{leung2023network} treats $\boldsymbol{A}$ as fixed (conditional upon), whereas $\left\{ W_i \right\}$ is random and not necessarily indetically distributed. Let $\theta_0 \in \mathbb{R}^{d_{\theta}}$ be the estimand of interest and $g : \mathbb{R}^{d_{\theta}} \to \mathbb{R}^{d_{\theta}}$ a moment function such that 
\begin{align}
\mathbb{E} \left[ g( W_i, \theta_0 ) \right] = \boldsymbol{0} \ \ \ \forall i \in \mathcal{N}_n.    
\end{align}
Denote with $\displaystyle  G(\theta) = \frac{1}{n}  \sum_{i=1}^n g( W_i, \theta )$, and $\Psi_n$ be a weighted matrix. Define the generalized method of moments (GMM) estimator
\begin{align}
\hat{\theta} = \underset{ \theta }{  \mathsf{argmin} } \ \hat{G}(\theta)^{\prime} \Psi_n \hat{G}(\theta).    
\end{align}
Various studies in the literature develop cluster-robust methods for GMM when $\left\{ W_i \right\}$ satisfies weak temporal or spatial dependence. However, we instead employ a notion of weak network dependence, which is analogous to mixing conditions used in time series and spatial econometrics.

\subsubsection{Cluster}

Cluster-robust methods take as input a partition of $\mathcal{N}_n$ into $L$ clusters, which we denote by $\left\{  \mathcal{C}_{\ell}  \right\}_{\ell = 1}^L$. Being a partition, the clusters satisfy $\cup_{\ell = 1}^L \mathcal{C}_{\ell} = \mathcal{N}_n$ and $\mathcal{C}_{\ell} \cap \mathcal{C}_m = \varnothing$ for all $\ell \neq m$. Notice that $\mathcal{C}_{\ell}$ depends on $\boldsymbol{A}$ since different networks may be partitioned differently. Furthermore, the number of "quality" clusters in a network is small, so we develop inference procedures robust to a small number of clusters. We refer to these procedures as ``small-L cluster robust methods''. In particular such methods have been shown to exhibit substantially improved size control relative to conventional cluster-robust procedures. Moreover, under weak network dependence, observations in different components are independent, so components may therefore be treated as separate clusters. This implies that, if a network consists of many components, standard many-cluster asymptotics are applicable, and one can simply use standard errors on the components (see,  \cite{leung2023network}). 

\subsubsection{Variance-Covariance Matrix Identifications}

Let $\mathcal{N}_k = \left( \mathcal{C}_K, \boldsymbol{X}_K  \right)$ be a $\mathcal{C}-$stationary network and $\mathcal{C}_K^* \in \mathbb{G}_k$ be its realization. Then a natural estimator of $\mathsf{Var} \left( \boldsymbol{X}_K | \mathcal{C}_K = \mathcal{C}_K^{*} \right)$ is given by
\begin{align}
\widehat{\mathsf{Var} } \left( \boldsymbol{X}_K | \mathcal{C}_K = \mathcal{C}_K^{*} \right) = \big( \hat{\gamma}_{\mathcal{N}_K, \mathcal{C} } \left( \mathcal{C} (i,j; \mathcal{C}_K^{*} \right) \big)_{i,j \in \mathcal{V}_K}
\end{align}

\newpage

\subsubsection{Conductance}

Consider formal conditions under which a given set of clusters can be used for asymptotically valid cluster-robust inference. In particular, we consider a sequence of networks with associated clusters indexed by the network size $n$, taking $n$ to infinity while keeping the number of clusters $L$ fixed. Usually the correct way to consider asymptotics is to employ a sequence of networks. Under weak network dependence and standard regularity conditions, we can show that 
\begin{align*}
\frac{1}{ \sqrt{n} } 
\begin{pmatrix}
n_1  \hat{G}_1(\theta_0)
\\
\vdots
\\
n_L  \hat{G}_L(\theta_0)
\end{pmatrix}
\overset{d}{\to}
\mathcal{N} \left( \boldsymbol{0}, \boldsymbol{\Sigma}^* \right), \ \ \ \boldsymbol{\Sigma}^* =
\begin{pmatrix}
\rho_1 \boldsymbol{\Sigma}_{11} \ \ & \ \ \sqrt{\rho_1 \rho_2 } \boldsymbol{\Sigma}_{12} \ \ & \ \ \hdots \ \ \sqrt{\rho_1 \rho_L } \boldsymbol{\Sigma}_{11}
\\
\sqrt{\rho_2 \rho_1 } \boldsymbol{\Sigma}_{21} \ \ & \ \ \rho_2 \boldsymbol{\Sigma}_{12} \ \ & \ \ \hdots \ \ \sqrt{\rho_2 \rho_L } \boldsymbol{\Sigma}_{2L}
\\
\vdots \ \ & \ \ \vdots \ \ & \ \  \ \ \vdots
\\
\sqrt{\rho_2 \rho_1 } \boldsymbol{\Sigma}_{L1} \ \ & \ \ \sqrt{\rho_L \rho_2 } \boldsymbol{\Sigma}_{12} \ \ & \ \ \hdots \ \ \rho_L \boldsymbol{\Sigma}_{LL}
\end{pmatrix}
\end{align*}
where $\rho_{\ell} = \mathsf{lim}_{n \to \infty} n_{\ell} / n$ and
\begin{align}
\boldsymbol{\Sigma}_{\ell m} = \mathsf{lim}_{n \to \infty} \mathsf{Cov} \left( \sqrt{n_{\ell}} \hat{G}_{\ell}(\theta_0), \sqrt{n_{m}} \hat{G}_{m}(\theta_0)  \right)     
\end{align}
The above asymptotic result can ensure that the vector of GMM estimates $\left\{ \sqrt{n} \left( \hat{\theta}_{\ell} - \theta_0  \right) \right\}_{\ell = 1}^L$ is asympotically normal. Conventional cluster-robust methods require independent clusters. Then small-L cluster robust methods exploit the weaker requirement of \textit{asymptotic independence}, that $\boldsymbol{\Sigma}_{\ell m} = \boldsymbol{0}$ for all $\ell \neq m$. Furthermore, the symmetry of the limit distribution which, under the group of transformations corresponds to having off-diagonal blocks equal to zero. In other words, we interpret the zero off-diagonal blocks $\boldsymbol{\Sigma}_{\ell m}$ as the key requirement for the validity of cluster-robust methods.  

\begin{definition}[\cite{leung2023network}]

\begin{itemize}
    
    \item[(a)] The edge boundary size of $S \subset \mathcal{N}_n$ with respect to $\boldsymbol{A}$ is given by 
    \begin{align}
        \left| \partial_{\boldsymbol{A}} (S) \right| = \sum_{i \in S} \sum_{ j \in \mathcal{N}_n \ S} A_{ij}
    \end{align}
    the number of links involving a unit in $S$ and a unit not in $S$.
    
    \item[(b)] The volume of $S$ is $\mathsf{vol}_{\boldsymbol{A}}(S) = \sum_{i \in S} \sum_{j=1}^n A_{ij}$, the sum of the degree $\sum_{j=1}^n A_{ij}$ of unit $i$ in $S$. 
    
    \item[(c)] The conductance of $S$ (assuming it has at least one link) is given by 
    \begin{align}
      \phi_{\boldsymbol{A}}(S) = \frac{ \left| \partial_{\boldsymbol{A}} (S) \right| }{ \mathsf{vol}_{\boldsymbol{A}}(S)  }    
    \end{align}
    
\end{itemize}

\end{definition}

\begin{remark}
Conductance is a $[0,1]$ measure of how integrated S is within $\boldsymbol{A}$. Therefore, our main condition for guaranteeing that $\boldsymbol{\Sigma}_{\ell m} = \boldsymbol{0}$ for all $\ell \neq m$ is $\underset{ 1 \leq \ell \leq L }{ \mathsf{max} } \phi_{\boldsymbol{A}}(\mathcal{C}_{\ell}) \to 0, \ \ \ \text{as} \ \ n \to \infty$. In particular, the above condition says that \textit{maximal conductance of the clusters is small}, which means that each cluster's boundary size is of smaller order than its volume. Furthermore, the average degree is asymptotically bounded which rules out a completely connected network in which all units are linked, which is the denset possible topology (see, \cite{leung2023network}).     
\end{remark}

\newpage 

In other words, the main assumption imposed by \cite{leung2023network} regarding dependence between observations is that they are weakly dependent, meaning that random variables approach independence as the distance between their locations grows. 
Moreover, our methods involve partitioning the data into groups defined by the researcher. We define $G_N$ to be the total number of groups and index them by $g = 1,..., G_N$. The OLS estimator can be written as below
\begin{align}
\hat{\beta}_N = \left(  \sum_{g=1}^G x_g^{\prime} x_g \right)^{-1} \left( \sum_{g=1}^G x_g^{\prime} y_g \right)    
\end{align}
using group-level notation. Therefore, the most common approach to inference with weakly dependent data is to use a plug-in estimator, call it $\widetilde{V}_N$, of the variance matrix of $x_{s_i} \varepsilon_{s_i}$, along with the usual large-sample approximation for the distribution of $\widehat{\beta}_N$ such that 
\begin{align}
\sqrt{N} \left( \hat{\beta}_N - \beta \right) \overset{d}{\to} \mathcal{N} \left( 0, Q^{-1} V Q^{-1} \right)  
\end{align}
where the long-run covariance matrix is defined as below
\begin{align}
\boldsymbol{V} = \underset{ N \to \infty }{\mathsf{lim}} \ \mathsf{Var} \left( \frac{1}{\sqrt{N}} \sum_{i=1}^N x_{s_i} \varepsilon_{s_i} \right)    
\end{align}
where $Q$ is the limit of the second moment matrix for $x$. The typical method uses the sample average of $x_{s_i} x_{s_i}^{\prime}$ to estimate $Q$ and plugs-in a consistent estimators, $\widetilde{V}_N$, of $V$ to arrive at the approximation
\begin{align}
\hat{\beta}_N \sim \mathcal{N} \left( \beta, \frac{1}{N} \left[ \frac{1}{N} \sum_{i=1}^N x_{s_i} x_{s_i}^{\prime} \right]^{-1} \widetilde{V}_N \left[ \frac{1}{N} \sum_{i=1}^N x_{s_i} x_{s_i}^{\prime} \right]^{-1}  \right).
\end{align}

\subsubsection{Asymptotic Theory}

We consider a sequence of networks and associated clusters, both implicitly indexed by the network size $n$. Recall that $n_{\ell} = \left| \mathcal{C}_{\ell}  \right|$, the size of cluster $\ell$.

\begin{assumption}[Limit Sequence, see \cite{leung2023network}] (a) The number of clusters $L$ is fixed as $n \to \infty$. (b) For any $\ell = 1,..., L, n_{\ell} / n \rho_{\ell} \in [0,1]$.
\end{assumption}

Furthermore, based on the notions proposed by \cite{kojevnikov2021bootstrap}, \cite{leung2023network} consider a formal notion of \textit{weak network dependence} called $\psi-$dependence. The particular notion of dependence is analogous to familiar notions of temporal or spatial weak dependence, except distance between observations is measured using path distance. In other words, weak dependence simply means that the correlation between two sets of observations decays as the network distance between the sets grows.

\newpage 

Therefore, for any $H, H^{\prime} \subset \mathcal{N}_n$, define with 
\begin{align}
G_H &= \left( g(W_i, \theta_0 \right)_{ i \in H}  
\\
\ell_{\boldsymbol{A}} (H, H^{\prime} ) &= \mathsf{min} \left\{ \ell_{\boldsymbol{A}}(i,j): i \in H, j \in H^{\prime} \right\}
\\
\ell_{\boldsymbol{A}} (H, H^{\prime} ) &= \mathsf{min} \left\{ \ell_{\boldsymbol{A}}(i,j): i \in H, j \in H^{\prime} \right\}    
\end{align}
the distance between the two sets. Let $\mathcal{L}_d$ be the set of bounded $\mathbb{R}-$valued Lipschitz functions on $\mathbb{R}^d$, on $\norm{f}_{\infty} = \mathsf{sup}_x | f(x) |$, $\mathsf{Lip} (f)$, the Lipschitz constant of $f \in \mathcal{L}_d$, and  
\begin{align}
\mathcal{P}_n ( h, h^{\prime}; s ) = \big\{ (H,H^{\prime}): H, H^{\prime} \subset \mathcal{N}_n, |H| = h, |H^{\prime}| = h^{\prime},  \ell_{\boldsymbol{A}} (H, H^{\prime} ) \geq s \big\},     
\end{align}
the set of pairs of sets $H, H^{\prime}$ with respective sizes $h, h^{\prime}$ that are at least distances $s$ apart in the network. 

Moreover,  \cite{leung2023network} define the $i'$s $s-$neighbourhood boundary $\mathcal{N}^{\partial}_{\boldsymbol{A}} (i,s) = \left\{ j \in \mathcal{N}_n : \ell_{\boldsymbol{A}} (i,j) = s \right\}$, and its $k-$th moment is given by 
\begin{align}
\delta_n^{\partial} (s;k) = n^{-1} \sum_{i=1}^n \left| \mathcal{N}^{\partial}_{\boldsymbol{A}} (i,s)   \right|^k.    
\end{align}

\medskip

\paragraph{Open Problems}  The last few decades the time series econometrics literature has considered applications of moderate deviations from unit root in univariate autoregressive models and multivariate predictive regression models as well as in settings such as panel data . In particular, predictive regression models with regressors generated as stable or unstable autoregressive processes has recently seen a growing attention in the econometrics and statistics literature. A related open problem include the development of a formal econometric framework for cluster-based predictive regression models. In this direction, our objective is to develop an identification and estimation method that allows us to develop the asymptotic theory for a suitable system estimator under the presence of both nonstationarity and network dependence. To establish a robust estimation and inference procedure, developing central limit theory that explicitly accounts for the dependence between the cross-sectional and time series data will be essential, such as the notation of \textit{stable dependence} (see,  \cite{anatolyev2021limit}). On the other hand, in our framework we assume that the network induced dependence corresponds to the nodes of a network and the corresponding time series observations. We remain agnostic regarding the structure of the network as well as the form of the distance between nodes which will require to impose further metric space assumptions. As a result, the proposed approach allows to further generalize to a data structure that allows for both a cross-sectional network type dependence in the time series dimension.

\newpage

\section{Interval Estimation and Forecasting Methods}
\label{Section6}

\begin{example}[Threshold Factor Model] 

Suppose that $\boldsymbol{y}_t$ be an observed $( p \times 1 )$ time series. Then, the general form of a factor model for time series data is given by 
\begin{align}
\boldsymbol{y}_t = \boldsymbol{A} \boldsymbol{x}_t + \boldsymbol{\varepsilon}_t , \ \ \   t = 1,...,n,
\end{align}
where $\boldsymbol{x}_t = \left( x_{t,1}, x_{t,2},..., x_{t,k} \right)^{\prime}$  is a set of unobserved factor time series with dimension $k$ that is much larger than $p$ the dimension of the $\boldsymbol{y}_t$ vector. Notice that in order to differentiate the signal component from the error process, strong cross-sectional dependence is not allowed for $\left\{ \boldsymbol{\varepsilon}_t \right\}$. As a result, the noise process $\left\{ \boldsymbol{\varepsilon}_t \right\}$ may have weak serial dependence such that 
\begin{align}
\frac{1}{n} \sum_{t=1}^n  \sum_{s=1}^n \left| \mathbb{E} \left( \boldsymbol{\varepsilon}_t^{\prime} \boldsymbol{\varepsilon}_t \right) \right| < C   
\end{align}
where $C$ is a positive constant. 

Notice that one disadvantage of these assumptions is that the dynamic component and error process are not separable when the dimension is finite, since both of them have serial dependence. Alternatively, a different setting for time series data it assumes that the error process is white noise without serial dependence, such that $\mathbb{E} \left( \boldsymbol{\varepsilon}_t^{\prime} \boldsymbol{\varepsilon}_t \right) = 0$, for $t \neq s$. In other words, the particular specification implies that the observed process $\boldsymbol{y}_t$ is completely driven by the common factors. In other words, this ensures that the signal component is identifiable when the dimension of the panel time series is finite. Thus, the error process is allowed to have strong cross-sectional correlation.  

We consider the following two-regime threshold factor model for high-dimensinal time series. Let $\boldsymbol{y}_t$ be an observed $\left( p \times 1 \right)$ and $\boldsymbol{x}_t$ be an $\left( k \times 1 \right)$ latent factor process with time series observations such that 
\begin{align}
\boldsymbol{y}_t = 
\begin{cases}
\boldsymbol{A}_1 \boldsymbol{x}_t + \boldsymbol{\varepsilon}_{t,1}, & z_t \leq \gamma_0 
\\
\boldsymbol{A}_2 \boldsymbol{x}_t + \boldsymbol{\varepsilon}_{t,2}, & z_t > \gamma_0 
\end{cases}
\ \ \text{and} \ \ \ \boldsymbol{\varepsilon}_{t,i} \sim \mathcal{N} \left( \boldsymbol{0}, \boldsymbol{\Sigma}_{t,i} \right)
\end{align}

The particular methodology used in the paper of  , estimates the unknown threshold variable by partitioning the eigenspace and checking the rank properties of the corresponding matrices depending on which regime is "switched on". The rationale behind this approach is the following. However, the main condition that should hold is that at the partition step of the procedure, the corresponding moment matrices from the two partitions do not loose their rank properties, otherwise the identification of the model will lead to a singularity, making it impossible to detect the presence of the true threshold effect.  

Therefore, when we denote the true threshold variable with $\gamma_0$ we can split the data into two subsets such that $\left\{ z_t < \gamma_0 \right\}$ and $\left\{ z_t > \gamma_0 \right\}$ and so we define the following objective function
\begin{align}
G( \gamma_0 ) = \sum_{i=1}^2 \norm{ \boldsymbol{B}_i^{\prime} \boldsymbol{M}_i \boldsymbol{B}_i  }_2 = \sum_{i=1}^2  \norm{ \sum_{h=1}^{h_0} \sum_{j=1}^2 \boldsymbol{B}_i^{\prime} \boldsymbol{\Sigma}_{y,i,j}(h,r) \boldsymbol{\Sigma}_{y,i,j}(h,r)^{\prime} \boldsymbol{B}_i }_2 
\end{align}
\end{example}

\newpage

\begin{remark}
Notice that the objective function $G( \gamma_0 )$ measures the sum of the squared norm of the projections of the cross moment matrices $\boldsymbol{\Sigma}_{y,i,j}(h,r)$ onto the space of the matrices given by $\mathcal{M} \left(   \boldsymbol{B}_i  \right)$ for $h = 1,..., h_0$. An alternative framework is presented by \cite{seo2016dynamic} which implies the use of a GMM approach for identification and estimation. Notice that this approach is quite useful especially when modelling nonlinear asymmetric dynamics and unobserved individual heterogeneity. Moreover, this allows for both the thresholds and regressors to be endogenous. 
\end{remark}

\subsection{Interval Estimation}   

Consider a random sample $\mathcal{X}_n = \left\{ \boldsymbol{X}_1,...,  \boldsymbol{X}_n \right\}$ from an unknown $p-$dimensional distribution depending on a scalar parameter $\theta$. In particular, the aim is to construct a $( 1 - \alpha )-$level upper confidence bound for $\theta$, based on some appropriate point estimator $\hat{\theta}_n$ for $\theta$. Denote by $\mathcal{X}^{*}_n = \left\{ \boldsymbol{X}^{*}_1,...,  \boldsymbol{X}^{*}_n \right\}$ a random sample of size $n$ taken with replacement from $\mathcal{X}_n$ and let $\hat{\theta}_n^{*}$ be the equivalent function of $\mathcal{X}^{*}_n$ as $\hat{\theta}_n$ is to  $\mathcal{X}_n$. Let $\hat{\sigma}_n^{*}$ be the bootstrap version of $\hat{\sigma}_n$. Then, the standard percentile $( 1 - \alpha )-$level bootstrap confidence bounds for $\theta$ can be written as below:
\begin{align}
\hat{I}_H (\alpha) &:= \left( - \infty,  \hat{\theta}_n - n^{-1/2} \hat{\sigma}_n \hat{\xi}_{n, \alpha}  \right]
\\
\hat{I}_B (\alpha) &:= \left( - \infty,  \hat{\theta}_n + n^{-1/2} \hat{\sigma}_n \hat{\xi}_{n, \alpha}  \right]
\end{align}
where $\xi_{n, \alpha}$ is the $\alpha-$quantile of the bootstrap distribution of the standardized $\hat{\theta}_n$ such that 
\begin{align}
\mathbb{P}^{*} \left( n^{1/2} \left( \hat{\theta}^{*}_n - \hat{\theta}_n \right) \big/ \hat{\sigma}_n \leq \hat{\xi}_{n, \alpha} \right) = \alpha  
\end{align}
where $\mathbb{P}^{*}$ refers to the conditional probability law of $\mathcal{X}_n^{*}$ given $\mathcal{X}_n$. Generally, it holds that 
\begin{align}
\mathbb{P} \left( \theta \in \hat{I}_B ( \alpha ) \right) = 1 - \alpha + \mathcal{O} \left( n^{-1/2} \right) = \mathbb{P} \left( \theta \in \hat{I}_H ( \alpha ) \right).   
\end{align}

\subsubsection{Prediction Intervals under state-varying predictability}

Following the framework proposed by \cite{yan2022factor}, the forecast error is given by $\left( \hat{y}_{T+h|T} - y_{T+h|T} \right)$. Notice that when forecasting, one would be more interested in the distribution of the forecast error. Since $Y_{T+h} = y_{T+h|T} + \varepsilon_{T+h}$, it follows that the forecasting error is given by 
\begin{align}
\hat{\varepsilon}_{T+h} =  \hat{y}_{T+h|T} - y_{T+h} = \left( \hat{y}_{T+h|T} - y_{T+h} \right) - \varepsilon_{T+h}   
\end{align}
Thus, if $\hat{\varepsilon}_{T+h}$ is asymptotically normal with 
\begin{align}
\mathsf{var} \left( \hat{\varepsilon}_{T+h}  \right) = \mathsf{var} \left( \hat{y}_{T+h|T} - y_{T+h} \right) = \sigma^2 + \mathsf{var} \left( \hat{y}_{T+h|T} \right)    
\end{align}
with $\mathsf{var} \left( \hat{y}_{T+h|T} \right) = B_T^2$.

\newpage 

\begin{corollary}[\cite{yan2022factor}]
Under the assumptions of the above theorem and assuming that $\varepsilon_t$ is normally distributed then the forecasting error $\hat{\varepsilon}_{T+h}$ is given by 
\begin{align}
\hat{\varepsilon}_{T+h} \sim \mathcal{N} \left( 0, \sigma^2 + \hat{\varepsilon}_{T+h} \right).    
\end{align}
\end{corollary}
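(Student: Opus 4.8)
The plan is to treat the forecast error as the difference of two stochastically independent pieces --- an in-sample estimation error and an out-of-sample innovation --- and then to invoke the fact that a convolution of Gaussians is again Gaussian with added variances. First I would record the decomposition already displayed above, namely
\begin{align*}
\hat{\varepsilon}_{T+h} = \hat{y}_{T+h|T} - y_{T+h} = \big( \hat{y}_{T+h|T} - y_{T+h|T} \big) - \varepsilon_{T+h},
\end{align*}
which follows by substituting $y_{T+h} = y_{T+h|T} + \varepsilon_{T+h}$. The first bracketed term is the estimation error of the conditional-mean forecast and is measurable with respect to the in-sample information set $\mathcal{F}_T$; the second term $\varepsilon_{T+h}$ is the genuine future shock.

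Next I would invoke the asymptotic normality result of the preceding theorem, which delivers that the estimation component is asymptotically mean-zero Gaussian with $\mathsf{var}\big( \hat{y}_{T+h|T} - y_{T+h|T} \big) = \mathsf{var}\big( \hat{y}_{T+h|T} \big) = B_T^2$, treating the true conditional mean $y_{T+h|T}$ as fixed given $\mathcal{F}_T$. Under the added hypothesis that $\varepsilon_t$ is normally distributed, the innovation satisfies $\varepsilon_{T+h} \sim \mathcal{N}\big( 0, \sigma^2 \big)$. The key structural step is to argue that these two components are independent: because $\{ \varepsilon_t \}$ is a martingale difference sequence with respect to the natural filtration, $\varepsilon_{T+h}$ is orthogonal to --- and, under the Gaussian assumption, independent of --- everything in $\mathcal{F}_T$, and in particular of the $\mathcal{F}_T$-measurable estimation error. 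Consequently the cross term vanishes and the variance is additive, yielding $\mathsf{var}\big( \hat{\varepsilon}_{T+h} \big) = \sigma^2 + B_T^2$, which matches the variance intended in the statement (the $B_T^2$ introduced immediately above it).

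Finally I would assemble the distributional conclusion: a difference of two independent Gaussian random variables is again Gaussian, with mean equal to the difference of the means (here $0 - 0 = 0$) and variance equal to the sum of the variances. Hence $\hat{\varepsilon}_{T+h} \sim \mathcal{N}\big( 0, \sigma^2 + B_T^2 \big)$, understood in the asymptotic sense inherited from the theorem. The main obstacle I anticipate is making the independence --- rather than mere zero correlation --- rigorous, and reconciling the exact-distribution phrasing of the statement with the asymptotic normality actually supplied by the theorem: one must verify that the limiting law of the estimation error is \emph{jointly} Gaussian with $\varepsilon_{T+h}$, so that the uncorrelatedness furnished by the martingale-difference property upgrades to genuine independence and the Gaussian convolution argument applies to the joint limit rather than to each marginal in isolation.
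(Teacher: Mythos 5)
Your proposal is correct and follows essentially the same route as the paper, which likewise starts from the decomposition $\hat{\varepsilon}_{T+h} = \big( \hat{y}_{T+h|T} - y_{T+h|T} \big) - \varepsilon_{T+h}$ and adds the variances to obtain $\sigma^2 + \mathsf{var}\big( \hat{y}_{T+h|T} \big) = \sigma^2 + B_T^2$; your extra care about upgrading the martingale-difference orthogonality to joint-Gaussian independence is a detail the paper leaves implicit. You also correctly read the variance in the displayed statement as $\sigma^2 + B_T^2$ rather than the literal (typographically garbled) $\sigma^2 + \hat{\varepsilon}_{T+h}$.
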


Therefore, a confidence interval can be obtained by replacing $\sigma^2$ by its consistent estimate, $\frac{1}{T-h} \sum_{t=1}^{T-h} \hat{\varepsilon}^2_{T+h}$. Therefore, we can get the $95\%$ confidence interval for the conditional mean $y_{T+h|T}$ such that
\begin{align}
\left( \hat{y}_{T+h|T} - 1.96 \sqrt{ \hat{\mathsf{var}} \left( \hat{y}_{T+h|T} \right)},  \hat{y}_{T+h|T} + 1.96 \sqrt{ \hat{\mathsf{var}} \left( \hat{y}_{T+h|T} \right)} \right)    
\end{align}

Therefore, the $95\%$ confidence interval for the forecasting variable $y_{T+h}$ is given by 
\begin{align}
\left( \hat{y}_{T+h|T} - 1.96 \sqrt{  \hat{\sigma}^2 +  \hat{\mathsf{var}} \left( \hat{y}_{T+h|T} \right)},  \hat{y}_{T+h|T} + 1.96 \sqrt{ \hat{\sigma}^2 + \hat{\mathsf{var}} \left( \hat{y}_{T+h|T} \right)} \right)    
\end{align}

\begin{remark}
Notice that in small samples, the above confidence intervals of conditional and unconditional forecasts is likely to underrepresent the true sampling uncertainty due to the uncertainty of $\widehat{\gamma}$. 
\end{remark}

\subsection{Testing for threshold effects}

Moreover, the framework of  \cite{yan2022factor} allows to test for the presence of threshold effects of our model. The question of interest is whether the nonlinear term 
\begin{align}
z_t ( \gamma ) = I( q_t \leq \gamma ) \left( F_t^{\prime}, x_t^{\prime}     \right)^{\prime}    
\end{align}
enters the regression model, that is, whether $\delta_T = 0$. If $\gamma_0$ were known, the traditional Lagrange multiplier statistic and Wald statistic would be good choices to solve this issue. However, $\gamma_0$ is usually unknown and not identified under the null hypothesis. Therefore, we propose a sup-Wald statistic which extends the seminal work of Hansen to test the linearity of the model, as it does not require prior knowledge of $\gamma_0$. Therefore, to facilitate the establishment of distributional theory, we consider a local-to-null reparametrisation: $\delta_T = \frac{c}{\sqrt{T}}$. Thus, based on this model specification the null hypothesis is $H_0: c = 0$ with alternative $H_1: c \neq 0$. Therefore, for each $\gamma \in \Gamma = \left[ \underline{\gamma}, \bar{\gamma} \right]$, we obtain the estimator $\hat{\beta}(\gamma)$ and $\hat{\delta}(\gamma)$. Then, we build a sup-Wald statistic to test the presence of threshold effects such that
\begin{align}
\mathsf{sup} \mathcal{W}_T = \underset{ \gamma \in \Gamma }{ \mathsf{sup} }  \bigg\{ T. \widehat{\underline{\beta}}(\gamma)^{\prime} R \left( R^{\prime} \widehat{M}^{*}_T (\gamma, \gamma )^{-1} \widetilde{\Omega}_T \left( \gamma, \gamma \right) \widehat{M}^{*}_T (\gamma, \gamma )^{-1} R \right)^{-1} R^{\prime} \widehat{\underline{\beta}}(\gamma)^{\prime}  \bigg\}   
\end{align}
where $R = \big[ 0, I_q \big]$ and $q$ denotes the dimension of $z_t$. 
Thus, we obtain that 
\begin{align}
\hat{M}^{*}_T (\gamma, \gamma ) = \frac{1}{T} \sum_{t=1}^{T-h} \widehat{z}_t^{*} (\gamma) \widehat{z}_t^{*} (\gamma)^{\prime} 
\ \ \
\widetilde{\Omega}_T \left( \gamma, \gamma \right) = \frac{1}{T} \sum_{t=1}^{T-h} \widehat{\varepsilon}_{t+h}^2 (\gamma) \widehat{z}_t^{*} (\gamma) \widehat{z}_t^{*} (\gamma)^{\prime}
\end{align}

\newpage

\begin{theorem}[\cite{yan2022factor}]
Suppose that Assumptions hold and $\sqrt{T} / N \to 0$. Then, under the local alternative $H_1: \delta_T = \frac{c}{ \sqrt{T} }$, we have that 
\begin{align}
\mathsf{sup} \ \mathcal{W}_T &\overset{ d }{ \to } \underset{ \gamma \in \Gamma  }{\mathsf{sup} } \ \mathcal{W}^c(\gamma)  
\\
\mathcal{W}^c(\gamma) 
&= \left[ \bar{J}^{*}(\gamma) + \bar{Q}(\gamma) c \right]^{\prime} \bar{K}(\gamma, \gamma ) \left[ \bar{J}^{*}(\gamma) + \bar{Q}(\gamma) c \right] \\
\bar{Q}(\gamma) &= R^{\prime} \Phi^{*-1 \prime} M^* (\gamma, \gamma)^{-1}
M^{*}(\gamma, \gamma_0) \Phi^{*,-1 \prime} R 
\end{align}
\end{theorem}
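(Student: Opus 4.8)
The plan is to reduce the statement to a functional weak-convergence result for the threshold score process indexed by $\gamma$, followed by an application of the continuous mapping theorem. The first step is to obtain a uniform (in $\gamma \in \Gamma$) asymptotically linear representation of the restricted estimator. Expanding the estimating equations that define $\widehat{\underline{\beta}}(\gamma)$ and inserting the local alternative $\delta_T = c/\sqrt{T}$, I would write
\begin{align*}
\sqrt{T}\, \widehat{\underline{\beta}}(\gamma) = \widehat{M}_T^{*}(\gamma,\gamma)^{-1}\left[ \frac{1}{\sqrt{T}}\sum_{t=1}^{T-h} \widehat{z}_t^{*}(\gamma)\,\varepsilon_{t+h} + \frac{1}{T}\sum_{t=1}^{T-h}\widehat{z}_t^{*}(\gamma)\, z_t(\gamma_0)^{\prime}\, c \right] + o_p(1),
\end{align*}
uniformly over $\gamma$. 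The first bracketed term is the mean-zero stochastic part that will generate $\bar J^{*}(\gamma)$, while the second term is the deterministic drift induced by the alternative; taking probability limits of the sample cross-moment $\frac{1}{T}\sum \widehat{z}_t^{*}(\gamma) z_t(\gamma_0)^{\prime}$ produces the factor $M^{*}(\gamma,\gamma_0)$, so that after pre-multiplication by $R^{\prime}\Phi^{*-1 \prime}M^{*}(\gamma,\gamma)^{-1}$ and accounting for the rotation $\Phi^{*,-1 \prime}$ relating estimated and population factor-augmented regressors one recovers exactly the term $\bar Q(\gamma)c$.

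Second, I would establish the empirical-process limit. The key object is the partial-sum process $\gamma \mapsto \frac{1}{\sqrt{T}}\sum_{t=1}^{T-h} z_t^{*}(\gamma)\,\varepsilon_{t+h}$, which I claim converges weakly in $\ell^{\infty}(\Gamma)$ to a zero-mean Gaussian process $\bar J^{*}(\gamma)$ with covariance kernel governed by the long-run limits of $\widetilde{\Omega}_T(\gamma,\gamma^{\prime})$. Because the regressor $z_t(\gamma)=I(q_t\le\gamma)(F_t^{\prime},x_t^{\prime})^{\prime}$ depends on $\gamma$ only through the indicator $I(q_t\le\gamma)$, finite-dimensional convergence follows from the martingale-difference CLT under the MDS assumption on $\varepsilon_{t+h}$, and tightness reduces to stochastic equicontinuity over the Vapnik--Chervonenkis class of half-line indicators $\{I(q_t\le\gamma):\gamma\in\Gamma\}$.

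Third, I would control the latent-factor estimation error. Since $F_t$ is replaced by a principal-components estimate, $\widehat{z}_t^{*}(\gamma)$ differs from $z_t^{*}(\gamma)$ by a term of order $\mathcal{O}_p(1/\sqrt{N})$ per observation; the condition $\sqrt{T}/N\to0$ is precisely what guarantees that the accumulated estimation error $\frac{1}{\sqrt{T}}\sum_{t}\big(\widehat{z}_t^{*}(\gamma)-z_t^{*}(\gamma)\big)\varepsilon_{t+h}$ is $o_p(1)$ uniformly in $\gamma$, so that the estimated and infeasible score processes share the same weak limit $\bar J^{*}(\gamma)$. This uniform negligibility of the factor error, jointly with the equicontinuity argument over the indicator class, is the step I expect to be the main obstacle, since it requires bounding the interaction between the non-smooth dependence on $\gamma$ and the $\gamma$-independent but estimated factor space.

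Finally, I would assemble the pieces. Uniform consistency of $\widehat{M}_T^{*}(\gamma,\gamma)\to M^{*}(\gamma,\gamma)$ and $\widetilde{\Omega}_T(\gamma,\gamma)\to\Omega(\gamma,\gamma)$ over $\gamma\in\Gamma$ (again using $\sqrt{T}/N\to0$ and a uniform law of large numbers for the indicator class) shows that the studentizing matrix in $\mathcal W_T$ converges uniformly to $\bar K(\gamma,\gamma)$. Combining this with the linear representation gives $\sqrt{T}\,R^{\prime}\widehat{\underline{\beta}}(\gamma)\Rightarrow \bar J^{*}(\gamma)+\bar Q(\gamma)c$ in $\ell^{\infty}(\Gamma)$, and since the map $v(\cdot)\mapsto\sup_{\gamma}\,v(\gamma)^{\prime}\bar K(\gamma,\gamma)v(\gamma)$ is continuous with respect to the sup-norm topology, the continuous mapping theorem yields $\mathsf{sup}\,\mathcal W_T\overset{d}{\to}\sup_{\gamma\in\Gamma}\mathcal W^{c}(\gamma)$ with $\mathcal W^{c}(\gamma)=[\bar J^{*}(\gamma)+\bar Q(\gamma)c]^{\prime}\bar K(\gamma,\gamma)[\bar J^{*}(\gamma)+\bar Q(\gamma)c]$, as claimed.
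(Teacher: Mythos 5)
The paper you are working from does not actually prove this statement: it is a survey that imports Theorem 3.2 of \cite{yan2022factor} verbatim, states it with a citation, and moves directly to the remark on computing $p$-values via the Hansen-type multiplier simulation. So there is no in-paper argument to compare against; what can be assessed is whether your strategy would reconstruct the proof in the cited source, and it essentially would. The four-step architecture you lay out --- a uniform-in-$\gamma$ asymptotically linear expansion of $\widehat{\underline{\beta}}(\gamma)$ that separates a mean-zero score from a deterministic drift $M^{*}(\gamma,\gamma)^{-1}M^{*}(\gamma,\gamma_0)c$ sandwiched by the factor-rotation matrices $\Phi^{*}$, weak convergence of the threshold score process in $\ell^{\infty}(\Gamma)$ via an MDS CLT plus stochastic equicontinuity over the VC class of half-line indicators, negligibility of the principal-components estimation error, and the continuous mapping theorem applied to $v(\cdot)\mapsto\sup_{\gamma}v(\gamma)^{\prime}\bar{K}(\gamma,\gamma)v(\gamma)$ --- is exactly the canonical route for sup-Wald statistics with a parameter unidentified under the null in a factor-augmented regression, and it reproduces the stated form of $\bar{Q}(\gamma)$ correctly.

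The one place where your argument as written would not go through is the rate accounting in the factor-error step. You assert that $\widehat{z}_t^{*}(\gamma)$ differs from $z_t^{*}(\gamma)$ by $\mathcal{O}_p(1/\sqrt{N})$ per observation and that $\sqrt{T}/N\to 0$ is ``precisely'' what kills the accumulated error $\frac{1}{\sqrt{T}}\sum_t(\widehat{z}_t^{*}(\gamma)-z_t^{*}(\gamma))\varepsilon_{t+h}$. A naive accumulation of per-observation errors of size $N^{-1/2}$ in a $\sqrt{T}$-normalized sum gives $\mathcal{O}_p(\sqrt{T/N})$, which is $o_p(1)$ only under the stronger condition $T/N\to 0$. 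To get negligibility under $\sqrt{T}/N\to 0$ you must exploit the structure of $\widehat{F}_t-HF_t$ as a cross-sectional average that is only weakly correlated with $\varepsilon_{t+h}$, so that the time-series averaging delivers a second rate improvement and the cross term is $\mathcal{O}_p(\sqrt{T}/\delta_{NT}^2)$ with $\delta_{NT}=\min(\sqrt{N},\sqrt{T})$, in the manner of the Bai--Ng analysis of factor-augmented regressions; moreover this bound has to be made uniform over $\gamma$, which requires combining it with the maximal inequality for the indicator class rather than treating the two sources of error separately. You correctly identify this interaction as the main obstacle, but the proof as proposed does not yet contain the decomposition that resolves it, and without it the stated rate condition does not follow from your argument.
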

\begin{remark}
Notice that Theorem 3.2 in \cite{yan2022factor} gives the asymptotic distribution of the sup-Wald test under the alternative such that $H_1: \delta_T = \frac{c}{ \sqrt{T} }$. Under $H_0: c = 0$ and
\begin{align}
\underset{ \gamma \in \Gamma  }{\mathsf{sup} } \ \mathcal{W}^0(\gamma) = \underset{ \gamma \in \Gamma  }{\mathsf{sup} } \ \bar{J}(\gamma)^{\prime} \bar{K}(\gamma, \gamma)^{-1} \bar{J}(\gamma)  
\end{align}
\end{remark}

\begin{remark}
Notice also that the limiting distribution of the sup$_T \ \mathcal{W}_T$ depends on the Gaussian process $\bar{J}^*(\gamma)$, which is not pivotal, and we cannot tabulate the asymptotic critical values for the sup-Wald statistic. Thus, we can compute the p-value based on the procedure proposed by Hansen (1996) such 
\begin{itemize}
    
    \item[(i)] Generate $v_t$, $t = 1,..., T-h$ independently from the standard normal distribution. 
    
    \item[(ii)] Calculate 
    \begin{align}
        \tilde{J}_T^* (\gamma) = \frac{1}{\sqrt{T}} \sum_{t=1}^{T-h} \widehat{z}_t^*(\gamma) \widehat{\varepsilon}_{t+h}(\gamma) v_t.  
    \end{align}
    
    \item[(iii)] Compute the statistic such that 
    \begin{align*}
         \mathsf{sup} \mathcal{W}_T^* \equiv \underset{ \gamma \in \Gamma }{ \mathsf{sup} } \ \bigg\{  \widetilde{J}_T^* (\gamma)^{\prime} \widehat{M}_T^* (\gamma, \gamma)^{-1} R  \bigg( R^{\prime} \widehat{M}_T^* (\gamma, \gamma)^{-1}  \widetilde{\Omega}_T(\gamma, \gamma) \widehat{M}_T^* (\gamma, \gamma)^{-1} R \bigg)^{-1} R^{\prime} \widehat{M}_T^* (\gamma, \gamma)^{-1} \widetilde{J}_T^* (\gamma) \bigg\}
    \end{align*}
    
    \item[(iv)] Repeat steps 1-3 $B$ times and denote the resulting $\mathsf{sup} \ \mathcal{W}_T^{*}$ test statistic as $\mathsf{sup} \ \mathcal{W}_{T,j}^*$ for $j = 1,...,B$. 
    
    \item[(v)] Calculate the simulated $p-$value for the $\mathsf{sup} \ \mathcal{W}_{T}$ as below 
    \begin{align}
        \widehat{p}_T = \frac{1}{J} \sum_{j=1}^J \mathbf{1} \left\{ \mathsf{sup} \ \mathcal{W}_{T,j}^* \geq \mathsf{sup} \ \mathcal{W}_T   \right\}
    \end{align}
    and reject the null hypothesis when $\widehat{p}_T$ is smaller than $\alpha \in (0,1)$, the nominal level.  
    
\end{itemize}
\end{remark}

\begin{theorem}[\cite{yan2022factor}]
Suppose that Assumptions hold and $\sqrt{T} / N \to 0$.Then, under the null hypothesis $H_0: c = 0$, we have that $\mathsf{sup} \ \mathcal{W}_T^* \overset{d}{\to} \underset{ \gamma \in \Gamma }{ \mathsf{sup} } \ \mathcal{W}^0 (\gamma)$. The asymptotic distribution implies that the empirical distribution of $\left\{ \mathsf{sup} \ \mathcal{W}_{T,j}^* \right\}_{ j=1}^{ \bar{J} }$ approximates the asymptotic distribution of $\mathsf{sup} \ \mathcal{W}_T$ under the null hypothesis quite well.
\end{theorem}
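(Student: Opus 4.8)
The plan is to establish conditional weak convergence of the simulated process $\gamma \mapsto \widetilde{J}_T^*(\gamma)$ to the same mean-zero Gaussian process $\bar{J}(\cdot)$ that governs the null limit of the original statistic, and then to transfer this to the quadratic form by the continuous mapping theorem. First I would condition on the sample $\mathcal{X}_n$ and treat the multipliers $v_t$ as the only source of randomness. Conditionally, $\widetilde{J}_T^*(\gamma)$ is a linear functional of the \textit{i.i.d.} $\mathcal{N}(0,1)$ variables $v_t$ with coefficients $\widehat{z}_t^*(\gamma)\widehat{\varepsilon}_{t+h}(\gamma)/\sqrt{T}$, so for each fixed $\gamma$ it is exactly Gaussian with conditional covariance kernel
\begin{align*}
\widehat{\Omega}_T(\gamma_1,\gamma_2) = \frac{1}{T} \sum_{t=1}^{T-h} \widehat{z}_t^*(\gamma_1) \widehat{\varepsilon}_{t+h}(\gamma_1) \widehat{\varepsilon}_{t+h}(\gamma_2) \widehat{z}_t^*(\gamma_2)^{\prime}.
\end{align*}
The first substantive step is to show that $\widehat{\Omega}_T(\gamma_1,\gamma_2) \overset{p}{\to} \Omega(\gamma_1,\gamma_2)$ uniformly over $(\gamma_1,\gamma_2) \in \Gamma \times \Gamma$, where $\Omega$ is the limiting covariance embedded in $\bar{K}$; this delivers finite-dimensional conditional convergence of $\widetilde{J}_T^*$ to $\bar{J}(\cdot)$ via the Cramér--Wold device applied to linear combinations of the $v_t$.

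Second, I would establish conditional tightness (stochastic equicontinuity) of $\widetilde{J}_T^*(\cdot)$ on $\ell^\infty(\Gamma)$, with probability approaching one. Because the threshold index enters only through the indicator $I(q_t \le \gamma)$, the relevant function class is a bounded VC-type class, so I would invoke a conditional multiplier central limit theorem (in the spirit of Hansen (1996) and the multiplier inequalities of van der Vaart and Wellner) to control the conditional increments of the process uniformly in $\gamma$ while accounting for the weak cross-sectional and serial dependence permitted by the maintained assumptions. Combining finite-dimensional convergence with tightness yields $\widetilde{J}_T^*(\cdot) \Rightarrow \bar{J}(\cdot)$ conditionally in probability.

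Third, I would control the studentizing matrices. Since $\widehat{M}_T^*(\gamma,\gamma)$ and $\widetilde{\Omega}_T(\gamma,\gamma)$ are computed from the original data, their uniform-in-$\gamma$ convergence to $M^*(\gamma,\gamma)$ and $\Omega(\gamma,\gamma)$ follows from the same uniform laws of large numbers used in the preceding theorem, and holds unconditionally, hence also along a sequence of events of probability tending to one. The integrand
\begin{align*}
\widetilde{J}_T^*(\gamma)^{\prime} \widehat{M}_T^*(\gamma,\gamma)^{-1} R \big( R^{\prime} \widehat{M}_T^*(\gamma,\gamma)^{-1} \widetilde{\Omega}_T(\gamma,\gamma) \widehat{M}_T^*(\gamma,\gamma)^{-1} R \big)^{-1} R^{\prime} \widehat{M}_T^*(\gamma,\gamma)^{-1} \widetilde{J}_T^*(\gamma)
\end{align*}
is then a continuous functional of $(\widetilde{J}_T^*(\cdot), \widehat{M}_T^*, \widetilde{\Omega}_T)$ on a set bounded away from singularity; the continuous mapping theorem gives conditional weak convergence of this process to $\bar{J}(\gamma)^{\prime}\bar{K}(\gamma,\gamma)^{-1}\bar{J}(\gamma) = \mathcal{W}^0(\gamma)$, and a final application of the continuity of the $\sup_{\gamma\in\Gamma}$ map over $\ell^\infty(\Gamma)$ yields $\sup \mathcal{W}_T^* \overset{d}{\to} \sup_{\gamma\in\Gamma} \mathcal{W}^0(\gamma)$ conditionally in probability, which is the claim.

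The hard part will be executing the second step with estimated factors: $\widehat{z}_t^*(\gamma)$ contains the estimated factors $\widehat{F}_t$ rather than the rotated true factors, and I must show that this first-stage estimation error is asymptotically negligible uniformly over $\gamma$. This is precisely where the rate condition $\sqrt{T}/N \to 0$ enters, as it guarantees that $\sup_{\gamma \in \Gamma} \norm{ \widetilde{J}_T^*(\gamma) - \widetilde{J}_{T,F}^*(\gamma) } = o_p(1)$, where $\widetilde{J}_{T,F}^*$ is the infeasible multiplier process built from the true factors, so that the factor-estimation noise does not perturb the conditional limit. Verifying this uniform negligibility jointly with the conditional equicontinuity over the VC class indexed by $\gamma$ is the technical crux; the remaining matrix-convergence and continuous-mapping arguments are routine once it is in place.
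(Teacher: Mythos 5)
The paper does not actually contain a proof of this theorem: it is stated as a cited result from the referenced work on factor-driven threshold testing, and the surrounding text only describes the simulation algorithm (steps (i)--(v)) and remarks on the non-pivotality of the limit. There is therefore no in-paper argument to compare yours against. On its own merits, your outline is the standard and correct route for establishing validity of a Hansen-(1996)-type multiplier bootstrap: conditional on the data the process $\gamma \mapsto \widetilde{J}_T^*(\gamma)$ is exactly Gaussian (so finite-dimensional convergence reduces to uniform convergence of the conditional covariance kernel, and the Cram\'er--Wold step you mention is not even needed), conditional stochastic equicontinuity over the VC class generated by $I(q_t \le \gamma)$ gives tightness, the studentizing matrices converge uniformly by the same ULLNs as in the preceding theorem, and the continuous mapping theorem transfers the result to the sup of the quadratic form. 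You also correctly identify the genuinely delicate point, namely that $\widehat{z}_t^*(\gamma)$ involves estimated rather than true (rotated) factors, and that the rate condition $\sqrt{T}/N \to 0$ is exactly what makes the first-stage estimation error uniformly negligible in the multiplier process. One small item worth making explicit if you execute the argument: you should verify that the residuals $\widehat{\varepsilon}_{t+h}(\gamma)$ entering the multiplier weights converge to the true errors uniformly in $\gamma$ under the null, so that the conditional covariance kernel matches the kernel $\bar{K}$ appearing in $\mathcal{W}^0(\gamma)$ rather than some perturbed version; this is implicit in your first step but is a separate uniform consistency claim from the convergence of $\widehat{M}_T^*$ and $\widetilde{\Omega}_T$.
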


\newpage

\begin{example}
A baseline linear threshold regression model is given by
\begin{align}
y_t = x_t^{\prime} \beta_0 + z_t^{\prime} \delta_0 \cdot \mathbf{1} \left\{ q > \gamma_0 \right\} + u_t    
\end{align}
Therefore, to apply any statistical estimation method, it is important to determine whether the threshold effect is statistically significant. We consider a test of no threshold effect against the presence of threshold effects. The null and alternative hypotheses are such that
\begin{align}
\mathcal{H}_0: \delta_0 = 0 \ \ \ \text{for any} \ \gamma_0 \in \Gamma \ \ \ \text{against} \ \ \ \mathcal{H}_1: \delta_0 \neq 0 \ \ \ \text{for some} \ \gamma_0 \in \Gamma.   
\end{align}
All the unknown parameters are identifiable under the alternative hypothesis while the threshold parameter $\gamma_0$ is not identified under the null. Thus, a general method for testing the presence of threshold effects in various regression settings, is to use the sup-likelihood-ratio statistics. A key ingredient of their testing framework is that there exist an objective function and a corresponding extreme estimator for the model with no threshold (under the null) and for the model with threshold effect (under the alternative). Then, the criterion function is expressed: 
\begin{align}
Q_n^{*} (\gamma) &\equiv \underset{ \theta \in \Theta }{ \mathsf{arg \ max} } \ Q_n^{*} (\gamma; \theta)     
\\
\widetilde{Q}_n^{*} &\equiv \underset{ \beta \in \mathcal{B}, \delta \in 0 }{ \mathsf{arg \ max} } \  Q_n^{*} (\gamma; \theta),     
\end{align}
where $\mathcal{B}$ is a compact set containing $\beta_0$ as the interior. The above criterion function is well-defined since $Q_n^{*} (\gamma; \theta)$ does not depend on $\gamma$ when $\delta = 0$. The limiting distribution of the sup-LR statistic under the null hypothesis is highly non-standard and non-normal and thus cannot be directly tabulated. 
\end{example}

\begin{example}[Dynamic Panel Regression with a threshold]
The structural equation of interest is
\begin{align}
y_{it} = \alpha_i + \beta_1 y_{it-1} \boldsymbol{1} \left\{ q_{it} < \gamma \right\} + \beta_2 y_{it-1} \boldsymbol{1} \left\{ q_{it} > \gamma \right\} + u_{it},    
\end{align}    
where the threshold parameter $\gamma \in \Gamma$, such that $\Gamma$ is a strict subset of the support of $q_{it}$. Notice that this threshold parameter is unknown and needs to be estimated. Moreover, the slope parameters $\beta = \left( \beta_1, \beta_2 \right)^{\prime}$ are the slope parameters of interest assumed to be different from each other and $\alpha_i$ is the individual specific effect assumed to be fixed. Furthermore, for econometric identification purposes we allow for a "small threshold effect" which allows statistical inference for the threshold parameter. Relevant studies on threshold estimation and inference include among others \cite{liu2020threshold}, \cite{armillotta2022testing}, \cite{chiou2018nonparametric}, \cite{barigozzi2018simultaneous}, \cite{yu2021threshold}.
\end{example}

Moreover, the endogenous threshold regression model (ETR) has attracted much attention in recent econometric practice. This is due to the fact that economic relationships may shift over time.

\newpage

Suppose the first-stage regression is given by 
\begin{align}
\boldsymbol{x} = \boldsymbol{\Pi}^{\prime} \boldsymbol{z} + \boldsymbol{v},    
\end{align}
where the instruments $\boldsymbol{z}$ contain both exogenous regressors such as 1 and $q$, and excluded exogenous regressors, $\mathbb{E} \left(  \boldsymbol{v} | \boldsymbol{z} \right) = 0$ and $\mathbb{E} \left( u | \boldsymbol{z} \right) = 0$. Then, by taking the conditional expectation we obtain the following expression  
\begin{align}
\mathbb{E} \left[ y | \boldsymbol{z} \right] 
=
\left( \Pi_0^{\prime} \boldsymbol{z} \right)^{\prime} \beta_{10} \boldsymbol{1} \left\{ q \leq \gamma_0 \right\}
+   
\left( \Pi_0^{\prime} \boldsymbol{z} \right)^{\prime} \beta_{20} \boldsymbol{1} \left\{ q > \gamma_0 \right\} =: \mathsf{g}_{CH} ( \boldsymbol{z}; \theta_0 ), 
\end{align}
where $\theta = \left( \theta^{\prime}, \Pi^{\prime} \right)^{\prime}$. Then, the estimator proposed by \cite{caner2004instrumental} for the endogenous threshold variable $\gamma$ minimizes the sample analogue of the following unconditional condition 
\begin{align}
\mathbb{E} \left[ \big( y - \mathsf{g}_{CH} ( \boldsymbol{z}; \theta ) \big)^2 \right]     
\end{align}
Consider the following GMM estimators which use the moment conditions given below
\begin{align}
\mathbb{E} \left[ \boldsymbol{z} \left( y - \boldsymbol{x}^{\prime} \beta_2 - \boldsymbol{x}^{\prime} \delta_{\beta} \boldsymbol{1} \left\{ q \leq \gamma \right\} \right) \right] = \boldsymbol{0}.    
\end{align}
to identify $\gamma$. Although GMM estimators are essential in handling endogeneity, compared with $M-$estimators, they suffer from at least three drawbacks. First GMM changes the nature of $\gamma$ from a threshold point (which is nonregular) to a quantile of $q$ (which is regular), which implies that the convergence rate of $\widehat{\gamma}$ is $n^{1/2}$, much slower than the convergence rate $n$ of $M-$estimators. Second, $\gamma$ is not always identiable by GMM, for example, when $q$ is independent of the rest of the system such as the time index in structural change models, $\gamma$ cannot be identified by GMM. Third, GMM requires more instruments than our CF estimators for identification, which implies that GMM may have less applicability since good instruments are hard to find in practice. Specifically, the model becomes nonlinear threshold regression, and $\gamma$ can be estimated by minimizing the objective function below
\begin{align*}
S_n ( \theta, k ) = \frac{1}{n} \sum_{i=1}^n \left\{ \big[ y_i - \beta_1^{\prime} \widehat{\mathsf{g}}_{xi} - \kappa . \lambda_1 \left( \gamma - \widehat{\mathsf{g}}_{qi} \right) \big]^2 \boldsymbol{1} \left\{ q_i \leq \gamma \right\} + \big[ y_i - \beta_2^{\prime} \widehat{\mathsf{g}}_{xi} - \kappa . \lambda_2 \left( \gamma - \widehat{\mathsf{g}}_{qi} \right) \big]^2 \boldsymbol{1} \left\{ q_i > \gamma \right\}  \right\}.
\end{align*}
where $\widehat{\mathsf{g}}_{xi} = \widehat{\Pi}_x^{\prime} \boldsymbol{z}_i$  and $\widehat{\mathsf{g}}_{qi} = \widehat{\pi}^{\prime} \boldsymbol{z}_i$, such that the estimators $\widehat{\Pi}_x$ and $\widehat{\pi}$ are obtained from a first-stage regression. The estimation procedure of the parameter $\gamma$ requires to regress $y_i$ on $\mathsf{g}_{xi} \boldsymbol{1} \left\{ q_i \leq \gamma \right\}$ and $\mathsf{g}_{xi} \boldsymbol{1} \left\{ q_i > \gamma \right\}$ where
\begin{align}
\widehat{\Lambda}_i (\gamma) 
:= 
\lambda_1 \left( \gamma - \widehat{\mathsf{g}}_{qi} \right) \boldsymbol{1} \left\{ q_i \leq \gamma \right\} +  \lambda_2 \left( \gamma - \widehat{\mathsf{g}}_{qi} \right) \boldsymbol{1} \left\{ q_i > \gamma \right\}  
\end{align}
to obtain $\widehat{\beta}_1 (\gamma), \widehat{\beta}_2 (\gamma)$ and $\widehat{\kappa} (\gamma)$. Then, $\gamma$ can be estimated by the extremum problem as below
\begin{align}
\widehat{\gamma} = \underset{ \gamma \in \Gamma  }{ \mathsf{arg min}  }  \ S_n(\gamma),  \ \ \text{where} \ \ \ S_n ( \gamma ) = S_n \big( \gamma, \widehat{\gamma}_1 (\gamma), \widehat{\gamma}_2 (\gamma), \widehat{\kappa} ( \gamma ) \big),  \ \ \Gamma = [ \underline{\gamma}, \bar{\gamma} ].   
\end{align}

\newpage

Given $\widehat{\gamma}$, the parameter $\beta$ can be estimated by 2SLS/GMMs as in CH. In particular, in the small-threshold-effect framework of \cite{hansen2000sample}, KST show that $\widehat{\gamma}$ is $n^{1 - 2 \alpha}-$consistent and its asymptotic distribution is based on a functional of two-sided Brownian motion, under the assumption that both $\delta_{\beta}$ and $\kappa$ are $\mathcal{O}\left( n^{ 1 - 2 \alpha} \right)$ with $\alpha \in (0, 1/2)$. Now assume that the unknown parameters $( \gamma, \delta, \kappa )^{\prime}$ lie a compact set with their true value in the interior. Moreover, we define centered versions of $S_n$ and $S$ as 
\begin{align}
\mathcal{Q}_n ( \gamma, \delta, \kappa ) \equiv \mathcal{S}_n ( \gamma, \delta, \kappa ) -  \mathcal{S}_n ( \gamma_0, \delta_0, \kappa_0 )   
\end{align}

\subsubsection{Bootstrap Prediction Intervals for Factor Models}

Assume that $y_{t+h}$ follows a factor-augmented regression model (see, \cite{bai2006confidence}) given by
\begin{align}
y_{t+h} = \alpha^{\prime} F_t + \beta^{\prime} W_t + \varepsilon_{t+h}, \ \ \ t = 1,..., T-h,     
\end{align}
where $W_t$ is a vector of observed regressors (including for instance lags of $y_t$), which jointly with $F_t$ help forecast $y_{t+h}$. Then, the $k-$dimensional vector $F_t$, describes the common latent factors in the panel factor model, $X_{it} = \lambda_i^{\prime} F_t + e_{it}, \ \ \ i = 1,...,N, \ t = 1,...,T$, where the $r \times 1$ vector $\lambda_i$ contains the factor loadings and $e_{it}$ is an idiosyncratic error term. Thus, we can forecast $y_{t+h}$ or its conditional mean $y_{T+h|T} = \alpha^{\prime} F_T + \beta^{\prime} W_T$ using the pair $\big\{ ( y_t, X_t, W_t  ) : t = 1,...,T \big\}$, the available data at time $T$. Since factors are not observed, the diffusion index forecast approach typically involves a two-step procedure (see, \cite{gonccalves2017bootstrap}).

\begin{example}[Estimation for Threshold Models with Integrated Regressors]
Suppose that $\delta_n = n^{ - \frac{1}{2} - \tau } \delta_0$, then the following limiting results hold. If $\tau = \frac{1}{2}$, then $\widehat{\gamma}_n \Rightarrow \gamma ( \gamma_0, \delta_0 )$ and $\gamma ( \gamma_0, \delta_0 )$ is a random variable that maximizes $Q( \gamma, \gamma_0, \delta_0 )$, where 
\begin{align}
\mathcal{Q}( \gamma, \gamma_0, \delta_0 ) = \frac{1}{ F(\gamma) \big( 1 - F(\gamma) \big)} \Gamma_1^{\prime} (\gamma) \left( \int \boldsymbol{B}_v(s) \boldsymbol{B}_v(s)^{\prime} ds   \right)^{-1} \Gamma_1(\gamma),    
\end{align}
To generate the confidence interval of $\gamma$, we consider the following likelihood ratio statistic for the null hypothesis $\gamma = \gamma_0$, given by
\begin{align}
LR_n (\gamma_0) = n \frac{ SSR_n (\gamma_0) - SSR_n ( \widehat{\gamma}_n ) }{ SSR_n ( \widehat{\gamma}_n )  }    
\end{align}
where $\widehat{\gamma}_n$ is the profiled LS estimator. In empirical studies, usually $\tau$ is unknown. Thus, we consider the construction of a robust CI which has approximately correct coverage probability irrespective of the value of $\tau$. For a fixed $\gamma \in [ \underline{\gamma}, \bar{\gamma} ]$, let $X (\gamma) = \left( x_1(\gamma), x_2(\gamma),..., x_n(\gamma) \right)^{\prime}$. Then, the Wald test statistic for testing $H_0: \delta_n = 0$ can be defined as 
\begin{align}
T_n(\gamma) = \widehat{\delta}(\gamma)^{\prime} \big( X^{\prime}(\gamma) ( I - P_n ) X(\gamma)   \big) \widehat{\delta}(\gamma) \big/ \widehat{\sigma}_u^2,   
\end{align}
where $P_n$ is the projection matrix of $X$, given by $P_n = X \left( X^{\prime} X \right)^{-1} X^{\prime}$.
\end{example}

\newpage

\subsection{Simultaneous Confidence Bands}

\begin{example}[Statistical Theory of Simultaneous Confidence Bands]
Consider the nonparametric time series regression model studied by \cite{liu2010simultaneous} as below  
\begin{align}
Y_i = \mu( X_i ) dt + \sigma (X_i) \eta_i    
\end{align}
where $\mu(.)$ is an unknown regression function to be estimated and $( X_i, Y_i )$ is a stationary process and $\eta_i$ are unobserved independent and identically distributed \textit{i.i.d} errors with $\mathbb{E} \eta_i = 0$ and $\mathbb{E} \eta_i^2 = 1$. Moreover consider the Nadaraya-Watson estimator given by 
\begin{align}
\mu_n (x) = \frac{1}{ n b f_n(x) } \sum_{k=1}^n K \left( \frac{ X_k - x }{b} \right) Y_k,    
\end{align}
where $K$ is a kernel function with $K(.) \geq 0$ and $\int_{ \mathbb{R} } K(u) du = 1$, the bandwidths $b = b_n \to 0$ and $n b_n \to \infty$  
\begin{align}
f_n(x) = \frac{1}{n b} \sum_{k=1}^n K \left( \frac{ X_k - x }{b} \right)     
\end{align}
is the kernel density estimate of $f$, the marginal density of $X_i$. Then, under appropriate dependence conditions, in the case of stationary time series processes, the following central limit theorem holds
\begin{align}
\sqrt{nb} \big[ f_n(x) - \mathbb{E} f_n(x) \big] \Rightarrow  \mathcal{N} \big( 0, \lambda_K f(x) \big), \ \ \ \text{where} \ \ \lambda_K = \int_{ \mathbb{R} } K^2 (u) du.       
\end{align}
\end{example}
Notice that the above result can be employed to construct point-wise confidence intervals of $f(x)$ at a fixed $x$. To assess shapes of density functions so that one can perform goodness-of-fit tests, however, one needs to construct \textit{uniform} or \textit{simultaneous confidence bands} (SCB). Therefore, to do this we need to deal with the maximum absolute deviation over some interval $[ \ell, u ]$:
\begin{align}
\Delta_n := \underset{ \ell \leq x \leq u }{ \mathsf{sup} }  \frac{ \sqrt{nb}  }{ \sqrt{\lambda_K f(x) } } \left|  f_n(x) - \mathbb{E} f_n(x)  \right|.      
\end{align}
In practice we first need to study the asymptotic uniform distributional theory for the NW estimator $\mu_n(x)$> Specifically, one needs to find the asymptotic distribution for $\underset{ x \in T }{ \mathsf{sup} } \left| \mu_n(x) - \mu(x)  \right|, \ \ \text{where} \ T = \left[ \ell, u \right] $. Then, building on the aforementioned result one can construct an asymptotic $( 1 - \alpha )$ SCB, where $0 < \alpha < 1$, by finding two functions $\mu_n^{ \mathsf{lower} } (x)$ and $\mu_n^{ \mathsf{upper} } (x)$, such that the following holds
\begin{align}
\underset{ n \to \infty }{ \mathsf{lim} } \ \mathbb{P} \left( \mu_n^{ \mathsf{lower} } (x) \leq \mu(x) \leq \mu_n^{ \mathsf{upper} } (x) \ \ \text{for all} \ x \in T \right) = 1 - \alpha.     
\end{align}
In the standard case the SCB can be used for model validation: one can test whether $\mu(.)$ is of certain parametric functional form by checking whether the fitted parametric form lies in the SCB. 


\newpage 

Let $m = \floor{ n^{\tau} }$, where $\delta_1 / \gamma < \tau < 1 - \delta_1$, and 
\begin{align}
Z_k(t) = Z_k^{\star} \left\{ K \left( \frac{X_k}{b} - t \right) - \mathbb{E} \left[ K \left( \frac{X_k}{b} - t \right) \right] \bigg|  \xi_{k-m, k} \right\}, \ \ \ 1 \leq k \leq n.    
\end{align}
Consequently, we can show that 
\begin{align}
\mathbb{P} \left(  \underset{ 0 \leq t \leq b^{-1} }{ \mathsf{sup} } \left| \sum_{k=1}^{n/2} Z_{2k-1} (t) \right| \geq \sqrt{nb} \left( \mathsf{log} n \right)^{-2} \right) &= o(1).      
\\
\mathbb{P} \left(  \underset{ 0 \leq t \leq b^{-1} }{ \mathsf{sup} } \left| \sum_{k=1}^{n/2} Z_{2k} (t) \right| \geq \sqrt{nb} \left( \mathsf{log} n \right)^{-2} \right) &= o(1).      
\end{align}
Set the following random quantity
\begin{align}
\widetilde{N}_n(t) = \frac{1}{ \sqrt{ n b \lambda_K f(bt) } } \sum_{k=1}^n \mathbb{E} \left[ K \left( \frac{X_k}{b} - t \right) \bigg| \xi_{k-m, k-1} \right] Z^{\star}_k.   
\end{align}

Constructing robust confidence intervals and simultaneous confidence bands has important applications. In particular, it is related to the statistical problem of understanding the trends of extremes and variability of climate variables. By interpreting climate extremes as upper and lower quantiles and climate variability as interpercentile ranges, the nonparametric quantile estimation provides a simple and effective means to address the latter problem (e.g., see \cite{li2022simultaneous}). The simultaneous confidence band (SCB) is a classical tool for nonparametric inference. 

To construct a 100 $(1 - \beta) \%$ SCB for $Q_{\alpha}(.)$, one finds two functions $\ell$ and $u$ depending on $\left( X_i \right)_{i=1}^n$, such that the following condition holds
\begin{align}
\underset{ n \to \infty }{ \mathsf{lim} } \ \mathbb{P} \big( \ell (t) \leq Q_{\alpha}(t) \leq u(t), \ \forall \ t \in (0,1) \big) = 1 - \beta.   
\end{align}

\begin{theorem}
Assume that $\sqrt{n} b_n / \mathsf{log}^5 n \to \infty$, then we have that 
\begin{align*}
\underset{ n \to \infty }{ \mathsf{lim} } \  \mathbb{P} \left[ \underset{ t \in \mathcal{T}_n }{ \mathsf{sup} } \left\{  \frac{ \sqrt{n b_n} f \big( t, Q_{\alpha}(t) \big) }{  \sqrt{\phi} \sigma(t) } \times \bigg| \hat{Q}_{\alpha}(t) - Q_{\alpha} (t)  - \mu_2 b_n^2 Q_{\alpha}^{ \prime \prime } (t) / 2 \bigg|  \right\} - B( m^{*} ) \leq \frac{x}{ \sqrt{2 \mathsf{log} m^{*} } } \right] = e^{ - 2 e^{-x} }   
\end{align*}
where $\mathcal{T}_n = [ b_n, 1 - b_n ], m^{*} = 1 / b_n$.
\end{theorem}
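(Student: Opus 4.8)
The plan is to follow the classical Bickel--Rosenblatt strategy for simultaneous confidence bands, adapted to the conditional quantile estimator $\hat{Q}_\alpha$ and to the dependent (functional/physical-dependence) data setting for which the auxiliary processes $Z_k(t)$ and $\widetilde{N}_n(t)$ introduced above have already been prepared. Since the conclusion is an extreme-value (Gumbel) limit for the normalized sup-deviation, the overall architecture is: \emph{(i)} linearize $\hat{Q}_\alpha(t)$ by a uniform Bahadur representation and extract the stated bias $\mu_2 b_n^2 Q_\alpha''(t)/2$; \emph{(ii)} reduce the leading stochastic term to a sum of $m$-dependent blocks and couple it with the Gaussian process $\widetilde{N}_n$; \emph{(iii)} invoke extreme-value theory for smooth stationary Gaussian processes to obtain the double-exponential law with the two-sided factor $2$.

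First I would establish a uniform Bahadur--Kiefer representation. Writing $\psi_\alpha(u) = \alpha - \mathbf{1}\{u < 0\}$, I expect
\begin{align*}
\hat{Q}_\alpha(t) - Q_\alpha(t) = \frac{1}{n b_n f\big(t, Q_\alpha(t)\big)} \sum_{k=1}^n K\!\left( \frac{X_k - t}{b_n} \right) \psi_\alpha\big( Y_k - Q_\alpha(t) \big) + \frac{\mu_2 b_n^2}{2} Q_\alpha''(t) + R_n(t),
\end{align*}
with $\sup_{t \in \mathcal{T}_n} |R_n(t)| = o_p\big( (n b_n \log m^*)^{-1/2} \big)$. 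The bias term follows from a Taylor expansion of the conditional quantile function with $\mu_2 = \int u^2 K(u)\, du$. The delicate point is controlling $R_n$ uniformly, because the check function is non-smooth; I would run a chaining/empirical-process argument over a fine grid in $t$, exploiting the monotonicity of the quantile loss to pass from grid points to the full supremum. The bandwidth condition $\sqrt{n}\, b_n / \log^5 n \to \infty$ is precisely what renders this remainder negligible at the resolution $1/\sqrt{2\log m^*}$ at which fluctuations are being measured.

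Second, after normalizing the leading term by $\sqrt{n b_n}\, f(t, Q_\alpha(t)) / \big(\sqrt{\phi}\,\sigma(t)\big)$, with $\sigma^2(t)$ the relevant conditional variance factor of the score (of order $\alpha(1-\alpha)$) and $\phi = \lambda_K = \int K^2$, the studentized object is asymptotically a standardized Gaussian field in $t$. Here the pre-computed facts enter: the $m$-dependence approximation with $m = \lfloor n^\tau \rfloor$ shows that the odd- and even-indexed remainders $\sum Z_{2k-1}(t)$ and $\sum Z_{2k}(t)$ are uniformly $o_p\big(\sqrt{n b_n}\,(\log n)^{-2}\big)$, so the centered score sum may be replaced by the Gaussian surrogate $\widetilde{N}_n(t)$ without altering the limit. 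This reduces the problem to the distribution of $\sup_{t} |\widetilde{N}_n(t)|$.

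Finally I would apply the extreme-value theorem for differentiable stationary Gaussian processes (Leadbetter--Lindgren--Root\'en / Piterbarg). After rescaling $t \mapsto t/b_n$, the process $\widetilde{N}_n$ behaves like a unit-variance stationary Gaussian process with covariance $r(s) = 1 - c s^2 + o(s^2)$ observed over an interval of length $\asymp m^* = 1/b_n$; Rice's upcrossing formula then gives $\mathbb{P}\big(\sup |\widetilde{N}_n| \leq u\big) \approx \exp\big(-2 m^* \mu(u)\big)$, the factor $2$ arising from counting both up- and down-crossings forced by the absolute value. Choosing $u = B(m^*) + x/\sqrt{2\log m^*}$, with $B(m^*) \sim \sqrt{2\log m^*}$ calibrated so that $m^* \mu(u) \to e^{-x}$, converts the Poisson approximation into the stated $e^{-2e^{-x}}$. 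The main obstacle is the uniform Bahadur step: establishing negligibility of $R_n(t)$ \emph{simultaneously} over all $t \in \mathcal{T}_n$ despite the non-differentiable quantile loss, which demands sharp maximal inequalities for the dependent summands and is exactly where the bandwidth restriction $\sqrt{n}\, b_n/\log^5 n \to \infty$ is consumed.
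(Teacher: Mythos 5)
Your three-step architecture (uniform Bahadur linearization extracting the bias $\mu_2 b_n^2 Q_\alpha''(t)/2$, reduction of the leading score sum to an $m$-dependent block approximation coupled with a Gaussian surrogate, and a Gumbel extreme-value limit for the sup of the resulting stationary Gaussian field) is essentially the same route the paper indicates: the displays preceding the theorem --- the truncated processes $Z_k(t)$ with $m=\lfloor n^\tau\rfloor$, the uniform $o_p\bigl(\sqrt{nb}\,(\log n)^{-2}\bigr)$ bounds on the odd/even block sums, and the surrogate $\widetilde{N}_n(t)$ --- are exactly your step \emph{(ii)}, and the remaining steps are the standard Bickel--Rosenblatt completion. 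The paper itself states the theorem without carrying out the argument (it is imported from the cited reference on simultaneous quantile bands), so there is no more detailed proof to compare against; your sketch correctly identifies where the bandwidth condition $\sqrt{n}\,b_n/\log^5 n\to\infty$ and the two-sided factor in $e^{-2e^{-x}}$ enter.
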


\newpage 

\subsection{Factor Driven Two-Regime Regression}

In this section we briefly discuss the framework of \cite{lee2021factor} who proposed a novel two-regime regression model where regime switching is driven by a vector of possibly unobservable factors. When the factors are latent, these are estimated by the principle component analysis of a panel data set. Then, \cite{lee2021factor}  show that the optimization problem can be reformulated as mixed integer optimization, and we present two alternative computational algorithms. Moreover, they derive the asymptotic distribution of the resulting estimator under the scheme that the threshold effect shrinks to zero. 

Suppose that $y_t$ is generated from 
\begin{align}
y_t &= x_t^{\prime} \beta_0 + x_t^{\prime} \delta_0 \boldsymbol{1} \big\{ f_t^{\prime} \gamma_0 > 0 \big\} + \varepsilon_t,     
\mathbb{E} \left( \varepsilon_t | \mathcal{F}_{t-1} \right) = 0, \ \ \ t = 1,...,T,
\end{align}
where $x_t$ and $f_t$ are adapted to the filtration $\mathcal{F}_{t-1}$, $( \beta_0, \delta_0, \gamma_0 )$ is a vector of unknown parameters and the unobserved random variable $\varepsilon_t$ satisfies the conditional mean restriction. Thus the above model specification is related to the literature on thereshold models with unknown change points. 

In particular, the regression function can be written as:
\begin{align}
y_t =
\begin{cases}
x_t^{\prime} \big( \beta_0 + \delta_0 \big) + \varepsilon_t, & \text{if} \ f_t^{\prime} \gamma_0 > 0,
\\
x_t^{\prime} \beta_0 + \varepsilon_t, & \text{if} \ f_t^{\prime} \gamma_0 \leq 0,
\end{cases}
\end{align}
When the factor $f_t$ is latent, we estimate it using PCA from a potentially much larger dataset, whose dimension is $N$. We illustrate that the asymptotic distribution for the estimator $\alpha_0 \equiv \left( \beta_0^{\prime}, \delta_0^{\prime} \right)^{\prime}$ is identical to that when $\gamma_0$ were known regardless of whether factors are directly observable or not; therefore, the estimator of $\alpha_0$ enjoys an oracle property (see,  \cite{lee2021factor}). 

Furthermore, the asymptotic properties of the proposed estimator are established by adopting a diminishing threshold effect. In other words, we assume that $\delta_0 = T^{ - \varphi } d_0$ for some unknown $\phi \in (0, 1/2)$ and unknown nondiminishing vector $d_0$. Specifically, the unknown parameter $\varphi$ reflects the difficulty of estimating $\gamma_0$ and affects the identification and estimation of the change-point $\gamma_0$. Both the rate of convergence and the asymptotic distribution depends on $\varphi$. In particular, when factors are directly observable, then the distribution of the estimator of $\gamma_0$ is given by
\begin{align}
T^{1 - 2 \varphi }  \left( \widehat{\gamma} - \gamma_0 \right) \overset{d}{\to} \underset{ \mathsf{g} \in \mathcal{G} }{ \mathsf{arg \ min} }  \ B( \mathsf{g}) + 2 W(\mathsf{g}), 
\end{align}

\newpage

\subsection{Forecasting with Dynamic Panel Data Models}

Consider the linear dynamic panel data model given by 
\begin{align}
Y_{it} = \lambda_i + \rho Y_{it-1} + U_{it}    
\end{align}
as a special case of the econometric specification studied by \cite{liu2020forecasting}. Then, a suitable estimator for $\hat{\rho}$ is given by the truncated instrumental variable (IV) estimator such that 
\begin{align}
\hat{\rho}_{IV} =  \left[ \left( \sum_{i=1}^N \sum_{t=2}^T Y_{it-2} \Delta Y_{it-1} \right)^{-1} \right]^{M_N}  \times  \left( \sum_{i=1}^N \sum_{t=2}^T Y_{it-2} \Delta Y_{it} \right)
\end{align}
where $M_N$ is a sequence that slowly diverges to infinity. Define the residuals $\hat{U}_{it} = Y_{it} - \hat{\lambda} \hat{\rho}_{IV} - \hat{\rho}_{IV} \hat{Y}_{it-1}$

\paragraph{Pooled-OLS Predictor}

Ignoring the heterogeneity in the $\lambda_i$'s and imposing that $\lambda_i = \lambda$ for all $i$, we can define that 
\begin{align}
\left( \hat{\rho}_{p}, \hat{\lambda}_{p}  \right) = \underset{ \rho, \lambda }{ \mathsf{arg min} }  \sum_{i=1}^N \sum_{t=1}^T \left( Y_{it} - \rho Y_{it-1} - \lambda \right)^2.  
\end{align}

A key point here is to ensure that the empirical model is able to accurately predict bank revenues and balance sheet characteristics under observed macroeconomic conditions. 

\paragraph{Compound Risk}

\begin{align}
L_N ( \widehat{Y}^{N}_{T+1}, Y_{T+1}^N ) = \sum_{i=1}^N \left( \widehat{Y}_{iT+1} - Y_{iT+1} \right)^2,    
\end{align}

\paragraph{Ratio Optimality}

The convergence is uniform with respect to the correlated random effects distributions $\pi$ in some set $\Pi$. The uniformity holds in the neighborhood of point masses for which the prior and posterior variances of $\lambda_i$ are zero. Thus, the convergence statement covers the case of $\lambda_i$ being homogenous across $i$. Then, the autoregressive coefficient in the basic dynamic panel model can be $\sqrt{N}-$consistently estimated, which suggests that $\sum_{i=1}^N \left( \hat{\rho} - \rho \right)^2 Y_{iT}^2 = \mathcal{O}_p(1)$. Then, the discrepancy between the predictor $\widehat{Y}_{iT+1}$ and $\lambda_i + \rho Y_{it}$ can be decomposed into three terms is given by
\begin{align}
\widehat{Y}_{iT+1} - \lambda_i - \rho Y_{iT} 
= 
\hat{\rho}_i ( \hat{\rho} ) + \left( \frac{\hat{\sigma}^2 }{T} + B_N^2  \right) \frac{ \partial }{ \partial \hat{ \lambda }_i ( \hat{\rho} ) }     
\end{align}

\newpage 

\section{High Dimensional Panel Data Regression Models}
\label{Section7}

Additional discussion on some aspects to identification, estimation and inference for high dimensional models is presented in \cite{katsouris2023high}. Further related studies from the perspective of shrinkage-based estimation using GMM methods include among others \cite{cheng2015select}. In this section we focus on some specific applications to panel data regression models such as the framework proposed by  \cite{lu2016shrinkage}. The literature on discovering latent structures in panel data models include \cite{su2016identifying} as well as the study of \cite{bing2022inference}.

\subsection{Shrinkage Estimation of Dynamic Panel Regression with interactive FE}

The particular study considers the problem of determining the number of factors and selecting the proper regressors in linear dynamic panel data models with interactive fixed effects (see, \cite{lu2016shrinkage}). The authors propose a methodology for simultaneous selection of regressors and factors and estimation through the method of adaptive group Lasso. In particular, we show that with probability approaching one, the proposed method correctly select all relevant regressors and factors and shrink the coefficients of irrelevant regressors and redundant factors to zero. 

Therefore, the model in matrix form can be written as below: 
\begin{align}
\boldsymbol{Y} = \sum_{j=1}^K \beta_j^{0} \boldsymbol{X}_j + \lambda^0 F^{0 \prime} + \boldsymbol{\varepsilon}.    
\end{align}
Without loss of generality we assume that only the first $K_0$ elements of $X_{it}$ have nonzero slope coefficients, and express the vector of regressors $X_{it} = \big( X_{(1)it}^{\prime},  X_{(2)it}^{\prime} \big)^{\prime}$, where $X_{(1)it}^{\prime}$ and $X_{(2)it}^{\prime}$ are $( K_0 \times 1 )$ and $( K - K_0)  \times 1$ vectors respectively, and the true coefficients of $X_{(1)it}^{\prime}$ are nonzero while those of $X_{(2)it}^{\prime}$ are assumed to be all zero. Acoordingly, we decompose the parameter vector $\beta^0$ as $\beta^0 = \left( \beta^{0 \prime}_{ (1) },  \beta^{0 \prime}_{ (2) }  \right)^{\prime} = \left( \beta^{0 \prime}_{ (1) }, 0    \right)$. Consider the Gaussian QMLE $\left(  \tilde{\beta}, \tilde{\lambda}, \tilde{F} \right)$ of $\left(  \beta^0, \lambda^0, F^0 \right)$ which is given by 
\begin{align}
\left(  \tilde{\beta}, \tilde{\lambda}, \tilde{F} \right) = \underset{ ( \beta, \lambda, F )  }{ \mathsf{arg \ min}  } \ \mathcal{L}_{NT}^0  ( \beta, \lambda, F ),  
\end{align}
where
\begin{align}
\mathcal{L}_{NT}^0  ( \beta, \lambda, F ) \equiv \frac{1}{NT} \mathsf{trace} \left[ \left(  \boldsymbol{Y} -  \sum_{j=1}^K \beta_j^{0} \boldsymbol{X}_j - \lambda F^{\prime} \right)^{\prime} \left(  \boldsymbol{Y} -  \sum_{j=1}^K \beta_j^{0} \boldsymbol{X}_j - \lambda F^{\prime} \right) \right] 
\end{align}
where $\beta \equiv \left(  \beta_1,..., \beta_K \right)^{\prime}$ is a $(K \times 1)$ vector and $F \equiv \left( F_1,..., F_T \right)$ is a $(T \times R)$ and $\lambda \equiv \left( \lambda_1,..., \lambda_N \right)^{\prime}$ is an $(N \times R)$ matrix. Therefore, further aspects of the statistical optimization methodology will need to tackle the estimation  of the unknown factors in the high dimensional regression model.

\newpage

In particular, a possible solution is the use of group Lasso for selection of the number of factors. Thus, considering the simultaneous variable and factor selection in a dynamic panel data model along with relevant asymptotic theory analysis is the aim of this section. 

Consider the following expression 
\begin{align*}
\sqrt{NT} \mathbb{C}_{ K_0 } \left(  \hat{\beta}_{(1)} - \beta_{(1)}    \right) 
&=
\mathbb{C}_{ K_0 } \hat{D}^{-1}_{ \hat{F}(1)  } \frac{1}{ \sqrt{NT} } \sum_{i=1}^N X_{(1)i}^{\prime} M_{ F^0 } \varepsilon_i
\\
&\ \ \ +
\mathbb{C}_{ K_0 } \hat{D}^{-1}_{ \hat{F}(1)  } \frac{1}{ \sqrt{NT} } \sum_{i=1}^N X_{(1)i}^{\prime} M_{ \hat{F}(1)  }  \left( F_{(1)}^{*} - \hat{F}_{(1)} \right) \lambda^{*}_{(1) i}
\\
&\ \ \ +
\mathbb{C}_{ K_0 } \hat{D}^{-1}_{ \hat{F}(1)  } \frac{1}{ \sqrt{NT} } \sum_{i=1}^N X_{(1)i}^{\prime} \left(  M_{ \hat{F}(1) }  - M_{ F^{*}(1)}     \right) \varepsilon_i + o_p(1).
\end{align*}

\medskip

\begin{remark}
Notice that it is important to determine whether or not these moment functions are centered around 0 asymptotically. For example, a strategy to demonstrate whether this property holds is to decompose a moment function into an asymptotic bias term and an asymptotic variance term.  Usually the former term converges to a zero mean normal distribution, while the conditional expectation of the latter term  contributes to the asymptotic bias which can be corrected and so the corresponding term after substracting its mean is asymptotically negligible.    
\end{remark}

Notice that we define with $\norm{A}^2_{\mathsf{sp}} \equiv \lambda_1 \left( A^{\prime} A \right)$, which implies that $\norm{A}_{\mathsf{sp}} \equiv \sqrt{ \lambda_1 \left( A^{\prime} A \right) }$, where $\lambda_1$ denotes the largest eigenvalue of a real symmetric matrix $A$. Moreover, we use $\lambda_{ \mathsf{min} } (A)$ and  $\lambda_{ \mathsf{max} } (A)$ to denote the smallest and largest eigenvalues of symmetric matrix $A$, respectively.

Denote with 
\begin{align}
\hat{\boldsymbol{Y}} &= \left( \boldsymbol{Y} - \sum_{j=1}^K \tilde{\beta}_j^{c} \boldsymbol{X}_j  \right) 
\\
\hat{F} &= \frac{1}{NT} \hat{\boldsymbol{Y}} \hat{\boldsymbol{Y}}^{\prime} \tilde{F}
\\
\hat{D}_{ \hat{F}(1) } &= \frac{1}{NT} \sum_{i=1}^N X^{\prime}_{(1)u} M_{ \hat{F}(1) } X_{(1)i} 
\\
\hat{\Sigma}_{ \hat{F}(1) }  &= \frac{1}{T} \sum_{t=1}^T \hat{F}_{(1)t} \hat{F}_{(1)t}^{\prime}
\end{align}
We also partition the variance matrix such that $V_{NT} \equiv \mathsf{diag} \big( V_{(11), NT}, V_{(22), NT}  \big)$.

\newpage

\subsection{Robust IV Estimation and Variable Selection}

In this section we study estimation and variable selection methodologies for high dimensional linear regression models under the presence of endogeneity. In particular, this setting requires to apply an IV estimation approach to tackle the aspect of endogeneity. Several studies in the literature consider the construction of confidence intervals for estimators of interest, under the assumption that all the IVs are valid after controlling for the said covariates.  Moreover, in invalid IV settings different statistical frameworks were developed to provide robust inferential methods. In other words, a propose statistical procedure under the presence of endogeneity especially in a high dimensional estimation setting should be valid even under the presence of invalid instruments.

\subsubsection{Asymptotic Theory}

In terms of the asymptotic theory intially we need to examine the stability conditions of the system such that $\mathsf{max}_{ i = 1,..., v } \left| \lambda_i \left( \boldsymbol{C} \right) \right| < 1$, then for any $\boldsymbol{d}$ and all $t = 1,2,3,...$ the norm 
\begin{align}
\norm{ \left( - \boldsymbol{C}^t \boldsymbol{d} \right) } < C \lambda^t, \ \ \ \text{for some} \ \  C < + \infty \ \ \ \text{and} \ \ \ \lambda < 1.      
\end{align}

\begin{theorem}
Under the assumption that $\norm{ \boldsymbol{N}_T \left( \tilde{\boldsymbol{\beta}}^{(i)}_T - \hat{\boldsymbol{\beta}}_T^{(i)} \right) } = o_p(1)$ for all $i \geq 1$ as $T \to \infty$. Moreover, both estimators converge in distribution to $\boldsymbol{\beta}_T = \left( \boldsymbol{\theta}_T^{\prime}, \boldsymbol{\gamma}_T^{\prime}  \right)^{\prime}$ where 
\begin{align}
\sqrt{T} \left( \boldsymbol{\theta}_T - \boldsymbol{\theta} \right) \overset{ d }{ \to }  \mathcal{N} \left( \boldsymbol{0}, \boldsymbol{V}_{\theta}^{-1} \right),   
\end{align}
\begin{align}
\boldsymbol{V}_{\theta} := \underset{ T \to \infty }{ \mathsf{lim} } \ \mathbb{E} \left[ \sum_{t=1}^T \boldsymbol{W}_{\theta t}^{\prime} \boldsymbol{\Sigma}_{\varepsilon}^{-1} \boldsymbol{W}_{\theta t} \right]
\end{align}
which is the asymptotic information matrix for $\boldsymbol{\theta}$, the vectors $\sqrt{T} \left( \boldsymbol{\theta}_T - \boldsymbol{\theta}  \right)$ and $T \left( \boldsymbol{\gamma}_T - \boldsymbol{\gamma} \right)$ are asymptotically mutually uncorrelated and it holds that 
\begin{align}
T  \left( \boldsymbol{\gamma}_T - \boldsymbol{\gamma} \right) = T \mathsf{vec} \left( \boldsymbol{\Gamma}_{\rho, T} - \boldsymbol{\Gamma}_{\rho} \right)   
\end{align}
where the components of $\left( \boldsymbol{\Gamma}_{\rho, T} - \boldsymbol{\Gamma}_{\rho} \right)  $ satisfy the asymptotic mixed-normality result such that 
\begin{align}
\mathsf{vec} \left( \left[ \sum_{t=1}^T \boldsymbol{H}^{\prime} \boldsymbol{z}_{t-1} \boldsymbol{z}_{t-1}^{\prime} \boldsymbol{H} \right]^{1/2} \left[ \boldsymbol{\Gamma}_{ - \rho, T } \ \ \  - \boldsymbol{\Gamma}_{\rho} \right]  \right) \overset{ d }{ \to } \mathcal{N} \left(  \boldsymbol{0}, \boldsymbol{V}_{\gamma} \right), 
\end{align}
\begin{align}
\boldsymbol{V}_{\gamma} = \bigg( \left( \boldsymbol{\mathcal{Y}}^{\prime} \big[ \boldsymbol{M}(1) \boldsymbol{\Sigma}_{\varepsilon} \boldsymbol{M}(1)^{\prime} \big]^{-1} \boldsymbol{\mathcal{Y}} \right) \otimes \boldsymbol{I}_{( v + u - \rho )} \bigg).   
\end{align}
\end{theorem}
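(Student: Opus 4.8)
The plan is to exploit the block structure of $\boldsymbol{\beta}_T = (\boldsymbol{\theta}_T^{\prime}, \boldsymbol{\gamma}_T^{\prime})^{\prime}$, in which $\boldsymbol{\theta}$ collects the coefficients attached to the asymptotically stationary regressors and $\boldsymbol{\gamma} = \mathsf{vec}(\boldsymbol{\Gamma}_{\rho})$ those attached to the integrated or near-integrated block whose companion matrix $\boldsymbol{C}$ obeys the geometric decay bound $\norm{(- \boldsymbol{C}^t \boldsymbol{d})} < C \lambda^t$. First I would introduce the block-diagonal normalization $\boldsymbol{N}_T = \mathsf{diag}(\sqrt{T}\, \boldsymbol{I}, T\, \boldsymbol{I})$ and note that, by the maintained hypothesis $\norm{\boldsymbol{N}_T(\tilde{\boldsymbol{\beta}}_T^{(i)} - \hat{\boldsymbol{\beta}}_T^{(i)})} = o_p(1)$, the two estimator sequences possess the same $\boldsymbol{N}_T$-normalized limit; it therefore suffices to derive the limit law for a single representative estimator, namely the one solving the quasi-score conditions associated with the quadratic form $Q_T = (\hat{\boldsymbol{\beta}}_T - \boldsymbol{\beta})^{\prime} (\sum_{t=1}^T \boldsymbol{W}_t^{\prime} \boldsymbol{\Sigma}_{\varepsilon}^{-1} \boldsymbol{W}_t)(\hat{\boldsymbol{\beta}}_T - \boldsymbol{\beta})$.

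Next I would perform a first-order expansion of the score about the true value, writing $\boldsymbol{N}_T(\hat{\boldsymbol{\beta}}_T - \boldsymbol{\beta}) = (\boldsymbol{N}_T^{-1} \sum_t \boldsymbol{W}_t^{\prime} \boldsymbol{\Sigma}_{\varepsilon}^{-1} \boldsymbol{W}_t\, \boldsymbol{N}_T^{-1})^{-1} \boldsymbol{N}_T^{-1} \sum_t \boldsymbol{W}_t^{\prime} \boldsymbol{\Sigma}_{\varepsilon}^{-1} \boldsymbol{\varepsilon}_t$. The central step is to show that the normalized information matrix is asymptotically block diagonal: its $(\theta,\theta)$ block scaled by $1/T$ converges in probability to $\boldsymbol{V}_{\theta}$ by a law of large numbers for the stationary process, its $(\gamma,\gamma)$ block scaled by $1/T^2$ converges weakly (through a functional central limit theorem for the partial sums of the underlying innovations) to a random matrix of the form $\int \boldsymbol{J} \boldsymbol{J}^{\prime}$, and the off-diagonal $(\theta,\gamma)$ block scaled by $T^{-3/2}$ is $o_p(1)$ by virtue of the asymptotic orthogonality between stationary and integrated regressors highlighted at the opening of Section \ref{Section4}. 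This vanishing cross term is exactly what yields the claimed asymptotic mutual uncorrelatedness of $\sqrt{T}(\boldsymbol{\theta}_T - \boldsymbol{\theta})$ and $T(\boldsymbol{\gamma}_T - \boldsymbol{\gamma})$.

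I would then treat the two score blocks separately. Since $\{\boldsymbol{\varepsilon}_t\}$ is a martingale difference sequence with conditional covariance $\boldsymbol{\Sigma}_{\varepsilon}$, the stationary component $T^{-1/2} \sum_t \boldsymbol{W}_{\theta t}^{\prime} \boldsymbol{\Sigma}_{\varepsilon}^{-1} \boldsymbol{\varepsilon}_t$ obeys a martingale central limit theorem and converges to $\mathcal{N}(\boldsymbol{0}, \boldsymbol{V}_{\theta})$; combined with the law of large numbers above and the continuous mapping theorem this gives $\sqrt{T}(\boldsymbol{\theta}_T - \boldsymbol{\theta}) \overset{d}{\to} \mathcal{N}(\boldsymbol{0}, \boldsymbol{V}_{\theta}^{-1})$. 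For the nonstationary component I would establish joint convergence of the pair $(T^{-1} \sum_t \boldsymbol{H}^{\prime} \boldsymbol{z}_{t-1} \boldsymbol{z}_{t-1}^{\prime} \boldsymbol{H},\ T^{-1} \sum_t \boldsymbol{H}^{\prime} \boldsymbol{z}_{t-1} \boldsymbol{\varepsilon}_t^{\prime})$ to a pair $(\int \boldsymbol{J} \boldsymbol{J}^{\prime},\ \int \boldsymbol{J}\, dB)$ via the invariance principle and stochastic-integral convergence; the limit of $T(\boldsymbol{\Gamma}_{\rho,T} - \boldsymbol{\Gamma}_{\rho})$ is then a ratio that is mixed normal conditional on the limiting signal matrix. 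Self-normalizing by $[\sum_t \boldsymbol{H}^{\prime} \boldsymbol{z}_{t-1} \boldsymbol{z}_{t-1}^{\prime} \boldsymbol{H}]^{1/2}$ cancels the random conditioning variate and delivers the Gaussian limit $\mathcal{N}(\boldsymbol{0}, \boldsymbol{V}_{\gamma})$ with $\boldsymbol{V}_{\gamma} = (\boldsymbol{\mathcal{Y}}^{\prime} [\boldsymbol{M}(1) \boldsymbol{\Sigma}_{\varepsilon} \boldsymbol{M}(1)^{\prime}]^{-1} \boldsymbol{\mathcal{Y}}) \otimes \boldsymbol{I}_{(v + u - \rho)}$, in which $\boldsymbol{M}(1)$ plays the role of the long-run moving-average matrix.

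The hard part will be the joint mixed-normal limit for the $\boldsymbol{\gamma}$ block together with the asymptotic independence across the two blocks. Concretely, I would need to verify that the Brownian functional $\int \boldsymbol{J}\, dB$ driving the $\boldsymbol{\gamma}$ limit is, conditional on the limiting signal matrix, independent of the Gaussian vector driving the $\boldsymbol{\theta}$ limit; this demands a careful joint invariance principle under which the stationary martingale central limit theorem and the partial-sum convergence of the integrated regressors hold simultaneously. The difficulty is compounded by the possible presence of near-unit-root or explosive roots in $\boldsymbol{C}$, which alter both the normalization rate and the nature of the limiting process, and by the identification and reduced-rank subtleties flagged earlier through the rank of $\sum_t \boldsymbol{W}_t^{\prime} \boldsymbol{\Sigma}_{\varepsilon}^{-1} \boldsymbol{W}_t$. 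Showing that the off-diagonal information block is genuinely negligible under the precise $T^{-3/2}$ normalization, uniformly over the parameter space, is the step I expect to lean most heavily on the geometric decay bound and the martingale difference structure of the innovations.
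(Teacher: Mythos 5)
The paper itself offers no proof of this theorem: it is stated in Section 7.2.1 and immediately followed by a remark on how $\boldsymbol{V}_{\theta}$ and $\boldsymbol{V}_{\gamma}$ would be estimated in practice, so the result is effectively imported from the literature on Gaussian estimation of partially nonstationary multivariate autoregressions (the notation $\boldsymbol{\Gamma}_{\rho}$, $\boldsymbol{M}(1)$, $\boldsymbol{H}^{\prime}\boldsymbol{z}_{t-1}$ and the index $v+u-\rho$ point to the Ahn--Reinsel/Yap--Reinsel line of work). There is therefore no in-paper argument to compare yours against; what can be said is that your sketch follows the standard route for such systems and is sound in outline: the block normalization $\boldsymbol{N}_T = \mathsf{diag}(\sqrt{T}\,\boldsymbol{I}, T\,\boldsymbol{I})$, transfer of the limit law across asymptotically equivalent estimators via the maintained $o_p(1)$ condition, asymptotic block-diagonality of the normalized information matrix (LLN on the stationary block, FCLT on the integrated block, $T^{-3/2}$-normalized cross term vanishing), a martingale CLT for the stationary score, and joint convergence of the signal matrix and the score to $\bigl(\int \boldsymbol{J}\boldsymbol{J}^{\prime}, \int \boldsymbol{J}\,dB\bigr)$ followed by self-normalization to convert the mixed-normal limit into the stated Gaussian one. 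This is exactly the Sims--Stock--Watson/Phillips machinery one would expect here.

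Two caveats. First, you flag ``possible near-unit-root or explosive roots in $\boldsymbol{C}$'' as a complication, but the maintained stability condition $\max_i |\lambda_i(\boldsymbol{C})| < 1$ stated just before the theorem rules this out; the nonstationarity enters only through the regressor $\boldsymbol{z}_{t-1}$, and the geometric bound $\lVert -\boldsymbol{C}^t \boldsymbol{d} \rVert < C\lambda^t$ is what you need to control the remainder terms and the cross-block sums, not to handle explosive dynamics. Second, the theorem claims only asymptotic \emph{uncorrelatedness} of $\sqrt{T}(\boldsymbol{\theta}_T-\boldsymbol{\theta})$ and $T(\boldsymbol{\gamma}_T-\boldsymbol{\gamma})$, which your vanishing off-diagonal information block already delivers; the stronger conditional-independence argument you describe as the ``hard part'' is needed only if one wants joint mixed-normal inference across the two blocks, which the statement does not assert. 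As a sketch your proposal is acceptable, but a complete proof would still have to specify the model, the estimator, and the precise definitions of $\boldsymbol{W}_{\theta t}$, $\boldsymbol{\mathcal{Y}}$ and $\boldsymbol{M}(1)$, none of which the paper supplies.
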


\newpage

Therefore, the above theorem can be applied such that $\boldsymbol{V}_{\theta}$ and $\boldsymbol{V}_{\gamma}$ can be consistently estimated by replacing the unknown parameters of the estimator $\boldsymbol{\beta}$ by their Gaussian maximum likelihood values and substituting a consistent estimate for $\boldsymbol{\Sigma}_{\varepsilon}$, while dropping the expectation from $\boldsymbol{V}_{\theta}$. In practice we are interested to obtain simple estimates of the covariance matrix $\boldsymbol{\Sigma}_{\varepsilon}$.

\begin{example}
A partially linear IV regression (PLIV) model take the following form: 
\begin{align}
Y &= D \theta_0 + g_0 (X) + \zeta
\\
Z &= m_0 (X) + V
\end{align}
where $\mathbb{E} [ \zeta | Z, X ] = 0$ and $\mathbb{E} [ V | X ] = 0$. Let $Y$ be the outcome variable of interest, $D$ is the policy variable of interest and $Z$ denotes a scalar or a vector of instrumental variables. Moreover, the high dimensional vector $X = ( X_1,..., X_p )$ consists of other confounding covariates and $\zeta$ and $V$ are stochastic errors. The R implementation has as inputs the following functions:

\begin{itemize}

\item $\texttt{dml procedure}$: A character(.) ("dml1" or "dml2") specifying the double machine learning algorithm.

\item $\texttt{draw sample splitting}$: Indicates whether the sample splitting should be drawn during initialization of the object.

\item $\texttt{learner}$: The machine learners for the nuisance functions.

\item $\texttt{n folds}$: The number of folds (with default being equal to 5). 

\item $\texttt{n rep}$: The number of repetitions for the sample splitting.  

\item $\texttt{psi}$: The value of the score function component given by $\psi_a ( W; \theta, \eta ) = \psi_{a} ( W; \eta ) \theta + \psi_b ( W; \eta )$.
    
\end{itemize}

\end{example}

\newpage 

\appendix
\numberwithin{equation}{section}
\makeatletter

\newpage 

\section{Elements on Weak Instrumentation and Differential Geometry}

\subsection{Elements on Weak Instrumentation}

In this section we present key properties in relation to the asymptotic optimality of the LIML estimator, which is commonly used as an optimal estimation strategy when modeling panel data (see,  \cite{anderson2010asymptotic} and \cite{akashi2012some}). More precisely, when the instruments are weak, in general $\hat{\beta} (k)$ is not consistent and has a nonstandard asymptotic distribution. Moreover, $T \left( \hat{k}_{LIML} - 1 \right)$ has a nondegenerate asymptotic distribution such that $\hat{\beta}_{TSLS}$ and $\hat{\beta}_{LIML}$ are not equivalent under weak instrument asymptotics. The asymptotic distribution of the test statistics is nonstandard.  Note that the decorrelated quasi-score function $\bar{ \boldsymbol{S} }_n ( \boldsymbol{\theta} )$ is of dimension $d_0 K$ instead of dimension $d K$. In particular, given our initial estimator $\widehat{ \boldsymbol{ \beta } } = \big( \widehat{ \boldsymbol{ \theta } }^{\top} , \widehat{ \boldsymbol{ \gamma } }^{\top} \big)^{\top}$, we define our QDIF estimator given as below
\begin{align}
\tilde{ \boldsymbol{ \theta } } = \underset{ \boldsymbol{ \theta } \in \Theta_n }{ \mathsf{arg min} } \tilde{Q}_n \left( \boldsymbol{ \theta } \right), \ \ \ \text{where} \ \ \ \tilde{Q}_n \left( \boldsymbol{ \theta } \right) = n \bar{\boldsymbol{S}}_n \left( \boldsymbol{ \theta } \right)^{\top} \boldsymbol{C}^{-1} \bar{\boldsymbol{S}}_n \left( \boldsymbol{ \theta } \right). 
\end{align} 
In particular, $\Theta_n := \left\{ \boldsymbol{\theta} \in \mathbb{R}^{d_0} : \norm{ \boldsymbol{\theta} - \widehat{ \boldsymbol{\theta} } }_2  \leq c d_0^{-1 / 2} \right\}$ is a neighbourhood around the initial estimator  $\widehat{ \boldsymbol{\theta} }$ for some small constant $c > 0$ and
\begin{align}
\boldsymbol{C} := \frac{1}{n} \sum_{i=1}^n \bar{\boldsymbol{S}}_i \left( \boldsymbol{ \theta } \right) \bar{\boldsymbol{S}}_i \left( \boldsymbol{ \theta } \right)^{\top} \in \mathbb{R}^{ d_0 K \times d_0 K }. 
\end{align}
To alleviate these issues, a suitable approach proposed in the literature is the above localized estimator obtained by minimizing $\tilde{Q}_n ( \boldsymbol{\theta} )$ in a small neighbourhood around the initial estimator $\boldsymbol{\theta}$. Therefore, it can be shown that $\tilde{Q}_n ( \boldsymbol{\theta} )$ is strongly convex for $\boldsymbol{\theta}  \in \Theta_n$ with probability tending to one. Thus, any off-the-shelf convex optimization algorithm is applicable to solving the above problem. 

Consider the following hypothesis testing problem
\begin{align}
H_0: \boldsymbol{ \theta }^{*} = 0 \ \ \text{versus} \ \ H_1: \boldsymbol{ \theta }^{*} \neq 0
\end{align}
Based on the above result, we define the Wald-type test statistic as follows
\begin{align}
\widehat{T}_n = n \tilde{ \boldsymbol{\theta} }^{\top} \widehat{ \boldsymbol{ \Sigma } }^{-1}_{ \boldsymbol{\theta} } \tilde{ \boldsymbol{\theta} }.
\end{align}
Lemma above implies that the distribution of the test statistic $\widehat{T}_n$ can be approximated by a chi-square distribution with $d_0$ 
\begin{align}
\underset{ 1 \leq k \leq K }{ \mathsf{max} } \norm{ \frac{1}{n} \sum_{i=1}^n \boldsymbol{Z}_i \boldsymbol{\Psi}_i  \boldsymbol{Z}_i - \mathbb{E} \big( \boldsymbol{Z}_i \boldsymbol{\Psi}_i  \boldsymbol{Z}_i \big) }_2 = \mathcal{O}_{ \mathbb{P} } \left( \sqrt{ d_0 \mathsf{log} d_0 / n } \right),
\end{align}
$\sqrt{n} \tilde{\beta}_j$ converges to $\frac{1}{ \sqrt{n} } \sum_{ i = 1 }^n A_{ij}$ such that as $n \to \infty$, $\underset{  j \in \mathcal{H}_0 }{ \mathsf{max} } \left| \sqrt{n} \tilde{\beta}_j - \frac{1}{ \sqrt{n} } \sum_{ i = 1 }^n A_{ij}  \right| 
= 
\mathcal{O}_{ \mathbb{P} } \left( \sqrt{ d_0 \mathsf{log} d_0  n } \right)$.

\newpage

\subsection{Elements of Differential Geometry}

\subsubsection{A geometric Approach to Nonlinear Models}

The connection of econometric models with differential geometry is discussed by some authors such as in several studies of Professor Grant Hiller (Professor of Econometrics, University of Southampton) as in  \cite{hillier1987classes} ,  \cite{hillier2000exact}, \cite{armstrong1999density} and \cite{van1997curved} among others. Moreover, such perspectives tailored in the case of nonstationary time series models exist in the literature (some references are discussed by \cite{katsouris2023bootstrapping}). 

Following \cite{andrews2016geometric}, consider the region 
\begin{align*}
B_R( x_0 ) = \left\{ x \in \mathbb{R}^k : \norm{ x - x_0 } \leq \left( 1 + \sqrt{2} \right) R \right\}
\end{align*}
is a $k-$dimensional ball of radius $(1 + \sqrt{2}) R$ with center $x_0$, and $|A|$ is the cardinality of a set $A$.

In particular,  \cite{andrews2016geometric} consider the problem of testing a potentially nonlinear hypothesis on the mean of a multivariate Gaussian vector. Assume that we observe a $k-$dimensional Gaussian vector $\widehat{\theta}$ with known covariance matrix $\Sigma$ and unknown mean $\theta_0$. We wish to test a $p-$dimensional restriction which may be formulated either as $g( \theta_0 ) = 0$ for some $(k-p)-$dimensional smooth function $g$ or as $\theta_0 = \theta ( \beta_0 )$ for a known link function $\theta (.)$ and some unknown $p-$dimensional parameter $\beta_0$lying in a parameter space $U \subset \mathbb{R}^p$. 

Such testing problems arise in many contexts, for example in testing hypotheses with nuisance parameters or in testing model specification. In particular, this is the limiting testing problem in many weakly identified minimum-distance models, as well as cases when one fits a highly nonlinear structural model based on reduced-form parameter estimates. For example, let $\widehat{\theta}$ be a preliminary or reduced-form estimator, which is approximately normal with a well-estimable covariance matrix $\Sigma$. Assume that the relationship between the structural and reduced-form parameters is described by the link function $\theta ( \beta)$ for structural parameter $\beta$. Then, testing the correct model specification (asymptotically) is equivalent to testing that $\theta_0 = \theta ( \beta_0 )$ for some $\beta_0 \in U$. Alternatively, if there are two structural parameters $\lambda$ and $\beta$ with link function $\theta ( \lambda, \beta  )$, then testing a hypothesis about $\lambda$ alone, $H_0: \lambda = \lambda_0$, is equivalent to testing $H_0: \theta_0 = \theta( \lambda_0, \beta)$ for some $\beta \in U$. The statistical inference is based on the minimum-distance (or for the exact Gaussian case, likelihood ratio) statistic, which may be formulated as   
\begin{align*}
MD &= \underset{ \theta: g(\theta) = 0}{ \text{min} } \left( \widehat{\theta} - \theta \right)^{\prime} \Sigma^{-1} \left( \widehat{\theta} - \theta \right)
\\
\nonumber
\\
MD &= \underset{ \theta: g(\theta) = 0}{ \text{min} } \left( \widehat{\theta} - \theta(\beta) \right)^{\prime} \Sigma^{-1} \left( \widehat{\theta} - \theta(\beta) \right)
\end{align*} 
depending on the formulation of the null hypothesis. 

\newpage 

\cite{andrews2016geometric} introduce the normalized random vector $\xi = \Sigma^{- 1/ 2} \left( \widehat{\theta} - \theta_0 \right) \sim \mathcal{N} (0, I_k)$ and the $p-$dimensional manifold 
\begin{align}
\mathcal{S} = \left\{ x : x = \Sigma^{- 1/ 2} \left( \theta(\beta) - \theta_0 \right), \beta \in \mathbb{R}^p  \right\} \ \text{or} \ \mathcal{S} = \left\{ x : x = \Sigma^{- 1/ 2} \left( \theta(\beta) - \theta_0 \right), g(\theta) = 0 \right\}
\end{align}
Notice that the manifold $\mathcal{S}$ is known-up to a location shift determined by the true value $\theta_0$. Therefore, under the null hypothesis we know the shape of the manifold $\mathcal{S}$ and that is passes through the origin. Then, the minimum distance statistics defined above are simply the squared distance between $\xi$ and $\mathcal{S}:$ 
\begin{align}
MD = \underset{ x \in \mathcal{S} }{ \text{min} } \ \left( \xi - x \right)^{\prime} \left( \xi - x \right) = \rho^2 \left( \xi, \mathcal{S}\right) 
\end{align}
where $\rho$ is the Euclidean distance from a point to a set. The distribution of $\rho^2 \left( \xi, \mathcal{S}\right)$ is, in general, nonstandard and depends on the unknown parameter $\theta_0$.  The central issue of this paper is how to find computationally tractable critical values such that tests based on $\rho^2 \left( \xi, \mathcal{S}\right)$ control size. According to \cite{andrews2016geometric} the main challenges that one faces is that the location of the true value of $\theta_0$ on the null manifold are unknown which implies that the asymptotic distribution of the particular test statistic will depend on this unknown parameter (\textit{nuisance parameter problem}). 

Then, \cite{andrews2016geometric} distinguish between the linear and nonlinear cases. If $\mathcal{S}$ is a $p-$dimensional linear subspace in $\mathbb{R}^k$, then the squared distance $\rho^2 \left( \xi, \mathcal{S} \right)$ has a $\chi^2_{k - p}$ distribution. Most of the classical statistics literature deals with testing hypotheses that are either linear or asymptotically linear, in the sense that $\mathcal{S}$ is either a linear subspace or is well-approximated  by one in large samples. In particular, classical delta-method arguments assume that the reduced-form parameter is precisely estimated relative to the nonlinearity of the null hypothesis manifold, and thus we can linearize the null hypothesis manifold around the true parameter value (e.g., see \cite{gafarov2018delta}). Moreover, the authors also allow for cases where the nonlinearity of the model is important relative to the sampling error of the reduced-form parameter estimates, rending linear approximations unreliable. Another potential source of nonlinearity in $\mathcal{S}$ is \textit{weak identification}.         

One bound can be placed on $\rho^2 \left( \xi, \mathcal{S} \right)$ without any assumptions, namely that $\rho^2 \left( \xi, \mathcal{S} \right)$ is dominated by a $\chi^2_k$ distribution. Indeed, since $0 \in \mathcal{S}$, 
\begin{align}
\rho \left( \xi, \mathcal{S} \right)^2 = \underset{ x \in \mathcal{S} }{ \text{min} } \left( \xi - x \right)^{\prime} \left( \xi - x \right) \leq \left( \xi - 0 \right)^{\prime} \left( \xi - 0 \right) \sim \chi^2_k.
\end{align}
Using the particular bound, it gives the projection method, which is currently the main approach available for testing with weakly identified nuisance parameters. This paper, proposes new critical values based on a stochastic bound on the distribution of the MD statistic. There critical values are smaller that those used by the projection method and coincide with $\chi^2_{k-p}$ critical values for linear hypotheses. More specifically, this bound is based on measuring the curvature of the null hypothesis relative to the variance $\Sigma$ of the reduced-form parameter estimates (\cite{andrews2016geometric}).

\newpage

\subsubsection{Geometric Concepts}

\cite{andrews2016geometric} focus on regular manifolds embedded in $k-$dimensional Euclidean space. A subset $\mathcal{S} \subset \mathbb{R}^k$ is called a $p-$\textit{dimensional regular manifold} if for each point $q in \mathcal{S}$, there exists a neighborhood $V \in \mathbb{R}^k$ and a twice continuously-differentiable map $\mathbf{x}: \widetilde{U} \to V \cap \mathcal{S}$ from an open set $\widetilde{U} \subset \mathbb{R}^p$ onto $V \cap \mathcal{S} \subset \mathbb{R}^k$ such that (i) $\mathbf{x}$ is a homeomorphism, which is to say it has a continuous inverse and (ii) the Jacobian $d\mathbf{x}_q$ has full rank. A mapping that satisfies these conditions is called a parametrization or a system of local coordinates. For $\mathbf{x}$ a system of local coordinates at $q$, the set of all tangent vectors to $\mathcal{S}$ at $q$ coincides with the linear space spanned by the Jacobian $d\mathbf{x}_q$ and is called the \textit{tanget space} to $\mathcal{S}$ at $q$ (denoted $T_q(\mathcal{S})$). Denote by $\gamma: ( - \epsilon, \epsilon ) \to \mathcal{S}$ a curve which lies in $\mathcal{S}$ and passes through $q = \gamma(0)$. The measure of curvature we consider is as below
\begin{align}
\kappa_q ( \mathcal{S} ) 
= \underset{ X \in T_q(.), \dot{ \gamma }(0) = X  }{ \text{sup}} \ \kappa_q \left( \gamma , \mathcal{S} \right) \equiv \underset{ X \in T_q(.), \dot{ \gamma }(0) = X  }{ \text{sup} } \ \frac{ \left( \ddot{ \gamma}(0) \right)^{\perp} }{ \norm{ \dot{\gamma}(0)}^2 }, 
\end{align} 
where $( W )^{\perp}$ stands for the projection of $W$ onto the space orthogonal to $T_q(S)$. This measure of curvature is equal to the maximal curvature over all geodesics passing through the point $q$ and is invariant to the parametrization. If $\mathcal{S}$ is $p-$dimensional sphere of radius $C$, then, for each $q \in \mathcal{S}$, we have that $\kappa_{ \mathcal{S} } = 1 / C$. If, on the other hand, $\mathcal{S}$ is a linear subspace, its curvature is zero at all points.  

\paragraph{Calculating the Curvature}

Let $\mathcal{S}$ be a $p-$dimensional manifold in $\mathbb{R}^k$, and $\mathbb{X}$ a local parametrization at a point $q$, $q = \mathbf{x} (y^{*})$. Denote the derivatives of $\mathbf{x}$ at $q$ by $v_i = \frac{\partial \mathbf{x} }{ \partial y_i } \left( y^{*} \right)$, and let $Z = \left( v_1,..., v_p  \right)$. For any vector $W \in \mathbb{R}^k$, let $W^{\perp} = N_{Z} W := \left( I - Z \left( Z^{\prime} Z \right)^{-1} \right) W$. Finally, denote the second derivatives of $V_{ij} = \frac{ \partial^2 }{ \partial y_i \partial y_j } \mathbf{x} \left( y^{*} \right)$. The curvature can then be written as below (see, \cite{andrews2016geometric})
\begin{align}
\kappa_q \left( \mathcal{S} \right) = \underset{ \norm{ \sum_{i=1}^p u_i v_i } = 1 }{ \text{sup} } \norm{ \sum_{i,j = 1}^p  u_i v_i V_{ij}^{\perp} } 
\equiv 
\underset{ \left( w_1,..., w_p \right) \in \mathbb{R}^p }{ \text{sup} } \frac{ \displaystyle  \sum_{i,j = 1}^p w_i w_j V_{ij}^{\perp}  }{ \displaystyle  \norm{ \sum_{i=1}^p w_i v_i }^2  }, \ \ u = \left( u_1,..., u_p \right) \in \mathbb{R}^p.
\end{align}
 
\paragraph{Geometric Bounds}

Moreover, the distance in $\mathbb{R}^k$ from a random vector $\xi \sim \mathcal{N} \left( 0, I_k \right)$ can be bounded such that corresponds to a $p-$dimensional nonrandom manifold $\mathcal{S}$ that contains zero. Then, the curvature measure depends on the maximal curvature $\kappa_q \left( \mathcal{S} \right)$ over all relevant points in the manifold $\mathcal{S}$. The bound depends on global properties of the manifold, in the sense of properties that hold on a fixed bounded set, but the behaviour of the manifold at infinity is irrelevant (see, \cite{andrews2016geometric}). Recently, estimation and inference techniques in the case of nonlinear models with general forms of potential identification failure which implies the presence of a nearly singular Jacobian matrix are discussed by \cite{han2019estimation} who propose a general linear rotation-based reparmetrization approach to correct for identification failures and singularities.

\newpage

\begin{theorem}[\cite{andrews2016geometric}]
Let $\mathcal{S}$ be a regular $p-$dimensional manifold in $\mathbb{R}^k$ passing through zero. Assume that the tangent space $T_0 ( \mathcal{S} )$ is spanned by the first $p$ basis vectors. Assume that for some constants, $C > 0$, we have that $\kappa_q \left(  \mathcal{S} \right) \leq \frac{1}{C}$ for all points $q \in \mathcal{S}_C$. Then, 
\begin{itemize}
\item[(a)] Manifold $\mathcal{S}_C$ lies inside the set $\mathcal{S} \cap D_C$, where 
\begin{align}
\mathcal{M} = \left\{ \norm{ x^{(1)} }^2 + \left( C - \norm{ x^{(2)}} \right)^2 \geq C^2 \right\}. 
\end{align} 

\item[(b)] If Assumption 1 is satisfied, then, for any point $\xi \in \mathbb{R}^k$, we have almost surely that 
\begin{align}
\rho \left( \xi, \mathcal{S} \right) \leq \underset{ u \in \mathbb{R}^{p-k}, \norm{u} = 1 }{ \text{max} } \rho \left( \xi, \mathcal{S} \right), 
\end{align} 
where $N_u = \left\{ x \in \mathbb{R}^k : x = \left( x^{(1)}, zu \right), x^{(1)} \in \mathbb{R}^p, z \in \mathbb{R}_{+} , \norm{  x^{(1)} }^2 + \left( C - z \right)^2 = C^2 \right\}$

\item[(c)] Almost surely, max$_{ u \in \mathbb{R}^{p-k}, \norm{u} = 1 } \rho \left( \xi, N_u \right) = \rho \left( \xi, N_{ \widetilde{u} } \right)$
\begin{align}
\rho \left( \xi, \mathcal{S} \right) \leq \underset{ u \in \mathbb{R}^{p-k}, \norm{u} = 1 }{ \text{max} } \rho \left( \xi, \mathcal{S} \right), \ \ \text{where} \ \widetilde{u} = - \frac{1}{ \norm{ \xi^{(2)} } } \xi^{(2)}. 
\end{align} 

\item[(d)] If $\xi \sim (0, I_k )$, we have for all $x,$ $y$: 
\begin{align}
\mathbb{P} \left(  \underset{ u \in \mathbb{R}^{p - k}, \norm{u} = 1  }{ \text{max} }  \rho^2 \left( \xi, N_u \right) \leq x, \norm{ \xi } \leq y  \right) = \mathbb{P} \left( \rho_2^2 \left( \eta, N_2^C \right) \leq x, \norm{\eta} \leq y \right), 
\end{align}
where the coordinates of the two-dimensional random vector $\eta = \left( \sqrt{\chi^2_p},  \sqrt{\chi^2_{k-p}} \right) \in \mathbb{R}^2$ are independently distributed, $N_2^{C} = \left\{ \left( z_1, z_2 \right) \in \mathbb{R}^2 : z_1^2 + (C + z_2 )^2 = C^2 \right\}$ is a circle of radius $C$ with the center at $( 0, - C)$, and $\rho_2$ is Euclidean distance in $\mathbb{R}^2$.  
\end{itemize}
\end{theorem}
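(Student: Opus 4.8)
The plan is to establish each part sequentially, since parts (b)--(d) build on the geometric containment established in (a). First I would prove part (a) by analyzing the local structure of the manifold $\mathcal{S}$ near the origin. Since $T_0(\mathcal{S})$ is spanned by the first $p$ basis vectors, I would introduce coordinates splitting $x = \left( x^{(1)}, x^{(2)} \right)$ with $x^{(1)} \in \mathbb{R}^p$ lying in the tangent directions and $x^{(2)} \in \mathbb{R}^{k-p}$ in the normal directions. The curvature bound $\kappa_q(\mathcal{S}) \leq 1/C$ controls how fast the manifold can bend away from its tangent plane: integrating the second-order deviation along any geodesic shows that the manifold cannot enter the open balls of radius $C$ that are tangent to $T_0(\mathcal{S})$ at the origin. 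The key computation uses the curvature formula stated earlier in the excerpt, bounding $\norm{V_{ij}^{\perp}}$ in terms of $1/C$ and then showing that a point at normal distance $\norm{x^{(2)}}$ must have tangential displacement $\norm{x^{(1)}}$ large enough to satisfy $\norm{x^{(1)}}^2 + \left( C - \norm{x^{(2)}} \right)^2 \geq C^2$, which is precisely membership in $\mathcal{M}$.

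Next I would prove part (b). Since $\mathcal{S} \subset \mathcal{M}$ by part (a), the distance $\rho(\xi, \mathcal{S})$ is bounded below by... but here we want an upper bound, so the logic runs the other way: because $\mathcal{S}$ avoids the tangent balls, the closest point on $\mathcal{S}$ to $\xi$ cannot be too far, and I would dominate $\rho(\xi, \mathcal{S})$ by the distance to the boundary family of circles $N_u$ indexed by unit vectors $u \in \mathbb{R}^{k-p}$. The idea is that the worst-case (largest distance) configuration consistent with the curvature constraint is attained when the manifold wraps around a sphere of radius $C$, and the one-dimensional slices of that sphere in each normal direction $u$ are the circles $N_u$. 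I would verify that $\max_u \rho(\xi, N_u)$ dominates $\rho(\xi, \mathcal{S})$ pointwise. For part (c), I would identify the maximizing direction $\widetilde{u}$ explicitly: by symmetry the distance to $N_u$ depends on $\xi$ only through its tangential component $\xi^{(1)}$ and the scalar projection of $\xi^{(2)}$ onto $u$, so the maximum over $\norm{u}=1$ is achieved by aligning $-u$ with $\xi^{(2)}$, giving $\widetilde{u} = - \xi^{(2)} / \norm{\xi^{(2)}}$.

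Finally, part (d) is the distributional reduction. Having collapsed the problem to a single circle $N_{\widetilde{u}}$ in the two-dimensional plane spanned by $\xi^{(1)}/\norm{\xi^{(1)}}$ (within the tangent space) and $\widetilde{u}$, I would show that $\rho^2(\xi, N_{\widetilde{u}})$ depends on $\xi$ only through the pair $\left( \norm{\xi^{(1)}}, \norm{\xi^{(2)}} \right)$. Under $\xi \sim \mathcal{N}(0, I_k)$, the tangential and normal components are independent with $\norm{\xi^{(1)}}^2 \sim \chi^2_p$ and $\norm{\xi^{(2)}}^2 \sim \chi^2_{k-p}$, so the two-dimensional vector $\eta = \left( \sqrt{\chi^2_p}, \sqrt{\chi^2_{k-p}} \right)$ has independent coordinates and reproduces the joint law of the relevant radial quantities. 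Matching the geometry, $N_2^C$ is exactly the circle of radius $C$ centered at $(0,-C)$ obtained as the planar cross-section of the tangent sphere, and $\norm{\xi}^2 = \norm{\eta}^2$, which yields the joint probability identity.

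The main obstacle will be part (a): rigorously converting the pointwise curvature bound $\kappa_q(\mathcal{S}) \leq 1/C$ into the global geometric containment $\mathcal{S} \subset \mathcal{M}$. This requires a genuine differential-geometric comparison argument --- essentially showing that a manifold whose normal curvature is everywhere bounded by $1/C$ cannot penetrate a tangent ball of radius $C$, which is a Rauch-type or Alexandrov-type comparison statement --- rather than a routine estimate, and care is needed to handle the passage from infinitesimal curvature to the finite-distance avoidance property uniformly over the relevant region $\mathcal{S}_C$. Once (a) is in hand, parts (b)--(d) are comparatively mechanical, reducing to convexity/symmetry arguments and the standard decomposition of a spherical Gaussian into independent chi-squared radial components.
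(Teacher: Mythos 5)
First, a point of reference: the paper itself does not prove this theorem. It is reproduced from \cite{andrews2016geometric} and followed only by an interpretive paragraph describing what each of (a)--(d) asserts; the objects $\mathcal{S}_C$, $D_C$ and ``Assumption 1'' appearing in the statement are never defined in the survey (and the statement as transcribed contains typos, e.g.\ $u \in \mathbb{R}^{p-k}$ for $\mathbb{R}^{k-p}$ and $\rho(\xi,\mathcal{S})$ where $\rho(\xi,N_u)$ is meant, which you have silently and correctly repaired). So there is no in-paper proof to compare against, and your proposal must be judged as a reconstruction of the Andrews--Mikusheva argument. At that level your architecture matches the paper's gloss exactly: (a) is a curvature-comparison containment, (b) dominates the distance by the worst sphere in the tangent family, (c) identifies that sphere by symmetry, and (d) is the radial reduction. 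Your treatments of (a), (c) and (d) are sound: (a) is indeed a Schur/Rauch-type statement that a geodesic of $\mathcal{S}$ issuing from the origin, whose ambient acceleration is normal to $\mathcal{S}$ and bounded in norm by $1/C$, cannot enter the open tangent balls; (c) follows because $\rho(\xi,N_u)$ depends on $u$ only through $\langle \xi^{(2)},u\rangle$ and is decreasing in it; (d) follows from the independence of $\norm{\xi^{(1)}}^2\sim\chi^2_p$ and $\norm{\xi^{(2)}}^2\sim\chi^2_{k-p}$ for a spherical Gaussian.

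The genuine gap is in part (b). You write that ``because $\mathcal{S}$ avoids the tangent balls, the closest point on $\mathcal{S}$ to $\xi$ cannot be too far.'' That inference is not valid: the containment $\mathcal{S}_C\subset\mathcal{M}$ from (a) only says where the manifold is \emph{not}, and a negative constraint of this kind cannot by itself yield an upper bound on $\rho(\xi,\mathcal{S})$ --- a set contained in $\mathcal{M}$ could be a tiny neighbourhood of the origin and hence lie far from $\xi$. To obtain the bound one must exhibit an actual point of $\mathcal{S}$ within distance $\rho(\xi,N_{\widetilde{u}})$ of $\xi$. The argument that does this runs as follows: for every tangential position $x^{(1)}$ attained by the projection of $\mathcal{S}_C$ onto $T_0(\mathcal{S})$, connectedness to the origin together with (a) forces the corresponding point of $\mathcal{S}$ to satisfy $\norm{x^{(2)}}\leq C-\sqrt{C^2-\norm{x^{(1)}}^2}$; the worst case over the direction of $x^{(2)}$ is exactly the point of $N_{\widetilde{u}}$ lying over $x^{(1)}$ (since $\norm{\xi^{(2)}-z\widetilde{u}}=\norm{\xi^{(2)}}+z$); minimizing over $x^{(1)}$ then gives $\rho(\xi,\mathcal{S})\leq\rho(\xi,N_{\widetilde{u}})$ \emph{provided} the tangential projection of $\mathcal{S}_C$ covers the minimizing position. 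Establishing that coverage is precisely the role of the undefined ``Assumption 1'' and of the definition of $\mathcal{S}_C$ as the portion of the manifold swept out by geodesics of controlled length, and it is an additional curvature-based step, not a consequence of (a). So (b) is not ``comparatively mechanical''; it is the second place where real geometry is needed, and your sketch as written does not close it.
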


Condition (a), establishes that the manifold $\mathcal{S}_C$ lies inside the set $\mathcal{M}$ bounded by an envelope we construct from a collection of $p-$dimensional spheres $N_u$. Statement (b) assets that the distance from a point $\xi$ to the manifold $\mathcal{S}$ is bounded by the distance from $\xi$ to the furthest sphere in this collection, while (c) picks out exactly which sphere $N_{ \widetilde{\xi} }$ is the furthest away for a given $\xi$. Finally, (d) shows that the distribution of the distance from $\xi \sim \mathcal{N} \left( 0, I_k \right)$ to $N_{ \widetilde{\xi} }$ is the same as the distribution of the distance from a random variable $\eta$ to a particular circle in $\mathbb{R}^2$ (see, \cite{andrews2016geometric}).

\paragraph{Stochastic Bound}

Theorem 1 implies a bound on the distribution of the distance from $\xi \sim \mathcal{N} \left( 0, I_k \right)$ to a $p-$dimensional manifold $\mathcal{S}$. Assume that for some $C > 0$, $\mathcal{S}$ satisfies all the assumptions of Theorem 1 including Assumption 1. Then almost surely, 
\begin{align}
\rho^2 \left( \xi, \mathcal{S} \right) \leq \rho^2 \left( \xi, N_{\widetilde{u}} \right), 
\end{align}

\newpage 

By Theorem 1 (d), the distribution of the right-hand side of (6) is the same as the distribution of the random variable $\psi_C$, 
\begin{align}
\psi_{C} = \rho_2^2 \left( \eta, N_2^2 \right), 
\end{align}
 where the coordinates of the two-dimensional random vector $\eta = \left( \sqrt{\chi^2_{p}} , \sqrt{\chi^2_{k - p}} \right) \in \mathbb{R}^2$ are independently distributed,  $N_2^{C} = \left\{ (z_1, z_2) \in \mathbb{R}^2 : z_1^2 + ( C + z_2^2)^2 = C^2 \right\}$ is a circle of radius $C$ with the center at $(0,-C)$, and $\rho_2$ is Euclidean distance in $\mathbb{R}^2$. Combining these results, we establish the bound
\begin{align}
\mathbb{P} \left( \rho^2 \left( \xi, \mathcal{S} \right) \right) \leq \mathbb{P} \left( \psi_C \geq x \right), \ \ \text{for all} \ x > 0, 
\end{align}
so the distribution of $\psi_C$ is an upper bound on the distribution of $\rho^2 \left( \xi, \mathcal{S} \right)$.  

Notice that the distribution of $\psi_C$ depends only on the distribution of the space $k$, the dimension $p$ of the manifold, and the maximal curvature $\frac{1}{C}$. Then, the distribution of $\psi_{C}$ is stochastically increasing in the maximal curvature and hence stochastically decreasing in $ C$, so if $C_1 < C_2$, then $\psi_{C_1}$ first-order stochastically dominates $\psi_{C)2}$. As $C \to \infty$, $\psi_C \rightarrow \chi^2_{ k - p }$, so if the curvature converges to zero at all relevant points, then our bounding distribution converges to the distribution of the distance from $\xi \sim N( 0, I_k )$ to a $p-$dimensional linear subspace. Moreover, $\psi_C \rightarrow \chi^2_{k}$ as $C \to 0$, so if the curvature of the manifold becomes arbitrarily large, our bound coincides with the naive bound (2) that can be imposed without any assumptions on the manifold.  The authors here emphasize that the proposed stochastic bound holds under general assumptions. For example, if the model of interest has additional structure, this can potentially be exploited to obtain tighter bounds (\cite{andrews2016geometric}). 

\subsubsection{Asymptotic Properties}

If the manifold $\mathcal{S}$ satisfies the assumptions of Theorem 1, then the MD statistic is stochastically dominated by $\psi_{C}$ under the null hypothesis. We examine the following asymptotic properties. 

\paragraph{Uniformity}

Define a model to be a set consisting of a true value of the $k-$dimensional reduced-form parameter $\theta_0$, a data generating process $F_n$ consistent with $\theta_0$, and a link function connecting the structural and reduced-form parameters, or more generally a manifold $\widetilde{S}_n$ describing the null hypothesis, $H_0: \theta_0 : \widetilde{S}_n$. We assume that the null holds. We allow the data-generating process $F_n$ and the structural model $\widetilde{S}_n$ to change with the sample size $n$, this accommodates sequences of link functions such as those which arise under drifting asymptotic embeddings, for example, the weak identification embeddings of $\mathcal{D}$. This also allows to consider the case where the aim of the practitioner is  to fit a more complicated or nonlinear model. Suppose that we have an estimator $\widehat{\theta}_n$, which will be asymptotically normal with asymptotic covariance matrix $\Sigma = \Sigma (F_n)$. Let $\widehat{\Sigma}_n$ be an estimator for $\Sigma$. We consider the set possible models $\mathcal{M} = \left\{ M : M = \left( \theta_0, \left\{ F_n \right\}_{n=1}^{\infty}, \{ \widetilde{S}_n \}_{n=1}^{\infty} \right) \right\}$ and impose the following assumption.

\newpage 

\begin{assumption}[\cite{andrews2016geometric}]
We impose the following assumptions: 
\begin{itemize}
\item[(i)] $\sqrt{n} \ \Sigma^{ - 1/2 } \left( \widehat{\theta}_n - \theta_0   \right) \rightarrow \mathcal{N} \left( 0, I_k \right)$ uniformly over $\mathcal{M}$; 

\item[(ii)] $\widetilde{ \Sigma } - \Sigma \to_p 0$ uniformly over $\mathcal{M}$; 

\item[(iii)] the maximal and minimal eigenvalues of $\Sigma$ are bounded above and away from zero uniformly over $\mathcal{M}$; 

\item[(iv)] for each $n$ and manifold $\mathcal{S}_n = \left\{ x = \sqrt{n} \Sigma^{ - 1/2 } \left( \widehat{\theta}_n - \theta_0 \right) , y \in \widetilde{S}_n \right\}$, the manifold $S_n$ satisfies Assumption 1 for $C = C_n = 1 / \text{sup}_{ q \in S_n } \kappa_q \left( S_n \right)$. 
\end{itemize}
\end{assumption}
Notice that Assumption 2(i) and (ii)  of \cite{andrews2016geometric} require that the reduced-form parameter estimates are uniformly asymptotically normal with a uniformly consistently estimable covariance matrix. This assumption holds generally for many standard reduced-form estimators, such as OLS estimates and sample covariances, over large classes of models. Assumption (iii) uniformly bounds the eigenvalues of the asymptotic covariance matrix above and below, and will generally follow from a uniform bound on the moments of the data-generating process (\cite{andrews2016geometric}).

\paragraph{Description of the procedure} Consider a manifold of the form (see, \cite{andrews2016geometric})
\begin{align}
\widetilde{S}_n = \left\{ \sqrt{n} \widetilde{\Sigma}_n^{- 1 / 2} \left( x - \theta_0 \right) : x \in \widetilde{S}_n \right\}, 
\end{align}
which differs from $S_n$ in using an estimator $\widetilde{S}_n$ in place of $\Sigma$. Define with $\widetilde{C}_n := \left(\text{sup}_{q \in \widetilde{S}_n } \kappa_1 \left( \widetilde{S}_n \right) \right)^{-1}$. Our main test used the statistic $n$ min$_{ \theta \in \widetilde{S}_n } \left( \widehat{\theta}_n - \theta \right)^{\prime} \widetilde{ \Sigma }_n^{-1} \left( \widehat{\theta}_n - \theta \right)$, along with critical value $F_{1 - \alpha} \left( \widehat{C}_n , k , p \right)$. Notice that the critical values are obtained based on the limiting distribution of the test where we denote with $F_{ 1 - \alpha} \left( C, k , p \right)$ the $(1 - \alpha)-$quantile of the random variable $\psi_C$.    

\begin{theorem}[\cite{andrews2016geometric}]
If Assumption 2 holds, then the testing procedure described above has uniform asymptotic size $\alpha$: 
\begin{align}
\underset{ n \to \infty }{ \text{lim sup} } \ \underset{  M \in \mathcal{M} }{ \text{sup} } \mathbb{P} \left( n \ \underset{ \theta \in \widetilde{S}_n }{ \text{min} } \left( \widehat{\theta}_n - \theta \right)^{\prime} \widetilde{ \Sigma }_n^{-1} \left( \widehat{\theta}_n - \theta \right) \right) \leq \alpha. 
\end{align}
\end{theorem}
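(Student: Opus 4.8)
The plan is to reduce the feasible minimum-distance statistic to the standardized geometric quantity governed by Theorem 1, dispose of the estimation error in $\widetilde{\Sigma}_n$ and $\widehat{C}_n$, and then close the uniformity by a drifting-subsequence argument. First I would standardize: writing $\xi_n = \sqrt{n}\,\Sigma^{-1/2}(\widehat{\theta}_n - \theta_0)$ and $S_n$ for the corresponding null manifold, the \emph{infeasible} statistic equals $\rho^2(\xi_n, S_n)$ exactly, whereas the \emph{feasible} statistic replaces $\Sigma$ by $\widetilde{\Sigma}_n$. By Assumption 2(i), $\xi_n \Rightarrow \mathcal{N}(0, I_k)$ uniformly over $\mathcal{M}$; by Assumption 2(ii)--(iii), $\widetilde{\Sigma}_n - \Sigma \to_p 0$ with eigenvalues uniformly bounded away from $0$ and $\infty$, so the quadratic-form distance is uniformly continuous in the metric. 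Hence I would show that the feasible and infeasible statistics differ by $o_p(1)$, uniformly over $\mathcal{M}$.

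Next I would invoke the \emph{pathwise} content of Theorem 1. Parts (a)--(c) supply, for each realization of $\xi_n$, a deterministic envelope bound $\rho^2(\xi_n, S_n) \leq h_{C_n}(\xi_n)$, where $C_n = 1/\sup_{q\in S_n}\kappa_q(S_n)$ is the true curvature radius guaranteed by Assumption 2(iv) and $h_{C_n}$ is the maximal distance to the family of spheres $N_u$. Part (d) then identifies the law of $h_{C_n}(\zeta)$, for $\zeta \sim \mathcal{N}(0,I_k)$, as that of $\psi_{C_n}$. Combining this with the uniform central limit theorem of the previous step, the feasible statistic is asymptotically stochastically dominated by $\psi_{C_n}$, uniformly over $\mathcal{M}$.

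I would then control the critical value. The test uses $F_{1-\alpha}(\widehat{C}_n,k,p)$, the $(1-\alpha)$-quantile of $\psi_{\widehat{C}_n}$, so I need $\widehat{C}_n$ to track $C_n$. Because the curvature functional $\kappa_q$ depends continuously on the second derivatives of the twice continuously differentiable parametrization, on the center $\widehat{\theta}_n$, and on the standardizing metric, Assumption 2(ii)--(iii) yield $\widehat{C}_n - C_n \to_p 0$ whenever $C_n$ stays in a fixed compact subset of $(0,\infty)$. On the boundary regimes I would lean on monotonicity---$\psi_C$ is stochastically decreasing in $C$, hence $F_{1-\alpha}(\cdot,k,p)$ is monotone---together with the limits $\psi_C \to \chi^2_{k-p}$ as $C\to\infty$ and $\psi_C \to \chi^2_k$ as $C\to 0$, so that any residual estimation error in the curvature pushes the critical value in the conservative direction.

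Finally I would close the uniformity by contradiction along a drifting subsequence: if the $\limsup$ of the rejection probability exceeded $\alpha$, I would extract models $M_n\in\mathcal{M}$ attaining it, pass to a further subsequence along which $C_n \to C^{*}\in[0,\infty]$, and combine the uniform normality, the dominance by $\psi_{C_n}$, and the convergence $F_{1-\alpha}(\widehat{C}_n,k,p) \to F_{1-\alpha}(C^{*},k,p)$ (reading off the endpoint limits when $C^{*}\in\{0,\infty\}$) to obtain a limiting rejection probability at most $\alpha$, a contradiction. The hard part will be this last step: the manifold $S_n$, its curvature radius $C_n$, and the metric $\Sigma(F_n)$ all move simultaneously with $n$ and across models, so no fixed limiting experiment exists, and one must verify that the pathwise envelope bound, the estimated radius $\widehat{C}_n$, and the quantile $F_{1-\alpha}$ remain \emph{jointly} valid and continuous even as $C_n$ drifts to the degenerate endpoints, where the bounding law changes dimension from $\chi^2_k$ to $\chi^2_{k-p}$.
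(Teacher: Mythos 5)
The paper you are working from does not actually prove this theorem: it is quoted from \cite{andrews2016geometric} and followed only by a sentence of intuition about curvature and critical values, so there is no in-paper argument to compare yours against. Judged against the machinery the paper \emph{does} assemble (Theorem 1 parts (a)--(d), the stochastic bound of $\rho^2(\xi,\mathcal{S})$ by $\psi_C$, and Assumption 2), your architecture --- standardize to $\xi_n=\sqrt{n}\,\Sigma^{-1/2}(\widehat{\theta}_n-\theta_0)$, apply the pathwise envelope bound and the distributional identification $\psi_C$, dispose of $\widetilde{\Sigma}_n-\Sigma$ by uniform continuity of the quadratic form, and close uniformity along a drifting subsequence with $C_n\to C^{*}\in[0,\infty]$ --- is the natural reconstruction and is consistent with everything the survey states.

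There is, however, one concrete flaw in the critical-value step. You assert that ``any residual estimation error in the curvature pushes the critical value in the conservative direction,'' but monotonicity cuts both ways: since $\psi_C$ is stochastically \emph{decreasing} in $C$, the quantile $F_{1-\alpha}(C,k,p)$ is decreasing in $C$, so an overestimate $\widehat{C}_n>C_n$ (equivalently, an \emph{under}estimate of the maximal curvature) yields a critical value that is too small and the test over-rejects. Monotonicity therefore does not rescue you in the boundary regimes; what you actually need is that $\widehat{C}_n$ tracks $C_n$ uniformly over $\mathcal{M}$ in the compactified sense on $[0,\infty]$ (so that $F_{1-\alpha}(\widehat{C}_n,k,p)-F_{1-\alpha}(C_n,k,p)\to_p 0$ by uniform continuity of the quantile map on the compactification). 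Establishing that requires controlling the supremum over $q$ in the relevant bounded portion of $S_n$ of $\kappa_q(S_n)$ under the perturbation $\Sigma\to\widetilde{\Sigma}_n$ of the standardizing metric --- a nontrivial equicontinuity claim your proposal only gestures at via ``continuity of the curvature functional.'' A secondary, more routine gap: Theorem 1's envelope bound is stated for an exactly Gaussian $\xi$ and a manifold passing exactly through zero, so transferring it to the merely asymptotically Gaussian $\xi_n$ and the feasible manifold $\widetilde{S}_n$ requires a coupling or almost-sure representation along the drifting subsequence, which your final paragraph presupposes rather than supplies.
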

The theorems establishes the uniform asymptotic validity of the proposed test statistic allowing arbitrarily nonlinear (or linear) behaviour in the sequence of null hypothesis manifolds $\widetilde{S}_n$. The main intuition here is that critical values reflect the curvature of the null hypothesis manifold measured relative to the uncertainty about the reduced-form parameters for each sample size (see, \cite{andrews2016geometric}) due to the projection properties of the curvature. Recent applications include inference on manifolds with Fr\'echet regressions (see,  \cite{petersen2019frechet}) as well as other global testing methodologies such as \cite{arias2011global}, \cite{mansmann2021package} and \cite{vesely2023permutation}). Moreover, the literature on inference with many instruments include \cite{crudu2021inference} and others.

\newpage

\begin{example}
When the instruments are weak, in general $\hat{\beta} (k)$ is not consistent and has a nonstandard asymptotic distribution. Moreover, $T \left( \hat{k}_{LIML} - 1 \right)$ has a nondegenerate asymptotic distribution such that $\hat{\beta}_{TSLS}$ and $\hat{\beta}_{LIML}$ are not equivalent under weak instrument asymptotics. The asymptotic distribution of the test statistics is nonstandard. Notice that the decorrelated quasi-score function $\bar{ \boldsymbol{S} }_n ( \boldsymbol{\theta} )$ is of dimension $d_0 K$ instead of dimension $d K$. In particular, given our initial estimator $\widehat{ \boldsymbol{ \beta } } = \big( \widehat{ \boldsymbol{ \theta } }^{\top} , \widehat{ \boldsymbol{ \gamma } }^{\top} \big)^{\top}$, we define our QDIF estimator given as below
\begin{align}
\tilde{ \boldsymbol{ \theta } } = \underset{ \boldsymbol{ \theta } \in \Theta_n }{ \mathsf{arg min} } \tilde{Q}_n \left( \boldsymbol{ \theta } \right), \ \ \ \text{where} \ \ \ \tilde{Q}_n \left( \boldsymbol{ \theta } \right) = n \bar{\boldsymbol{S}}_n \left( \boldsymbol{ \theta } \right)^{\top} \boldsymbol{C}^{-1} \bar{\boldsymbol{S}}_n \left( \boldsymbol{ \theta } \right). 
\end{align} 
In particular, $\Theta_n := \left\{ \boldsymbol{\theta} \in \mathbb{R}^{d_0} : \norm{ \boldsymbol{\theta} - \widehat{ \boldsymbol{\theta} } }_2  \leq c d_0^{-1 / 2} \right\}$ is a neighbourhood around the initial estimator  $\widehat{ \boldsymbol{\theta} }$ for some small constant $c > 0$ and
\begin{align}
\boldsymbol{C} := \frac{1}{n} \sum_{i=1}^n \bar{\boldsymbol{S}}_i \left( \boldsymbol{ \theta } \right) \bar{\boldsymbol{S}}_i \left( \boldsymbol{ \theta } \right)^{\top} \in \mathbb{R}^{ d_0 K \times d_0 K }. 
\end{align}
To alleviate these issues, we propose the above localized estimator by minimizing $\tilde{Q}_n ( \boldsymbol{\theta} )$ in a small neighbourhood around the initial estimator $\boldsymbol{\theta}$. Therefore, in this theoretical analysis we show that $\tilde{Q}_n ( \boldsymbol{\theta} )$ is strongly convex for $\boldsymbol{\theta}  \in \Theta_n$ with probability tending to one. Thus, any off-the-shelf convex optimization algorithm is applicable to solving the above problem. 

Consider the following hypothesis testing problem
\begin{align}
H_0: \boldsymbol{ \theta }^{*} = 0 \ \ \text{versus} \ \ H_1: \boldsymbol{ \theta }^{*} \neq 0
\end{align}
Based on the above result, we define the Wald-type test statistic as follows
\begin{align}
\widehat{T}_n = n \tilde{ \boldsymbol{\theta} }^{\top} \widehat{ \boldsymbol{ \Sigma } }^{-1}_{ \boldsymbol{\theta} } \tilde{ \boldsymbol{\theta} }.
\end{align}
In other words, the above result implies that the distribution of the test statistic $\widehat{T}_n$ can be approximated by a chi-square distribution with $d_0$ such that 
\begin{align}
\underset{ 1 \leq k \leq K }{ \mathsf{max} } \norm{ \frac{1}{n} \sum_{i=1}^n \boldsymbol{Z}_i \boldsymbol{\Psi}_i  \boldsymbol{Z}_i - \mathbb{E} \big( \boldsymbol{Z}_i \boldsymbol{\Psi}_i  \boldsymbol{Z}_i \big) }_2 = \mathcal{O}_{ \mathbb{P} } \left( \sqrt{ d_0 \mathsf{log} d_0 / n } \right).
\end{align}
We have that $\sqrt{n} \tilde{\beta}_j$ converges to $\frac{1}{ \sqrt{n} } \sum_{ i = 1 }^n A_{ij}$ such that as $n \to \infty$, 
\begin{align}
\underset{  j \in \mathcal{H}_0 }{ \mathsf{max} } \left| \sqrt{n} \tilde{\beta}_j - \frac{1}{ \sqrt{n} } \sum_{ i = 1 }^n A_{ij}  \right|
= 
\mathcal{O}_{ \mathbb{P} } \left( \sqrt{ d_0 \mathsf{log} d_0  n } \right).
\end{align}

\end{example}

\newpage 

\section{Elements on Convex and Non-Convex Optimization Problems}

Tools for convex optimization problems are well-known in the literature and discussed in several studies. However, when the statistical problem of interest is formulated using a non-convex objective function various challenges arises to ensure robust estimation and inference in structural econometric models. From statistical decision theory literature (see, seminal study of \cite{wald1950statistical}) as well as the literature on optimization techniques several studies consider the non-convex optimization scenario. In particular, \cite{bednarczuk2020lagrangian} consider conditions for determining the zero duality gap as well as strong duality. These results can be applied to specific classes of functions, such as prox-bounded functions, DC functions, weakly convex functions and paraconvex functions. Furthermore, \cite{yalcin2020weak} consider cases in which the Legendre-Fenchel transform doesn't apply anymore. In other words, for an optimization problem, without convexity conditions, the conventional Lagrangian function may not always guarantee the zero duality gap which may occur between the given (primal) problem and the dual one (\cite{spini2021robustness}). The particular case implies that to construct dual problems and establish zero duality gap relations for nonconvex optimization problems, more a general class of conjugate Lagrangian functions are required to ensure an admissible solution space. In this direction, related results developed in the literature include the weak conjugate functions approach proposed by \cite{azimov1999weak} and \cite{azimov2002stability} (see, also \cite{aubin1976estimates},  \cite{li1995zero}, \cite{kuccuk2012generalized}, \cite{cheraghi2017some} and \cite{bagirov2019sharp}). The zero duality gap property is characterized in terms of the lower semicontinuity of a certain perturbation function by \cite{rubinov2002zero}. Lastly, the problem of nonlinear constraints is discussed in the seminal paper of \cite{powell1978algorithms}. 

A second issue of consideration is that regularity conditions are needed to ensure that the use of inverse mappings (commonly used in structural econometric models) preserves convexity (see, \cite{kelly1957inversive}). For example, a vast literature considers estimators of conditional and unconditional treatment effects (see, \cite{li2017estimation}). Thus in cases with nonlinearities in the statistical problem, one needs to verify that the constraint sets are convex which implies verifying that the inversion map on the space of CDF is convex (possibly under additional restrictions on the CDFs). This is a crucial property commonly used for establishing risk measures with desirable properties (see, \cite{ruszczynski2006conditional}, \cite{frittelli2014conditionally}, \cite{leskela2017conditional} and \cite{de2023conditional}). Moreover, notice that a curve is convex if it lies in the boundary of its convex hull.  

\subsection{Duality in nonconvex optimization}

In this section we present the duality scheme and strong duality theorems for nonconvex optimization problems, which are based on the weak conjugate functions and the weak subdifferential concepts. This allows us to formulate conditions guaranteeing zero duality gap relations and existence of optimal solutions to primal and dual problems, are formulated in terms of objective and constrained functions defining the given primal problem. 

\newpage

\begin{remark}
On the other hand, an open problem in the literature is in the case of inequality constrained problem, to formulate conditions guaranteeing the existence of optimal solutions to the dual problem, with nonconvex constraint function. Therefore, in relation to the proposed approach above for the formulation of a dual problem, and the zero duality gap conditions, related research questions include:
\begin{itemize}

\item How the given general scheme can be used to obtain a dual problem for a particular optimization problem, for example for the problem with equality and/or inequality constraints? 

\item How the zero duality gap conditions of the main theorem, formulated in terms of dualizing parametrization or perturbation functions, can be utilized to formulate conditions in terms of objective and constraint functions defining the problem under consideration? 

\item Under which conditions the zero duality gap conditions formulated in terms of dualizing parameterization or perturbation functions, are guaranteed? 
    
\end{itemize}

\end{remark}

Therefore, following the approach proposed by \cite{azimov1999weak, azimov2002stability}, we apply the conjugacy scheme which uses suplinear functions of the form $\mathsf{g} = \langle x^{*}, x \rangle - \alpha \norm{x}$, instead of linear functions of the form $\ell (x) = \langle x^{*}, x \rangle$, which are commonly used in convex analysis. The purpose of using the particular approach is to prove the convexity of the nonlinear map due to the inversion of the cumulative distribution function when considering the conditional set. 

\subsubsection{The general setting}

Let $X$ and $Y$ be normed spaces and let $X^{*}$ and $Y^{*}$ be their dual spaces, respectively. Taking a function $f$ of $X$ into $\bar{\mathbb{R}} = \mathbb{R} \cup \left\{ \pm \infty \right\}$, we consider the minimization problem as below $\underset{ x \in X }{ \mathsf{inf} } \ f(x)$. The optimal solution of the above problem will be denoted with $f(x) = \mathsf{inf} (P)$. Moreover, consider a dualizing parametrization function $\Phi: x \times Y \to \bar{\mathbb{R}}$ such that $\Phi (x,0) = f(x)$. Then, it follows that $\mathsf{inf}_{ x \in X } \Phi (x,0) = \mathsf{inf} (P)$. Therefore, to construct the dual problem with respect to the function $\Phi$, we need to establish the weak conjugate function $\Phi^w$. 

\medskip

The value of $\Phi^w$ will simply be denoted by $\Phi^w \big( 0, y^{*}, \beta \big):$
\begin{align}
\Phi^w \big( 0, y^{*}, \beta \big) = \underset{ (x,y) \in X \times Y }{ \mathsf{sup} } \ \big\{ \langle y^{*}, y \rangle - \beta \norm{ y } - \Phi (x,y) \big\}.    
\end{align}
Then, the dual problem is defined as below
\begin{align}
(P^w): \ \ \underset{ \big( y^{*}, \beta \big) \in Y^{*} \times \mathbb{R}_{+} }{ \mathsf{sup} }  \left\{ - \Phi^w \big( 0, y^{*}, \beta \big) \right\}. 
\end{align}

Therefore, the supremum of the problem $( P^w )$, denoted by $\mathsf{sup} ( P^w )$ and any element $\big( y^{*}, \beta \big) \in Y^{*} \times \mathbb{R}_{+}$ such that $\Phi^w \big( 0, y^{*}, \beta \big) = \mathsf{sup} ( P^w )$, is called an optimal solution to $( P^w )$.

\newpage

\begin{remark}
An important theoretical contribution to this stream of literature that generalizes related statistical problems and associated estimation techniques (see, \cite{anatolyev2011methods}) is proposed by \cite{komunjer2016existence} who consider the conditional density projections, especially in the case in which the projection set is defined by moment inequality constraints. A related discussion is given in the study of \cite{tabri2021information}. The particular study presents new existence, dual representation and approximation results for the information projection in the infinite-dimensional setting for moment inequality models. Furthermore, these results are established under general conditions nesting both unconditional and conditional models, and allowing for an infinite number of such inequalities (see, \cite{bhattacharya2006iterative}, \cite{andrews2010inference},  \cite{andrews2012inference} and \cite{chen2003estimation}). Thus, the $I-$projection problem becomes a semi-infinite program because the choice variable is finite-dimensional with dimension being equal to the sample size, and there are infinitely many moment inequality constraints. Therefore, it can be viewed as a statistical procedure that tilts the empirical measure by an amount that minimizes the Kullback-Leibler divergence subject to the moment inequality constraints (see, also \cite{kumar2004information} for information inequality measures). 
\end{remark}

\subsection{Generalized Moment Problems}

Generalized moment problems optimize functional expectation over a class of distributions with generalized moment constraints. These problems have recently attracted growing interest due to their great flexibility in representing nonstandard moment constraints, such as entropy constraints and exponential-type moment constraints. In the paper of \cite{guo2022unified}, the authors propose a novel primal-dual optimality condition. In particular, this optimality condition enables us to reduce the original infinite dimensional problem to a nonlinear equation system with a finite number of variable. Their framework demonstrates a clear path for identifying the analytical solution if one is available, otherwise, it produces semi-analytical solutions that lead to efficient numerical algorithms.   

Let $Y$ and $X$ denote dependent and independent variables supported on $\mathcal{Y} \subset \mathbb{R}$ and $\mathcal{X} \subset \mathbb{R}$, respectively. Let $\mathcal{B} ( \mathcal{X} )$ denote the Borel sigma algebra on $\mathcal{X}$. 

\begin{theorem}(Inverse mapping theorem)
Let $E$ and $F$ be coordinate spaces, let $X$ be an open subset of $E$, and let $f: X \to F$ be locally compact morphism. If $D f(a)$ is a topological isomorphism, then there exist open sets $M$ and $N$, of $E$ and $F$ respectively, such that $a \in M \subset X, f : M \to N$ is a homomorphism, and the inverse $f^{-1} : N \to M$ is a compact holomorphic perturbation of $\big[ D f(a) \big]^{-1}$ with $ D f^{-1} \big[ f(x) \big] = \big[ D f(x) \big]^{-1}$ for every $x \in M$ (see,  \cite{ma2001inverse}). 
\end{theorem}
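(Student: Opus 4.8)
The plan is to follow the standard contraction-mapping route to the inverse function theorem, adapted to the coordinate (Banach) space setting, and then to upgrade the resulting homeomorphism to a compact holomorphic perturbation of $[Df(a)]^{-1}$. First I would normalise the problem. Since $Df(a)$ is by hypothesis a topological isomorphism, I may replace $f$ by $\tilde f(x) := [Df(a)]^{-1}\big( f(x+a) - f(a) \big)$, which is again a locally compact morphism defined on a neighbourhood of the origin, satisfies $\tilde f(0)=0$, and has $D\tilde f(0) = \mathrm{Id}$. Any inverse and derivative statement for $\tilde f$ transfers back to $f$ by composing with the isomorphism $Df(a)$ and translating, so it suffices to treat the normalised case.

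Next I would set up the fixed-point equation. Writing $g(x) := x - \tilde f(x)$, so that $Dg(0)=0$, solving $\tilde f(x)=y$ is equivalent to finding a fixed point of $\Phi_y(x) := y + g(x)$. Using continuity of $Dg$ at the origin, I would choose a closed ball $\bar B_r(0)\subset X$ on which $\|Dg(x)\| \le \tfrac12$; the mean value inequality then makes each $\Phi_y$ a contraction with constant $\tfrac12$ on $\bar B_r(0)$ for all $y$ in a sufficiently small ball $N$ about the origin. The Banach fixed-point theorem yields, for each such $y$, a unique $x = \tilde f^{-1}(y) \in \bar B_r(0)$, and the uniform contraction estimate $\|\tilde f^{-1}(y_1)-\tilde f^{-1}(y_2)\| \le 2\|y_1-y_2\|$ shows $\tilde f^{-1}$ is Lipschitz continuous. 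Setting $M := \tilde f^{-1}(N)$, openness of $M$ follows from continuity of $\tilde f$ together with invertibility of $D\tilde f$ on $M$ (an open condition since $D\tilde f(0)=\mathrm{Id}$ and the invertible operators form an open set), giving the homeomorphism $\tilde f : M \to N$.

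I would then bootstrap regularity. Because invertible operators form an open set and $x \mapsto D\tilde f(x)$ is continuous, $D\tilde f(x)$ is a topological isomorphism for every $x\in M$ after shrinking $r$. Differentiating the identity $\tilde f \circ \tilde f^{-1} = \mathrm{id}_N$ by the chain rule then forces $D\tilde f^{-1}(y) = [D\tilde f(\tilde f^{-1}(y))]^{-1}$, which is exactly the asserted formula $Df^{-1}[f(x)] = [Df(x)]^{-1}$ once the normalisation is undone. Holomorphy of $\tilde f^{-1}$ I would obtain by combining this explicit derivative formula with the already-established continuity of $\tilde f^{-1}$: the map $y\mapsto [D\tilde f(\tilde f^{-1}(y))]^{-1}$ is a continuous candidate derivative built from the holomorphic data of $\tilde f$ and the continuous operator-inversion map, and a Cauchy-estimate argument upgrades differentiability with continuous derivative to genuine holomorphy in the coordinate-space sense.

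The delicate step, and the one I expect to be the main obstacle, is the \emph{compact perturbation} conclusion. Here I would exploit that $\tilde f$ is a locally compact morphism, so that the nonlinear remainder $g = \mathrm{id} - \tilde f$ is a compact holomorphic map with $Dg(0)=0$. Writing the inverse through the fixed-point representation $\tilde f^{-1}(y) = y + g(\tilde f^{-1}(y))$, the correction term $\tilde f^{-1} - \mathrm{id} = g\circ \tilde f^{-1}$ is a composition of the compact map $g$ with the continuous map $\tilde f^{-1}$, hence compact, and holomorphic by the previous paragraph. Transporting back through $[Df(a)]^{-1}$ and the translation then exhibits $f^{-1}$ as $[Df(a)]^{-1}$ plus a compact holomorphic term, which is the required structure. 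The care needed is to verify that compactness is genuinely preserved under the composition and the affine change of variables, and that the neighbourhoods can be chosen uniformly so that compactness is not lost in passing to the local inverse; this is precisely where the hypothesis that $f$ is a \emph{locally} compact morphism becomes essential.
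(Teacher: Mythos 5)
The paper does not actually prove this statement; it is quoted verbatim as a known result with the citation to \cite{ma2001inverse}, so there is no in-paper argument to compare yours against. Judged on its own terms, your proposal has a structural gap at its very first step. You run the standard Banach-space proof: normalise so that $D\tilde f(0)=\mathrm{Id}$, form $g=\mathrm{id}-\tilde f$, choose a ball on which $\lVert Dg\rVert\le\tfrac12$, and invoke the mean value inequality and the Banach fixed-point theorem. All of that presupposes a complete \emph{norm}. But the theorem is stated for \emph{coordinate spaces}, which are locally convex spaces that are in general not normable (and the relevant balanced neighbourhoods need not be bounded or complete in any single seminorm), and this is exactly the setting in which the classical inverse function theorem is known to fail. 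The hypotheses that $f$ is a \emph{locally compact} morphism and that the inverse is a \emph{compact holomorphic perturbation} of $[Df(a)]^{-1}$ are not decorative conclusions to be appended at the end, as in your last paragraph; they are what replaces the contraction-mapping machinery. The standard route in this setting is to use local compactness to factor the nonlinear part $g$ through a finite-dimensional (or Banach) subspace on a suitable neighbourhood, apply the classical inverse function theorem there, and then lift the local inverse back, checking that the correction term $g\circ\tilde f^{-1}$ inherits compactness and holomorphy from that factorisation. Your proposal never performs this reduction, so the existence of the local inverse --- the heart of the theorem --- is not established in the generality claimed.

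Two smaller points. First, your openness argument for $M$ ("invertible operators form an open set") again implicitly uses the operator-norm topology on $L(E,F)$, which is unavailable here; in the coordinate-space setting one instead gets openness of $M$ from the homeomorphism statement itself once the finite-dimensional reduction is in place. Second, the claim that "differentiability with continuous derivative upgrades to holomorphy by a Cauchy-estimate argument" needs the notion of Gateaux/Fr\'echet holomorphy appropriate to locally convex spaces and is not automatic; in the compact-perturbation framework holomorphy of $f^{-1}-[Df(a)]^{-1}$ is usually obtained directly from the holomorphy of the finite-dimensional local inverse. If you restrict the statement to Banach spaces your argument is essentially correct and standard, but then the theorem reduces to the classical one and the compactness hypotheses become superfluous, which signals that the intended content has been lost.
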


\subsubsection{The Generalized Moment Problem and the Optimality Condition}

Consider the generalized moment problem in the following form: 
\begin{align}
Z_P = \underset{ F(.) }{ \mathsf{max} } \ \int_{ \Omega } g(x) . dF(x) \ \ \text{s.t} \ \ \int_{\Omega} h_i (x) . dF(x) = m_i, \ \ i = 0,...,n,    
\end{align}

\newpage

\subsection{Topological Convergence}

Let $C(X,Y)$ denote the set of continuous functions from a metric space $X$ to a metric space $Y$. In particular, considering elements of $C(X,Y)$ as closed subsets of $X \times Y$, we say that $\left\{ f_n \right\}$ converges topologically to $f$ if $\mathsf{Li} f_n = \mathsf{Ls} f_n = f$. 

If $X$ is connected, then topological convergence in $C (X,Y)$ does not imply pointwise convergence, but if $X$ is locally connected and $Y$ is locally compact, then topological convergence in $C(X,Y)$ is equivalent to uniform convergence on compact subsets of $X$ (see, \cite{beer1985more}).  
 
\begin{itemize}

\item We consider in detail topological convergence versus pointwise convergence. In general both pointwise convergence and topological convergence in $C(X,Y)$ are weaker than Hausdorff metric convergence of graphs (induced by a metric compatible with the product uniformity).   

\item If $\left\{ f_n \right\}$ converges to a uniformly continuous function $f$ in the Hausdorff metric, then $\left\{ f_n \right\}$ actually converges uniformly to $f$. In particular, if $X$ is compact, then the Hausdorff metric on $C(X,Y)$ is topologically equivalent to the usual metric of uniform convergence (related definitions are given in \cite{Di2008statistical}).

\end{itemize}

\begin{proposition}
If $\Theta = \mathcal{U} \times S_p$, then the mapping $\psi: \Theta \to M_{ p \times p } \times S_p$ defined by 
\begin{align}
\theta := ( \Pi, \Omega ) \overset{ \psi}{\mapsto} \psi(\theta) = (A, \Sigma)\footnote{reflecting that $( A, \Sigma ) = \big( \mathsf{exp} ( \delta \Pi ), f_{\Pi} (\Omega) \big)$ characterizes the law $\mathbb{P}_{\theta}^{\top}$. The main idea is to first address the identification problem with $\Gamma = M_{p \times p}$ and then restrict $M_{ p \times p }$ appropriately to obtain identification results. },    
\end{align}
is injective, and therefore the parameter $\theta = ( \Pi, \Omega )$ is identifiable. 
\end{proposition}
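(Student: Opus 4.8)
The goal is to show that the map $\psi(\Pi,\Omega)=(A,\Sigma)=\bigl(\mathsf{exp}(\delta\Pi),\,f_\Pi(\Omega)\bigr)$ is injective on $\Theta=\mathcal{U}\times S_p$, and hence that $\theta=(\Pi,\Omega)$ is identifiable. The plan is to exploit the block structure of the map: the first coordinate $A$ depends only on $\Pi$, while the second coordinate $\Sigma$ depends on both $\Pi$ and $\Omega$ through $f_\Pi$. This suggests a two-stage argument: first recover $\Pi$ from $A$ alone, then recover $\Omega$ from $\Sigma$ once $\Pi$ is known.

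First I would suppose $\psi(\Pi_1,\Omega_1)=\psi(\Pi_2,\Omega_2)$, so that $\mathsf{exp}(\delta\Pi_1)=\mathsf{exp}(\delta\Pi_2)$ and $f_{\Pi_1}(\Omega_1)=f_{\Pi_2}(\Omega_2)$. The first equation must force $\Pi_1=\Pi_2$. The key step here is that the matrix exponential $\Pi\mapsto\mathsf{exp}(\delta\Pi)$ is injective precisely when $\Pi$ is restricted to a set on which the exponential has a well-defined single-valued inverse (the principal logarithm). This is exactly the role of the restriction to $\mathcal{U}$: one should verify that $\mathcal{U}$ is chosen so that the spectra of $\delta\Pi$ avoid the branch obstructions of $\log$ (e.g.\ eigenvalues confined to a horizontal strip of width $2\pi/\delta$, or more simply eigenvalues with imaginary parts bounded so that $\mathsf{exp}$ is one-to-one). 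Granting that $\mathcal{U}$ has this property, $\mathsf{exp}(\delta\,\cdot)$ is injective on $\mathcal{U}$ and we conclude $\Pi_1=\Pi_2=:\Pi$.

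With $\Pi$ now pinned down, the second equation collapses to $f_\Pi(\Omega_1)=f_\Pi(\Omega_2)$ for a fixed $\Pi$. It then remains to show that $f_\Pi$ is injective on $S_p$ for each fixed $\Pi\in\mathcal{U}$. Typically $f_\Pi$ is an affine or congruence-type transformation of the covariance, of the form $\Omega\mapsto T_\Pi\,\Omega\,T_\Pi^{\prime}$ (or a linear operator built from $\Pi$ via the Lyapunov/integral representation that produces the long-run covariance $\Sigma$). Injectivity then follows from the invertibility of the associated linear operator: if $T_\Pi$ is nonsingular then the congruence map is a bijection of $S_p$ onto its image, so $f_\Pi(\Omega_1)=f_\Pi(\Omega_2)$ gives $\Omega_1=\Omega_2$. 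I would therefore establish that the operator defining $f_\Pi$ is nonsingular for every $\Pi\in\mathcal{U}$, which again should be a structural consequence of the stability/spectral restrictions encoded in $\mathcal{U}$.

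The main obstacle I anticipate is the first stage: controlling the injectivity of the matrix exponential. Unlike the scalar case, $\mathsf{exp}$ on matrices can fail to be injective even when eigenvalue strips are controlled, because of non-normality and Jordan structure, and the clean statement ``$\mathsf{exp}$ is injective on $\mathcal{U}$'' really rests on a precise characterization of $\mathcal{U}$ (principal-logarithm domain) that the excerpt only gestures at in the footnote. I would make this rigorous by invoking the standard result that $\mathsf{exp}$ restricts to a diffeomorphism from the set of matrices with spectrum in the open horizontal strip $\{z:|\mathrm{Im}\,z|<\pi/\delta\}$ onto matrices with no eigenvalues on the negative real axis, and then checking that the definition of $\mathcal{U}$ places us inside this domain. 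Once that characterization is secured, the remainder of the argument is routine linear algebra, and the overall injectivity of $\psi$ — and thus identifiability of $\theta$ — follows by composing the two stages.
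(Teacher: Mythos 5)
Your proposal follows essentially the same route as the paper's proof: a case split on whether the $\Pi$'s agree, with $\Pi$ recovered from $A=\mathsf{exp}(\delta\Pi)$ via injectivity of the exponential on $\mathcal{U}$, and $\Omega$ then recovered from $\Sigma$ via the linear (Kronecker/vec) representation $\mathsf{vec}(\Sigma)=\bigl[\int_0^{\delta}A(u)\otimes A(u)\,du\bigr]\mathsf{vec}(\Omega)$. If anything, you are more explicit than the paper about the two points it leaves implicit — the principal-logarithm restriction on $\mathcal{U}$ and the nonsingularity of the integral operator — so your argument is a sound elaboration of the same idea.
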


\begin{proof}
First notice that applying the $\mathsf{vec}-$operator on the matrix mapping $\Sigma = f_{\Pi} (\Omega)$ gives
\begin{align}
\mathsf{vec} (\Sigma) := \left[ \int_0^{\delta} A(u) \otimes A(u) du    \right] \mathsf{vec} (\Omega).    
\end{align}
Assume that $\Theta = \mathcal{U} \times S_p$ and take two arbitrary parameter vectors such that $\theta = ( \Pi, \Omega )$ and $\tilde{\theta} = ( \tilde{\Pi}, \tilde{\Omega} )$, with $\theta \neq \tilde{\theta}$. Then, there are two possibilities:
\begin{itemize}
\item[\textit{(i)}] $\Pi \neq \tilde{\Pi}$, which implies that $A \neq \tilde{A}$ and 

\item[\textit{(ii)}] Let $\theta = ( \Pi, \Omega )$ and $\tilde{\theta} = ( \Pi, \tilde{\Omega} )$, with $\Omega \neq \tilde{\Omega}$. Since for $\Sigma \neq \tilde{\Sigma}$, it holds that:
\begin{align}
\mathsf{vec} (\Sigma) &= \left(  \int_0^{\delta} A(u) \otimes A(u) du \right) \mathsf{vec} (\Omega)
\\
\mathsf{vec} (\tilde{\Sigma}) &= \left(  \int_0^{\delta} A(u) \otimes A(u) du \right) \mathsf{vec} (\tilde{\Omega})
\end{align}
\end{itemize}
    
\end{proof}

\newpage

\section{Elements on Unit Root Testing}

We briefly discuss some relevant issues to unit root testing since in this lecture series we discuss nonstationary panel data models with unit roots. More detailed derivations, examples and related results for estimation and inference in nonstationary regression models are presented by \cite{katsouris2023limit}.

Under regularity conditions the pair of random variables 
\begin{align}
A_g \equiv - \frac{1}{n} \sum_{t=1}^n y_{t-1} g^{\prime} ( \Delta y_t  ) \ \ \text{and} \ \ \ B_g \equiv - \frac{1}{n^2} \sum_{t=1}^n y^2_{t-1} g^{\prime \prime} ( \Delta y_t  ).
\end{align}
has nondegenerated limiting distributions and the remainder terms $r_n(c)$ is $o_p(1)$ uniformly on compact sets of $c$ values. Although there is only one nuisance parameter (the localizing coefficient of persistence) in our model, the asymptotic sufficient statistic is two dimensional. Therefore, there no uniformly best estimate or uniformly most powerful test exists asymptotically. We can see that any test which rejects for small values of a linear combination of $A_g$ and $B_g$ will be asymptotically admissible in the sense that, in large samples, it has highest possible power for some alternative in a neighborhood of unity. Furthermore, inference based on $A_g$ and $B_g$ may have good properties even if $g( \epsilon )$ is not the correct log density. For example, if normality is assumed so that $g( \epsilon ) \equiv - \frac{1}{2} \epsilon^2$, then $A_g$ and $B_g$ become the least-squares statistics as below
\begin{align}
A_n \equiv - \frac{1}{n} \sum_{t=1}^n y_{t-1} \Delta y_t \ \ \text{and} \ \ \ B_n \equiv - \frac{1}{n^2} \sum_{t=1}^n y^2_{t-1}.   
\end{align}
Therefore, if the errors are actually normal, admissible tests can be constructed from the sufficient statistics $A_n$ and  $B_n$. But inference based on $A_n$ and  $B_n$ may be good regardless of the Gaussianity assumption. Since our interest is in inference when the parameter $\rho$ is close to one, we shall employ local-to-unity asymptotics where the parameter space is assumed to be a shrinking neighborhood of unity as the sample size grows. In other words, we reparametrize the model by writing it in the form $c = n ( \rho - 1 )$ and take $c$ to be a constant when making limiting arguments.      

\medskip

Then as $n \to \infty$ the sequence of random functions $y_n(s) \equiv n^{-1/2} y_{ \floor{sn} }$ converges weakly on $[0,1]$ to the Ornstein-Uhlenbeck process
\begin{align}
J_c( r) = \int_0^r e^{c (r-s) } dW(s)    
\end{align}
which satisfies the stochastic differential equation
\begin{align}
d J_c(r) = c J_c(r) dr + dW(r)     
\end{align}
with initial condition $J_c(0) = 0$. 

\newpage 

Thus, since our statistics $A_g$ and $B_g$ are well-behaved functions of $y_t$, they can be approximated by functionals of the process $J_c(r)$.  Thus, it is known that, under local alternatives where $c$ is fixed as the sample size $n \to \infty$,
\begin{align}
\big(  A_n, B_n \big) \Rightarrow \left(  \frac{1}{2} \big[ J_c^2 (1) - 1 \big], \int_0^1 J_c^2 (t)  \right)    
\end{align}
Furthermore, notice that although $A_n$ and $B_n$ are sufficient statistics under normality, the convergence holds for arbitrary error distributions satisfying our moment assumptions. Therefore, to compute asymptotically valid tests based on $A_n$ and $B_n$, we need the null distributions of the test statistics. Thus, for approximate Neyman-Pearson tests of the unit-root hypothesis that $c = 0$, the distributions of linear combinations of the following random variables are required
\begin{align}
A_n^* =  \frac{1}{2} \big[ J_c^2 (1) - 1 \big] \ \ \ \text{and} \ \ \ B_n^* = \int_0^1 J_c^2 (t)
\end{align}
Under correct specification, the limiting distribution of $\big( A_g, B_g \big)$ depends on $c$ and $\omega^2$. On the other hand, if $g = - \frac{1}{2} \epsilon^2$, then $\rho = \delta = \omega = 1$ regardless of the distribution of the error term.   Nevertheless, we can take advantage of the fact that the joint distribution of $A_g^*$ and $B_g^*$. We show that considerable simplification occurs if inference is based on particular functions of  $A_g$ and $B_g$, namely, the maximum likelihood estimator and the $t-$ratio.

\subsection{Inference on $c$ based on the MLE}

We can construct a test statistic for testing the null hypothesis $c = c_0$ against the alternative $c < c_0$ and then reject for small values of $\hat{c} - c_0$ where $\hat{c}$ is the maximum likelihood estimate of $c$. Since the reciprocal of the standardized Hessian $B_N$ is a common estimate of the asymptotic variance of the least squares estimate, an alternative is to reject the hypothesis that $c = c_0$ if the $t-$ratio. Paradoxically, the asymptotic null distribution of the ML $t-$statistic has moments closer to a standard normal the farther the population distribution is from normality (see, \cite{rothenberg1997inference}).

\subsubsection{Asymptotic power of unit-root tests under correct specification}

Following \cite{rothenberg1997inference}, when $e^g$ is the true likelihood function of the data, the Neyman-Pearson test of the null hypothesis $c = c_0$ against the point alternative $c = k_0 < c_0$ for some $k_0$, has an asymptotic local power function. The envelope power function $\pi^* (c) = \pi (c, c)$ is an upper bound for the local asymptotic power of any one-sided test. 

\begin{itemize}

\item Since the asymptotic distribution of $A_g$ and $B_g$ under correct specification of the likelihood depends only on $c$ and $\omega$, the envelope $\pi^* (c)$ depends on the single parameter $\omega$.

\newpage

\item  The large increases in power as $\omega$ rises is not surprising. For the AR(1) model $y_t = \rho y_{t-1} + \varepsilon_t$, where $\varepsilon_t$ are \textit{i.i.d} with density $e^g$ and $\left| \rho  \right|$ is considerably less than one, efficient tests of $\rho = \rho_0$ against $\rho < \rho_0$ have local asymptotic power functions of the form $\Phi \left( d + \omega \sigma \beta_n \right)$ where $\Phi(.)$ is the standard normal distribution function, $\omega^2 = \mathsf{Var} \left[ g^{\prime} ( \varepsilon_t ) \right]$ and $\beta_n = n^{1/2} \left(  \rho = \rho_0 \right)$. In other words, asymptotically the stable AR model behaves like the location model with power measured in terms of normal quantiles in linear in $\beta_n$ with slope proportional to $\omega$.  When $\omega = 1$ (which is the case for normally distributed $\varepsilon$), transformed power envelope becomes more curved, especially when $c$ is small, and its slope rises. However, when testing for a unit root, power is higher the further the errors are from normal (i.e., deviations from the Gaussianity assumption)  and the effect is somewhat greater than in the case where standard asymptotic theory applies.   

\item  Then, one can consider the asymptotic power (in terms of normal quantiles) of tests based on the maximum likelihood estimator and $t-$statistic. When the errors are normal $( \omega = 1 )$, the power functions for the two tests are essentially identical to the power envelope.  For all practical purposes, they are asymptotically equivalent and efficient. However, this is no longer true when the errors are nonnormal.  The power functions for the estimator and $t-$tests are both tangent to the envelope and hence are asymptotically admissible. But the power functions show considerably greater curvature than the envelope particularly when $\omega$ is large. In each case, the tangency for the test based on the $t-$statistic occur when power is approximately one-half. In practise, this exactly the prediction that second-order asymptotic theory makes for standard (non-unit root problems).     
\color{black}

\end{itemize}

\begin{example}
As an illustrative example, we consider parameter estimation in predictive regression models with autoregressive error structure. An extension to a corresponding panel data structure is possible although more challenging and left as a future exercise. 

Consider the first-order autoregressive model with AR error structure given by
\begin{eqnarray}
\label{AR1}
    X_t = \mu + \rho_n X_{t-1} + \varepsilon_t,~~~~ \varepsilon_t = \sum_{j=1}^p \psi_j \varepsilon_{t-j} + e_t
\end{eqnarray}

\begin{itemize}
    \item Moderate deviation Case (I): $\rho_n = \left( 1 + \frac{c}{n^a} \right)$ for some $c < 0$ and $a \in (0, 1)$;
    \item Moderate deviation Case (II): $\rho_n = \left( 1 + \frac{c}{n^a} \right)$ for some $c > 0$ and $a \in (0, 1)$;
\end{itemize}
such that $\varepsilon_t = \psi_1 \varepsilon_{t-1} + ... + \psi_p \varepsilon_{t-p} + e_t$. In particular, the OLS-based parameter vector $(\hat\mu,\hat\rho)$ can be obtained by minimizing the following expression 
\begin{eqnarray*}
    \sum_{t=1}^n \left(X_t - \mu - \rho X_{t-1}\right)^2.
\end{eqnarray*}
We estimate $\theta = (\mu, \rho, \psi_1, \cdots, \psi_p)^\top$ by minimizing
\begin{eqnarray*}
\widehat{ \theta } := \underset{ \theta \in \mathbb{R}^{p+2}  }{ \mathsf{arg max} }  \ \left\{  \sum_{t=1}^n \left( X_t - \mu - \rho X_{t-1} - \sum_{j=1}^p \psi_j \big( X_{t-j} - \mu - \rho X_{t-j-1} \big) \right)^2 \right\}.
\end{eqnarray*}

\newpage

Denote the resulting estimators as $\hat\mu, \hat \rho$, respectively.
\color{black}
\begin{align*}
\frac{ \partial \widehat{\theta} }{ \partial \mu }  &= - 2n  \left\{  \sum_{t=1}^n \left(X_t - \mu - \rho X_{t-1} - \sum_{j=1}^p \psi_j \big( X_{t-j} - \mu - \rho X_{t-j-1} \big) \right) \right\} \times \left\{ 1 - \sum_{j=1}^p \psi_j \right\} \equiv 0
\\
\\
\frac{ \partial \widehat{\theta} }{ \partial \rho }  &= -2  \left\{  \sum_{t=1}^n \left(X_t - \mu - \rho X_{t-1} - \sum_{j=1}^p \psi_j \big( X_{t-j} - \mu - \rho X_{t-j-1} \big) \right) \right\} \times \left\{  \sum_{t=1}^n X_{t-1} - \sum_{t=1}^n \sum_{j=1}^p  \psi_j X_{t-j-1} \right\} \equiv 0    
\\
\\
\frac{ \partial \widehat{\theta} }{ \partial \psi_j }  &= -2  \left\{  \sum_{t=1}^n \left(X_t - \mu - \rho X_{t-1} - \sum_{j=1}^p \psi_j \big( X_{t-j} - \mu - \rho X_{t-j-1} \big) \right) \right\} \times \left\{  \sum_{t=1}^n \sum_{j=1}^p \big( X_{t-j} - \mu - \rho X_{t-j-1} \big) \right\} \equiv 0   
\end{align*}
Then, it follows that
\begin{align}
\sum_{t=1}^n \sum_{j=1}^p X_{t-j} - np \widehat{\mu} - \widehat{\rho}  \sum_{t=1}^n \sum_{j=1}^p X_{t-j-1} &= 0.    
\\
\sum_{t=1}^n \left(X_t - \widehat{\mu}  - \widehat{\rho} X_{t-1} - \sum_{j=1}^p \widehat{\psi}_j \big( X_{t-j} - \widehat{\mu}  - \widehat{\rho} X_{t-j-1} \big) \right) &= 0
\\
\sum_{t=1}^n X_t  - n \widehat{\mu}  - \widehat{\rho} \sum_{t=1}^n X_{t-1} - \sum_{t=1}^n \sum_{j=1}^p \widehat{\psi}_j X_{t-j} + n \widehat{\mu} \sum_{j=1}^p \widehat{\psi}_j + \widehat{\rho} \sum_{t=1}^n \sum_{j=1}^p \widehat{\psi}_j X_{t-j-1} &= 0.
\end{align}
which implies that $\widehat{\mu} = \frac{1}{n} \sum_{t=1}^n \sum_{j=1}^p \widehat{\psi}_j X_{t-j}$.  Denote with 
\begin{align}
\hat{\mu}_x := \frac{1}{n - 1}  \sum_{t=1}^n X_{t-1} \ \ \ \text{and} \ \ \ \tilde{\mu}_x := \frac{1}{n} \sum_{t=1}^n X_t
\end{align}
Related conditions can be imposed to ensure identification of nonstationary time series with moderate deviations from the unit root boundary. In the special case where $a = 1$, and $\rho_n = ( 1 + c/n )$, then $X_t$ follows a local-to-unity $(I1)$ process and with $c = 0$ corresponding to an exact unit root and for values of $c \neq 0$ generating data that are less $( c < 0 )$ or more $( c > 0 )$ persistent than the exact unit root process\footnote{Furthermore note that when $\rho_n = \rho$, with $| \rho | < 1$, then $X_t$ is $I(0)$, a stable and stationary process, when $\rho_n = 1$, then $X_t$ is an $I(1)$ process while when $\rho_n = \rho$ with $\rho > 1$, then $X_t$ is an unstable or explosive process. }. 
\end{example}

\begin{example}
Consider the multivariate predictive regression model which is estimated using a likelihood approach. Assume that the data $\big( Y_1,..., Y_n, \boldsymbol{X}_0,...,  \boldsymbol{X}_n^{\prime} \big)$ follow
\begin{align}
Y_t &= \mu + \boldsymbol{\beta}^{\prime} \boldsymbol{X}_{t-1} + u_t,
\\
\boldsymbol{X}_{t} &= \boldsymbol{\eta} + \boldsymbol{R} \boldsymbol{X}_{t-1} + \boldsymbol{v}_t,
\end{align}
where $u_t = \boldsymbol{\phi}^{\prime} \boldsymbol{v}_t + e_t$, such that $\big( e_t,  \boldsymbol{v}_t^{\prime} \big)^{\prime} \sim \mathcal{N} \big( \boldsymbol{0}, \mathsf{diag} \left( \sigma_e^2, \boldsymbol{\Sigma}_v   \right) \big)$ is an \textit{i.i.d} series and $\boldsymbol{R}$ is a $(k \times k)$ matrix.

\newpage 

Moreover, suppose that all eigenvalues of the autoregressive matrix is  less than unity in absolute value, which is a stability condition of the system.  Define with $\boldsymbol{\Sigma}_v \equiv \mathsf{Var} ( \boldsymbol{v}_t )$ and $\boldsymbol{\Sigma}_{ \boldsymbol{X} } \equiv \mathsf{Var} \left( \boldsymbol{X}_{t} \right)$ as 
\begin{align}
\mathsf{vec} \left( \boldsymbol{\Sigma}_{ \boldsymbol{X} } \right) = \big( I_{K^2} - \boldsymbol{R} \otimes \boldsymbol{R} \big)^{-1} \mathsf{vec} \left( \boldsymbol{\Sigma}_{v} \right)
\end{align}
Denote with 
\begin{align}
\hat{\boldsymbol{K}} 
= 
\left[ \boldsymbol{\Sigma}_{ \boldsymbol{X} }^{-1} + n ( \boldsymbol{I} - \boldsymbol{R} )^{\prime} \boldsymbol{\Sigma}_{v}^{-1} ( \boldsymbol{I} - \boldsymbol{R} ) \right]^{-1} \left[ \boldsymbol{\Sigma}_{ \boldsymbol{X} }^{-1} \boldsymbol{X}_0 +  ( \boldsymbol{I} - \boldsymbol{R} )^{\prime} \boldsymbol{\Sigma}_{v}^{-1} ( \boldsymbol{I} - \boldsymbol{R} ) \sum_{t=1}^n \boldsymbol{X}_t  \right]
\end{align}
Then, the REML log-likelihood up to an additive constant for the model is given by 
\begin{align*}
L_M 
&= 
- \left( \frac{n-1}{2} \right) \mathsf{log} \sigma_e^2 - \frac{1}{ 2 \sigma_e^2 } S \left( \boldsymbol{\phi}, \boldsymbol{\beta}, \boldsymbol{R}  \right) - \frac{1}{2} \mathsf{log} | \boldsymbol{\Sigma }_{ \boldsymbol{X} } | - \frac{n}{2} \mathsf{log} | \boldsymbol{\Sigma }_{ v } |
\nonumber
\\
&\ \ \ -
\frac{1}{2} \mathsf{log} \left| \boldsymbol{\Sigma }_{ \boldsymbol{X} } + n ( \boldsymbol{I} - \boldsymbol{R} )^{\prime} \boldsymbol{\Sigma}_{v}^{-1} ( \boldsymbol{I} - \boldsymbol{R} )  \right|
\nonumber
\\
&\ \ \ -
\frac{1}{2} \left\{ \left( \boldsymbol{X}_0 - \hat{\boldsymbol{K}} \right)^{\prime} \boldsymbol{\Sigma }_{ \boldsymbol{X} }^{-1} \left( \boldsymbol{X}_0 - \hat{\boldsymbol{K}} \right) + \sum_{t=1}^n \left[  \boldsymbol{X}_t - \hat{\boldsymbol{K}} - \boldsymbol{R} \left( \boldsymbol{X}_{t-1} - \hat{\boldsymbol{K}} \right) \right]^{\prime} \boldsymbol{\Sigma}_v^{-1} \left[  \boldsymbol{X}_t - \hat{\boldsymbol{K}} - \boldsymbol{R} \left( \boldsymbol{X}_{t-1} - \hat{\boldsymbol{K}} \right) \right] \right\}
\end{align*}
where
\begin{align}
S \left( \boldsymbol{\phi}, \boldsymbol{\beta}, \boldsymbol{R}  \right)  
&= 
\sum_{t=1}^n \bigg[ Y_t^{\mu} - \boldsymbol{\phi}^{\prime} \boldsymbol{X}_t^{\mu} - \big( \boldsymbol{\beta}^{\prime} - \boldsymbol{\phi}^{\prime} \boldsymbol{R} \big) \boldsymbol{X}_{t-1}^{\mu} \bigg]^2.
\\
\boldsymbol{X}_t^{\mu}
&= 
\boldsymbol{X}_t - \frac{1}{n} \sum_{t=1}^n \boldsymbol{X}_t \ \ \ \ \text{and} \ \ \ \ \boldsymbol{X}_{t-1}^{\mu} = \boldsymbol{X}_{t-1} - \frac{1}{n} \sum_{t=1}^n \boldsymbol{X}_{t-1}. 
\end{align}
\begin{remark}
Notice that in the case where $\boldsymbol{R}$ is assumed to be diagonal matrix, the predictive regression model is no longer a seemingly unrelated regression (SUR) system and hence OLS will no longer be efficient. However, REML will clearly retain efficiency, no matter what the form of $\boldsymbol{R}$ is, thus giving it an advantage in terms of both asymptotic efficiency and power over any OLS-based procedure. Furthermore, since the dimension of the parameter space is very large in the vector case,  it is not feasible to obtain a result such as Theorem 3 in the most general case. However, in the case where $\boldsymbol{R}$ is a diagonal matrix and where $\big( \sigma_e^2, \boldsymbol{\phi}, \boldsymbol{\Sigma}_v \big)$ are assumed known with $\boldsymbol{\Sigma}_v$ diagonal, we are able to obtain the following result on the finite-sample behaviour of the RLRT for testing $\mathbb{H}_0: \boldsymbol{\beta} = \boldsymbol{0}$.  
\end{remark}

\end{example}

\newpage

\bibliographystyle{apalike}
\bibliography{myreferences1}

\addcontentsline{toc}{section}{References}

\newpage

\end{document}